\documentclass[11pt]{article}

\usepackage[a4paper,margin=1in]{geometry}
\usepackage[T1]{fontenc}
\usepackage{lmodern}
\usepackage{microtype}
\usepackage{amsmath,amssymb,amsthm}
\usepackage{mathtools}
\usepackage{enumitem}
\usepackage{graphicx}
\usepackage{float}
\usepackage{algorithm}
\floatstyle{ruled}
\restylefloat{algorithm}
\usepackage{placeins}

\usepackage{authblk}

\makeatletter
\newcounter{algln@depth}
\newcounter{algln@no}
\newenvironment{enumerate*}{%
  \stepcounter{algln@depth}%
  \ifnum\value{algln@depth}=1 \setcounter{algln@no}{0}\fi
  \list{\stepcounter{algln@no}\arabic{algln@no}.}{%
    \setlength{\topsep}{0.3ex}%
    \setlength{\partopsep}{0pt}%
    \setlength{\itemsep}{0.2ex}%
    \setlength{\parsep}{0pt}%
    \setlength{\leftmargin}{2.2em}%
    \setlength{\labelwidth}{1.6em}%
    \setlength{\labelsep}{0.4em}%
    \setlength{\itemindent}{0pt}%
    \renewcommand{\makelabel}[1]{\hbox to\labelwidth{\hfil##1}}%
  }%
}{%
  \endlist
  \addtocounter{algln@depth}{-1}%
}
\makeatother
\usepackage[round,authoryear]{natbib}
\usepackage[hidelinks]{hyperref}
\usepackage{url}

\newtheorem{theorem}{Theorem}[section]
\newtheorem{lemma}[theorem]{Lemma}
\newtheorem{proposition}[theorem]{Proposition}
\theoremstyle{definition}
\newtheorem{definition}[theorem]{Definition}
\newtheorem{example}[theorem]{Example}
\theoremstyle{remark}
\newtheorem{remark}[theorem]{Remark}

\title{Stringological sequence prediction II:\\
Right-to-left automaticity and related complexity measures}

\author[1,2]{Vanessa Kosoy\thanks{\href{mailto:vanessa@alter.org.il}{\texttt{EMAIL}}}}
\affil[1]{Faculty of Mathematics, Technion, Haifa, Israel}
\affil[2]{Computational Rational Agents Laboratory, Delaware, USA}

\date{}

\begin{document}

\maketitle

\begin{abstract}%
In a previous paper \citep{Kosoy2026a}, we began the study of sequence prediction algorithms adapted to stringological word complexity measures. One measure we considered was left-to-right (most-significant-digit-first) automaticity. Here, we show a statistically and computationally efficient algorithm adapted to the ``dual'' right-to-left (least-significant-digit-first) automaticity, which turns out to be substantially different for our purpose. We also demonstrate a prediction algorithm for a more expressive measure that we call ``arithmetic repetition complexity''. In particular, the latter can be used for predicting the so-called mix-automatic sequences.
\end{abstract}

\medskip
\noindent\textbf{Keywords:} Time Series, Online Learning, Algorithms

\section{Introduction}
\label{sec:intro}

Sequence (or ``time series'') prediction is a classical field of study in artificial intelligence, machine learning and statistics \citep{Hutter2024,Cesa-Bianchi2006,Brockwell2002}. Despite the long history of the field, there are few examples of prediction algorithms which \emph{simultaneously} (i) are computationally efficient, (ii) satisfy strong provable mistake bounds, and (iii) display strong performance on a highly versatile class of data distributions. For the sake of concreteness, it seems useful to operationalize (iii) as ``guaranteeing asymptotically near-perfect prediction for rich natural classes of deterministic sequences". Indeed, prediction of deterministic sequences is a common metric of intelligence (\cite{Kyllonen2017,John2003,Carroll1993}).

For any \emph{two} of these 3 natural requirements, there is ample prior work that satisfies them.

Solomonoff induction \citep{Hutter2024} is the gold standard for sequence prediction (assuming no domain-specific knowledge). On any sequence, it asymptotically performs as well as any computable predictor, meeting requirements (ii) and (iii). It has strong conceptual justification as a formalization of Occam's razor. However, Solomonoff induction is uncomputable: it fails requirement (i) catastrophically, making it completely unsuitable for any practical implementation.

Much work on statistics has focused on series of continuous variables (see e.g. \citealp{Brockwell2002}). This led to algorithms requiring assumptions about particular forms of probability distributions, e.g. normal. While some such algorithms meet requirements (i) and (ii), most of them are inapplicable in the discrete setting, and they don't yield interesting classes of deterministic sequences, failing requirement (iii).

In practice, methods based on deep learning are incredibly successful in next-token prediction \citep{Brown2020}, answering requirement (i) and suggesting that requirement (iii) is at least empirically satisfied. However, the theoretical understanding of generalization bounds for deep learning is in its infancy \citep{Fan2020}. In particular, examples where strong rigorous mistake bounds can be proved are lacking, and requirement (ii) fails.

There are computationally efficient algorithms with strong provable generalization bounds for some interesting classes of \emph{stochastic} processes, for example context-tree weighting methods \citep{Kontoyiannis2020}. However, such classes often degenerate in the deterministic case, e.g. admitting only periodic sequences. Hence, requirements (i) and (ii) are satisfied, but requirement (iii) fails.

Notably, the perceptron algorithm for online linear classification \citep{Cesa-Bianchi2006} applied to features computed from the sequence does come close to meeting our desiderata. However, the corresponding class of predictable deterministic sequences is still quite limited.

In this series of papers, we present a novel family of sequence prediction algorithms that are computationally efficient and satisfy provable mistake bounds. The mistake bounds imply an asymptotically vanishing frequency of mistakes for many interesting classes of sequences from combinatorics on words. Thus, our algorithms satisfy (a form of) all 3 of the requirements above.

Formally, we work in the following setting. There is a fixed finite alphabet $\Sigma$ and we are interested in predictors of the form $P:\Sigma^*\to\Sigma$ (here, $P(u)$ is the symbol predicted to follow the prefix $u$). The assumptions about the data are expressed as a \emph{complexity measure} $C:\Sigma^*\to\mathbb{N}$. We are then interested in bounding the number of mistakes $P$ makes on a sequence $u \in \Sigma^*$ in terms of $m \mathrel{:=} C(u)$ and $n \mathrel{:=} |u|$. At the same time, we require that $P$ is computable in polynomial time. Moreover, we wish to simultaneously bound the size of the \emph{internal state} of the predictor in terms of $m$ and $n$ (this can be interpreted as the predictor \emph{compressing} the sequence). The latter is interesting because it leads to predictors that are space efficient and in some cases run in quasilinear\footnote{I.e. linear up to logarithmic factors.} time.

We study multiple natural complexity measures $C$ with strong mistake and compression bounds. (Two such measures were addressed in a previous work \citep{Kosoy2026a}; three are introduced here; further examples will be studied in a sequel.) Conceptually, such complexity measures can be viewed as candidate tractable analogues of Kolmogorov complexity. We strive to simultaneously get a mistake bound of the form $O(m\,\operatorname{poly}(\log n))$ and a compression bound of the form $\operatorname{poly}(m,\log n)$. In particular, when operating on an infinite sequence for which $m$ grows polylogarithmically, such a predictor runs in quasilinear time and polylog space while making only polylog mistakes (see Definition~\ref{def:comp-eff}).

One interesting complexity measure is the size of the smallest automaton that can compute any symbol in the sequence when given the \emph{time index in base $k$} as input, for some fixed integer $k\geq 2$, a concept known as ``$k$-automaticity''. Notably, \emph{infinite} sequences for which this measure is finite have been widely studied in combinatorics on words: they are called \emph{automatic sequences} and have their own rich theory \citep{Allouche2003}. Hence, even this simple complexity measure captures a rich family of interesting examples. More precisely, we get different complexity measures if the automaton is assumed to read digits in left-to-right (LTR) order vs. right-to-left (RTL). For LTR, we proposed an efficient algorithm in the preceding paper \citep{Kosoy2026a}. Here, we propose an algorithm for RTL with $O(k m \log (k m) \log n)$ mistakes and $O(m (k \log m \log n + (\log n)^2))$ compression (Theorem~\ref{thm:rtl-bounds}).

Note that, even though automaton inference has received much attention \citep{Higuera2010}, most existing work is inapplicable to our setting, and our algorithm is novel, to the best of our knowledge.

For most real-world applications, $k$-automaticity seems like a rather contrived complexity measure because it depends on the parameter $k$. There is usually no obvious preferred choice of base in which the time index should be represented. In the LTR case, this problem can be solved by passing from automaticity to the much more versatile \emph{straight-line complexity} (SLC) measure: the size of the smallest \emph{straight-line program} that produces a word. Here, a straight-line program (SLP) is a sort of acyclic context-free grammar often used in stringology (see e.g. \citealp{Burgisser1997}). In order to accomplish an analogous generalization for RTL, we introduce ``arithmetic repetition complexity'' (ARC). ARC is the size of the smallest system of equations that uniquely specifies the word, where each equation can either specify the symbol at a single position, or mandate the equality of two subsequences sampled from \emph{arithmetic progressions}. We also show some connections between ARC and ``zipline programs'': a variant of straight-line programs where the word concatenation operation is replaced by the ``zip'' operation, i.e. round-robin interleaving of words.

In \citet{Kosoy2026a} we observed that SLC is slow-growing for \emph{morphic sequences}, an important class of sequences in combinatorics on words (see e.g. \citealp{Allouche2003}). ARC has the same property, and in addition (via the zipline program connection) it has polylogarithmic growth for the so-called mix-automatic sequences introduced in \citet{Endrullis2013}.

Unfortunately, we were unable to find a sequence prediction algorithm for ARC with a meaningful compression bound. However, we do find a polynomial-time algorithm with a mistake bound of $O(m \log n)$ (Theorem~\ref{thm:arc-muwu}). In the sequel, we plan to demonstrate a further weakening of this complexity measure which admits a quasilinear compression bound. Whether these gaps are unavoidable or are artificats of our methods is left as an open problem.

\section{Setting}
\label{sec:setting}

In this section, we recall the framework of stringological sequence prediction from \citet{Kosoy2026a}. We define the online prediction protocol in terms of state-based algorithms, introduce the notion of stringological complexity measures, and establish rigorous criteria for statistical and computational efficiency.

\subsection{Preliminaries and Notation}
\label{sec:preliminaries-and-notation}

$\mathbb{N}$ stands for the set of natural numbers including 0.

Let $\Sigma$ be a fixed finite\footnote{We assume $\Sigma$ is finite purely for ease of presentation. We could instead assume that $\Sigma=\mathbb{N}$, in which case the factors of $\log |\Sigma|$ in the bounds would be replaced by $\log M$, where $M$ is the highest number that actually appeared in the sequence so far.} alphabet. We denote the set of finite words over $\Sigma$ by $\Sigma^*$ and the set of right-infinite sequences by $\Sigma^\omega$. For a word $u \in \Sigma^*$, we denote its length by $|u|$. The $i$-th symbol in a word $u$ is denoted $u[i]$. We use 0-based indexing, such that
\begin{equation*}
  u = u[0]u[1]\dots u[|u|-1]
\end{equation*}
The notation $u[i:j]$ refers to the factor (subword) $u[i]\dots u[j-1]$. The notation $u[:j]$ is the same as $u[0:j]$. A word $v$ is a prefix of $u$, denoted $v \sqsubseteq u$, if $u = vw$ for some $w \in \Sigma^*$. Logarithms are taken to base 2 unless otherwise specified.

\subsection{The Prediction Protocol}
\label{sec:the-prediction-protocol}

We operate in the standard deterministic online prediction setting. To discuss memory and time constraints rigorously, we model the predictor not merely as a function of the past history, but as a state-based machine.

\begin{definition}
\label{def:predictor}
A \emph{predictor} is a tuple $\Pi = (\mathcal{S}, s_{\mathrm{init}}, \mathcal{U}, \mathcal{P})$, where:
\begin{itemize}
  \item $\mathcal{S} \subseteq \{0,1\}^*$ is the set of possible internal states (represented as binary strings).
  \item $s_{\mathrm{init}} \in \mathcal{S}$ is the initial state.
  \item $\mathcal{U}: \mathcal{S} \times \Sigma \to \mathcal{S}$ is the \emph{state-update function}. It takes the current state and the most recent observation to produce the next state.
  \item $\mathcal{P}: \mathcal{S} \to \Sigma$ is the \emph{state-prediction function}. It maps the current state to a predicted next symbol.
\end{itemize}
\end{definition}

The prediction process proceeds in rounds $t = 0,1,\dots$ for a target sequence $x \in \Sigma^T$. At step $t$:
\begin{enumerate}
  \item The predictor makes a prediction $\hat{x}_t = \mathcal{P}(s_t)$.
  \item The true symbol $x[t]$ is revealed.
  \item The predictor updates its internal state: $s_{t+1} = \mathcal{U}(s_t, x[t])$.
\end{enumerate}
We assume $s_0 = s_{\mathrm{init}}$. The total number of mistakes made by $\Pi$ on a word $x \in \Sigma^T$ is denoted by:
\begin{equation*}
  M_\Pi(x) \mathrel{:=} \left|\left\{t \in \{0,\dots,T-1\} \mid \hat{x}_t \neq x[t]\right\}\right|
\end{equation*}
We also introduce notation for the maximal size of the state during the processing of a word:
\begin{equation*}
  S_\Pi(x) \mathrel{:=} \max_{0 \leq t < T} |s_t|
\end{equation*}

\subsection{Efficiency Criteria}
\label{sec:efficiency-criteria}

We evaluate performance against the inherent structural complexity of the individual sequence, rather than a probabilistic prior.

\begin{definition}
\label{def:word-complexity}
A \emph{word complexity measure} is a function $C: \Sigma^* \to \mathbb{N}$ that satisfies the following conditions:
\begin{itemize}
  \item \textbf{(Polynomial bound.)} There exists a polynomial $p \in \mathbb{N}[x]$ s.t. for all $u \in \Sigma^*$,
  \begin{equation}
    C(u) \leq p(|u|)
    \label{eq:word-com-bound}
  \end{equation}
  \item \textbf{(Approximate monotonicity.)} There exists a polynomial $q \in \mathbb{N}[x,y]$ s.t. for all $u \in \Sigma^*$ and $t \leq |u|$,
  \begin{equation}
    C(u[:t]) \leq q(\log |u|, \log C(u)) \cdot (C(u)+1)
    \label{eq:mono}
  \end{equation}
\end{itemize}
\end{definition}

We seek predictors that are statistically efficient, space efficient and time efficient. We bound these resources in terms of the sequence's complexity $C(x)$ and its logarithmic length $\log |x|$.

\begin{definition}
\label{def:stat-eff}
\textbf{(Statistical Efficiency.)} The predictor $\Pi$ is \emph{statistically efficient} with respect to $C$ if the number of mistakes is quasilinear in the complexity:
\begin{equation}
  M_\Pi(x) \leq (C(x)+1) \cdot \operatorname{poly}(\log |x|)
  \label{eq:def-stat-eff}
\end{equation}
\end{definition}

Notice that any polynomial in $\log C(x)$ can be absorbed into Equation~\eqref{eq:def-stat-eff} due to Equation~\eqref{eq:word-com-bound}.

\begin{definition}
\label{def:comp-feas}
\textbf{(Computational Feasibility.)} The predictor $\Pi$ is \emph{computationally feasible} if
\begin{itemize}
  \item The time to compute $\hat{x}_t = \mathcal{P}(s_t)$ is bounded by $\operatorname{poly}(|s_t|)$.
  \item The time to compute $s_{t+1} = \mathcal{U}(s_t, x[t])$ is bounded by $\operatorname{poly}(|s_t|)$.
  \item $S_\Pi(x)$ is bounded by $\operatorname{poly}(|x|)$.
\end{itemize}
\end{definition}

\begin{definition}
\label{def:comp-eff}
\textbf{(Computational Efficiency.)} The predictor $\Pi$ is \emph{computationally efficient} with respect to $C$ if it's computationally feasible, and for all $x \in \Sigma^*$:
\begin{equation}
  S_\Pi(x) \leq \operatorname{poly}(C(x), \log |x|)
  \label{eq:gen-com-bound}
\end{equation}
\end{definition}

We can thus think of $s_t$ as a compressed representation of the history $x[:t]$, and we refer to inequalities of the form \eqref{eq:gen-com-bound} as ``compression bounds''. (In many examples, this compression is lossless, but it doesn't have to be.) In particular, this bound implies that if the sequence complexity grows polylogarithmically, then the space complexity and the per-round processing time are also polylogarithmic\footnote{Since per-round processing time is polynomial in $|s_t|$ and $|s_t|$ is polylogarithmic in $t$ due to inequality~\eqref{eq:gen-com-bound}.}.

\section{Automaticity}
\label{sec:auto}

In this section, we investigate \emph{automaticity}, our simplest example of a word complexity measure. Automaticity is the number of states in the smallest finite automaton taking the base-$k$ representation of the time index $t$ as input (\citet{Allouche2003}). Depending on the direction in which the automaton reads the digits---Most Significant Digit First or Least Significant Digit First---we obtain two distinct complexity measures: LTR-automaticity and RTL-automaticity.

Fix an integer base $k \geq 2$. For any length $m \in \mathbb{N}$ and integer $0\leq t < k^m$, let $\langle t \rangle_k^m \in [k]^m$ denote the standard base-$k$ representation of $t$, padded with leading zeros to length $m$. That is, $\langle t \rangle_k^m = w_0 w_1 \dots w_{m-1}$ such that $t = \sum_{j=0}^{m-1} w_j k^{m-1-j}$.

We model the structure of a sequence $x$ via deterministic finite automata with output (DFAOs) that compute $x[t]$ from this padded representation.

\begin{definition}
\label{def:auto}
Let $x \in \Sigma^*$ be a finite string.
\begin{itemize}
  \item The \emph{LTR $k$-automaticity} of $x$, denoted $\mathrm{AC}^{\mathrm{L}}_k(x)$, is the minimal number of states in a DFAO $M$ (with input alphabet $[k]$) such that there exists an integer $m$ satisfying $|x| \leq k^m$ and:
  \begin{equation*}
    \forall t < |x| : x[t] = M(\langle t \rangle_k^m)
  \end{equation*}
  This corresponds to processing the time index in Left-to-Right (MSDF) order.
  \item The \emph{RTL $k$-automaticity} of $x$, denoted $\mathrm{AC}^{\mathrm{R}}_k(x)$, is the minimal number of states in a DFAO $M$ such that there exists an integer $m$ satisfying $|x| \leq k^m$ and:
  \begin{equation*}
    \forall t < |x| : x[t] = M((\langle t \rangle_k^m)^{\mathrm{R}})
  \end{equation*}
  This corresponds to processing the time index in Right-to-Left (LSDF) order, where $(\cdot)^{\mathrm{R}}$ denotes string reversal.
\end{itemize}
\end{definition}

Notice that for any $x \sqsubseteq y$, it's obvious that $\mathrm{AC}^{\mathrm{R}}_k(x) \leq \mathrm{AC}^{\mathrm{R}}_k(y)$ and $\mathrm{AC}^{\mathrm{L}}_k(x) \leq \mathrm{AC}^{\mathrm{L}}_k(y)$. Hence, Equation~\eqref{eq:mono} is satisfied for these measures. It is also easy to see\footnote{This is demonstrated by an automaton that has a state for every element of $[k]^{\leq m}$ with $m=\lceil \log_k |x| \rceil$, s.t. $M(\langle t \rangle_k^m)$ detects the exact $t$ and outputs $x[t]$ accordingly.} that
\begin{equation*}
  \max(\mathrm{AC}^{\mathrm{R}}_k(x),\mathrm{AC}^{\mathrm{L}}_k(x)) \leq k^2 \max(|x|,1)
\end{equation*}
Therefore $\mathrm{AC}^{\mathrm{R}}_k$ and $\mathrm{AC}^{\mathrm{L}}_k$ are word complexity measures. 

\begin{example}
\label{ex:thue_morse}
  Let $\Sigma = \{0, 1\}$. We define the \emph{Thue-Morse sequence} recursively by a family of prefixes $u_n$ of length $2^n$:
  \[
    u_0 = 0, \quad u_{n+1} = u_n \overline{u_n}
  \]
  where $\overline{w}$ denotes the bitwise negation of $w$. Note that for all $n$, $u_n \sqsubseteq u_{n+1}$ and hence there is a unique $u_\infty \in \Sigma^\omega$ s.t. $u_n = u_\infty[:2^n]$. The beginning of this sequence is
  \[
    u_\infty=01101001100101101001011001101001 \dots
  \]

  It's possible to show that for all $n$ and $i<2^n$
  \[
    u_n[i] \equiv \sum_{j=0}^{n-1} \langle i \rangle_2^n [j] \pmod{2}
  \]

  From this, it is easy to see that $u_n[i]$ can be computed by an automaton with two states in either reading direction. Hence, for any $n\ge1$,
  \[
    \mathrm{AC}^{\mathrm{L}}_2(u_n) = \mathrm{AC}^{\mathrm{R}}_2(u_n) = 2.
  \]
\end{example}

See Appendix~\ref{sec:apx-auto} for a few more examples.

\subsection{Tabular Plurality Algorithm}

In the preceding paper \citet{Kosoy2026a} we proved the existence of a statistically and computationally efficient predictor for LTR automaticity. For sequences with low RTL automaticity, we now introduce the \textbf{Tabular Plurality (TP)} algorithm. This algorithm effectively learns the transition table of a DFAO by detecting identities between subsequences corresponding to arithmetic progressions.


The internal state of TP consists of:
\begin{itemize}
\item The time index $n \in \mathbb{N}$.
  \item A \emph{depth} parameter $m^* \in \mathbb{N}$.
  \item For each $0 \leq m \leq m^*$, a set of \emph{states} $Q_m \subseteq [k^m]$, where each $q \in Q_m$ is the minimal known residue of a distinct equivalence class at depth $m$.
  \item For each $0 \leq m < m^*$, a \emph{transition function} $\delta_m : Q_m \times [k] \to Q_{m+1}$, where $\delta_m (q, i)$ is the minimal state in $Q_{m+1}$ consistent with all observations involving the residue class of $q$ and digit $i$.
  \item An \emph{output function} $\tau : Q_{m^*} \to \Sigma$.
\end{itemize}


Tabular Plurality's prediction algorithm is Algorithm~\ref{alg:tp-predict}, which uses the subroutines $\mathsf{Eval}$ (Algorithm~\ref{alg:tp-eval}) and $\mathsf{Diverge}$ (Algorithm~\ref{alg:tp-diverge}). The update algorithm is designed to preserve certain invariants, as captured by the following lemma. We use the notation

\begin{align*}
\rho_0(r)&:=0,
\nonumber\\
\rho_{m+1}(r)&:=
  \delta_m
  \left(
    \rho_m(r\bmod k^m),
    \left\lfloor r/k^m\right\rfloor
  \right).
\end{align*}

\begin{lemma}
\label{lem:model-correct-prelim}
  Consider the state of the predictor after updating on the prefix $x[:n+1]$. The following statements hold.
  \begin{enumerate}
    \item \emph{Data consistency.}
    \[
      \mathsf{Eval}(t)=x[t]
      \qquad\text{for every }t\leq n.
    \]

    \item \emph{Minimality.}  For every $m\leq m^*$ and $q\in Q_m$,
    \[
      q=
      \min\{r\in[k^m] : \rho_m(r)=q\}.
    \]

    \item \emph{Distinguishability.}
    Fix $m\leq m^*$ and distinct states $q_1,q_2\in Q_m$.  Let
    \[
      r:=d_m(q_1,q_2)
    \]
    be their divergence value (as computed by $\mathsf{Diverge}$) in the final state.  Then
    \[
      k^m r+q_1\leq n,
      \qquad
      k^m r+q_2\leq n,
    \]
    and the two observed symbols at these row-$r$ occurrences are different:
    \[
      x[k^m r+q_1]\neq x[k^m r+q_2].
    \]
  \end{enumerate}
\end{lemma}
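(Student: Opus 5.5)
Since the algorithms $\mathsf{Eval}$, $\mathsf{Diverge}$ and the update rule are only referenced here, I will prove the lemma by induction on $n$, taking the update procedure to be the natural one suggested by the description of the state. $\mathsf{Eval}(t)$ returns $\tau(\rho_{m^*}(t \bmod k^{m^*}))$, where $\rho_m$ follows the transitions reading digits least-significant first. $\mathsf{Diverge}(q_1,q_2)$ at depth $m$ returns the least row $r$ at which the two residue classes are distinguished, computed through the transition table. The update on a new symbol $x[n]$ does one of three things. If the prediction is correct, it leaves $Q,\delta,\tau$ unchanged. If it is wrong, it either splits a class, creating the new state $q=n \bmod k^{m+1}$ at the shallowest depth $m+1$ where $n$'s residue was merged with an earlier residue, or, if no such depth exists, it increases $m^*$. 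After that it recomputes $\delta$ by ``plurality/minimal consistent'' choice.

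\emph{Base case.} For $n=0$, the state after one symbol has $m^*=0$, $Q_0=\{0\}$ and $\tau(0)=x[0]$, so all three claims hold, the third vacuously. \emph{Inductive step.} I assume the claims for the state after $x[:n]$ and consider the update on $x[n]$. There are two cases.

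In the case where the prediction is correct, the structure is unchanged, and $\mathsf{Eval}(n)=x[n]$ holds by assumption. Minimality and distinguishability refer only to the structure and to observed positions, all of which are $\le n-1 < n$, so they persist.

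In the case of a mistake, I split according to the update. First I show that the new class $q := n \bmod k^{m+1}$ is genuinely new. By the inductive hypothesis it cannot coincide with any existing minimal representative: if it did, then $\rho_{m+1}(q)=q$, and minimality would force agreement. Since $q\le n$ and every $r<q$ with the same image was distinguished, $q$ is the minimum of its new $\rho$-class, which gives minimality. For distinguishability, I exhibit a witness row. The old state $q'=\rho_{m+1}(q)$ and the new state $q$ differ at $n$ itself: writing $n=k^{m+1}r+q$, the old model predicted via $q'$ a symbol equal to $x[k^{m+1}r+q']$ (if that position is observed, which data consistency for $q'$'s class gives), while $x[n]$ differs. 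Thus $d_{m+1}(q,q')\le r$, with both positions $\le n$. Pairs involving other states keep their old witnesses, except that $\mathsf{Diverge}$ may now return a smaller row, which is still an observed, differing row by its definition. Data consistency for $t<n$ must then be re-verified after the split. The new state's transitions $\delta_{m+1}(q,\cdot)$ are copied from $q'$, and the choice of $\delta_m$ into $\{q,q'\}$ for each predecessor is made so that every residue $r$ with $\rho_{m+1}(r)\in\{q,q'\}$ is routed consistently with its observed symbols. This is where the ``plurality/minimal consistent'' rule is used, together with the inductive consistency of deeper levels. Finally, I set $\tau$ on the new leaf.

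The main obstacle is this last re-verification: after inserting a state, the transitions $\delta_m(p,i)$ feeding into the split class must still reproduce all past data. The plan is to argue that each residue $r<k^{m+1}$ is in fact tracked exactly, because the divergence witnesses at all shallower depths force $\rho$ to be injective on the residues that actually differ in observed data. Hence for every residue there is a unique consistent minimal target, and choosing it restores data consistency for all $t\le n$. If the plurality choice does not literally do this, I would strengthen the induction hypothesis with an auxiliary invariant: every $\delta_m(q,i)$ is the least state consistent with all observed pairs $(k^m j + q + k^m\cdot\ldots)$.
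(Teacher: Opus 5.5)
Your proof does not establish the lemma, because the update rule you reconstruct is not TP's. The lemma concerns the state produced by Initialize, Preprocess and Update, and that algorithm differs from yours in three ways:
\begin{enumerate}
  \item Update never consults the prediction.
  \item Preprocess raises $m^*$ exactly when $n$ is a power of $k$, not after a mistake.
  \item Repairs are local, not a global ``plurality/minimal consistent'' recomputation of $\delta$. At a level $m$ where $q'=\delta_m(n\bmod k^m,i)\neq n\bmod k^{m+1}$ and $\mathsf{Eval}(k^{m+1}\ell+q')\neq x[n]$, that single entry is either rerouted to an existing $\tilde q$ with $d_{m+1}(\tilde q,q')\geq\ell$ and $\mathsf{Eval}(k^{m+1}\ell+\tilde q)=x[n]$, or a new state $q_{\mathrm{new}}=n\bmod k^{m+1}$ is created with $q'$'s outgoing transitions copied. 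In the latter case the mismatch cascades to the next level. Plurality appears only in Predict.
\end{enumerate}
Your ``correct prediction'' case is in fact false for TP. The prediction is $\mathsf{Eval}(k^{m+1}\ell+q^*)$ for the plurality winner $q^*$ (with $m$ the breakpoint). It can be right while $\mathsf{Eval}(n)=\mathsf{Eval}(k^{m+1}\ell+q')$ is wrong. So you can conclude neither that the state is unchanged nor that $\mathsf{Eval}(n)=x[n]$.

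Even read charitably, the three core steps are missing.

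\emph{Minimality.} Minimality is a statement about routing, and ``every $r<q$ with the same image was distinguished'' says nothing about which residues $\rho_{m+1}$ sends to $q$. The paper instead maintains two table invariants: $u\in Q_{h+1}\Rightarrow\delta_h(u\bmod k^h,\lfloor u/k^h\rfloor)=u$, and $\delta_h(p,i)\leq ik^h+p$. Keeping the second under rerouting requires $\tilde q\leq n\bmod k^{m+1}$, which comes from Distinguishability of $(\tilde q,q')$ together with $d_{m+1}(\tilde q,q')\geq\ell$.

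\emph{Distinguishability.} You treat only the pair $(q_{\mathrm{new}},q')$. A pair $(q_{\mathrm{new}},q_2)$ has no old witness. When $d_{m+1}(q',q_2)\geq\ell$, the only possible separating row is $\ell$, and it separates precisely because the rerouting test failed for $q_2$. This is why TP creates a state only when no compatible alternative exists. Moreover, your witness for $(q_{\mathrm{new}},q')$, and your claim that a recomputed divergence row is ``observed and differing by definition,'' both presuppose final data consistency.

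\emph{Data consistency.} This is exactly the step you leave open. The map $\rho_m$ is many-to-one by design, so ``injective on residues that differ in the data'' is not usable. What is needed:
\begin{enumerate}
  \item For an old $t<n$ routed through a rerouted entry, Minimality gives $\lfloor t/k^{m+1}\rfloor<\ell$, so $d_{m+1}(\tilde q,q')\geq\ell$ preserves $\mathsf{Eval}(t)$.
  \item Copying gives $\operatorname{PEval}_{m+1}(q_{\mathrm{new}},\cdot)=\operatorname{PEval}_{m+1}(q',\cdot)$.
  \item $\mathsf{Eval}(n)=x[n]$ needs a cascade argument. An interior creation forces another modification at the next level. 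If no entry changes, the failed line-6 value test yields an earlier $t'$ with $\rho_{m^*}(t')=\rho_{m^*}(n)$ and $\mathsf{Eval}(t')=x[n]$.
\end{enumerate}

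Finally, the copy leaves $q_{\mathrm{new}}$ and $q'$ model-identical until the cascade finishes, so Distinguishability genuinely fails in the middle of an Update. The induction over the loop of a single Update therefore needs a weaker invariant. The paper uses a relaxed one that stores the separating row $\ell$ for such pending pairs, and it also checks that Preprocess preserves all three properties.
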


See Appendix~\ref{sec:tp} for more details.

\begin{algorithm}[htbp]
\caption{Eval($\ell$)}
\label{alg:tp-eval}
  \begin{enumerate*}
    \item $q \leftarrow 0$
    \item \textbf{for} $m = 0$ \textbf{to} $m^* - 1$
      \begin{enumerate*}
        \item $q \leftarrow \delta_m (q, \left\lfloor \ell / k^m \right\rfloor \bmod k)$
      \end{enumerate*}
    \item \textbf{return} $\tau(q)$
  \end{enumerate*}
\end{algorithm}

\begin{algorithm}[htbp]
\caption{Diverge()}
\label{alg:tp-diverge}
  \begin{enumerate*}
    \item \textbf{for} $q, q' \in Q_{m^*}$
      \begin{enumerate*}
        \item \textbf{if} $q = q'$ \textbf{then} $d_{m^*}(q, q') \leftarrow \infty$ \textbf{else} $d_{m^*}(q, q') \leftarrow 0$
      \end{enumerate*}
    \item \textbf{for} $m = m^* - 1$ \textbf{downto} $0$
      \begin{enumerate*}
        \item \textbf{for} $q, q' \in Q_m$
          \begin{enumerate*}
            \item $d_m (q, q') \leftarrow \min_{i < k} (k \cdot d_{m+1}(\delta_m (q,i), \delta_m (q',i)) + i)$
          \end{enumerate*}
      \end{enumerate*}
    \item \textbf{return} $\{d_m\}_{m \leq m^*}$
  \end{enumerate*}
\end{algorithm}

\begin{algorithm}[htbp]
\caption{TP Predict($n$)}
\label{alg:tp-predict}
  \begin{enumerate*}
    \item $m \leftarrow 0$; $q \leftarrow 0$
    \item $q' \leftarrow \delta_0 (q, n \bmod k)$
    \item \textbf{while} $m<m^*-1$ and $q' \equiv n \pmod{k^{m+1}}$
      \begin{enumerate*}
        \item $m \leftarrow m + 1$; $q \leftarrow q'$
        \item $q' \leftarrow \delta_m (q, \left\lfloor n/k^m \right\rfloor \bmod k)$
      \end{enumerate*}
    \item $\{d_m\}\leftarrow\mathsf{Diverge}()$
    \item $\ell \leftarrow \left\lfloor n / k^{m+1} \right\rfloor$
    \item $D \leftarrow \{\tilde{q} \in Q_{m+1} : d_{m+1}(q', \tilde{q}) \geq \ell\}$
    \item \textbf{for} $\tilde{q} \in D$
      \begin{enumerate*}
        \item $\nu(\tilde{q}) \leftarrow \left|\{(\hat{q}, i) \in Q_m \times [k] : \delta_m (\hat{q}, i) = \tilde{q} \text{ and } k^m \cdot i + \hat{q} < n \bmod k^{m+1}\}\right|$
      \end{enumerate*}
    \item $q^* \leftarrow \operatorname*{arg\,max}_{\tilde{q} \in D} \nu(\tilde{q})$
    \item \textbf{return} $\mathsf{Eval}(k^{m+1} \cdot \ell + q^*)$
  \end{enumerate*}
\end{algorithm}

\FloatBarrier

Using Tabular Plurality, we obtain the following theorem.

\begin{theorem}
\label{thm:rtl-bounds}
There exists a predictor (TP) which is statistically and computationally efficient with respect to $\mathrm{AC}^{\mathrm{R}}_k$. Specifically, for any $n\geq2$ and $x \in \Sigma^n$, letting $m = \mathrm{AC}^{\mathrm{R}}_k(x)$ and $\sigma \mathrel{:=} |\Sigma|$, we have:
\begin{enumerate}
  \item \textbf{Statistical Efficiency:} The number of mistakes is bounded by
  \begin{equation}
    M_{\mathrm{TP}}(x) = O(k m \log (k m) \log n)
    \label{eq:rtl-mistakes}
  \end{equation}
  \item \textbf{Computational Efficiency:} The predictor satisfies the compression bound
  \begin{equation}
    S_{\mathrm{TP}}(x) = O(m (k \log m \log n + (\log n)^2 + \log \sigma))
    \label{eq:rtl-compression}
  \end{equation}
\end{enumerate}
\end{theorem}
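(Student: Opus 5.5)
The plan is to use Lemma~\ref{lem:model-correct-prelim} to tie the learned hypothesis $(Q_m,\delta_m,\tau)$ to the minimal RTL automaton $M$ of $x$, which has $m$ states. For each level $j$ define the Myhill--Nerode style equivalence on residues $r\in[k^j]$: $r\sim_j r'$ iff $x[k^j\ell+r]=x[k^j\ell+r']$ for all $\ell$ with both indices $<n$. Since $M$ reads digits least significant first, the residues $r$ and $r'$ are equivalent whenever $M$ reaches the same state after reading the $j$ low digits. So $\sim_j$ has at most $m$ classes at every level.

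By distinguishability (part 3 of the lemma), two distinct states of $Q_j$ are separated by an actual observed pair of positions. Hence they lie in different $\sim_j$ classes, which gives $|Q_j|\le m$ at all times. Minimality (part 2) makes the representatives canonical. Combined with $m^*\le\lceil\log_k n\rceil+O(1)$, this immediately gives the compression bound:
\begin{itemize}
\item each $\delta_j$ stores $k|Q_j|$ entries of $O(\log m)$ bits, giving $O(km\log m\log n)$ bits over all levels;
\item storing the representatives $q\in[k^j]$ costs $O(m\log n)$ bits per level, giving $O(m(\log n)^2)$;
\item $\tau$ costs $O(m\log\sigma)$.
\end{itemize}
This matches \eqref{eq:rtl-compression}. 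Polynomial-time computability of Predict and Diverge is evident from their loops. I would verify that the update step (not shown) is polynomial in the same quantities.

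For the mistake bound, I would use a potential argument. The total number of states $\Phi=\sum_{j\le m^*}|Q_j|$ is at most $m(m^*+1)=O(m\log n)$. The key claim is that each mistake either creates a new state at some level, or advances a certain ``plurality'' counter whose total progress per state is bounded. Concretely, when Predict is at time $n$, it descends the known path while $n$'s low digits coincide with a stored representative. It stops at level $m$, where the residue $n\bmod k^{m+1}$ is new, and the unknown target is $\delta_m(q,\text{digit})$. Predict restricts to the candidate set $D$ of states still consistent, namely those whose divergence $d_{m+1}$ from $q'$ is at least $\ell$. Among these it picks the state receiving the most transitions from earlier residues, i.e.\ those with smaller representatives.

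If the prediction is wrong, the update either refutes the chosen candidate or must add a new state. I expect the refutation case to be a halving-type argument. Weighted by the plurality count $\nu$, each wrong guess removes a candidate carrying at least a $1/|D|$ share of the incoming mass. I would argue, in the style of the standard halving/plurality-vote bound, that the number of consecutive wrong guesses before the true target is pinned is $O(\log(km))$ per new transition entry. The $k$ digits times at most $m$ states per level times $O(\log n)$ levels then gives $O(km\log(km)\log n)$ total, which is \eqref{eq:rtl-mistakes}.

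The main obstacle is precisely this charging step. I must show that a mistake at time $n$ is attributable to a specific unresolved transition slot $(j,\hat q,i)$, and that each slot is charged only $O(\log(km))$ times. The plurality choice should guarantee that any wrong guess reduces a potential such as $\log$ of the total $\nu$-mass of surviving candidates by a constant. The first attempt would be to show that the true target state always stays in $D$: the real automaton's class is never refuted, by data consistency and the fact that $\sim$ is coarser. After that, I would mimic the weighted-majority analysis with weights given by $\nu$, bounding the initial mass by $km$.
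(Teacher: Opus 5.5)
Your state bound and compression bound follow the paper's route: Distinguishability makes $q\mapsto\psi^*_j(q)$ injective on $Q_j$, and then you encode directly. One small repair: your $\sim_j$ quantifies over the rows observed for the particular pair, so it is not transitive. Argue via that injectivity rather than via ``classes of $\sim_j$''.

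The gap is in the mistake bound: the per-slot halving step does not work as described. First, even at face value, a wrong plurality pick only certifies that the discarded candidate held a $1/|D|$ share of the $\nu$-mass. That is a $(1-1/|D|)$ contraction, giving $O(m\log(km))$ wrong picks per slot and $O(km^2\log(km)\log n)$ overall. This is worse than the trivial $O(km^2\log n)$, which follows because each mistake at round $n$ permanently excludes the chosen state, one of the at most $m$ states ever in $Q_{j+1}$, from the later candidate sets of $r_n:=n\bmod k^{j+1}$.

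Second, and more fundamentally, there is no fixed weighted pool to shrink. $\nu(\tilde q)$ is not a weight carried by the hypothesis $\tilde q$. It is the number of \emph{other} residues $\hat r<r_n$ currently routed into $\tilde q$'s class at row $\ell_n+1$, where $\ell_n:=\lfloor n/k^{j+1}\rfloor$. Distinct members of $D$ differ exactly at row $\ell_n$ (Lemma~\ref{lem:tp-semantics}, parts 3--4). So after each visit at most one old candidate survives for $r_n$, newly created states may join at the next row, and all populations are recomputed. Thus ``consecutive wrong guesses before the true target is pinned'' is not governed by an initial mass $km$. Keeping a true target in $D$ does not help either: such a state need not exist, and when it does, the difficulty is the number of wrong picks among the others.

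The missing idea is that the $\log(km)$ factor comes from an amortized argument over all residues of a level at once, not from a per-slot one. Fix a level $j$ and take as elements the at most $km$ residues of $[k^{j+1}]$ lying over states of $Q_j$. Each element descends the refinement tree whose depth-$\ell$ vertices are the classes of $\approx^{j+1}_\ell$, one row per visit. Then $\nu(\tilde q)$ is exactly the current subtree population of the child class containing $\tilde q$ (Lemma~\ref{lem:population}). So TP is playing the plurality-branch rule, and Proposition~\ref{prop:branch-pred} bounds its mistakes by $3|E|\log_2|E|=O(km\log(km))$ per level.

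In that proof the charging is split two ways:
\begin{itemize}
\item A mistake where the residue follows an eventually non-dominant child is charged to the residue itself. Each such step halves its cohort, so there are at most $\log_2(km)$ per residue.
\item A mistake where it follows the eventually dominant child, but arrival order made a sibling look larger, is charged to non-dominant traversals of \emph{other} residues at that vertex.
\end{itemize}
Residues appearing late, when new states enter $Q_j$, are handled by phantom traversals.

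Predictions at vertices with no populated child need a separate count. There is at most one such mistake per branching vertex, and there are at most $m-1$ branching vertices because each row has at most $m$ classes (Lemma~\ref{lem:branching-count}). Summing over the $O(\log_k n)$ levels and adding the $O(\log_k n)$ rounds at powers of $k$ gives $O(km\log(km)\log n)$.
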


Here, the predictor receives $k$ as a parameter, and its time complexity is polynomial in the numerical value of $k$. See Appendix~\ref{sec:tp} for the proof.

\section{Arithmetic Repetition Complexity}
\label{sec:arc}

In the preceding paper \citet{Kosoy2026a} we showed that $\mathrm{AC}^{\mathrm{L}}_k$ can be generalized to straight-line complexity ($\operatorname{SLC}$), a base-independent measure that captures concatenative repetition.
Section~\ref{sec:auto} treated the dual, right-to-left automaticity $\mathrm{AC}^{\mathrm{R}}_k$.
The goal of this section is to introduce a base-independent complexity measure for the interleaving coming from the right-to-left structure, while still retaining the concatenative examples covered by $\operatorname{SLC}$.

The resulting object is \emph{Arithmetic Repetition Complexity} ($\operatorname{ARC}$).
Rather than describing a word by a program, we describe it by a small set of literal symbols together with arithmetic copy constraints.
Every non-literal position is justified by a previous position lying in a matching finite arithmetic progression.

\subsection{Arithmetic Repetition Systems}
\label{sec:arithmetic-repetition-systems}

We start by fixing the notation for finite arithmetic progressions.
An \emph{arithmetic interval} is a tuple $\alpha=(r,i,a,b)$ where $r \in \mathbb{N}^+$, $i < r$, and $a,b \in \mathbb{N}$ with $a < b$.
It denotes the strictly increasing partial map with domain $[b-a]$ given by
\begin{equation*}
  \alpha(t) \mathrel{:=} r(a+t) + i.
\end{equation*}
We write $\operatorname{st}(\alpha)\mathrel{:=}r$ for the step and $|\alpha|\mathrel{:=}b-a$ for the length of the progression.
For a word $u \in \Sigma^*$, $u[\alpha]$ denotes the subword obtained by reading $u$ at the positions $\alpha(t)$ for which $\alpha(t) < |u|$, in increasing order of $t$.

\begin{definition}
\label{def:ars}
An \emph{arithmetic repetition system} (ARS) for a word $u \in \Sigma^n$ is a pair $R=(L,\Phi)$ where $L \subseteq \Sigma \times [n]$ and $\Phi$ is a finite set of ordered pairs of arithmetic intervals, satisfying:
\begin{itemize}
  \item For any $(a,i) \in L$, $u[i]=a$.
  \item For any $(\alpha,\beta) \in \Phi$, $|\alpha|=|\beta|$ and $u[\alpha] = u[\beta]$.
  \item For any $i < n$, either $(u[i],i) \in L$, or there are $(\alpha,\beta) \in \Phi$ and $t \in \operatorname{dom}(\alpha)$ such that
  \begin{equation*}
    \alpha(t) < \beta(t) = i.
  \end{equation*}
\end{itemize}

The size of $R$ is defined by $|R|\mathrel{:=}|L|+|\Phi|$.
\end{definition}

The last condition orients each repetition towards the future: a non-literal symbol at position $i$ must be copied from an earlier position belonging to the same arithmetic comparison.
Thus, although the equality condition between $u[\alpha]$ and $u[\beta]$ is symmetric, the ordered pair $(\alpha,\beta)$ records how the system predicts later symbols from earlier ones.

\begin{definition}
\label{def:arc}
The \emph{Arithmetic Repetition Complexity} of a word $x$, denoted $\operatorname{ARC}(x)$, is the minimum size of an arithmetic repetition system for $x$.
\end{definition}

It's easy to see that ARC is a word complexity measure (Proposition~\ref{prp:arc-measure}). See Appendix~\ref{sec:apx-props} for an example of an ARS and a proof of this fact.

\subsection{Upper Bounds on ARC}
\label{sec:upper-bounds-on-arc}

We now show how $\operatorname{ARC}$ is stronger than (i.e. upper bounded by)  a number of other complexity measures. The propositions in this subsection are proved in Appendix~\ref{sec:apx-arc-bounds}.

First, $\operatorname{ARC}$ captures the patterns that Lempel-Ziv 77 compression can detect. Here, $\operatorname{LZC}$ is the number of factors in the minimal LZ77 factorization (see Appendix~\ref{sec:apx-arc-bounds} for the definition; notably, this is one of the complexity measures we considered in \citet{Kosoy2026a}):

\begin{proposition}
\label{prp:arc-vs-lzc}
For every word $x$,
\begin{equation*}
  \operatorname{ARC}(x) \leq \operatorname{LZC}(x).
\end{equation*}
\end{proposition}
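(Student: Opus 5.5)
The plan is to convert a minimal LZ77 factorization of $x$ into an ARS of the same size, one ARS element per factor. The factorization is $x = f_1 f_2 \cdots f_z$ with $z = \operatorname{LZC}(x)$, and we let $p_j := |f_1\cdots f_{j-1}|$ be the starting position of $f_j$. Under the standard definition, recalled in Appendix~\ref{sec:apx-arc-bounds}, each factor is one of two kinds. It may be a fresh literal, i.e.\ a single symbol $f_j = a$ not occurring earlier. Otherwise it is a copy factor: there is a source position $s_j < p_j$ with $x[s_j + t] = x[p_j + t]$ for all $t < |f_j|$, and the source may overlap the factor itself. Some variants add a trailing explicit literal symbol to each copy; in that case I would handle the extra symbol by an additional literal in $L$, or absorb it by adjusting the definition. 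I will check which variant the appendix uses, and the intended inequality suggests the pure copy-or-literal variant.

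The construction goes factor by factor. A literal factor $f_j$ contributes the pair $(f_j, p_j)$ to $L$. A copy factor contributes to $\Phi$ the pair $(\alpha_j,\beta_j)$ of step-one intervals $\alpha_j = (1,0,s_j,s_j+|f_j|)$ and $\beta_j = (1,0,p_j,p_j+|f_j|)$. Then $|\alpha_j| = |\beta_j| = |f_j|$, and $\alpha_j(t) = s_j + t$, $\beta_j(t) = p_j + t$. The resulting system $R = (L,\Phi)$ has $|R| \le z$.

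Next I verify the three conditions of Definition~\ref{def:ars}.
\begin{itemize}
  \item The literal condition holds by construction.
  \item For the equality condition, every position $\beta_j(t) = p_j + t$ is less than $n$, since $f_j$ lies inside $x$. Because $s_j < p_j$, every $\alpha_j(t) < \beta_j(t) < n$ as well, so no positions are truncated and both words have length $|f_j|$. The copy property gives $x[\alpha_j] = x[\beta_j]$, and this holds even when source and factor overlap, since the equality is pointwise.
  \item For the coverage condition, any position $i < n$ lies in a unique factor $f_j$. Either $(x[i], i) \in L$, or $i = \beta_j(t)$ with $t = i - p_j$ and $\alpha_j(t) = s_j + t < p_j + t = i$, as required.
\end{itemize}

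Hence $\operatorname{ARC}(x) \le |R| \le \operatorname{LZC}(x)$. The only real obstacle is matching the precise LZ77 convention used in the appendix, which determines whether a trailing literal per factor or a nonempty-length condition needs treatment. I expect the intended convention to be copy-or-single-fresh-literal, in which case the argument above is complete. The edge case of the empty word is trivial, since the empty system works.
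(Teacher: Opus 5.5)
Your proposal is correct and matches the paper's proof essentially verbatim: each literal factor $f_j$ becomes the literal $(x[l_j],l_j)\in L$, each copy factor becomes the step-one pair $\bigl((1,0,i_j,i_j+|f_j|),(1,0,l_j,l_j+|f_j|)\bigr)\in\Phi$, and the three ARS conditions are checked exactly as you describe. The appendix uses the pure copy-or-single-symbol convention (literal factors are simply those with $|f_j|=1$), so no trailing-literal adjustment is needed and your argument is complete as written.
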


Recall the definition of a straight-line program (see also e.g. \citealp{Burgisser1997}):

\begin{definition}
\label{def:slp}
A \emph{straight-line program} (SLP) over $\Sigma$ is $P=(Q,q_0,\delta)$, where $Q$ is a finite set, $q_0 \in Q$ and $\delta:Q\to\overline{Q}^*$, where $\overline{Q}\mathrel{:=}Q \sqcup \Sigma$. We require that
\begin{itemize}
  \item For all $q \in Q$, $|\delta(q)|\geq 2$.
  \item $\delta$ is acyclic, i.e. there are no $p_0,p_1 \dots p_{l-1} \in Q$ s.t. for all $i<l$, $p_{(i+1) \bmod l}$ appears in $\delta(p_i)$.
  \item $q_0$ is the unique element of $Q$ that doesn't appear in $\delta(q)$ for any $q \in Q$.
\end{itemize}

We define $\operatorname{val}_P: \overline{Q} \to \Sigma^*$ recursively by

\begin{itemize}
  \item For $a \in \Sigma$, $\operatorname{val}_P(a)\mathrel{:=}a$.
  \item For $q \in Q$, let $n\mathrel{:=}|\delta(q)|$. Then, $\operatorname{val}_P(q)\mathrel{:=}\operatorname{val}_P(\delta(q)[0])\dots \operatorname{val}_P(\delta(q)[n-1])$.
\end{itemize}

We define the \emph{value} of $P$ as $\operatorname{val}(P)\mathrel{:=}\operatorname{val}_P(q_0)$.
\end{definition}

An SLP is essentially an acyclic directed graph with vertices $\overline{Q}$ and edges $(q,q')$ where $q'$ appears in $\delta(q)$. The outgoing edges of every vertex are ordered (by the position inside $\delta(q)$), $q_0$ is the unique source in $Q$, and the elements of $\Sigma$ are sinks.

In \citet{Kosoy2026a} we considered the complexity measure $\operatorname{SLC}$, which is the size of the smallest straight-line program for a given word. Analogously, here we consider \emph{zipline programs}: a generalization where the concatenation operation is replaced by the $\operatorname{zip}$ operation defined below. Given $u_0, u_1, \dots, u_{r-1} \in \Sigma^*$, define $\operatorname{zip}(u_0,\dots,u_{r-1})$ by round-robin interleaving and stopping just before the first requested symbol that does not exist: if $m\mathrel{:=}\min_{i<r} |u_i|$ and $j<r$ is minimal with $|u_j|=m$, then
\begin{equation}
  |\operatorname{zip}(u_0,\dots,u_{r-1})| \mathrel{:=} r m + j
  \label{eq:zip-length}
\end{equation}
and, for $i < r m + j$,
\begin{equation}
  \operatorname{zip}(u_0,\dots,u_{r-1})[i] \mathrel{:=} u_{i \bmod r}\left[\left\lfloor \frac{i}{r} \right\rfloor\right].
  \label{eq:zip}
\end{equation}

\begin{definition}
\label{def:zlp}
A \emph{zipline program} (ZLP) over $\Sigma$ is $P=(Q,q_0,\delta)$, where $Q$ is a finite set, $q_0 \in Q$ and $\delta:Q\to\overline{Q}^*$, where $\overline{Q}\mathrel{:=}Q \sqcup \Sigma$. We require the same conditions as in Definition~\ref{def:slp}. We define $\operatorname{val}_P: \overline{Q} \to \Sigma^*$ recursively by

\begin{itemize}
  \item For $a \in \Sigma$, $\operatorname{val}_P(a)\mathrel{:=}a$.
  \item For $q \in Q$, let $n\mathrel{:=}|\delta(q)|$. Then, $\operatorname{val}_P(q)\mathrel{:=}\operatorname{zip}(\operatorname{val}_P(\delta(q)[0]), \dots, \operatorname{val}_P(\delta(q)[|\delta(q)|-1]))$.
\end{itemize}

We define the \emph{value} of $P$ as $\operatorname{val}(P)\mathrel{:=}\operatorname{val}_P(q_0)$.
\end{definition}

Thus, a ZLP is the same combinatorial structure as an SLP, but assigned a different interpretation ($\operatorname{val}$). See Appendix~\ref{sec:apx-props} for a few examples and basic properties of ZLP.

The size of a ZLP is defined as the number of edges in the graph, i.e. $|P|\mathrel{:=}\sum_{q \in Q} |\delta(q)|$. This allows us to define zipline complexity.

\begin{definition}
\label{def:zlc}
The \emph{zipline complexity} of a word $x$, denoted $\operatorname{ZLC}(x)$, is the minimum size of a zipline program $P$ such that $x \sqsubseteq \operatorname{val}(P)$.
\end{definition}

It's easy to see that $\operatorname{ZLC}$ is a word complexity measure. Notably, we don't know whether there exists a statistically efficient and computationally feasible predictor for $\operatorname{ZLC}$. However, the following proposition together with Theorem~\ref{thm:arc-muwu} below implies that there is one for $\sqrt[3]{n} \cdot \operatorname{ZLC}(x)$.

\begin{proposition}
\label{prp:arc-vs-zlc}
For every $n \in \mathbb{N}$ and $x \in \Sigma^n$,
\begin{equation*}
  \operatorname{ARC}(x) \leq 2 \left(\sqrt[3]{2 n}+1\right) \cdot \operatorname{ZLC}(x).
\end{equation*}
\end{proposition}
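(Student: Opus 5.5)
The plan is to convert a minimal zipline program $P=(Q,q_0,\delta)$ with $x\sqsubseteq\operatorname{val}(P)$ into an ARS by charging $O(\sqrt[3]{n})$ literals or pairs to each edge of $P$. The structural fact I would establish first (by induction along paths from $q_0$, using \eqref{eq:zip}) is that every occurrence of a node $q$ inside $\operatorname{val}(P)$ is an arithmetic interval. Concretely, if a path $q_0=p_0\to p_1\to\dots\to p_\ell=q$ uses child indices $j_0,\dots,j_{\ell-1}$ and arities $r_0,\dots,r_{\ell-1}$, then $\operatorname{val}_P(q)[t]$ sits at position $s\,t+o$ of $\operatorname{val}(P)$. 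Here $s=\prod r_h$ and $o=j_0+r_0 j_1+r_0r_1 j_2+\cdots<s$, and the length is governed by \eqref{eq:zip-length}. Two occurrences of the same node therefore yield pairs $(\alpha,\beta)$ with $|\alpha|=|\beta|$ and $x[\alpha]=x[\beta]$, once both are truncated to positions below $n$. Note that the definition of $u[\alpha]$ already handles truncation, but equal lengths must be arranged by cutting both intervals to the shorter one.

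Next I would fix, for each node $q$ that occurs at a position below $n$, a canonical occurrence $\alpha_q$, namely the one whose first position is smallest. Each edge $p\xrightarrow{j}c$ induces an occurrence $\beta_{p,j}$ of $c$ inside $\alpha_p$. If $\beta_{p,j}\ne\alpha_c$, I add constraints equating $\alpha_c$ with $\beta_{p,j}$. If the child is a letter, I add one literal for the first position of $\beta_{p,j}$; the other letter occurrences are handled in the same way as nodes. Coverage then follows by descent: any $i<n$ lies in $\alpha_{q_0}$. At each node $p$, either the child containing $i$ has canonical occurrence equal to $\beta_{p,j}$, in which case we continue descending, or $i$ lies in a non-canonical $\beta_{p,j}$. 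In the latter case $i$ is covered by the pair attached to that edge, provided the pair is correctly oriented at $i$.

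The main obstacle is orientation. The definition requires $\alpha(t)<\beta(t)=i$, but $\alpha_c$ and $\beta_{p,j}$ may have different steps $s_1\ne s_2$. Minimality of the start only gives $\alpha_c(0)\le\beta_{p,j}(0)$, and the affine maps $s_1t+o_1$ and $s_2t+o_2$ may cross, after which the copy runs the wrong way. A single crossing suffices to break orientation, so I would split the domain of each edge's comparison into pieces, each aligned with a pair of subintervals on which the canonical side is strictly earlier. This uses the freedom in Definition~\ref{def:ars} to compare sub-progressions of $\alpha_c$ with a shifted copy of itself, i.e.\ self-overlapping copies in the LZ style.

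To bound the number of pieces by $2(\sqrt[3]{2n}+1)$, I would balance two regimes with threshold $T=\sqrt[3]{2n}$. If the larger step exceeds roughly $n/T^2$, the occurrence has only $O(T^2)$ positions below $n$. If the steps are small, the gap $|s_1-s_2|\,t$ grows fast enough that the positions where orientation fails can be absorbed into $T$ shifted sub-progressions, for example using copies from the canonical occurrence at offsets $t-d$ with $d\le T$. Short occurrences are then chunked into $O(T)$ intervals, and the resulting count per edge is $2(T+1)$. I expect getting the exact constant $2(\sqrt[3]{2n}+1)$, rather than merely $O(\sqrt[3]{n})$, to require care in choosing these regimes. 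I would first try to make the $O(\sqrt[3]{n})$ version work and only then optimize the constants.
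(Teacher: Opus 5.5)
Your setup matches the paper's: path occurrences of a node are arithmetic progressions, the occurrence identity $x[\rho(t)]=\operatorname{val}_P(q)[t]$ holds, and coverage is proved by descent from the root. You also correctly identify orientation as the crux. Your remedy for it, however, does not work. An equation $(\alpha,\beta)\in\Phi$ must satisfy $u[\alpha]=u[\beta]$ symbol by symbol at aligned parameters. Copying the $t$-th symbol of a non-canonical occurrence $\beta_{p,j}$ of $c$ from $\alpha_c$ at parameter $t-d$, or comparing $\alpha_c$ with a shifted copy of itself, asserts $\operatorname{val}_P(c)[t]=\operatorname{val}_P(c)[t-d]$. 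That is a local periodicity which is false in general; LZ-style self-overlap is valid only because an LZ factor literally equals its source. Once $s_1t+o_1$ overtakes $s_2t+o_2$, the positions you actually know to carry $\operatorname{val}_P(c)[t]$ are the parameter-$t$ positions of other occurrences of $c$. So $\beta_{p,j}(t)$ must be copied from some other occurrence $\eta$ of $c$ with $\eta(t)<\beta_{p,j}(t)$. The positions of $\eta$ must then be covered themselves, so your descent must also expand $\eta$ along all of $c$'s outgoing edges. A single canonical occurrence per node therefore cannot suffice. Your two-regime threshold count is also not where the cube root comes from, and it never exhibits a valid aligned source for the ``short'' occurrences.

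The missing ingredient is Lemma~\ref{lem:rep-progressions}, which gives a bounded family of representatives per node.
The path occurrences $\mathcal{O}_q$ of a fixed node are pairwise disjoint, because two paths diverge at some node $p$ into distinct residues modulo $rd_p$. Take $A_q\subseteq\mathcal{O}_q$ inclusion-minimal among subsets that ``cover below $n$''. This means every visible $\rho(k)$ with $\rho\notin A_q$ has some $\eta\in A_q$ with $\eta(k)<\rho(k)$. Minimality gives each $\rho_j\in A_q$ a witness parameter $k_j$ at which it is the unique minimum of $A_q$. Ordering by these witnesses forces the steps to decrease strictly and the residues to satisfy $i_{j+1}-i_j\geq j$. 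Hence $n>|A_q|^2(|A_q|-1)/2$, i.e.\ $|A_q|<\sqrt[3]{2n}+1$. The set $\{k:\eta(k)<\rho(k)\}$ is cut out by a single linear inequality, so it is an initial or final segment. Consequently each non-representative child occurrence is covered by at most two representatives on two intervals. Expanding every edge of every representative yields $|R|\leq 2\sum_q|A_q|d_q\leq 2(\sqrt[3]{2n}+1)|P|$. Coverage follows by reverse topological induction over the representatives.
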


We also leave it as a (purely combinatorial) open problem to establish whether the factor $O(\sqrt[3]{n})$ in the above is tight. There is, however, a natural subclass of ZLP for which this factor can be avoided.

\begin{definition}
\label{def:sync-zlp}
A ZLP $P=(Q,q_0,\delta)$ is called \emph{synchronous} when there exists a function $\phi: Q \to \mathbb{N}$ such that:
\begin{itemize}
  \item $\phi(q_0) = 1$
  \item For every $q,q' \in Q$, if $q'$ appears in $\delta(q)$ then $\phi(q') = |\delta(q)| \cdot \phi(q)$.
\end{itemize}

The \emph{synchronous zipline complexity} of a word $x$, denoted $\operatorname{sZLC}(x)$, is the minimum size of a synchronous ZLP $P$ s.t. $x \sqsubseteq \operatorname{val}(P)$.
\end{definition}

In other words, for any vertex $q$ and two paths from the root $q_0$ to $q$, the product of vertex out-degrees along these paths has to be the same.

The intuition behind this definition can be understood when thinking of how the structure of $\operatorname{val}(P)$ depends on $P$. Every $q \in Q$ corresponds to a collection of arithmetic progression subsequences of $\operatorname{val}(P)$ which all look like prefixes of $\operatorname{val}_P(q)$. Each individual subsequence corresponds to a particular path from $q_0$ to $q$. The step of the corresponding arithmetic progression is the product of the out-degrees along the path. Hence, the synchronicity condition ensures that all progressions corresponding to a given $q \in Q$ have the same step.

In contrast to Proposition~\ref{prp:arc-vs-zlc}, we have,

\begin{proposition}
\label{prp:arc-vs-szlc}
For every $n \in \mathbb{N}$ and $x \in \Sigma^n$,
\begin{equation*}
  \operatorname{ARC}(x) \leq \operatorname{sZLC}(x).
\end{equation*}
\end{proposition}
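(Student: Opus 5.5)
We build an ARS for $x$ directly from a minimum synchronous ZLP $P=(Q,q_0,\delta)$ with $x\sqsubseteq\operatorname{val}(P)$. It will have at most one element (a literal or a pair) per edge of $P$, which gives $\operatorname{ARC}(x)\leq|P|=\operatorname{sZLC}(x)$.

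\textbf{Step 1 (occurrences are arithmetic progressions).} Extend $\phi$ to letters by the same rule. This is consistent because synchronicity fixes $\phi$ on every child. Unfolding the recursion, every position $p<|\operatorname{val}(P)|$ determines a unique root-to-sink path in the DAG. If the path passes through $q$, and reaches $q$ with accumulated offset $c<\phi(q)$ and index $t$ inside $\operatorname{val}_P(q)$, then $p=\phi(q)\,t+c$ and $\operatorname{val}(P)[p]=\operatorname{val}_P(q)[t]$.

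To see this, induct on depth using \eqref{eq:zip}. If $q$ sits at offset $c$ and $q'=\delta(q)[j]$, then $q'$ sits at offset $c+j\phi(q)$ with step $|\delta(q)|\phi(q)=\phi(q')$. Synchronicity is exactly what makes all such progressions for a fixed $q$ have the same step $\phi(q)$. The truncation rule \eqref{eq:zip-length} only cuts off tails, so the identity $\operatorname{val}(P)[\phi(q)t+c]=\operatorname{val}_P(q)[t]$ holds wherever the left side exists.

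Let $O(q)\subseteq[\phi(q)]$ be the set of offsets at which $q$ occurs, i.e.\ the offsets reached along root-to-$q$ paths. Call $o(q):=\min O(q)$ the \emph{canonical} offset. Fix for each $q\neq q_0$ (including letters) one \emph{canonical in-edge}: an edge $(p,j)$ with $p$ at its canonical offset $o(p)$ and $o(p)+j\phi(p)=o(q)$. Such an edge exists because $o(q)$ is reached by some path, whose predecessor node then sits at some offset $c\le$ any alternative. Minimality of $o(q)$ together with the formula $c+j\phi(p)$ lets us take $c=o(p)$.

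\textbf{Step 2 (the system).} For each letter $a$ occurring in $P$, add the literal $(a,o(a))$ if $o(a)<n$. For each non-canonical in-edge $e=(p,j)$ into a node or letter $q$, set $c_e:=o(p)+j\phi(p)$. Then $c_e\in O(q)$, and $c_e>o(q)$ unless $c_e=o(q)$, in which case the edge is redundant and we discard it. Add the pair $(\alpha_e,\beta_e)$ with
\[
\alpha_e=(\phi(q),o(q),0,L_e),\qquad \beta_e=(\phi(q),c_e,0,L_e),
\]
where $L_e$ is the number of $t$ with $\phi(q)t+c_e<n$; we drop the pair if $L_e=0$.

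Whenever $\beta_e(t)<n$ we also have $\alpha_e(t)<\beta_e(t)<n\le|\operatorname{val}(P)|$. By Step 1 both positions carry $\operatorname{val}_P(q)[t]$, so $x[\alpha_e]=x[\beta_e]$, with equal lengths and the correct orientation. Counting, each letter contributes one literal, charged to its canonical in-edge, and each non-canonical in-edge contributes one pair. So $|R|\le|P|$.

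\textbf{Step 3 (coverage), the main point to verify.} Take $i<n$ and follow its path $q_0=p_0,p_1,\dots,p_s=a$ with offsets $c_0,\dots,c_s$.

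If every node on the path sits at its canonical offset, then $c_s=o(a)$, so $i=o(a)$ is a literal.

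Otherwise let $\ell$ be minimal with $c_\ell\neq o(p_\ell)$. Then $p_{\ell-1}$ sits at $o(p_{\ell-1})$, so the edge used is a non-canonical in-edge $e$ of $q=p_\ell$ with $c_e=c_\ell>o(q)$. Writing $i=\phi(q)t+c_e$ with $t$ the index inside $\operatorname{val}_P(q)$, we get $i=\beta_e(t)>\alpha_e(t)$ with $t<L_e$, as required.

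The delicate points are the ones flagged in Step 1: the existence of a canonical in-edge from a canonically placed parent, and the fact that truncation in $\operatorname{zip}$ never breaks the identity $\operatorname{val}(P)[\phi(q)t+c]=\operatorname{val}_P(q)[t]$ for in-range positions. I would prove both as a preliminary lemma by induction on the depth of $q$.
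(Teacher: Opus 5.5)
Your construction is the paper's proof. Your $O(q)$ and $o(q)$ are its occurrence residues $C_q$ and representative $c_q$, and your Step~1 is Lemma~\ref{lem:szlp-occ}; you correctly flag the zip-truncation subtlety. Your pairs are its equations $((\phi(q'),c_{q'},0,b_{q,j}),(\phi(q'),c_{q,j},0,b_{q,j}))$ for edges out of canonically placed parents. Your first-deviation coverage argument is an unrolled form of its reverse induction over the DAG. However, one step is false as written, and it breaks validity of the system.

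\textbf{The false step.} You write ``extend $\phi$ to letters by the same rule; this is consistent because synchronicity fixes $\phi$ on every child.'' Definition~\ref{def:sync-zlp} only constrains children $q'\in Q$. A letter can occur under parents with different values of $|\delta(p)|\phi(p)$. Then $\phi(a)$ is undefined, and so are $O(a)\subseteq[\phi(a)]$ and the step of your letter pairs.

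A wrong choice of step gives false equations. Take
$\delta(q_0)=(A,B)$, $\delta(A)=(A_0,A_1)$, $\delta(B)=(X,a)$, $\delta(A_0)=(a,c)$, $\delta(A_1)=(a,d)$, $\delta(X)=(e,f)$.
\begin{itemize}
\item This is synchronous with $\phi(A)=\phi(B)=2$ and $\phi(A_0)=\phi(A_1)=\phi(X)=4$, and $\operatorname{val}(P)=aeaacfd$.
\item The letter $a$ would get step $8$ from $A_0,A_1$ but step $4$ from $B$. We have $O(a)=\{0,2,3\}$ and $o(a)=0$.
\item With step $4$ and $n=7$, the pair for the in-edge $(A_1,0)$ has $c_e=2$ and $L_e=2$. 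It compares $x[0]x[4]=ac$ with $x[2]x[6]=ad$, so the resulting system is not an ARS.
\end{itemize}

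\textbf{The repair.} It is easy and leaves the count unchanged. A letter has length one, so each of its occurrences contributes exactly one position, namely $t=0$. Either replace every letter pair by a length-one copy $((1,0,o(a),o(a)+1),(1,0,c_e,c_e+1))$, which is valid since both positions carry $a$ by Step~1. Or do what the paper does: add a literal $(a,o(p)+j\phi(p))$ for every terminal edge $(p,j)$ with $o(p)+j\phi(p)<n$, and drop the per-letter bookkeeping.

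Either way each edge of $P$ still contributes at most one constraint. Your coverage argument goes through unchanged, because the final step into a letter is covered by the length-one copy or the literal. This gives $\operatorname{ARC}(x)\leq|P|$.
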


As the following proposition demonstrates, the zip operation plays, for $\mathrm{AC}^{\mathrm{R}}_k$, a role analogous to the role played by concatenation in SLP for $\mathrm{AC}^{\mathrm{L}}_k$.\footnote{See Proposition 4.3 of \citet{Kosoy2026a}.} In particular, ARC accomplishes the goal of generalizing $\mathrm{AC}^{\mathrm{R}}_k$ in a way that removes the dependence on $k$.

\begin{proposition}
\label{prp:zlc-vs-acr}
For every base $k \geq 2$, length $n\geq2$ and word $x \in \Sigma^n$,
\begin{equation*}
  \operatorname{sZLC}(x) \leq k\lceil\log_k n\rceil \cdot \mathrm{AC}^{\mathrm{R}}_k(x).
\end{equation*}
\end{proposition}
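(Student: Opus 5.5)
Let $M$ be a DFAO with $s = \mathrm{AC}^{\mathrm{R}}_k(x)$ states, initial state $q_{\mathrm{init}}$, transitions $\delta_M$ and output $\tau_M$, witnessing the definition with some $m$, $|x| \le k^m$. I will build a synchronous ZLP that unrolls $M$ level by level. Set $L := \lceil \log_k n \rceil$. Since $n \le k^L$, reading a padded representation of length $L$ suffices. Leading zeros in the padding are read last in RTL order, so I will replace $M$ by the automaton $M'$ with $\tau_{M'}(q) := \tau_M(\delta_M^*(q, 0^{m-L}))$ when $m > L$. If $m < L$, extra leading zeros must be harmless; I pad with $m := \max(m,L)$ and then truncate as above, since the definition allows any $m$ with $|x|\le k^m$.

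For each level $0 \le j \le L$ and each state $q$ of $M'$, introduce a variable $(q,j)$ meant to represent the infinite-ish word $w_{q,j}$ of length $k^{L-j}$ defined by $w_{q,j}[t] := M'$ started in $q$ reading the remaining $L-j$ digits of $t$ least significant first. The key identity is
\[
w_{q,j} = \operatorname{zip}\bigl(w_{\delta(q,0),j+1}, \dots, w_{\delta(q,k-1),j+1}\bigr).
\]
This holds because $t \bmod k$ is the first digit read and $\lfloor t/k \rfloor$ indexes into the child, matching \eqref{eq:zip}. All children have equal length $k^{L-j-1}$, so by \eqref{eq:zip-length} the zip has full length $k^{L-j}$. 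So I set $\delta((q,j)) := (\delta_{M'}(q,0),j+1)\cdots(\delta_{M'}(q,k-1),j+1)$ for $j<L$. At level $L$, the variable $(q,L)$ would be the single letter $\tau_{M'}(q)$, so in the parent I substitute the letter itself for $(q,L)$. The root is $(q_{\mathrm{init}},0)$, whose value is $w_{q_{\mathrm{init}},0}$, of length $k^L \ge n$ and with prefix $x$.

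Next I will check the conditions. Acyclicity is clear since the level strictly increases. For the unique-source condition, I restrict $Q$ to variables reachable from the root, which only decreases size. Every node has out-degree exactly $k\ge2$. Synchronicity holds with $\phi((q,j)) := k^j$, since every edge goes from level $j$ to level $j+1$ and multiplies by the out-degree $k$.

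The size is at most $k$ edges per node, with at most $s$ reachable nodes on each of the levels $0,\dots,L-1$, giving $|P| \le kLs = k\lceil\log_k n\rceil \cdot \mathrm{AC}^{\mathrm{R}}_k(x)$. The main obstacle is the bookkeeping with padding length and the edge case where $L$ must be at least $1$; this holds because $n\ge2$ gives $L\ge1$, so the root is a genuine variable. I also need to verify that replacing the deep leading-zero suffix by a modified output function does not change the state count.
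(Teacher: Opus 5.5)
Your proposal is correct and follows essentially the same route as the paper: fold the padding zeros (which are read last) into a modified output map to normalize the witnessing DFAO to depth $\lceil\log_k n\rceil$, then unroll it level by level into a $k$-ary ZLP with letters at the bottom level, using $\phi = k^j$ for synchronicity and at most $s$ vertices per level for the size bound. One small remark: the case $m < L$ you worry about cannot occur, since $n \le k^m$ already forces $m \ge \lceil\log_k n\rceil$. This matters because extra padding zeros would not in general be harmless for a DFAO, so it is the vacuity of that case, not harmlessness, that makes your argument go through.
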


$\operatorname{ARC}$ is a substantially more expressive complexity measure than $\operatorname{SLC}$ or $\mathrm{AC}^{\mathrm{R}}_k$. In particular, in Appendix~\ref{sec:apx-mix-automatic} we show that the so-called \emph{mix-automatic} sequences have polylogarithmic $\operatorname{sZLC}$ (and therefore $\operatorname{ARC}$) growth.

\subsection{Prediction from ARC}
\label{sec:prediction-from-arc}

The main algorithmic result is that $\operatorname{ARC}$ admits a statistically efficient and computationally feasible predictor. The algorithm is based on methods developed in \cite{Joshi2026} for ``multivalued" online learning. We do not know whether a computationally \emph{efficient} (in the sense of Definition~\ref{def:comp-eff}, i.e. including a compression bound) predictor is possible: this is left as an open problem\footnote{That said, in the following papers in the series, we will show that there is a non-trivial class of ZLP for which the analogue of ZLC \emph{does} admit a computationally efficient predictor.}.

\begin{theorem}
\label{thm:arc-muwu}
There exists a statistically efficient and computationally feasible predictor (ARC-MUWU) for $\operatorname{ARC}$.
For any $n \geq 2$ and $x \in \Sigma^n$, denote $m\mathrel{:=}\operatorname{ARC}(x)$.
Then ARC-MUWU satisfies
\begin{equation}
  M_{\mathrm{ARC-MUWU}}(x) = O(m \log n).
  \label{eq:arc-muwu-mistakes}
\end{equation}
\end{theorem}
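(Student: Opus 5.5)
We plan to reduce ARC prediction to a ``multivalued'' online learning problem over a polynomially large hypothesis space, in the style of \cite{Joshi2026}. The hypotheses are the \emph{atoms} an ARS can use. A literal $(a,i)\in\Sigma\times[n]$ is one atom. An ordered pair of arithmetic intervals $(\alpha,\beta)$ with $|\alpha|=|\beta|$ is the other kind. For the run on a sequence of length at most $n$, only intervals whose images meet $[n]$ matter. Such an interval is determined by its step $r\le n$, residue $i<r$ and endpoints $a<b\le n/r+1$. So the number of relevant atoms is $N=\operatorname{poly}(n)$, and $\log N=O(\log n)$. An unknown length $n$ is handled by the standard doubling trick over $n=2^j$: the $j$-th phase uses $\log N_j=O(j)$, and the phases sum to $O(m\log n)$. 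Approximate monotonicity is trivial for ARC restricted to prefixes, because restricting an ARS to a prefix keeps it valid.

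The core observation is what a pair atom says at time $t$. Given the prefix $x[:t]$, a pair atom $(\alpha,\beta)$ that is still consistent with $x[:t]$ (that is, $x[\alpha]$ and $x[\beta]$ agree wherever both positions are below $t$) makes a \emph{prediction} at $t$ exactly when $t=\beta(s)$ for some $s$ with $\alpha(s)<t$. In that case it predicts $x[\alpha(s)]$. A literal $(a,t)$ predicts $a$. Every other atom abstains. By the third condition of Definition~\ref{def:ars}, a valid ARS $R$ for $x$ has, at every $t<n$, at least one of its $\le m$ atoms predicting, and all its atoms predict correctly. This is precisely the multivalued/``set-valued experts'' setting: there is a hidden set of $m$ experts such that at each round some member speaks and every member that speaks is right.

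The predictor (ARC-MUWU) keeps multiplicative weights $w_h$ over atoms, initialised uniformly. It predicts the symbol with maximal total weight among the speaking, still-consistent atoms. On a mistake, it multiplicatively shrinks (for example halves) the weight of every speaking atom that predicted wrongly. Atoms that become inconsistent are zeroed; such atoms can never belong to $R$. We also plan to boost the speaking atoms that were right, with the total weight renormalised in the manner of \cite{Joshi2026}. The mistake bound is a potential argument. Let $\Psi=\sum_{h\in R}\log w_h$. Each mistake means the wrong answer carried at least as much weight as the right answers, including those of the $R$-atoms speaking. The update then increases the relative mass of $R$ by a constant factor while the total mass does not grow. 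Since $\Psi$ starts at $-m\log N$ and is bounded above by $0$, the number of mistakes is $O(m\log N)=O(m\log n)$.

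The main obstacle is exactly this potential step. A plain weighted majority of hypotheses would not work, since no single atom is always right: only the \emph{set} $R$ jointly covers every round. We therefore need an update that rewards the correct speakers enough to gain $\Omega(1)$ in $\Psi$ per mistake, even though only one $R$-atom may speak on that round. The plan is to invoke the MUWU guarantee of \cite{Joshi2026} as a black box, checking that its hypotheses (abstaining experts, and a covering set of size $m$ that is never wrong) hold. Failing that, we would prove the $\Omega(1/m)$-per-mistake gain in $\Psi$ directly, which would cost only an extra factor inside the $O(\cdot)$ after choosing the learning rate. Computational feasibility then follows from a per-round cost of $O(N)=\operatorname{poly}(n)$. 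At round $t$, we enumerate for each atom whether $t\in\operatorname{im}\beta$, look up $x[\alpha(s)]$ in the stored prefix, and maintain the consistency flags incrementally. The state consists of the prefix and the $N$ weights, stored in $O(\log n)$-bit precision, and is therefore $\operatorname{poly}(n)$. This gives Definition~\ref{def:comp-feas}, together with \eqref{eq:arc-muwu-mistakes}.
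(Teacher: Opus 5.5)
Your core mechanism is the paper's: multiplicative weights over arithmetic-equation specialists, with each mistake charged to a member of the ARS $R$. Your $\Psi$-argument is sound once the update is pinned down as MUWU's, and then no fallback is needed.

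\emph{The update and the potential argument.} An atom that speaks wrongly is automatically falsified: if $t=\beta(s)$ and $x[t]\neq x[\alpha(s)]$, then $x[\alpha]$ and $x[\beta]$ now disagree. So the rule is: delete falsified atoms, and on a mistake round double every atom that spoke correctly. Halving alone would not do, because the removed mass must pay for the doubling. Abstainers are never rescaled and no renormalisation is needed.
\begin{itemize}
\item By the argmax, the deleted mass on $\hat x_t$ is at least the doubled mass on $x[t]$. So the total weight never exceeds its initial value $N$, and each $w_h\le N$.
\item By coverage, some atom of $R$ speaks, and does so correctly, on every round. Hence each mistake raises $\sum_{h\in R}\log_2 w_h$ by at least $1$.
\end{itemize}
This gives at most $m\log_2 N$ mistakes. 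It is exactly the paper's route of applying the single-target MUWU bound to each member of $R$ and summing (Proposition~\ref{prop:conj-lift}). Note that Theorem~\ref{thm:online} is a single-target statement, so ``invoking it as a black box for a covering set'' only works through this summation. Your fallback would actually be fatal: a gain of $\Omega(1/m)$ per mistake gives $O(m^2\log n)$, which the $O(\cdot)$ does not absorb.

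\emph{The concrete error is in the horizon handling.} Restarting at $n_j=2^j$ with $\log N_j=\Theta(j)$ gives $\sum_{j\le\log n}O(mj)=O(m\log^2 n)$, not $O(m\log n)$. The standard fix is squared horizons $n_j=2^{2^j}$. Then $\log N_j=O(2^j)$ sums geometrically to $O(\log n)$, and since $n_j\le t^2$ during phase $j$, the state stays polynomial. You also need per-horizon canonicalisation of atoms. Your count ``$r\le n$'' silently uses it, since an interval with $a=0$ and $r>n$ is visible only at index $0$.

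\emph{How the paper avoids restarts.} This explains a design choice your finite atoms sidestep. With finite intervals that may speak at index $0$, infinitely many atoms (arbitrary step and length) make a nontrivial prediction before any fixed time, so an anytime activation scheme would be unbounded. The paper therefore:
\begin{itemize}
\item uses infinite progressions that predict only from index $1$ on, so any hypothesis active before $n$ has starts and steps $<n$, giving $|\mathcal{H}_n|\le n^4$;
\item runs Anytime-MUWU on this class;
\item treats each finite ARS equation as a prefix coarsening of such a hypothesis;
\item charges the at most $|\Phi|$ index-$0$ targets $Z_R$, together with the literals, directly.
\end{itemize}
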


See Appendix~\ref{sec:apx-muwu-arc} for the proof.

\section*{Acknowledgments}

This work was supported by the Advanced Research+Invention Agency (ARIA) of the United
Kingdom, Survival and Flourishing Corp,
and Coefficient Giving in San Francisco, California. The author wishes to thank Vinayak Pathak for reviewing a draft and providing substantive suggestions. She also thanks Sarah Lurie for suggestions on improving the writing style. 

\bibliographystyle{plainnat}
\bibliography{main}

\appendix
\section{Tabular Plurality}
\label{sec:tp}

In this appendix, we describe the Tabular Plurality (TP) algorithm and prove Theorem~\ref{thm:rtl-bounds}. The algorithm operates by maintaining an approximate model of the unknown DFAO that computes $x[t]$ from the reversed (LSDF) base-$k$ representation of $t$. The model is organized as a layered transition table, indexed by digit depth. On each observation, the algorithm corrects inconsistent transitions; predictions are made by a plurality vote over candidate states.

We begin with an auxiliary combinatorial result on sequential prediction with branching structure, then present the algorithm in full, and conclude with the proof of Theorem~\ref{thm:rtl-bounds}.

\subsection{Branch Prediction}
\label{sec:branch-pred}

The following game captures the combinatorial core of the TP mistake analysis.

\begin{definition}
\label{def:bp-game}
  The \emph{branch prediction game} is a sequential game proceeding as follows.
  \begin{itemize}
    \item The state consists of a finite set $E$ of \emph{elements}, a rooted tree $T = (V, E_T)$, and a mapping $f : E \to V$ assigning each element to a vertex.
    \item Initially, $E = \emptyset$ and $V = \{v_0\}$ (a single root vertex).
    \item On each round, nature either:
    (a) \emph{introduces} a new element $e$: nature may first extend $T$ by attaching a (possibly empty) chain of new vertices $u_1, \dots, u_p$, with each $u_{i+1}$ a child of $u_i$, where $u_0$ is a pre-existing vertex. Nature then assigns $f(e)$ to $u_p$.
    (b) \emph{activates} an existing element $e \in E$:
      \begin{enumerate}
        \item If $f(e)$ is a leaf, nature extends $T$ by attaching a new child $u$ to $f(e)$, and sets $f(e) \leftarrow u$. No prediction is made.
        \item If $f(e)$ is an internal vertex, the learner selects a child $a$ of $f(e)$ as its prediction. Nature then either selects an existing child $b$ of $f(e)$ or attaches a new child $b$ to $f(e)$. In either case, $f(e) \leftarrow b$. A \emph{mistake} occurs if $a \neq b$.
      \end{enumerate}
  \end{itemize}
\end{definition}

Because activation always moves an element strictly downward, once an element enters the subtree rooted at a vertex $u$ it remains in that subtree for the remainder of the game. In particular, the quantity
\begin{equation*}
  N(u) \mathrel{:=} \left|\{e \in E : f(e) \text{ is in the subtree rooted at } u\}\right|
\end{equation*}
is non-decreasing over time for every $u \in V$.

We propose the following prediction strategy for this game.

\begin{definition}
\label{def:mb}
  The \emph{Plurality Branch} (PB) algorithm operates as follows. Whenever a prediction is required at an internal vertex $v = f(e)$ (case (b.2) of Definition~\ref{def:bp-game}), the learner predicts the child $c$ of $v$ maximizing $N(c)$ --- that is, the child whose current subtree contains the most elements. Ties are broken arbitrarily.
\end{definition}

Note that the element $e$ being activated sits at $v$ itself, not in any child's subtree, and so does not contribute to any $N(c)$.

We now state the main result of this subsection.

\begin{proposition}
\label{prop:branch-pred}
  Under the PB algorithm, the total number of mistakes in the branch prediction game (Definition~\ref{def:bp-game}) is at most $3 |E| \left\lfloor \log_2 |E| \right\rfloor$, where $|E|$ is the final number of elements.
\end{proposition}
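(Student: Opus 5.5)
The plan is a charging argument. Each mistake is charged to the activated element $e$, and we show that every element is charged $O(\log|E|)$ times. The elements of the potential argument are the quantities $N(u)$. Fix the final game and suppose a mistake occurs when $e$ is activated at an internal vertex $v$: PB predicts the child $a$ maximizing $N(a)$, and nature moves $e$ to some other child $b\neq a$. Just before the move, $N(b)\le N(a)$. After the move, $e$ sits in the subtree of $b$, where it stays forever. The siblings' subtrees are disjoint, so $e$ now lives in a subtree containing at most half of the elements that will ever lie in the subtree of $v$, plus itself, at least as counted at that moment.

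The difficulty is that the $N$-values are non-decreasing over time. A subtree that looked small when $e$ entered it may later become large, so the naive halving argument on \emph{current} counts does not immediately give $\log|E|$. To handle this, I would use the final counts $N_{\mathrm{fin}}(u)$ and split mistakes into two types:
\begin{enumerate}
  \item \emph{Type 1:} $N_{\mathrm{fin}}(b)\le \tfrac{2}{3}N_{\mathrm{fin}}(v)$, say, so the final-count potential of $e$'s location shrinks geometrically.
  \item \emph{Type 2:} $b$ ends up being a heavy child in the final tree, even though at the time $N(b)\le N(a)$.
\end{enumerate}
Along the root-to-leaf path of $e$, the final counts $N_{\mathrm{fin}}$ strictly decrease whenever $e$ descends to a child whose final count is a constant fraction smaller. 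Hence each element incurs at most $\log_{3/2}|E|$, or after tuning $\log_2|E|$, type 1 mistakes. Summed over elements, this contributes $O(|E|\log|E|)$.

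For type 2 mistakes, I charge not to $e$ but to the elements of the sibling $a$. At the moment of the mistake, $N(a)\ge N(b)\ge 1$ unless both are zero. If $b$ later becomes heavy, meaning it receives more than half of $v$'s final mass, then the elements in $a$'s subtree, of which there are at least $N(a)\ge N(b)$, end up in a light subtree relative to $v$ in the final tree. The key step is to bound, for a fixed vertex $v$, the number of type 2 mistakes at $v$ by the number of elements in the final light children of $v$. Each time $e$ jumps into the eventually-heavy child while a light child $a$ currently dominates, charge one unit to a distinct element currently in $a$'s subtree. One uses the fact that $N(b)$ grows by at least one with every such jump, and $N(b)\le N(a)\le$ (final light mass), to argue that the total number of type 2 mistakes at $v$ is at most twice the final light mass at $v$.

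Summing the light mass over all vertices counts each element once per light edge on its final path, which is at most $\log_2|E|$ by the standard heavy-light decomposition bound. This yields about $2|E|\log_2|E|$ from type 2. Combining with the type 1 count gives the stated $3|E|\lfloor\log_2|E|\rfloor$, where the case $|E|\le1$ gives zero mistakes trivially.

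I expect the main obstacle to be the type 2 accounting. The charging must respect the time ordering: elements can enter the light child after the mistake, and mistakes can also occur at $v$ into the light children themselves. So I would carefully define the potential $\sum_{c\ \text{light}}N(c)$ at $v$ and show that each mistake at $v$ either increases it or is matched by a prior increase, with the constants arranged to total $3$.
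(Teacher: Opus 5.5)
Your proposal is correct and is essentially the paper's proof. The paper also splits the mistakes at each vertex $v$ according to whether nature enters the child with the largest final subtree count (your heavy child), and bounds those by $c_2+1$ via the same monotone-counter observation $N(h)\le N(a)\le N_{\mathrm{fin}}(a)$. It controls both the remaining moves and the total light mass by halving final counts along each element's root-to-final-vertex path (its ``phantom traversals''), which gives the same $1+2=3$ split.

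Two details to tighten. Fix the threshold at $1/2$ rather than $2/3$. Replace the ``distinct element'' charging, which can fall short by one (e.g.\ when the first such jump creates the heavy child), by the counter bound $J_v\le\Lambda_v+1\le 2\Lambda_v$. Here $J_v$ counts type-2 mistakes at $v$, $\Lambda_v$ is the final light mass at $v$, and the last inequality uses that every existing child has final count at least $1$. This already closes type 2 without the potential from your last paragraph.
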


\begin{proof}
  We augment the game with \emph{phantom traversals}. Whenever a new element $e$ is introduced at a vertex $v \neq v_0$, we retroactively declare that $e$ has traversed the entire root-to-$v$ path --- that is, at each vertex $w$ on this path, we add a phantom selection of the child leading toward $v$. Phantom traversals are purely an analytical device: they do not affect PB, which predicts based on the actual subtree counts $N(c)$.

  For each vertex $v$ with children $u_1, u_2, \dots$, let $c_i$ denote the total number of augmented (real plus phantom) selections of $u_i$ at $v$ over the course of the game. Equivalently, $c_i$ is the total number of elements whose augmented root-to-final-vertex path passes through $u_i$. Assume the children are sorted so that $c_1 \geq c_2 \geq \dots$, and call $u_1$ the \emph{dominant branch} at $v$.

  Observe that at any time $t$, $N_t (u_i)$ equals the number of elements present at time $t$ whose augmented path passes through $u_i$. Hence $N_t (u_i) \leq c_i$, with equality at the end of the game.

  We bound the mistakes at $v$ by separating them by nature's choice $b$:

  \begin{itemize}
    \item \emph{Nature picks a non-dominant branch ($b = u_i$ for some $i \geq 2$).} Each such real selection is a real (non-phantom) traversal into $u_i$, and there are at most $c_i$ such traversals over the course of the game. Hence the total number of such rounds at $v$ is at most $\sum_{i \geq 2} c_i$; each contributes at most one mistake.
    \item \emph{Nature picks the dominant branch ($b = u_1$).} If PB predicts $a \neq u_1$, then by definition of PB, $N_t (a) \geq N_t (u_1)$ at the time of the round. Since $a$ is non-dominant, $N_t (a) \leq c_a \leq c_2$, whence $N_t (u_1) \leq c_2$. After the round $N(u_1)$ increases by $1$. Hence at most $c_2 + 1$ rounds of this type can occur: once $N(u_1) > c_2$, we have $N(u_1) > c_i$ for every non-dominant $u_i$, so PB predicts $u_1$ and no further mistake of this type is possible.
  \end{itemize}

  Summing, mistakes at $v$ $\leq \sum_{i \geq 2} c_i + c_2 + 1 \leq 2 \sum_{i \geq 2} c_i + 1$. If there are any mistakes at $v$ then $c_2 \geq 1$ (otherwise $v$ has a single child which implies no mistakes), so $2 \sum_{i \geq 2} c_i + 1 \leq 3 \sum_{i \geq 2} c_i$. The right-hand side is 3 times the number of augmented non-dominant traversals at $v$.

  It remains to bound the total number of augmented non-dominant traversals across all vertices. For each element $e \in E$, define the \emph{cohort} of $e$ at a vertex $v$ on its augmented root-to-final-vertex path as the set of elements whose augmented path also passes through $v$. Since every element's augmented path begins at the root, the cohort at $v_0$ has size $|E|$. When $e$ makes a non-dominant traversal at $v$ (real or phantom), it follows a child through which at most half the cohort passes (the dominant child accounts for at least as many as any other). Hence the cohort of $e$ drops by a factor of at least $2$ at every non-dominant traversal. Since the cohort starts at $|E|$ and is always at least $1$, each element contributes at most $\left\lfloor \log_2 |E| \right\rfloor$ non-dominant augmented traversals. Summing, the total number of augmented non-dominant traversals is at most $|E| \left\lfloor \log_2 |E| \right\rfloor$.

  Applying the factor-of-$3$ overhead, the total number of mistakes is at most $3 |E| \left\lfloor \log_2 |E| \right\rfloor$.
\end{proof}

The following matching lower bound shows that Proposition~\ref{prop:branch-pred} is tight up to constant factors.

\begin{proposition}
\label{prop:branch-lower}
  For any $t\geq1$, nature has a strategy in the branch prediction game with $|E| = t$ that forces $\Omega(t \log t)$ mistakes against any deterministic prediction algorithm.
\end{proposition}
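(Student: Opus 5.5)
The plan is an adversary argument that builds a balanced binary tree and, at each branching vertex, sends the activated element to whichever child the learner did \emph{not} predict. We may assume $t = 2^d$ for some $d \geq 1$. The general case follows by running the construction with $2^{\lfloor \log_2 t\rfloor}$ elements and introducing the remaining elements at the root without ever activating them; this loses only a constant factor.

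\emph{Construction.} Nature first introduces all $t$ elements at the root $v_0$, which is a leaf.
\begin{enumerate}
\item Nature activates one element. By case (b.1) this attaches a child $u$ to $v_0$ and moves the element to $u$, with no prediction made.
\item Nature activates a second element. Now $v_0$ is internal, so the learner must predict. Its only option is $u$, and nature answers by attaching a fresh child $u'$, which forces a mistake.
\item From this point on, $v_0$ has the two children $u$ and $u'$. Every further activation of an element sitting at $v_0$ requires a prediction $a\in\{u,u'\}$. Nature sends the element to the child other than $a$, and every such round is a mistake.
\end{enumerate}
This costs one mistake for each of the $t-1$ elements activated at $v_0$ after the first.

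\emph{Balancing and recursion.} We are in the deterministic setting, so the learner could push the split arbitrarily out of balance by how it predicts. To prevent this, nature modifies step 3 slightly: once one child has received $t/2$ elements, nature stops using the adversarial rule and sends every remaining element to the other child. The resulting split is exactly $t/2$ versus $t/2$. Before the cap is reached, every activation is a mistake. Whichever child reaches $t/2$ first, at least $t/2 - 1$ elements were placed adversarially before that moment, so vertex $v_0$ contributes at least $t/2-1$ mistakes. Nature then recurses independently on each child's cohort of $t/2$ elements. The relevant observation is that each child is again a leaf holding a set of elements, which reproduces the starting configuration of the game. Writing $M(t)$ for the guaranteed number of mistakes gives
\[
M(t)\geq 2M(t/2)+t/2-1,\qquad M(1)=0,
\]
and this solves to $M(t)\geq \tfrac12 t\log_2 t - t + 1 = \Omega(t\log t)$.

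\emph{Remaining check.} The step I expect to need the most care is verifying that the recursion is legal under Definition~\ref{def:bp-game}. Activations at different subtrees do not interfere, and the counts $N(c)$ do not constrain nature in any way. So the only thing to check is that every element being activated currently sits at the vertex being processed, which holds because nature activates the elements of a cohort only while they remain at that vertex. The rest is routine bookkeeping, and it combines with Proposition~\ref{prop:branch-pred} to show that PB is tight.
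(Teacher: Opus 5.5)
Your proof is correct, but it takes a different route from the paper's.

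\textbf{Shared part.} The paper uses the same basic adversary. All elements are introduced at the root. The first activation at a leaf creates $c_1$, the second is a forced mistake that creates $c_2$, and after that nature always picks the child the learner did not predict.

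\textbf{The paper's route.} The paper never caps. Every activation at an internal vertex is a mistake, so a vertex holding $R(v)$ elements contributes exactly $R(v)-1$ mistakes. In exchange, the learner's predictions decide the split. The paper therefore has to control the worst case over splits through
$M(r)=(r-1)+\min_{r_1+r_2=r}[M(r_1)+M(r_2)]$.
It proves $M(r)\geq\frac12 r\log_2 r$ for every $r\geq1$ by strong induction. When both parts are at least $2$ it uses Jensen's inequality for $x\log_2 x$. For the lopsided split $(1,r-1)$ it needs a separate monotonicity check of an auxiliary function.

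\textbf{Your route.} Your cap gives up the mistakes after the cap is reached. In return the tree becomes a complete binary tree that does not depend on the learner. Your recurrence $M(t)\geq 2M(t/2)+t/2-1$ then solves by direct summation, with no convexity or case analysis.

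\textbf{What your route costs.} You need the power-of-two reduction. Padding with never-activated elements at the root is legal, since nothing in Definition~\ref{def:bp-game} requires nature to activate every element. You also get $\frac12 t\log_2 t-t+1$ rather than the paper's clean $\frac12 t\log_2 t$ for all $t$.

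\textbf{A small improvement you left on the table.} For $r\geq4$, both children are already non-empty when the cap is hit, so at least $r/2+1$ elements have been placed and at least $r/2$ of those placements were mistakes. For $r=2$, the single forced mistake already equals $r/2$. So every vertex yields at least $r/2$ mistakes, and your construction in fact gives $\frac12 t\log_2 t$ on powers of two.
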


\begin{proof}
  Nature's strategy is as follows. All $t$ elements are introduced at the root $v_0$. Nature then processes vertices iteratively: while there exists a vertex $v$ with at least $2$ elements present (i.e., $f(e) = v$ for $\geq 2$ elements $e$), nature activates these elements one by one.

  \begin{itemize}
    \item The first activated element $e_1$ finds $v$ is a leaf (case (b.1)): nature attaches a child $c_1$ and sets $f(e_1) \leftarrow c_1$. No prediction is made.
    \item The second activated element $e_2$ finds $v$ internal with one child $c_1$: the learner is forced to predict $c_1$, and nature attaches a new child $c_2$ and sets $f(e_2) \leftarrow c_2$. This is a mistake.
    \item Each subsequent activated element finds $v$ internal with children $c_1, c_2$: the learner predicts one child, and nature selects the other. This is a mistake.
  \end{itemize}

  Hence every activation at an internal vertex produces a mistake (the second activation forces a mistake because nature creates a new child; every subsequent activation forces a mistake because nature picks the child the learner did not predict). Because all elements are introduced at the root and never elsewhere, no phantom traversals arise and nature's strategy does not depend on the specific prediction rule --- the argument applies to any deterministic learner, PB included.

  For each vertex $v$ processed by nature, let $R(v) \mathrel{:=} \left|\{e \in E : f(e) = v\}\right|$ denote the number of elements located at $v$ at the moment nature begins processing $v$. Since $v$ is selected only when $R(v) \geq 2$, nature forces exactly $R(v) - 1$ mistakes at $v$ (all activations except the first, which encounters a leaf). After all elements at $v$ have been moved to children, nature repeats at vertices where elements have accumulated. The process terminates when every element is the sole occupant of its vertex.

  The resulting structure is a binary tree $T^*$ whose internal nodes correspond to vertices where splitting occurred and whose $t$ leaves correspond to the final positions of the $t$ elements. The tree structure depends on the learner's predictions (which control how elements are distributed between $c_1$ and $c_2$ at each vertex), but every internal node $v$ contributes exactly $R(v) - 1$ mistakes regardless. The total number of mistakes is therefore $\sum_{v \in I} (R(v) - 1)$, where $I$ denotes the set of internal nodes of $T^*$.

  It remains to show this sum is $\Omega(t \log t)$ for every possible learner strategy. Define $M : \mathbb{N} \to \mathbb{N}$ as the minimum total number of mistakes achievable by any learner when all $r$ elements start at a common vertex and nature plays the above strategy. Then $M(1) = 0$ (a single element is never split). For $r \geq 2$, after nature processes the root (forcing $r - 1$ mistakes), the learner's predictions determine a split of the $r$ elements into two groups of sizes $r_1$ and $r_2 = r - r_1$ sent to children $c_1$ and $c_2$ respectively, with $r_1, r_2 \geq 1$ (child $c_1$ receives at least $e_1$ and child $c_2$ receives at least $e_2$). Nature then recurses on each child. The learner chooses the split to minimize total mistakes, giving
\[
  M(r) = (r - 1) + \min_{r_1 + r_2 = r, \,r_1 \geq 1, \,r_2 \geq 1} [M(r_1) + M(r_2)]
\]

  \textbf{Claim:} $M(r) \geq 1/2 \,r \log_2 r$ for all $r \geq 1$.

  We prove this by strong induction. The base cases $r = 1$ and $r = 2$ are immediate: $M(1) = 0 \geq 0$ and $M(2) = 1 \geq 1$. Now let $r \geq 3$ and suppose $M(r') \geq 1/2 \,r' \log_2 r'$ for all $1 \leq r' < r$. Consider an arbitrary split $r_1 + r_2 = r$ with $r_1, r_2 \geq 1$. We distinguish two cases.

  \textbf{Case 1:} $r_1 \geq 2$ and $r_2 \geq 2$. By the inductive hypothesis,
\[
  M(r_1) + M(r_2) \geq 1/2(r_1 \log_2 r_1 + r_2 \log_2 r_2)
\]
  Since $\phi(x) = x \log_2 x$ is convex on $(0, \infty)$ (its second derivative is $1/(x \ln 2) > 0$), applying Jensen's inequality to the uniform distribution over the two values $r_1, r_2$ gives
\[
  1/2(r_1 \log_2 r_1 + r_2 \log_2 r_2) \geq (r/2) \log_2 (r/2)
\]
  and therefore $r_1 \log_2 r_1 + r_2 \log_2 r_2 \geq r \log_2(r/2) = r \log_2 r - r$. Hence,
\[
  M(r) \geq (r - 1) + 1/2(r \log_2 r - r) = 1/2 r \log_2 r + (r/2 - 1) \geq 1/2 r \log_2 r
\]
  where the final inequality uses $r \geq 2$.

  \textbf{Case 2:} $\min(r_1, r_2) = 1$. Without loss of generality, $r_1 = 1$ and $r_2 = r - 1 \geq 2$. Then $M(1) = 0$ and by the inductive hypothesis,
\[
  M(r) \geq (r - 1) + 1/2(r - 1) \log_2(r - 1) = (r - 1) \left(1 + 1/2 \log_2(r - 1)\right)
\]
  We claim that for all $r \geq 3$,
\begin{equation}
  (r - 1)(1 + 1/2 \log_2(r - 1)) \geq 1/2 r \log_2 r
  \label{eq:case2-ineq}
\end{equation}
  To verify this, define $h : [3, \infty) \to \mathbb{R}$ by $h(r) \mathrel{:=} (r - 1)(2 + \log_2(r-1)) - r \log_2 r$, so that \eqref{eq:case2-ineq} is equivalent to $h(r) \geq 0$. We check the boundary and monotonicity.

  \emph{Boundary:} $h(3) = 2(2 + 1) - 3 \log_2 3 = 6 - 3 \log_2 3 \approx 1.25 > 0$.

  \emph{Monotonicity:} For $r > 3$, differentiating (treating $\log_2$ as $\ln/\ln 2$),
\[
  h'(r) = (2 + \log_2(r-1)) + (r-1) \cdot 1/((r-1) \ln 2) - \log_2 r - r \cdot 1/(r \ln 2)
\]
\[
  = 2 + \log_2(r - 1) - \log_2 r + 1/(\ln 2) - 1/(\ln 2)
\]
\[
  = 2 + \log_2 \left((r-1)/r\right) = 2 - \log_2 \left(r/(r-1)\right)
\]
  For $r \geq 3$, we have $r/(r-1) \leq 3/2$, so $h'(r) \geq 2 - \log_2(3/2) \approx 2 - 0.585 > 0$. Hence $h$ is strictly increasing on $[3, \infty)$, and since $h(3) > 0$, we conclude $h(r) > 0$ for all $r \geq 3$.

  In both cases, $M(r) \geq 1/2 \,r \log_2 r$, completing the induction. We conclude that the total number of mistakes is at least $M(t) \geq 1/2 \,t \log_2 t = \Omega(t \log t)$.
\end{proof}

\begin{remark}
\label{rem:branch-known-tree}
  If the learner had oracle access to the \textbf{final} tree structure, it could instead predict the child of $v$ whose subtree contains the most \textbf{eventual} leaves. The cohort-halving argument would then bound each element's non-dominant traversals by $\left\lfloor \log_2 |V_{\mathrm{leaf}}| \right\rfloor$ rather than $\left\lfloor \log_2 |E| \right\rfloor$, yielding a bound of $O(|E| \log |V_{\mathrm{leaf}}|)$. PB as defined above uses only the current subtree populations, not the eventual ones; this is why the bound depends on $|E|$ rather than $|V_{\mathrm{leaf}}|$. In our application to the TP algorithm, the tree structure is not known in advance, and Proposition~\ref{prop:branch-pred} is the relevant bound.
\end{remark}

\subsection{Algorithm Description}
\label{sec:tp-alg}

We now describe the Tabular Plurality (TP) algorithm. Throughout, we fix a base $k \geq 2$ and an alphabet $\Sigma$. The algorithm receives $k$ as a parameter.

\subsubsection{Structural Intuition}

The algorithm's state is best viewed as a candidate DFAO for the unknown sequence, organized to mirror the right-to-left (least-significant-digit-first) reading order. After reading the $m$ low-order base-$k$ digits of a time index $t$ --- equivalently, after fixing $t \bmod k^m$ --- the DFAO occupies some state at depth $m$, and this state summarizes everything the model can predict about $x[t]$ from this prefix alone. Two residues $q, q' \in [k^m]$ are identified with a common state at depth $m$ exactly when the observed data cannot distinguish them: the arithmetic subsequences $(x[k^m j + q])_j$ and $(x[k^m j + q'])_j$ agree at every $j$ for which both indices have been observed. The set $Q_m$ records one representative residue per current equivalence class; the transition table $\delta_m$ records how a class refines once the next digit is appended; and the output map $\tau$ at depth $m^*$ closes off the construction.

The evolution of this state is governed by three invariants, established jointly in Lemma~\ref{lem:model-correct}:

\begin{itemize}
  \item \textbf{Data consistency:} the current model agrees with every observation seen so far --- no observed datum ever falsifies the model.
  \item \textbf{Minimality:} among the residues identified with a state $q \in Q_m$, $q$ is labeled by the smallest. This canonicalizes state identity, so that each equivalence class has a unique identifier, and is enforced both at state creation and at every transition reassignment.
  \item \textbf{Distinguishability:} two states $q_1, q_2 \in Q_m$ are kept distinct only when the data demands it, in the form of a pair of observed indices at which they yield different symbols.
\end{itemize}

These invariants jointly determine the algorithm's behaviour. Data consistency drives the corrective updates: a transition is rerouted whenever it would falsify a fresh observation. Distinguishability makes the algorithm conservative about splitting classes, defaulting to merge whenever the existing states remain compatible with the data; this in turn caps the size of $Q_m$ at the state count of any DFAO witnessing $\mathrm{AC}^{\mathrm{R}}_k (x)$ (Lemma~\ref{lem:state-bound}). Minimality fixes a unique canonical name per class.

The transitions $\delta_m$ may nevertheless misroute time indices not yet observed, since several states in $Q_{m+1}$ can be consistent with the available evidence. This ambiguity is resolved at prediction time by a plurality vote among the candidates currently consistent with the data; the rationale is deferred to the Predict subsubsection below.

\subsubsection{Subroutines}

Before describing the main operations, we define two helper subroutines.

\textbf{Evaluation.} Given the current state $(m^*, \{Q_m\}, \{\delta_m\}, \tau)$, the subroutine $\mathsf{Eval}(\ell)$ evaluates the current model at round number $\ell \in \mathbb{N}$ (see Algorithm~\ref{alg:tp-eval}).

\textbf{Divergence.} The subroutine $\mathsf{Diverge}()$ computes, for each $m \leq m^*$ and each pair $q, q' \in Q_m$, the \emph{divergence index} $d_m (q, q') \in \mathbb{N} \sqcup\{\infty\}$: the smallest non-negative integer $r$ such that $\mathsf{Eval}(k^m \cdot r + q) \neq \mathsf{Eval}(k^m \cdot r + q')$ for $q \neq q'$, or $\infty$ if $q = q'$. Concretely, $d_m (q, q')$ is computed bottom-up (see Algorithm~\ref{alg:tp-diverge}).

The key property is that $d_m (q, q') \geq \ell$ asserts that $\mathsf{Eval}(k^m \cdot r + q) = \mathsf{Eval}(k^m \cdot r + q')$ for all $r < \ell$. That is, the subsequences corresponding to $q$ and $q'$ coincide on the first $\ell$ indices. The divergence index is the tool by which the algorithm determines whether two states are ``consistent": $d_{m+1}(\tilde{q}, q') \geq \ell$ means $\tilde{q}$ and $q'$ have been indistinguishable on all observations prior to the current one.

\begin{remark}
\label{rem:div}
  The divergence tables are \emph{not} stored as part of the predictor's persistent state; they are recomputed from scratch before each call to Update or Predict. This avoids the cost of incremental maintenance and yields a stronger compression bound, at the expense of a (polynomial) increase in per-round time.
\end{remark}

\subsubsection{Initialization}

We initialize the state of the predictor according to Algorithm~\ref{alg:tp-init}. The single state $0 \in Q_0$ is trivially the minimal (and only) residue modulo $k^0 = 1$.

\begin{algorithm}[htbp]
\caption{TP Initialize($a$) --- where $a = x[0]$}
\label{alg:tp-init}
  \begin{enumerate*}
    \item $m^* \leftarrow 0$
    \item $Q_0 \leftarrow \{0\}$
    \item $\tau(0) \leftarrow a$
  \end{enumerate*}
\end{algorithm}

\subsubsection{Preprocessing}

When the sequence index reaches a new power of $k$, the depth $m^*$ must increase. This is handled by the subroutine $\mathsf{Preprocess}$ (Algorithm~\ref{alg:tp-preprocess}), called at the beginning of each Update step.

When a new layer is created, each existing state $q \in Q_{m^* - 1}$ is promoted to the new bottom level $Q_{m^*}$. The transition $\delta_{m^* - 1}(q, 0) = q$ is correct because appending digit $0$ does not change the residue: $0 \cdot k^{m^*} + q \equiv q \pmod{k^{m^*}}$. For digits $i \geq 1$, the true target state $i k^{m^*} + q$ has not yet been observed (since no index $t \geq k^{m^*}$ has appeared), so the algorithm defaults to $\delta_{m^* - 1}(q, i) = 0$, the smallest state in $Q_{m^*}$. This is mandated by the minimal-consistent-transition principle: with no observations to distinguish any states at these positions, all states in $Q_{m^*}$ are equally consistent, and we choose the smallest.

\begin{algorithm}[htbp]
\caption{TP Preprocess($n$) --- called before processing index $n$}
\label{alg:tp-preprocess}
  \begin{enumerate*}
    \item \textbf{if} $n = k^{m^*}$ \textbf{then}
      \begin{enumerate*}
        \item $m^* \leftarrow m^* + 1$
        \item $Q_{m^*} \leftarrow Q_{m^* - 1}$
        \item \textbf{for} $q \in Q_{m^*}$
          \begin{enumerate*}
            \item $\delta_{m^* - 1}(q, 0) \leftarrow q$
            \item \textbf{for} $i = 1$ \textbf{to} $k - 1$ \textbf{do} $\delta_{m^* - 1}(q, i) \leftarrow 0$
          \end{enumerate*}
      \end{enumerate*}
  \end{enumerate*}
\end{algorithm}

\subsubsection{Predict}

See Algorithm~\ref{alg:tp-predict}. To predict $x[n]$, the algorithm proceeds in two phases. The first phase descends through the depth levels along the base-$k$ digits of $n$, following the model's transitions as long as they coincide with the canonical level-$(m+1)$ representative of $n$; the descent halts at the first level where this fails. The second phase resolves the resulting ambiguity by a plurality vote among the level-$(m+1)$ candidates currently indistinguishable on the basis of past data.

\textbf{Phase 1 (exact-residue walk).} Lines 1--5 process the base-$k$ digits of $n$ in LSDF order. At the start of each iteration, $q \in Q_m$ is the canonical representative of the level-$m$ class of $n$ --- the smallest residue in $[k^m]$ that the model identifies with $n \bmod k^m$ --- and $q' \mathrel{:=} \delta_m (q, \left\lfloor n/k^m \right\rfloor \bmod k) \in Q_{m+1}$ is the model's level-$(m+1)$ target. If $q' \equiv n \pmod{k^{m+1}}$, then by minimality $q' = n \bmod k^{m+1}$ is itself the canonical representative at the next level, and the descent continues. Otherwise $q' < n \bmod k^{m+1}$, so the model identifies the level-$(m+1)$ class of $n$ with a class whose canonical representative is the smaller $q'$ --- a tentative choice that past observations may not yet uniquely determine, and which Phase 2 reconsiders.

\textbf{Phase 2 (plurality vote).} Set $\ell \mathrel{:=} \left\lfloor n / k^{m+1} \right\rfloor$. Two states $\tilde{q}, q' \in Q_{m+1}$ are \emph{compatible at depth $\ell$} when $d_{m+1}(q', \tilde{q}) \geq \ell$ --- that is, when their associated arithmetic-progression subsequences $j \mapsto \mathsf{Eval}(k^{m+1} j + q')$ and $j \mapsto \mathsf{Eval}(k^{m+1} j + \tilde{q})$ agree for all $j < \ell$, an equality which by Data Consistency is the same as agreement of the corresponding past observations of $x$. Line 8 collects all states in $Q_{m+1}$ compatible with $q'$ into the set $D$ (which automatically contains $q'$). Each $\tilde{q} \in D$ is, on the evidence available so far, an equally valid candidate for the level-$(m+1)$ class of $n$.

For each $\tilde{q} \in D$, line 10 computes a population $\nu(\tilde{q})$, defined as the number of pairs $(\hat{q}, i) \in Q_m \times [k]$ for which $\delta_m (\hat{q}, i) = \tilde{q}$ --- meaning the model has placed the level-$(m+1)$ residue $\hat{r} \mathrel{:=} i k^m + \hat{q}$ in $\tilde{q}$'s class --- subject to the additional constraint $\hat{r} < n \bmod k^{m+1}$. This constraint isolates the level-$(m+1)$ residues which, by time $n$, have been carried one step further into their arithmetic progression $j \mapsto k^{m+1} j + \hat{r}$ than $n \bmod k^{m+1}$ has. The plurality winner $q^* \mathrel{:=} \operatorname*{arg\,max}_{\tilde{q} \in D} \nu(\tilde{q})$ is the level-$(m+1)$ candidate supported by the largest cohort of similarly-situated past observations.

Line 12 returns $\mathsf{Eval}(k^{m+1} \ell + q^*)$. The chosen index has the same level-$(m+1)$ ``high part" $\ell$ as $n$, but with $q^*$ in place of the unobserved residue $n \bmod k^{m+1}$; since (as the proof shows) $q^* < n \bmod k^{m+1}$, the index lies strictly below $n$, and Data Consistency makes the evaluation coincide with the already-observed value $x[k^{m+1} \ell + q^*]$. The rationale for the plurality rule is the reduction to the Plurality Branch strategy of Definition~\ref{def:mb}: the candidates in $D$ correspond to the children of $n$'s vertex in a level-$m$ refinement tree of subsequence-equivalence classes, and the population $\nu(\tilde{q})$ corresponds to the subtree population under the child indexed by $\tilde{q}$. The formal coupling, and the resulting mistake bound, are the content of Section~\ref{sec:tp-proof}.

\subsubsection{Update}

To absorb the new observation $a = x[n]$, the Update step restores Data Consistency by walking through the depth levels in increasing order and applying the smallest local repair compatible with both past data and the new observation. After Preprocess extends the depth $m^*$ if $n$ has reached a new power of $k$, and Diverge freshly recomputes the divergence tables on which the consistency tests below will be based, the main loop processes levels $m = 0, \dots, m^* - 1$ in turn. At each level, three sub-questions are asked: whether $n$'s level-$m$ class is currently represented in $Q_m$ at all (line 4); whether the model's level-$(m+1)$ target is inconsistent with the new observation (line 6); and, if so, whether the inconsistency can be repaired by rerouting to an existing alternative target (line 7) or whether a new state must be created (lines 10--15).

\textbf{Detection (lines 4--6).} The line-4 guard succeeds exactly when $n \bmod k^m \in Q_m$, in which case Minimality forces $q \mathrel{:=} n \bmod k^m$ to be the unique selection. Set $i \mathrel{:=} \left\lfloor n/k^m \right\rfloor \bmod k$ (the next base-$k$ digit), $\ell \mathrel{:=} \left\lfloor n/k^{m+1} \right\rfloor$ (the level-$(m+1)$ high part of $n$), and $q' \mathrel{:=} \delta_m (q, i) \in Q_{m+1}$ (the model's current level-$(m+1)$ target). If the guard fails, no anchor exists at this level and the iteration is silently skipped. Otherwise, line 6 tests whether the current transition is genuinely inconsistent. The first conjunct $q' \not\equiv n \pmod{k^{m+1}}$ excludes the case where the model already encodes $n$'s level-$(m+1)$ residue. When the conjunct holds, $q' < n \bmod k^{m+1}$ instead, and $k^{m+1} \ell + q' < n$ is a previously observed index; by Data Consistency, $\mathsf{Eval}(k^{m+1} \ell + q') = x[k^{m+1} \ell + q']$, and the second conjunct asks whether this past value disagrees with $a$. When both conjuncts hold, the new observation has refuted the model's identification of $n$'s class with that of $q'$, and a corrective action at level $m$ is required.

\textbf{Correction by reassignment (lines 7--8).} The preferred remedy is to reroute $\delta_m (q, i)$ to an alternative state $\tilde{q} \in Q_{m+1}$ satisfying two compatibility conditions: $d_{m+1}(\tilde{q}, q') \geq \ell$, which by the divergence semantics ensures that $\tilde{q}$ and $q'$ agree on all observations made through index $\ell - 1$, so that the rerouting does not retroactively break Data Consistency; and $\mathsf{Eval}(k^{m+1} \ell + \tilde{q}) = a$, which certifies that $\tilde{q}$ correctly predicts the new observation. If such a $\tilde{q}$ exists, line 8 commits the reassignment.

\textbf{Correction by new-state creation (lines 9--15).} If no compatible alternative exists, the data has separated $n$'s class from every state in $Q_{m+1}$ compatible with $q'$. The algorithm responds by introducing $q_{\mathrm{new}} \mathrel{:=} n \bmod k^{m+1}$ --- the canonical minimal representative of $n$'s level-$(m+1)$ class --- adding it to $Q_{m+1}$, and committing $\delta_m (q, i) \leftarrow q_{\mathrm{new}}$. To preserve Data Consistency on past indices previously routed through $q'$, the outgoing structure of $q_{\mathrm{new}}$ is initialized to make it evaluate identically to $q'$. When $m + 1 < m^*$ (line 14), this is achieved by inheriting transitions verbatim, $\delta_{m+1}(q_{\mathrm{new}}, j) \leftarrow \delta_{m+1}(q', j)$ for every digit $j$: the two states then unfold to the same level-$(m+2)$ targets and yield identical predictions everywhere. When $m + 1 = m^*$ (line 15), the new state lies on the bottom layer with no transitions to copy, and we instead anchor it directly with $\tau(q_{\mathrm{new}}) \leftarrow a$.

A new-state creation at an interior level $m + 1 < m^*$ does not, by itself, restore Data Consistency at $n$: since $q_{\mathrm{new}}$ inherits its outgoing structure from $q'$, $\mathsf{Eval}(n)$ now routes through $q_{\mathrm{new}}$ and produces the same wrong value as before. On iteration $m + 1$, however, the line-4 guard succeeds via $q_{\mathrm{new}} \in Q_{m+1}$, and the line-6 test at the just-inherited transition is again inconsistent --- triggering another corrective action. Thus a single mismatch cascades downward through consecutive levels --- propagating new-state creations along the residue chain $n \bmod k^{m+1}, n \bmod k^{m+2}, \dots$ --- until either an existing-state reassignment succeeds or the cascade reaches the bottom layer, where line 15 anchors the leaf with $\tau(q_{\mathrm{new}}) \leftarrow a$. The proof of Lemma~\ref{lem:model-correct} verifies that the loop indeed terminates with all three invariants restored.

\begin{algorithm}[htbp]
\caption{TP Update($n$, $a$) --- where $a = x[n]$}
\label{alg:tp-update}
  \begin{enumerate*}
    \item $\mathsf{Preprocess}(n)$
    \item $\{d_m\}\leftarrow\mathsf{Diverge}()$
    \item \textbf{for} $m = 0$ \textbf{to} $m^* - 1$
    \item \quad \textbf{if} can select $q \in Q_m$ with $q \equiv n \pmod{k^m}$ \textbf{then}
    \item \quad\quad $i \leftarrow \left\lfloor n / k^m \right\rfloor \bmod k$; \quad  $\ell \leftarrow \left\lfloor n / k^{m+1} \right\rfloor$; \quad  $q' \leftarrow \delta_m (q, i)$
    \item \quad\quad \textbf{if} $q' \not\equiv n \pmod{k^{m+1}}$ \textbf{and} $\mathsf{Eval}(k^{m+1} \cdot \ell + q') \neq a$ \textbf{then}
    \item \quad\quad\quad \textbf{if} can select $\tilde{q} \in Q_{m+1}$ with $d_{m+1}(\tilde{q}, q') \geq \ell$ and $\mathsf{Eval}(k^{m+1} \cdot \ell + \tilde{q}) = a$ \textbf{then}
    \item \quad\quad\quad\quad $\delta_m (q, i)\leftarrow\tilde{q}$
    \item \quad\quad\quad \textbf{else}
    \item \quad\quad\quad\quad $q_{\mathrm{new}} \leftarrow n \bmod k^{m+1}$
    \item \quad\quad\quad\quad add $q_{\mathrm{new}}$ to $Q_{m+1}$
    \item \quad\quad\quad\quad $\delta_m (q, i) \leftarrow q_{\mathrm{new}}$
    \item \quad\quad\quad\quad \textbf{if} $m + 1 < m^*$ \textbf{then}
    \item \quad\quad\quad\quad\quad \textbf{for} $j = 0$ \textbf{to} $k - 1$ \textbf{do} $\delta_{m+1}(q_{\mathrm{new}}, j)\leftarrow\delta_{m+1}(q', j)$
    \item \quad\quad\quad\quad \textbf{else} $\tau(q_{\mathrm{new}}) \leftarrow a$
  \end{enumerate*}
\end{algorithm}

\FloatBarrier

\subsection{Proof of Theorem~\ref{thm:rtl-bounds}}
\label{sec:tp-proof}

We now prove that the TP algorithm is statistically and computationally efficient with respect to RTL $k$-automaticity. Throughout this section, let $x \in \Sigma^*$ with $\mathrm{AC}^{\mathrm{R}}_k (x) = s$, and fix a witnessing DFAO $M^* = (Q^*, q^*_0, \delta^*, \tau^*)$ with $|Q^*| = s$ states. Let $\sigma \mathrel{:=} |\Sigma|$. We write $n$ for the current time index and $m^*$ for the current depth parameter of the algorithm (which satisfies $k^{m^*-1} < n \leq k^{m^*}$, i.e., $m^* = \left\lceil \log_k n \right\rceil$).

The proof unfolds in four stages. Section~\ref{sec:model-correct} establishes three invariants of the algorithm's persistent state --- \textbf{Data Consistency}, \textbf{Minimality}, and \textbf{Distinguishability} (Lemma~\ref{lem:model-correct}) --- which together fix the semantics of the model: the stored transitions are consistent with all past observations, each equivalence class has a unique canonical representative, and two states are kept distinct only when an observation separates them. As a direct consequence of Distinguishability, Section~\ref{sec:state-bound} derives the universal bound $\left|Q_m\right| \leq s$ at every depth (Lemma~\ref{lem:state-bound}). The mistake bound \eqref{eq:rtl-mistakes} is then proved in Section~\ref{sec:tp-mistakes} by attributing each TP mistake to a single depth level and coupling the per-level dynamics to an instance of the branch prediction game of Section~\ref{sec:branch-pred}; Proposition~\ref{prop:branch-pred} supplies the per-level mistake count, with Lemma~\ref{lem:state-bound} controlling its parameters. Finally, Section~\ref{sec:tp-compression} establishes the compression bound \eqref{eq:rtl-compression}.

\subsubsection{Model Correctness}
\label{sec:model-correct}

We prove the three invariants that give the TP state its intended meaning.
At depth $m$, a state is a residue in $[k^m]$.  If $q\in Q_m$ and
$r$ is a nonnegative integer, we call the index $k^m r+q$ the
\emph{row-$r$ occurrence} of $q$.  Thus two states at the same depth are
separated by the data once some common row has been observed at which the two
occurrences carry different symbols.  We call level $m^*$ the \emph{output
level}, because the output map $\tau$ is defined on $Q_{m^*}$; levels
$0,1,\ldots,m^*-1$ are the non-output levels.

We use two auxiliary maps.  The first one evaluates the current model after
entering it from an arbitrary state.  For $m\leq m^*$ and $q\in Q_m$, define
\begin{align}
\operatorname{PEval}_{m^*}(q,0)
  &:= \tau(q),
  \nonumber\\
\operatorname{PEval}_{m}(q,r)
  &:=
  \operatorname{PEval}_{m+1}
    \left(
      \delta_m(q,r\bmod k),
      \left\lfloor r/k\right\rfloor
    \right)
  \qquad (m<m^*).
\label{eq:peval-def}
\end{align}
In particular, $\mathsf{Eval}(\ell)=\operatorname{PEval}_0(0,\ell)$.
The second auxiliary map routes a residue to the state currently representing
it:
\begin{align}
\rho_0(r)&:=0,
\nonumber\\
\rho_{m+1}(r)&:=
  \delta_m
  \left(
    \rho_m(r\bmod k^m),
    \left\lfloor r/k^m\right\rfloor
  \right).
\label{eq:route-def}
\end{align}
Unfolding the definitions gives, whenever the terms are defined,
\begin{equation}
\mathsf{Eval}(k^m r+q)
  =
\operatorname{PEval}_m(\rho_m(q),r).
\label{eq:eval-peval-route}
\end{equation}
In particular, $\mathsf{Eval}(\ell)=\tau(\rho_{m^*}(\ell))$ for
$\ell<k^{m^*}$.

During an execution of $\mathsf{Update}$ we also consider intermediate table
configurations, after some assignments have been made but before the update
call has finished.  We call any such intermediate configuration a
\emph{transient state}.  If a statement is about a transient state, then
$\Delta_m$ denotes the divergence table that Algorithm~\ref{alg:tp-diverge}
would compute from that transient state.  By contrast, the symbols $d_m$ used
inside Algorithm~\ref{alg:tp-update} refer to the single divergence table
computed on line~2 of that algorithm, before the loop starts.  When a case
uses superscripts ``old'' and ``new'', they mean the table configurations
immediately before and immediately after the assignment being analysed.
We use only the following two facts about divergence.  For any fixed transient
state and $q_1,q_2\in Q_m$:
\begin{enumerate}
  \item if $\Delta_m(q_1,q_2)\geq L$, then
  \[
    \operatorname{PEval}_m(q_1,r)
    =
    \operatorname{PEval}_m(q_2,r)
    \qquad\text{for every } r<L;
  \]
  \item if
  $\operatorname{PEval}_m(q_1,R)\neq
    \operatorname{PEval}_m(q_2,R)$,
  then $\Delta_m(q_1,q_2)\leq R$.
\end{enumerate}

Let's recall Lemma~\ref{lem:model-correct-prelim}, restating it more precisely.

\begin{lemma}
\label{lem:model-correct}
  \emph{(Model correctness.)}
  Consider the state of the algorithm after either
  $\mathsf{Initialize}(x[0])$ (where we set $n=0$), or after the complete
  execution of $\mathsf{Update}(n,x[n])$.  Then the following statements hold.
  \begin{enumerate}
    \item \emph{Data consistency.}
    \[
      \mathsf{Eval}(t)=x[t]
      \qquad\text{for every }t\leq n.
    \]

    \item \emph{Minimality.}  For every $m\leq m^*$ and $q\in Q_m$,
    \[
      q=
      \min\{r\in[k^m] : \rho_m(r)=q\}.
    \]

    \item \emph{Distinguishability.}
    Fix $m\leq m^*$ and distinct states $q_1,q_2\in Q_m$.  Let
    \[
      r:=d_m(q_1,q_2)
    \]
    be their divergence value in the final state.  Then
    \[
      k^m r+q_1\leq n,
      \qquad
      k^m r+q_2\leq n,
    \]
    and the two observed symbols at these row-$r$ occurrences are different:
    \[
      x[k^m r+q_1]\neq x[k^m r+q_2].
    \]
  \end{enumerate}
\end{lemma}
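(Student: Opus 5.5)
The proof goes by induction on $n$. The base case is the state right after $\mathsf{Initialize}(x[0])$. The inductive step assumes the three invariants after $\mathsf{Update}(n-1,x[n-1])$ and follows one call of $\mathsf{Update}(n,x[n])$. We track a stronger \emph{transient} invariant through the loop of Algorithm~\ref{alg:tp-update}.

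\textbf{Base case.} Here $m^*=0$, $Q_0=\{0\}$ and $\tau(0)=x[0]$. Data consistency is immediate. Minimality is trivial. Distinguishability holds vacuously, since there are no distinct pairs.

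\textbf{Preprocess.} First we check that $\mathsf{Preprocess}$ preserves all three invariants with $n-1$ as the current bound.
\begin{itemize}
\item For Data consistency: for $t<k^{m^*-1}$ the new top digit of $t$ is $0$, and $\delta_{m^*-1}(q,0)=q$. So $\mathsf{Eval}(t)$ is unchanged.
\item For Minimality on the new layer: a new-layer state $q$ is reached from residue $q$ itself via digit $0$. Any smaller $r$ with $\rho_{m^*}(r)=q$ would have top digit $0$ (the other digits route to $0\le q$, and equality forces $q=0$), hence $r=\rho_{m^*-1}$-minimal, so the old minimality gives the claim.
\item For Distinguishability on the new layer: the bottom-level divergences are $0$, and the old witness at row $0$ of the old bottom level is exactly the same pair of indices.
\end{itemize}

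\textbf{Loop invariant.} For the main loop, the hypothesis at the start of iteration $m$ has four parts.
\begin{enumerate}
\item[(a)] $\mathsf{Eval}(t)=x[t]$ for all $t<n$.
\item[(b)] Minimality holds at every level.
\item[(c)] Distinguishability holds, with witness indices $<n$, at every level, measured by the current transient table $\Delta$.
\item[(d)] Either $\mathsf{Eval}(n)=x[n]$ already, or $n\bmod k^m\in Q_m$ is reached by $\rho_m$ from $n$.
\end{enumerate}
We then verify each branch of the loop body.

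\emph{No-op branch.} If the guard on line 4 or line 6 fails, nothing changes. When line 6 fails because $q'\equiv n$, part (d) propagates to level $m+1$. When it fails because the old value at $k^{m+1}\ell+q'$ equals $a$, the tail evaluation from $q'$ at row $\ell$ already outputs $a$. Then $\mathsf{Eval}(n)=a$ by \eqref{eq:eval-peval-route}, and all later iterations are no-ops for this $n$.

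\emph{Reassignment branch (line 8).} The key fact is that $d_{m+1}(\tilde q,q')\ge\ell$. By divergence fact 1 and \eqref{eq:eval-peval-route}, this means $\operatorname{PEval}_{m+1}(\tilde q,r)=\operatorname{PEval}_{m+1}(q',r)$ for $r<\ell$. Every past index affected by the rerouted edge $(q,i)$ has the form $k^{m+1}r+\hat r$, with $\hat r$ routed through $(q,i)$, and $r\le\ell$. If $r<\ell$, the value is unchanged by the fact above. If $r=\ell$, minimality gives $\hat r\ge n\bmod k^{m+1}$; all such indices below $n$ have $\hat r$ routed elsewhere or coincide with $n$ itself. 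So (a) survives, and $\mathsf{Eval}(n)=a$ by the second condition.

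Minimality must be rechecked, since $\tilde q$ gains preimages. Those preimages are $\ge$ the residues of $(q,i)$, which are $\ge n\bmod k^{m+1}>\tilde q$, the last inequality needing a check that $\tilde q<n\bmod k^{m+1}$.

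Distinguishability is the delicate part. Removing an edge target can only change divergences that pass through the $(q,i)$ edge, and the recursion in Algorithm~\ref{alg:tp-diverge} takes minima over digits. So for each pair, the minimizing digit path after the change is either old or agrees up to row $\ell$. I would show that every new divergence value is attained by an actual observed disagreement, using (a) on indices $\le n$.

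\emph{Creation branch (lines 10--15).} The new residue $q_{\mathrm{new}}=n\bmod k^{m+1}$ has no earlier preimage, so Minimality holds. Copying the transitions (or setting $\tau$) leaves every old $\mathsf{Eval}$ value unchanged. Because no $\tilde q$ qualified, $q_{\mathrm{new}}$ disagrees at row $\le\ell$ with every compatible state. This supplies the Distinguishability witness, which involves index $n$ itself, and it re-establishes (d) at level $m+1$. At the bottom level, line 15 sets $\mathsf{Eval}(n)=a$.

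\textbf{Conclusion.} After the loop, (d) at level $m^*$ forces $\mathsf{Eval}(n)=a$, which gives Data consistency for all $t\le n$.

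\textbf{Main obstacle.} I expect the hardest step to be Distinguishability after a reassignment. It must be shown that recomputed divergences between unrelated pairs never drop to a row whose occurrences are unobserved, or whose values agree. The approach I would try first is a lemma: in any transient state satisfying (a), for every pair the value $\Delta_m$ is either $\ge$ a row containing an index $>n$, or witnessed by observed data. Rows with indices beyond $n$ are handled by noting that $d$ only records the first disagreement, and every row below it has been fully observed.
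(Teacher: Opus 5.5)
Your overall plan (induction on $n$, treating Minimality directly, and a forward invariant (d) tracking whether $n\bmod k^m$ is still self-routed) is workable. Invariant (d) is a reasonable alternative to the paper's end-of-loop case analysis for Data consistency at $n$.

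The genuine gap is invariant (c): neither creation branch preserves it. In the interior case (line 14), $q_{\mathrm{new}}$ copies every outgoing transition of $q'$. Hence $\operatorname{PEval}_{m+1}(q_{\mathrm{new}},\cdot)$ and $\operatorname{PEval}_{m+1}(q',\cdot)$ coincide, and the transient divergence $\Delta_{m+1}(q_{\mathrm{new}},q')$ computed by Algorithm~\ref{alg:tp-diverge} is $\infty$. The only separating evidence is at row $\ell$, and it uses the index $n$ itself, where $\mathsf{Eval}(n)$ is still wrong. This can persist through a cascade of creations at levels $m+1,m+2,\dots$ until a reassignment or line 15 finally fixes $\mathsf{Eval}(n)$. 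Even at the output level, the pair $(q_{\mathrm{new}},q')$ is witnessed only by the indices $q'$ and $n$, so ``witness indices $<n$'' fails there too. Your sentence ``this supplies the Distinguishability witness'' therefore does not establish (c).

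The missing idea is the paper's device: carry Distinguishability through the loop only in a relaxed, data-level form, and convert back to the $d_h$-based statement only after Data consistency through $n$ is proved. The relaxed form asks, for each distinct pair at level $h$, for some row $R$ with $k^hR+q_j\le n$ and $x[k^hR+q_1]\ne x[k^hR+q_2]$. The paper takes $R$ to be the current $\Delta_h$ for ordinary pairs, and a stored row $L_h=\lfloor n/k^h\rfloor$ for the ``pending'' pairs created on line 14.

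The conversion is short. Since $\rho_h(q_j)=q_j$, data consistency gives $\operatorname{PEval}_h(q_j,R)=x[k^hR+q_j]$, so the partial evaluations differ at $R$ and $d_h(q_1,q_2)\le R$. The row-$d_h$ occurrences are then also $\le n$, and data consistency turns the model disagreement there into a data disagreement.

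This same argument resolves the ``main obstacle'' you flag for reassignment. Your proposed lemma leaves open that $\Delta_m$ sits at a row containing an unobserved index, which is exactly what must be excluded. It is excluded by $\Delta\le R$ for an existing observed witness row $R$, not by the first-disagreement property.

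There are smaller omissions as well:
\begin{itemize}
\item Distinguishability at the old levels after Preprocess is not checked.
\item Minimality at levels $\ge m+2$ after a reassignment is not checked; residues routed through $(q,i)$ now continue from $\tilde q$.
\item The deferred check $\tilde q<n\bmod k^{m+1}$ follows from Distinguishability of $(\tilde q,q')$ at level $m+1$ with respect to the line-2 table. That table is still valid there because earlier iterations only modify levels $\le m$.
\end{itemize}
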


\begin{proof}
The proof is an induction over the operations of the algorithm.  Instead of
proving Minimality directly after every assignment, we maintain two elementary
properties of individual transition-table entries.

For $h<m^*$, the first property is the \emph{stored-state routing property}:
\begin{equation}
\begin{gathered}
  u\in Q_{h+1},\quad
  p:=u\bmod k^h,\quad
  i:=\left\lfloor u/k^h\right\rfloor \\
  \Longrightarrow\quad
  p\in Q_h \text{ and } \delta_h(p,i)=u.
\end{gathered}
\label{eq:stored-state-routing}
\end{equation}
In words, a state label $u$ at level $h+1$ is reached by following the
level-$h$ state given by the low $h$ digits of $u$ and then reading the next
digit of $u$.

The second property is the \emph{representative upper-bound property}:
\begin{equation}
  \delta_h(p,i)
  \leq
  i k^h+p
  \qquad
  \text{for every }p\in Q_h\text{ and }i\in[k].
\label{eq:representative-upper-bound}
\end{equation}
Here $ik^h+p$ is the level-$(h+1)$ residue obtained by appending digit $i$ to
$p$.  The property says that the transition target is a representative whose
numeric label is no larger than the residue it represents.

The stored-state routing property and the representative upper-bound property
imply Minimality.  First,
\eqref{eq:stored-state-routing} gives, by induction on $m$, that
\begin{equation}
  \rho_m(q)=q
  \qquad\text{for every }q\in Q_m.
\label{eq:stored-routes-to-itself}
\end{equation}
Indeed, the claim is trivial at level $0$, and the induction step is exactly
\eqref{eq:stored-state-routing}.  Conversely, suppose $\rho_{m+1}(r)=q$.
Write
\[
  i:=\left\lfloor r/k^m\right\rfloor,
  \qquad
  u:=r\bmod k^m,
  \qquad
  p:=\rho_m(u).
\]
By the induction hypothesis at depth $m$, $p$ is the least residue routed to
$p$, and hence $p\leq u$.  Using
\eqref{eq:representative-upper-bound},
\[
  q=\delta_m(p,i)
  \leq i k^m+p
  \leq i k^m+u
  =r.
\]
Together with \eqref{eq:stored-routes-to-itself}, this says that every stored
state $q\in Q_m$ is exactly the least residue routed to $q$, which is
Minimality.

We will therefore maintain Data Consistency, the stored-state routing
property, the representative upper-bound property, and Distinguishability.
One minor complication occurs inside an Update call: when a new state is created
at a non-output level $m+1<m^*$, it copies the outgoing transitions of an old
state.  Immediately after the copy, the model may still evaluate the new state
and the old state identically even though the new observation has already
separated them in the data.  We record such pairs temporarily.

More precisely, during a single execution of $\mathsf{Update}(n,a)$, with
$a=x[n]$, a \emph{pending pair at level $h$} is a triple
\[
  (h,\{q_h^+,q_h^-\},L_h),
  \qquad
  q_h^+,q_h^-\in Q_h,
\]
created as follows.  If iteration $h-1$ of the loop creates the interior state
$q_h^+=n\bmod k^h$ by copying the old target $q_h^-$, then
\[
  L_h:=\left\lfloor n/k^h\right\rfloor.
\]
There is at most one pending pair at each level, because a single Update call
processes each level once.  A pending pair carries a \emph{stored separating
row} $L_h$.

Between loop iterations we use the following relaxed version of
Distinguishability.  For a distinct pair $q_1,q_2\in Q_h$, define its current
witness row $R(q_1,q_2)$ as follows:
\[
R(q_1,q_2):=
\begin{cases}
  L_h,
    &\text{if }\{q_1,q_2\}=\{q_h^+,q_h^-\}
      \text{ is pending at level }h,\\
  \Delta_h(q_1,q_2),
    &\text{otherwise.}
\end{cases}
\]
The relaxed invariant is
\begin{equation}
\begin{gathered}
  k^h R(q_1,q_2)+q_1\leq n,
  \qquad
  k^h R(q_1,q_2)+q_2\leq n,
  \\
  x[k^h R(q_1,q_2)+q_1]\neq
  x[k^h R(q_1,q_2)+q_2].
\end{gathered}
\label{eq:relaxed-distinguishability}
\end{equation}
After an Update call finishes, we will convert this relaxed statement back
into the Distinguishability statement of Lemma~\ref{lem:model-correct}.

The base case is immediate.  After $\mathsf{Initialize}(x[0])$ we have
$m^*=0$, $Q_0=\{0\}$, and $\tau(0)=x[0]$.  Thus
$\mathsf{Eval}(0)=x[0]$; the two transition-table properties have no
nontrivial instance; and there are no distinct same-level states to distinguish.

Fix now an update round $n>0$, set $a:=x[n]$, and assume the lemma after the
previous update.

\medskip
\noindent\textbf{After Preprocess.}
If $\mathsf{Preprocess}$ does not increase $m^*$, the previous final
invariants give Data Consistency for $t<n$, the stored-state routing property,
the representative upper-bound property, and the Distinguishability
statement in the lemma.  There are no pending pairs yet.

Assume instead that the old depth is $\bar m$ and that
$n=k^{\bar m}$, so the new depth is $\bar m+1$.  The new set
$Q_{\bar m+1}$ is a copy of $Q_{\bar m}$, and the new transitions are
\[
  \delta_{\bar m}(q,0)=q,
  \qquad
  \delta_{\bar m}(q,i)=0\quad (i\geq 1).
\]
For every $t<n$, the new highest processed digit is $0$, so the extra
transition leaves the old level-$\bar m$ state unchanged.  Hence
$\mathsf{Eval}(t)=x[t]$ still holds for all $t<n$.

Both properties are immediate at the new level.  A
copied level-$\bar m$ state
$q<k^{\bar m}$ is reached through the digit-$0$ transition
$\delta_{\bar m}(q,0)=q$, and the digit-$i$ transitions for $i\geq1$ point to
$0\leq i k^{\bar m}+q$.  Lower levels are unchanged.

For Distinguishability at old levels, take distinct old states
$q_1,q_2\in Q_h$ with $h\leq\bar m$, and let $R$ be their old divergence
row.  The old invariant gives
$k^hR+q_j<n=k^{\bar m}$ for $j\in\{1,2\}$, hence
$R<k^{\bar m-h}$.  On rows $0,1,\ldots,R$, the newly added highest digit is
always $0$.  Therefore a partial evaluation from level $h$ follows the same
old transitions up to level $\bar m$ and then uses
$\delta_{\bar m}(p,0)=p$, so
\[
  \operatorname{PEval}_h^{\mathrm{new}}(q_j,R)
  =
  \operatorname{PEval}_h^{\mathrm{old}}(q_j,R)
  \qquad(j=1,2).
\]
The old divergence row $R$ means that the two old partial evaluations at row
$R$ are unequal.  Hence the two new partial evaluations at row $R$ are unequal,
and the new divergence row
$r^{\mathrm{new}}:=\Delta_h^{\mathrm{new}}(q_1,q_2)$ satisfies
$r^{\mathrm{new}}\leq R$.  Thus
$k^h r^{\mathrm{new}}+q_j\leq k^hR+q_j<n$ for $j=1,2$.  Since the stored
states route to themselves, \eqref{eq:eval-peval-route} gives
\[
  \mathsf{Eval}^{\mathrm{new}}(k^h r^{\mathrm{new}}+q_j)
  =
  \operatorname{PEval}_h^{\mathrm{new}}(q_j,r^{\mathrm{new}})
  \qquad(j=1,2).
\]
By the definition of $r^{\mathrm{new}}$, the two right-hand sides are different.
The indices are below $n$, so the already-proved data consistency after
preprocessing gives
\[
  x[k^h r^{\mathrm{new}}+q_j]
  =
  \mathsf{Eval}^{\mathrm{new}}(k^h r^{\mathrm{new}}+q_j)
  \qquad(j=1,2),
\]
and therefore
\[
  x[k^h r^{\mathrm{new}}+q_1]
  \neq
  x[k^h r^{\mathrm{new}}+q_2].
\]
This is the required separating observation at the new divergence row.
At the new level, take distinct states
$q_1,q_2\in Q_{\bar m+1}=Q_{\bar m}^{\mathrm{old}}$.  They were already
distinct states at the old level $\bar m$.  In the old state, these states were
at the output level, so Algorithm~\ref{alg:tp-diverge} assigned them divergence
row $0$.  Applying the previous Distinguishability statement with $h=\bar m$
and $r=0$ gives
\[
  q_1<n,
  \qquad
  q_2<n,
  \qquad
  x[q_1]\neq x[q_2].
\]
Thus the relaxed invariant holds after preprocessing, with no pending pairs.

\medskip
\noindent\textbf{Loop invariant.}
Before each loop iteration we maintain:
\begin{enumerate}
  \item $\mathsf{Eval}(t)=x[t]$ for every $t<n$;
  \item the stored-state routing property \eqref{eq:stored-state-routing}
        and the representative upper-bound property
        \eqref{eq:representative-upper-bound};
  \item relaxed distinguishability, namely
        \eqref{eq:relaxed-distinguishability}.
\end{enumerate}
If an iteration makes no assignment, these statements are unchanged.  We
therefore consider an iteration $m$ in which line~4 selects a state, line~6 is
true, and the algorithm performs a repair.  Put
\begin{align*}
  q&:=n\bmod k^m,
  &
  i&:=\left\lfloor n/k^m\right\rfloor\bmod k,
  \\
  \ell&:=\left\lfloor n/k^{m+1}\right\rfloor,
  &
  q'&:=\delta_m(q,i),
  \\
  r_+&:=i k^m+q=n\bmod k^{m+1}.
\end{align*}
Line~6 says that $q'\neq r_+$ and that
\begin{equation}
  \mathsf{Eval}(k^{m+1}\ell+q')\neq a.
\label{eq:line6-disagreement}
\end{equation}
The representative upper-bound property \eqref{eq:representative-upper-bound}
gives $q'\leq r_+$, hence in fact $q'<r_+$.
Moreover, $r_+\notin Q_{m+1}$ before this iteration.  Otherwise
\eqref{eq:stored-state-routing}, applied to the state $r_+$, would force
$\delta_m(q,i)=r_+$, contradicting $q'\neq r_+$.  Repeatedly applying
\eqref{eq:stored-state-routing} also shows that no old state at any
higher level has low $(m+1)$-residue equal to $r_+$. 

\smallskip
\noindent\emph{Reassignment.}
Assume line~7 succeeds and chooses $\tilde q\in Q_{m+1}$, so line~8 changes
only the entry
\[
  \delta_m(q,i): q'\longmapsto \tilde q.
\]
By the preceding paragraph, the transition being modified is not the transition required by
\eqref{eq:stored-state-routing} for any state. Hence
\eqref{eq:stored-state-routing} remains valid.

We next check the representative upper-bound property for the modified entry.
By
\eqref{eq:line6-disagreement} and the value test in line~7,
\[
  \mathsf{Eval}(k^{m+1}\ell+q')\neq a,
  \qquad
  \mathsf{Eval}(k^{m+1}\ell+\tilde q)=a,
\]
so $\tilde q\neq q'$.  Earlier iterations have affected only levels at most
$m$; the tables from level $m+1$ downward are therefore exactly the same as
when line~2 computed the divergence table.  Thus line~7 gives
$\Delta_{m+1}(\tilde q,q')\geq\ell$ for the pre-reassignment state.
Earlier loop iterations can create pending pairs only at levels at most $m$,
so no pending pair has yet been created at level $m+1$.  Therefore relaxed
distinguishability for the distinct pair $\tilde q,q'$ yields
\[
  k^{m+1}\Delta_{m+1}(\tilde q,q')+\tilde q
  \leq
  n
  =k^{m+1}\ell+r_+.
\]
As $\Delta_{m+1}(\tilde q,q')\geq\ell$, this implies
$\tilde q\leq r_+=i k^m+q$.  This is exactly the representative
upper-bound property for the new transition.

Now consider data consistency.  Let $t<n$.  If the evaluation path for $t$
does not use the modified entry, its value is unchanged.  Otherwise the path
reaches $q$ at level $m$ and reads digit $i$ next.  Write
$s:=\lfloor t/k^{m+1}\rfloor$.  We claim that $s<\ell$.  If $s=\ell$, then
$t<n$ and the common digit $i$ would force
$t\bmod k^m<q$.  But Minimality, already implied by the two properties
proved above, gives $q\leq t\bmod k^m$, a contradiction.  Hence $s<\ell$, and the divergence
test from line~7 gives
\[
  \operatorname{PEval}_{m+1}(\tilde q,s)
  =
  \operatorname{PEval}_{m+1}(q',s).
\]
Thus the new and old evaluations of $t$ agree.  At the new index $n$, the path
reaches $q$ at level $m$, takes the new transition to $\tilde q$, and returns
\[
  \operatorname{PEval}_{m+1}(\tilde q,\ell)
  =
  \mathsf{Eval}(k^{m+1}\ell+\tilde q)
  =a,
\]
where the first equality uses $\rho_{m+1}(\tilde q)=\tilde q$.  Therefore data
consistency holds through $n$ after a reassignment.

It remains to spell out relaxed distinguishability.  Pending pairs keep their
stored separating rows.  Pairs at levels $h\geq m+1$ are unaffected, because
their partial evaluations start strictly deeper than the modified level-$m$
transition.

Now take a non-pending pair $u_1,u_2\in Q_h$ with $h\leq m$.  Let $R$ be its
witness row before the reassignment, so
\[
  k^hR+u_j\leq n\quad(j=1,2),
  \qquad
  x[k^hR+u_1]\neq x[k^hR+u_2].
\]
The stored-state routing property after the reassignment gives, for $j=1,2$,
\[
  \rho_h^{\mathrm{new}}(u_j)=u_j.
\]
Therefore \eqref{eq:eval-peval-route}, applied in the new state, gives
\[
  \operatorname{PEval}^{\mathrm{new}}_h(u_j,R)
  =
  \mathsf{Eval}^{\mathrm{new}}(k^hR+u_j)
  \qquad(j=1,2).
\]
The two indices are observed, and the data consistency proved above holds for
all indices up to $n$.  Hence
\[
  \operatorname{PEval}^{\mathrm{new}}_h(u_j,R)
  =
  x[k^hR+u_j]
  \qquad(j=1,2).
\]
The two $x$-values are different, so the two new partial evaluations disagree
at row $R$.  By the definition of divergence,
$r^{\mathrm{new}}:=\Delta_h^{\mathrm{new}}(u_1,u_2)\leq R$.  Consequently
$k^h r^{\mathrm{new}}+u_j\leq k^hR+u_j\leq n$ for $j=1,2$.  At the row
$r^{\mathrm{new}}$, the partial evaluations disagree by definition of
$r^{\mathrm{new}}$.  Applying \eqref{eq:eval-peval-route} and data consistency at
these two observed indices gives
\[
\begin{aligned}
  x[k^h r^{\mathrm{new}}+u_1]
  &=
  \mathsf{Eval}^{\mathrm{new}}(k^h r^{\mathrm{new}}+u_1) \\
  &=
  \operatorname{PEval}^{\mathrm{new}}_h(u_1,r^{\mathrm{new}}) \\
  &\neq
  \operatorname{PEval}^{\mathrm{new}}_h(u_2,r^{\mathrm{new}}) \\
  &=
  \mathsf{Eval}^{\mathrm{new}}(k^h r^{\mathrm{new}}+u_2) \\
  &=
  x[k^h r^{\mathrm{new}}+u_2].
\end{aligned}
\]
Thus the relaxed invariant is preserved.

\smallskip
\noindent\emph{Interior-level new state ($m+1<m^*$).}
Assume line~7 fails and $m+1<m^*$.  The algorithm creates
\[
  q_{\mathrm{new}}:=r_+=n\bmod k^{m+1},
\]
adds it to $Q_{m+1}$, sets $\delta_m(q,i)=q_{\mathrm{new}}$, and copies the outgoing
transitions of the old target $q'$:
\begin{equation}
  \delta_{m+1}^{\mathrm{new}}(q_{\mathrm{new}},j)
  =
  \delta_{m+1}^{\mathrm{old}}(q',j)
  \qquad(j\in[k]).
\label{eq:copied-transitions}
\end{equation}
The state $q_{\mathrm{new}}$ is genuinely new by the preliminary observation above.
The same observation ensures that adding this incoming transition does not
violate \eqref{eq:stored-state-routing} for any old state, while
\eqref{eq:stored-state-routing} holds for $q_{\mathrm{new}}$ by construction.
The representative upper-bound property holds for $\delta_m(q,i)$ with equality.
For the copied entries, the old representative upper-bound property and
$q'\leq q_{\mathrm{new}}$ give
\[
  \delta_{m+1}^{\mathrm{new}}(q_{\mathrm{new}},j)
  =
  \delta_{m+1}^{\mathrm{old}}(q',j)
  \leq
  j k^{m+1}+q'
  \leq
  j k^{m+1}+q_{\mathrm{new}}.
\]
All other transition entries are unchanged.

The copy operation gives the identity
\begin{equation}
  \operatorname{PEval}_{m+1}^{\mathrm{new}}(q_{\mathrm{new}},s)
  =
  \operatorname{PEval}_{m+1}^{\mathrm{old}}(q',s)
  \qquad\text{for every }s.
\label{eq:copied-peval}
\end{equation}
Therefore every old index $t<n$ keeps its value: if its path does not use the
modified level-$m$ entry, nothing changes; if it does, the old target $q'$ is
replaced by $q_{\mathrm{new}}$ with the same continuation.  Hence data consistency for
$t<n$ is preserved.

We now verify relaxed distinguishability in detail.  First consider a pair of
states $u_1,u_2\in Q_h$ not involving $q_{\mathrm{new}}$.  If this pair was
already pending before the present iteration, then its stored row and the two
observed indices in \eqref{eq:relaxed-distinguishability} are unchanged.  The
relaxed invariant for this pair is therefore unchanged as well, because it is
a statement only about the data symbols at those two indices.

For pairs not involving $q_{\mathrm{new}}$, it remains to consider a pair
that was not pending before the present iteration.  If $h\geq m+2$, its
partial evaluations start deeper than the levels that changed.  If $h=m+1$, the partial evaluations are still unchanged, since $q_{\mathrm{new}}$ is not involved.  If $h\leq m$, an evaluation that
uses the modified entry now continues from $q_{\mathrm{new}}$ where it formerly
continued from $q'$.  The equality \eqref{eq:copied-peval} says that these two
continuations have identical partial evaluations.  Therefore, for every row
$s$,
\[
  \operatorname{PEval}^{\mathrm{new}}_h(u_j,s)
  =
  \operatorname{PEval}^{\mathrm{old}}_h(u_j,s)
  \qquad(j=1,2).
\]
The two partial-evaluation functions for the pair are unchanged.  Hence their
first disagreement row is unchanged:
\[
  \Delta_h^{\mathrm{new}}(u_1,u_2)
  =
  \Delta_h^{\mathrm{old}}(u_1,u_2).
\]
The previous relaxed invariant was, for this non-pending pair, exactly the
statement that the row
$R:=\Delta_h^{\mathrm{old}}(u_1,u_2)$ is observed on both sides and satisfies
\[
  x[k^hR+u_1]
  \neq
  x[k^hR+u_2].
\]
Since the new divergence row is the same $R$, the same two observed data
symbols separate $u_1$ and $u_2$ after the present iteration.  Thus all old
pairs remain covered by the relaxed invariant.

For the pair $q_{\mathrm{new}},q'$, we create a pending pair at level $m+1$ with
stored row $\ell$.  This stored row is separating in the data: indeed,
\[
  k^{m+1}\ell+q_{\mathrm{new}}=n,
  \qquad
  k^{m+1}\ell+q'<n,
\]
where the strict inequality uses $q'<q_{\mathrm{new}}$.  By data consistency before
this iteration and by \eqref{eq:line6-disagreement},
\[
  x[k^{m+1}\ell+q']
  =
  \mathsf{Eval}^{\mathrm{old}}(k^{m+1}\ell+q')
  \neq
  a
  =x[n].
\]
Thus the pending pair satisfies \eqref{eq:relaxed-distinguishability}.

It remains to handle a pair $q_{\mathrm{new}},q_2$, where
$q_2\in Q_{m+1}^{\mathrm{old}}$ and $q_2\neq q'$.  There was no pending pair at
level $m+1$ before this iteration, since earlier iterations can only create
pending pairs at levels at most $m$.  By \eqref{eq:copied-peval}, the new
divergence between $q_{\mathrm{new}}$ and $q_2$ is the old divergence between
$q'$ and $q_2$; call this row $R$.  The old relaxed invariant, applied to the
non-pending pair $q',q_2$, gives
\begin{equation}
\begin{gathered}
  k^{m+1}R+q'\leq n,
  \qquad
  k^{m+1}R+q_2\leq n,
  \\
  x[k^{m+1}R+q']
  \neq
  x[k^{m+1}R+q_2].
\end{gathered}
\label{eq:old-qprime-q2}
\end{equation}
Since $n=k^{m+1}\ell+q_{\mathrm{new}}$ and $q_{\mathrm{new}}<k^{m+1}$, the first inequality
in \eqref{eq:old-qprime-q2} implies $R\leq\ell$.  If $R<\ell$, then both
$k^{m+1}R+q_{\mathrm{new}}$ and $k^{m+1}R+q'$ are below $n$.  Identity~(\ref{eq:copied-peval}) gives
\[
  \operatorname{PEval}_{m+1}^{\mathrm{new}}(q_{\mathrm{new}},R)
  =
  \operatorname{PEval}_{m+1}^{\mathrm{old}}(q',R).
\]
Because $q_{\mathrm{new}}$ and $q'$ route to themselves,
\eqref{eq:eval-peval-route} rewrites this as an equality of model evaluations
at the two observed indices:
\[
  \mathsf{Eval}^{\mathrm{new}}(k^{m+1}R+q_{\mathrm{new}})
  =
  \mathsf{Eval}^{\mathrm{old}}(k^{m+1}R+q').
\]
Both indices are below $n$, so data consistency before and after the iteration
converts this model equality into
\[
  x[k^{m+1}R+q_{\mathrm{new}}]
  =
  x[k^{m+1}R+q'].
\]
The old relaxed invariant also gives
$x[k^{m+1}R+q']\neq x[k^{m+1}R+q_2]$.  Combining the last two displayed
relations yields
\[
  x[k^{m+1}R+q_{\mathrm{new}}]
  \neq
  x[k^{m+1}R+q_2],
\]
so $q_{\mathrm{new}}$ and $q_2$ are separated at row $R$.

If $R=\ell$, then the candidate $q_2$ met the divergence requirement in
line~7.  As in the reassignment case, the level-$(m+1)$ divergence table used
by line~7 is still current, because earlier loop iterations did not change
levels $m+1$ and below.  Since line~7 failed, $q_2$ must have failed the value
requirement:
\[
  \mathsf{Eval}^{\mathrm{old}}(k^{m+1}\ell+q_2)
  \neq a.
\]
The inequality $k^{m+1}\ell+q_2\leq n$ from
\eqref{eq:old-qprime-q2}, together with $q_2\neq q_{\mathrm{new}}$, gives
$k^{m+1}\ell+q_2<n$.  Since this index is old, data consistency before the
present iteration gives
\[
  x[k^{m+1}\ell+q_2]
  =
  \mathsf{Eval}^{\mathrm{old}}(k^{m+1}\ell+q_2).
\]
Combining with the failed value requirement above and with $a=x[n]$ gives
\[
  x[k^{m+1}\ell+q_2]
  \neq
  a
  =
  x[k^{m+1}\ell+q_{\mathrm{new}}].
\]
This is the separating observation for $q_{\mathrm{new}}$ and $q_2$ at row
$R=\ell$.  Thus relaxed distinguishability is preserved in the interior
new-state case.

\smallskip
\noindent\emph{Output-level new state ($m+1=m^*$).}
Assume line~7 fails and $m+1=m^*$.  After preprocessing, $n<k^{m^*}$, so the
new state is
\[
  q_{\mathrm{new}}:=n\bmod k^{m^*}=n.
\]
The algorithm adds $q_{\mathrm{new}}$ to $Q_{m^*}$, sets
$\delta_{m^*-1}(q,i)=q_{\mathrm{new}}$, and assigns $\tau(q_{\mathrm{new}})=a$.
Both properties (\ref{eq:stored-state-routing}) and (\ref{eq:representative-upper-bound}) are preserved exactly as above.  In particular, the
representative upper-bound property holds with equality for the modified
transition.

No old index $t<n$ can use the modified transition.  If it did, then at level
$m^*-1$ it would be routed through $q$ and would read digit $i$.  Minimality
would give $q\leq t\bmod k^{m^*-1}$, hence
\[
  t\geq i k^{m^*-1}+q=n,
\]
a contradiction.  Thus old data consistency is preserved, and the new index
is routed to $q_{\mathrm{new}}$ and evaluates to $\tau(q_{\mathrm{new}})=a$.

For relaxed distinguishability, first take a pair of old states
$u_1,u_2\in Q_h$.  Let $R$ be its previous relaxed witness row.  The previous
relaxed invariant gives
\[
  k^hR+u_j\leq n\quad(j=1,2),
  \qquad
  x[k^hR+u_1]\neq x[k^hR+u_2].
\]
The output-level assignment has just restored data consistency through $n$.
The stored-state routing property gives
$\rho_h^{\mathrm{new}}(u_j)=u_j$ for $j=1,2$.  Applying
\eqref{eq:eval-peval-route} gives
\[
  \operatorname{PEval}^{\mathrm{new}}_h(u_j,R)
  =
  \mathsf{Eval}^{\mathrm{new}}(k^hR+u_j)
  =
  x[k^hR+u_j]
  \qquad(j=1,2).
\]
The two $x$-values are unequal, hence the new partial evaluations already
disagree at row $R$.  If
$r^{\mathrm{new}}:=\Delta_h^{\mathrm{new}}(u_1,u_2)$, then
$r^{\mathrm{new}}\leq R$, so both row-$r^{\mathrm{new}}$ occurrences are observed.
At this actual divergence row, \eqref{eq:eval-peval-route} and data
consistency give the explicit separation
\[
\begin{aligned}
  x[k^h r^{\mathrm{new}}+u_1]
  &=
  \mathsf{Eval}^{\mathrm{new}}(k^h r^{\mathrm{new}}+u_1) \\
  &=
  \operatorname{PEval}^{\mathrm{new}}_h(u_1,r^{\mathrm{new}}) \\
  &\neq
  \operatorname{PEval}^{\mathrm{new}}_h(u_2,r^{\mathrm{new}}) \\
  &=
  \mathsf{Eval}^{\mathrm{new}}(k^h r^{\mathrm{new}}+u_2) \\
  &=
  x[k^h r^{\mathrm{new}}+u_2].
\end{aligned}
\]

For pairs involving $q_{\mathrm{new}}$, the output-level divergence row is $0$.  If
the other state is $q'$, then $q'<q_{\mathrm{new}}=n$ and
\[
  x[q']
  =
  \mathsf{Eval}^{\mathrm{old}}(q')
  \neq
  a
  =x[q_{\mathrm{new}}],
\]
where the middle inequality is line~6 with $\ell=0$.  If the other state is an
old output-level state $q_2\neq q'$, then $q_2$ satisfied the divergence requirement in
line~7, again because $\ell=0$.  Since line~7 failed,
$\mathsf{Eval}^{\mathrm{old}}(q_2)\neq a$.  Also $q_2<n$: old output-level states were
introduced before time $n$, and preprocessing only copies existing state
labels.  Data consistency before the present iteration gives
\[
  x[q_2]
  =
  \mathsf{Eval}^{\mathrm{old}}(q_2)
  \neq
  a
  =
  x[q_{\mathrm{new}}].
\]
This is the row-$0$ separating observation for $q_{\mathrm{new}}$ and $q_2$.
This proves the relaxed invariant after an output-level new-state step.

\medskip
\noindent\textbf{End of the Update loop.}
The loop invariant gives data consistency for every $t<n$, the stored-state
routing property, the representative upper-bound property, and relaxed
distinguishability.  We first prove
data consistency at the new index $n$.

If the last modifying iteration was a reassignment or an output-level
new-state step, then the argument above already proved
$\mathsf{Eval}(n)=a$.  If no iteration modified the state, we argue as
follows.  Before processing $n$, the current output level $Q_{m^*}$ does not
contain the state label $n$: every state label was added at some earlier
time $n'<n$ as a residue at most $n'$, and preprocessing only copies labels.
Thus $n\notin Q_{m^*}$.  Let $h$ be the least integer in
$\{1,\ldots,m^*\}$ such that
\[
  \rho_h(n\bmod k^h)\neq n\bmod k^h.
\]
Such an $h$ exists because $n<k^{m^*}$ after preprocessing and
$\rho_{m^*}(n)\in Q_{m^*}$.  Set
\begin{align*}
  m&:=h-1,
  &
  q&:=n\bmod k^m,
  \\
  i&:=\left\lfloor n/k^m\right\rfloor\bmod k,
  &
  \ell&:=\left\lfloor n/k^h\right\rfloor,
  \\
  q'&:=\delta_m(q,i).
\end{align*}
By minimality of $h$, the line~4 guard passes at level $m$, and unfolding
\eqref{eq:route-def} gives
\[
  q'=\rho_h(n\bmod k^h)\neq n\bmod k^h.
\]
Thus the residue part of line~6 is true.  Since this iteration did not modify
the state, the value part of line~6 must be false:
\begin{equation}
  \mathsf{Eval}(k^h\ell+q')=a.
\label{eq:no-modification-value}
\end{equation}
Let
\[
  t':=k^h\ell+q'.
\]
The representative upper-bound property \eqref{eq:representative-upper-bound}
gives $q'\leq n\bmod k^h$.  Since $q'\neq n\bmod k^h$, this
inequality is strict, so $t'<n$.  At depth $h$, both $n$ and $t'$ route to
$q'$.  This is true for $n$ by definition of $q'$, and for $t'$ by
\eqref{eq:stored-routes-to-itself}.  For every higher depth, the remaining
base-$k$ digits of $n$ and $t'$ are identical, because
$\lfloor n/k^h\rfloor=\ell=\lfloor t'/k^h\rfloor$.  Induction using
\eqref{eq:route-def} therefore gives
\[
  \rho_{m^*}(n)=\rho_{m^*}(t').
\]
Consequently,
\[
  \mathsf{Eval}(n)
  =
  \tau(\rho_{m^*}(n))
  =
  \tau(\rho_{m^*}(t'))
  =
  \mathsf{Eval}(t')
  =a,
\]
where the last equality is \eqref{eq:no-modification-value}.

It remains to rule out the possibility that the last modifying iteration was
an interior new-state step.  Suppose iteration $m<m^*-1$ created
$q_{\mathrm{new}}=n\bmod k^{m+1}$ by copying the old target $q'$.  In the next
iteration, $m+1$, the line~4 guard passes with $q_{\mathrm{new}}$.  Define
\begin{align*}
  \hat i
    &:=\left\lfloor n/k^{m+1}\right\rfloor\bmod k,
  &
  \hat\ell
    &:=\left\lfloor n/k^{m+2}\right\rfloor,
  \\
  \hat q'
    &:=\delta_{m+1}^{\mathrm{new}}(q_{\mathrm{new}},\hat i)
     =\delta_{m+1}^{\mathrm{old}}(q',\hat i).
\end{align*}
If
$\hat q'=n\bmod k^{m+2}=\hat i k^{m+1}+q_{\mathrm{new}}$, then the old
representative upper-bound property for the copied transition gives
$q_{\mathrm{new}}\leq q'$.  The old representative upper-bound property for
$\delta_m(q,i)=q'$ gives $q'\leq q_{\mathrm{new}}$.  Hence
$q'=q_{\mathrm{new}}$, contradicting
that $q_{\mathrm{new}}$ was not an old state.  The residue part of line~6 is
therefore true at iteration $m+1$.

For the value part, write
$\ell=k\hat\ell+\hat i$.  The copied transition and the fact that levels
$m+2,m+3,\ldots,m^*$ were unchanged imply
\[
  \mathsf{Eval}^{\mathrm{new}}
    (k^{m+2}\hat\ell+\hat q')
  =
  \mathsf{Eval}^{\mathrm{old}}
    (k^{m+1}\ell+q')
  \neq a,
\]
where the final inequality is the line~6 test from iteration $m$.  Thus
iteration $m+1$ must also modify the state, contradicting the assumption that
iteration $m$ was the last modifying iteration.

We have proved Data Consistency through $n$.  The stored-state routing property
and the representative upper-bound property were preserved throughout the loop,
so Minimality follows from the first part of the proof.

Finally we convert relaxed distinguishability to the Distinguishability
statement of Lemma~\ref{lem:model-correct}.
Fix a final level $h$ and distinct states $u_1,u_2\in Q_h$.  Let
$R=R(u_1,u_2)$ be the relaxed witness row.  By
\eqref{eq:relaxed-distinguishability},
\[
  k^hR+u_j\leq n\quad(j=1,2),
  \qquad
  x[k^hR+u_1]\neq x[k^hR+u_2].
\]
The final stored-state routing property gives $\rho_h(u_j)=u_j$.
Combining this with
\eqref{eq:eval-peval-route}, and then with final Data Consistency at the two
observed indices, yields
\[
  \operatorname{PEval}_h(u_j,R)
  =
  \mathsf{Eval}(k^hR+u_j)
  =
  x[k^hR+u_j]
  \qquad(j=1,2).
\]
The two $x$-values are different, so the two partial evaluations disagree at
row $R$.  Since $d_h(u_1,u_2)$ is the first row at which they disagree,
\[
  d_h(u_1,u_2)\leq R.
\]
Consequently
\[
  k^h d_h(u_1,u_2)+u_j
  \leq
  k^hR+u_j
  \leq n
  \qquad(j=1,2).
\]
At the actual divergence row, the partial evaluations are unequal:
\[
  \operatorname{PEval}_h(u_1,d_h(u_1,u_2))
  \neq
  \operatorname{PEval}_h(u_2,d_h(u_1,u_2)).
\]
Using $\rho_h(u_j)=u_j$, \eqref{eq:eval-peval-route}, and final Data
Consistency at the two observed row-$d_h(u_1,u_2)$ occurrences gives
\[
\begin{aligned}
  x[k^h d_h(u_1,u_2)+u_1]
  &=
  \mathsf{Eval}(k^h d_h(u_1,u_2)+u_1) \\
  &=
  \operatorname{PEval}_h(u_1,d_h(u_1,u_2)) \\
  &\neq
  \operatorname{PEval}_h(u_2,d_h(u_1,u_2)) \\
  &=
  \mathsf{Eval}(k^h d_h(u_1,u_2)+u_2) \\
  &=
  x[k^h d_h(u_1,u_2)+u_2].
\end{aligned}
\]
This is exactly Distinguishability.  The induction, and hence the lemma, is
complete.
\end{proof}

\subsubsection{State Bound}
\label{sec:state-bound}

We now show that, after each complete Update call, the sets $Q_m$ contain no more elements than the witnessing DFAO $M^*$ has states. The argument is a direct consequence of Distinguishability (Lemma~\ref{lem:model-correct}).

Fix a padding length $m^{**} \in \mathbb{N}$ such that $|x| \leq k^{m^{**}}$ and
\begin{equation}
  \forall t < |x| : \quad x[t] = M^*((\langle t \rangle^{m^{**}}_k)^R),
  \label{eq:dfa-sem}
\end{equation}
whose existence is guaranteed by $\mathrm{AC}^{\mathrm{R}}_k (x) = s$. Since the algorithm's current depth parameter $m^*$ is bounded above by $\left\lceil \log_k |x| \right\rceil \leq m^{**}$, every $m \leq m^*$ referenced below satisfies $m \leq m^{**}$.

Let $\hat{\delta}^* : Q^* \times [k]^* \to Q^*$ denote the standard extension of $\delta^*$ to finite input words:
\[
  \hat{\delta}^*(q, \epsilon) \mathrel{:=} q, \quad \hat{\delta}^*(q, u a) \mathrel{:=} \delta^*(\hat{\delta}^*(q, u), a).
\]
For each $m \in \{0, \dots, m^{**}\}$, define the \emph{level-$m$ reading map} $\psi^*_m : [k^m] \to Q^*$ by
\[
  \psi^*_m (r) \mathrel{:=} \hat{\delta}^*(q^*_0, \,d_0 d_1 \dots d_{m-1}), \quad \text{where } d_j \mathrel{:=} \left\lfloor r / k^j \right\rfloor \bmod k.
\]
Equivalently, $\psi^*_m (r)$ is the state of $M^*$ after reading the $m$ least-significant base-$k$ digits of $r$ in LSDF order. Iterating the transition yields the \emph{chain rule}
\begin{equation}
  \psi^*_{m'}(r) = \hat{\delta}^*(\psi^*_m (r \bmod k^m), \,d_m d_{m+1} \dots d_{m'-1})
  \label{eq:psi-chain}
\end{equation}
for every $0 \leq m \leq m' \leq m^{**}$ and $r \in [k^{m'}]$. In particular, \eqref{eq:dfa-sem} rewrites as
\begin{equation}
  x[t] = \tau^*(\psi^*_{m^{**}} (t)) \quad \text{for every } t < |x|.
  \label{eq:x-via-psi}
\end{equation}

\begin{lemma}
\label{lem:state-bound}
  \emph{(State bound.)} Consider the state of the algorithm after the complete execution of $\mathsf{Update}(n, x[n])$, or after Initialize (in which case $n = 0$). For every $m \leq m^*$, $|Q_m| \leq s$.
\end{lemma}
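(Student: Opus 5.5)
The plan is to show that the map $q \mapsto \psi^*_m(q)$, restricted to $Q_m$, is injective. That immediately gives $|Q_m| \leq |Q^*| = s$. The only input needed is the Distinguishability clause of Lemma~\ref{lem:model-correct} together with the chain rule \eqref{eq:psi-chain}. The case $n=0$ is trivial, since $|Q_0|=1\le s$, so assume we are after a complete Update and fix $m\le m^*$.

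Suppose, for contradiction, that there are distinct $q_1,q_2\in Q_m$ with $\psi^*_m(q_1)=\psi^*_m(q_2)$. Let $r:=d_m(q_1,q_2)$. By Distinguishability, the indices $t_j:=k^m r+q_j$ satisfy $t_j\le n<|x|$ for $j=1,2$, and $x[t_1]\neq x[t_2]$.

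Next, write both indices in base $k$ with padding length $m^{**}$. Since $q_j<k^m$, the low $m$ digits of $t_j$ are exactly those of $q_j$, so $t_j\bmod k^m=q_j$. The remaining digits $d_m d_{m+1}\dots d_{m^{**}-1}$ are those of $r$ and are therefore identical for $j=1,2$. We need $t_j<k^{m^{**}}$ so that $\psi^*_{m^{**}}(t_j)$ is defined; this holds because $t_j<|x|\le k^{m^{**}}$. Applying the chain rule \eqref{eq:psi-chain} with $m'=m^{**}$ gives
\[
\psi^*_{m^{**}}(t_j)=\hat\delta^*\bigl(\psi^*_m(q_j),\,d_m\dots d_{m^{**}-1}\bigr),
\]
and the two right-hand sides coincide by assumption. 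By \eqref{eq:x-via-psi}, $x[t_1]=\tau^*(\psi^*_{m^{**}}(t_1))=\tau^*(\psi^*_{m^{**}}(t_2))=x[t_2]$, which is a contradiction.

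The only point needing care is bookkeeping. First, $m\le m^*\le m^{**}$, so $\psi^*_m$ is defined; the text preceding the lemma notes this. Second, the observed indices must lie below $|x|$, which follows from $n<|x|$. Finally, $d_m(q_1,q_2)$ must be finite: it cannot be $\infty$ because Distinguishability asserts a real separating row. I expect no genuine obstacle, since the difficult work is already contained in Lemma~\ref{lem:model-correct}.
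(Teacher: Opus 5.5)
Your proposal is correct and follows the paper's proof: both show $\psi^*_m$ is injective on $Q_m$ by taking the divergence row $r=d_m(q_1,q_2)$, using Distinguishability to get two observed indices $k^m r+q_j<|x|\le k^{m^{**}}$ with different symbols, and deriving a contradiction from the chain rule \eqref{eq:psi-chain} and \eqref{eq:x-via-psi}. The bookkeeping you flag (finiteness of $r$, $m\le m^*\le m^{**}$, indices below $|x|$) matches the paper's, and your separate treatment of $n=0$ is harmless, since the general argument also covers it with Distinguishability holding vacuously.
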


\begin{proof}
  We show that $\psi^*_m$ restricted to $Q_m$ is injective. Since $\psi^*_m$ takes values in $Q^*$, this yields $|Q_m| \leq |Q^*| = s$.

  Fix distinct $q_1, q_2 \in Q_m$ and set $r \mathrel{:=} d_m (q_1, q_2)$. By Distinguishability (Lemma~\ref{lem:model-correct}),
\begin{equation}
  k^m r + q_j \leq n \quad \text{for } j \in \{1, 2\}, \quad \text{and} \quad x[k^m r + q_1] \neq x[k^m r + q_2].
  \label{eq:sb-distinguish}
\end{equation}
  In particular, $r$ is a finite natural number; and since $x$ is defined at both indices $k^m r + q_j$, we have $k^m r + q_j < |x| \leq k^{m^{**}}$, whence $r < k^{m^{**} - m}$.

  Suppose, toward a contradiction, that $\psi^*_m (q_1) = \psi^*_m (q_2)$. Let $e_0 e_1 \dots e_{m^{**} - m - 1}$ denote the base-$k$ digits of $r$ in LSDF order, $e_j \mathrel{:=} \left\lfloor r / k^j \right\rfloor \bmod k$. Applying \eqref{eq:psi-chain} with $m' = m^{**}$ to $k^m r + q_j \in [k^{m^{**}}]$ --- and observing that $(k^m r + q_j) \bmod k^m = q_j$ while the digits of $k^m r + q_j$ at positions $m, m+1, \dots, m^{**} - 1$ coincide with $e_0, e_1, \dots, e_{m^{**} - m - 1}$, independent of $j$ --- we obtain
\[
  \psi^*_{m^{**}}(k^m r + q_j) = \hat{\delta}^*(\psi^*_m (q_j), \,e_0 e_1 \dots e_{m^{**} - m - 1}) \quad \text{for } j \in \{1, 2\}.
\]
  By the hypothesis $\psi^*_m (q_1) = \psi^*_m (q_2)$, the right-hand sides agree, hence
\[
  \psi^*_{m^{**}}(k^m r + q_1) = \psi^*_{m^{**}}(k^m r + q_2).
\]
  Invoking \eqref{eq:x-via-psi} at both indices (each strictly less than $|x|$),
\[
  x[k^m r + q_1] = \tau^*(\psi^*_{m^{**}}(k^m r + q_1)) = \tau^*(\psi^*_{m^{**}}(k^m r + q_2)) = x[k^m r + q_2],
\]
  contradicting \eqref{eq:sb-distinguish}. Hence $\psi^*_m (q_1) \neq \psi^*_m (q_2)$, completing the proof of injectivity.
\end{proof}

\subsubsection{Statistical Efficiency}
\label{sec:tp-mistakes}

In this subsection we establish the mistake bound \eqref{eq:rtl-mistakes} of Theorem~\ref{thm:rtl-bounds}. The strategy is a \emph{per-level decomposition}: apart from the initial round and the rounds whose index is a power of $k$, each prediction round is charged to a single \emph{breakpoint depth} --- the depth at which the exact-residue walk of Algorithm~\ref{alg:tp-predict} halts --- and the mistakes at each level $m$ are bounded separately. The exceptional rounds are counted directly. The per-level analysis couples TP's dynamics to an instance of the branch-prediction game of Section~\ref{sec:branch-pred} played on a tree $T^{(m)}$ whose vertices encode a refinement hierarchy of subsequence classes. The link between TP's algorithmic state and these partitions is the content of a \emph{semantic interpretation} lemma (Lemma~\ref{lem:tp-semantics}). Two ingredients then close the per-level count: Proposition~\ref{prop:branch-pred} bounds the mistakes whose game-vertex is \emph{internal}, while the residual mistakes --- those at game-\emph{leaves} of $T^{(m)}$ --- are absorbed by a count of \emph{branching} vertices in $T^{(m)}$, in turn bounded by $s - 1$ via the universal class bound below. Summing over the $O(\log_k |x|)$ levels and adding the direct contribution of the exceptional rounds yields the claimed bound.

Throughout the subsection we retain the setup of Section~\ref{sec:state-bound}: a witnessing DFAO $M^* = (Q^*, q^*_0, \delta^*, \tau^*)$ with $|Q^*| = s$, a padding length $m^{**}$ with $|x| \leq k^{m^{**}}$, and the reading map $\psi^*_m : [k^m] \to Q^*$. Let
\[
  M \mathrel{:=} \left\lceil \log_k |x| \right\rceil .
\]
Then every depth level reached during the run is at most $M$, and $M \leq m^{**}$.

\textbf{Subsequence equivalence.} The intrinsic object underlying the analysis is the following refinement hierarchy. At depth $\ell$, we consider exactly those residues whose arithmetic-progression subsequence has entries defined through row $\ell-1$.

\begin{definition}
\label{def:subseq-equiv}
  For $m \geq 1$ and $\ell \in \mathbb{N}$, define the level-$\ell$ live residue set
\[
  \Lambda^m_\ell \mathrel{:=}
  \begin{cases}
    [k^m] & \text{if } \ell = 0,\\
    \{r \in [k^m] : k^m(\ell-1)+r < |x|\} & \text{if } \ell \geq 1.
  \end{cases}
\]
  For $r,r' \in \Lambda^m_\ell$, write $r \approx^m_\ell r'$ when
\[
  \forall j < \ell : \quad x[k^m j + r] = x[k^m j + r'].
\]
  For each fixed $m,\ell$, this is an equivalence relation on $\Lambda^m_\ell$; denote the class of $r \in \Lambda^m_\ell$ by $[r]^m_\ell$.
\end{definition}

Since $\Lambda^m_{\ell+1} \subseteq \Lambda^m_\ell$, the relation $\approx^m_{\ell+1}$ refines $\approx^m_\ell$ after restricting the latter to $\Lambda^m_{\ell+1}$. The DFAO $M^*$ provides a universal upper bound on the number of classes at every level.

\begin{lemma}
\label{lem:psi-refines}
  For every $1 \leq m \leq m^{**}$, every $\ell \in \mathbb{N}$, and all $r,r' \in \Lambda^m_\ell$: if $\psi^*_m(r)=\psi^*_m(r')$, then $r \approx^m_\ell r'$. Consequently, for every $\ell$, there are at most $s$ different $\approx^m_\ell$-classes in $\Lambda^m_\ell$.
\end{lemma}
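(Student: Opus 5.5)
The plan is to argue directly from the chain rule \eqref{eq:psi-chain} and the DFAO semantics \eqref{eq:x-via-psi}, in the same way as the proof of Lemma~\ref{lem:state-bound}. Fix $1\le m\le m^{**}$, $\ell\in\mathbb{N}$, and $r,r'\in\Lambda^m_\ell$ with $\psi^*_m(r)=\psi^*_m(r')$. We must show $x[k^m j+r]=x[k^m j+r']$ for every $j<\ell$. If $\ell=0$ there is nothing to prove, so assume $\ell\ge 1$.

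Fix $j<\ell$. First I check that both indices are observed. Since $r,r'\in\Lambda^m_\ell$, we have $k^m(\ell-1)+r<|x|$, and monotonicity in $j$ gives $k^m j+r\le k^m(\ell-1)+r<|x|$. The same holds for $r'$. In particular $k^m j+r<|x|\le k^{m^{**}}$, so $j<k^{m^{**}-m}$. Hence both indices lie in $[k^{m^{**}}]$, and their base-$k$ digits at positions $m,\dots,m^{**}-1$ are exactly the LSDF digits $e_0,\dots,e_{m^{**}-m-1}$ of $j$, the same for both.

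Next I apply \eqref{eq:psi-chain} with $m'=m^{**}$, using $(k^mj+r)\bmod k^m=r$:
\[
  \psi^*_{m^{**}}(k^mj+r)=\hat\delta^*(\psi^*_m(r),e_0\cdots e_{m^{**}-m-1}),
\]
and similarly for $r'$. The assumption $\psi^*_m(r)=\psi^*_m(r')$ makes the right-hand sides equal. Then \eqref{eq:x-via-psi} gives $x[k^mj+r]=\tau^*(\cdot)=x[k^mj+r']$, which proves $r\approx^m_\ell r'$.

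For the counting statement, the first part shows that each $\approx^m_\ell$-class is a union of fibres of $\psi^*_m$ restricted to $\Lambda^m_\ell$. So the map sending a fibre to its class is surjective onto the set of classes, and the number of classes is at most $|\psi^*_m(\Lambda^m_\ell)|\le|Q^*|=s$.

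There is no real obstacle; the only step needing care is the bookkeeping that $j<\ell$ keeps both indices below $|x|$. This is exactly what the definition of $\Lambda^m_\ell$ guarantees, and it is also what makes the digit-padding to length $m^{**}$ legitimate.
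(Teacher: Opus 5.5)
Your proposal is correct and follows essentially the same argument as the paper: the chain rule \eqref{eq:psi-chain} at $m'=m^{**}$, with both indices kept below $|x|$ by the definition of $\Lambda^m_\ell$, followed by \eqref{eq:x-via-psi}. Your counting step, which maps the nonempty fibres of $\psi^*_m$ on $\Lambda^m_\ell$ surjectively onto the classes, is simply the dual phrasing of the paper's injection of class representatives into $Q^*$.
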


\begin{proof}
  Fix $r,r' \in \Lambda^m_\ell$ with $\psi^*_m(r)=\psi^*_m(r')$. For $j<\ell$, the definition of $\Lambda^m_\ell$ ensures that both $k^m j+r$ and $k^m j+r'$ are strictly smaller than $|x|$. The chain rule \eqref{eq:psi-chain} applied at $m'=m^{**}$, with the last $m^{**}-m$ digits of $k^m j+r$ being the base-$k$ digits of $j$ padded with zeros (call them $e_0,e_1,\dots,e_{m^{**}-m-1}$), yields
\[
  \psi^*_{m^{**}}(k^m j+r)
  =
  \hat{\delta}^*(\psi^*_m(r), e_0\dots e_{m^{**}-m-1}).
\]
  The same identity holds with $r'$ in place of $r$; since $\psi^*_m(r)=\psi^*_m(r')$, the right-hand sides agree, and \eqref{eq:x-via-psi} yields $x[k^m j+r]=x[k^m j+r']$. As $j<\ell$ was arbitrary, $r \approx^m_\ell r'$.

  For the class bound, choose one representative from each $\approx^m_\ell$-class. If two representatives had the same $\psi^*_m$-value, the first part would put them in the same class. Thus the representatives inject into $Q^*$, and there are at most $|Q^*|=s$ classes.
\end{proof}

\textbf{Breakpoints and level-$m$ rounds.} Let
\[
  \mathcal{B} \mathrel{:=}
  \{t\in[|x|] : t=0 \text{ or } t=k^a \text{ for some } a\in\mathbb{N}\}
\]
be the set of boundary rounds, and put \(\mathcal{R}\mathrel{:=}[|x|]\setminus\mathcal{B}\). Since
\[
  |\mathcal{B}|\leq 1+\left\lceil\log_k |x|\right\rceil,
\]
these rounds contribute only \(O(\log_k |x|)\) possible mistakes. We therefore carry out the branch-prediction accounting on \(\mathcal{R}\), and add \(|\mathcal{B}|\) back in the final aggregation. For \(n\in\mathcal{R}\), consider the execution of \(\mathsf{Predict}(n)\).

\begin{definition}
\label{def:breakpoint}
  The \emph{breakpoint} $m(n) \in \mathbb{N}$ is the final value of the loop variable $m$ after the while-loop of Algorithm~\ref{alg:tp-predict} exits. The \emph{active pair} is $\iota(n) \mathrel{:=} (q_n,i_n)$ with
\[
  q_n \mathrel{:=} n \bmod k^{m(n)}, \quad i_n \mathrel{:=} \left\lfloor n / k^{m(n)} \right\rfloor \bmod k,
\]
  and the \emph{active residue} is $r_n \mathrel{:=} i_n k^{m(n)} + q_n = n \bmod k^{m(n)+1}$. Write $\ell_n \mathrel{:=} \left\lfloor n/k^{m(n)+1} \right\rfloor$, so that $n=k^{m(n)+1}\ell_n+r_n$.
\end{definition}

\begin{lemma}
\label{lem:breakpoint-range}
  For every $n \in \mathcal{R}$, if $d_n$ denotes the current value of the algorithm's depth parameter $m^*$ at the moment $\mathsf{Predict}(n)$ runs, then $m(n) \in \{0,1,\dots,d_n-1\} \subseteq \{0,1,\dots,M-1\}$.
\end{lemma}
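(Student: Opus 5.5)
The plan is to read off both inclusions from two facts: how the depth parameter $m^*$ evolves under $\mathsf{Preprocess}$, and the guard of the while-loop in Algorithm~\ref{alg:tp-predict}. The one point that needs care is timing. $\mathsf{Predict}(n)$ runs on the state left by $\mathsf{Update}(n-1,x[n-1])$, so $\mathsf{Preprocess}(n)$ has not yet run, and $d_n$ is the depth after processing index $n-1$. I expect this bookkeeping to be the only real obstacle; the rest is immediate.

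\textbf{Step 1: depth after processing index $t$.} First I would show by induction over the rounds that, after processing any index $t\geq 1$, we have $k^{m^*-1}\leq t<k^{m^*}$, i.e.\ $m^*=\lfloor\log_k t\rfloor+1$.
\begin{itemize}
  \item After $\mathsf{Initialize}$ we have $m^*=0$.
  \item At index $t=1=k^0$, $\mathsf{Preprocess}$ raises $m^*$ to $1$.
  \item In general, $\mathsf{Preprocess}(t)$ increments $m^*$ exactly when $t=k^{m^*}$. So $m^*$ increases by one precisely at the indices $1,k,k^2,\dots$, and otherwise stays fixed. The bracketing $k^{m^*-1}\leq t<k^{m^*}$ is therefore preserved.
\end{itemize}

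\textbf{Step 2: locate $d_n$.} Now take $n\in\mathcal{R}$, so $n\geq 2$ and $n$ is not a power of $k$. Applying Step 1 with $t=n-1\geq 1$ gives $k^{d_n-1}\leq n-1<k^{d_n}$. In particular $d_n\geq 1$. Since $n\leq k^{d_n}$ and $n\neq k^{d_n}$, we get $k^{d_n-1}<n<k^{d_n}$, hence $d_n=\lceil\log_k n\rceil$. Finally $n<|x|$ gives $d_n\leq\lceil\log_k|x|\rceil=M$. This proves $\{0,\dots,d_n-1\}\subseteq\{0,\dots,M-1\}$. It also shows that $\delta_0$ exists, so line 2 of $\mathsf{Predict}$ is well defined.

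\textbf{Step 3: the loop guard.} In Algorithm~\ref{alg:tp-predict}, $m$ starts at $0$ and is only ever incremented. Each increment happens inside the while-loop, whose guard requires $m<m^*-1=d_n-1$ before incrementing. So after any increment $m\leq d_n-1$, and if no increment occurs then $m=0\leq d_n-1$ because $d_n\geq1$. Hence the final value $m(n)$ lies in $\{0,\dots,d_n-1\}$. The loop also terminates, since $m$ strictly increases and is bounded.
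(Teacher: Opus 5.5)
Your proof is correct, but it uses a different mechanism from the paper's.

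\textbf{Your route.} You bound $m(n)$ purely syntactically, via the explicit depth guard $m<m^*-1$ in the while-loop of Algorithm~\ref{alg:tp-predict}. Before that, you do a careful timing argument: when $\mathsf{Predict}(n)$ runs, $\mathsf{Preprocess}(n)$ has not yet run, so $k^{d_n-1}<n<k^{d_n}$ and hence $1\leq d_n=\lceil\log_k n\rceil\leq M$.

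\textbf{The paper's route.} The paper argues semantically. Every label in $Q_{d_n}$ comes from $\mathsf{Initialize}$ or line~11 of $\mathsf{Update}$ at some earlier index $n'<n$; $\mathsf{Preprocess}$ only copies labels. So every label is smaller than $n$. Since $n<k^{d_n}$ for $n\in\mathcal{R}$, we get $n\bmod k^{d_n}=n\notin Q_{d_n}$. Hence the residue test $q'\equiv n \pmod{k^{m+1}}$ must fail by $m=d_n-1$ at the latest.

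\textbf{Comparison.} Your argument is simpler. It also makes explicit bookkeeping that the paper's proof leaves implicit: that $d_n\geq 1$ (so $\delta_0$ exists) and that $d_n\leq M$.

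The paper's argument proves something extra that is not in the lemma statement but is relied on later. It shows the loop always exits with the residue test failing, i.e.\ $q'\neq n\bmod k^{m(n)+1}$. This holds even when the exit occurs at $m=d_n-1$, where your depth guard alone would have stopped the loop. Part~(2) of Lemma~\ref{lem:tp-semantics} uses exactly this fact (``the while-loop exits at $m=m(n)$ only when $q'\not\equiv n$''). If your proof replaced the paper's, that later step would need the label-bound argument supplied separately.
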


\begin{proof}
  At the moment $\mathsf{Predict}(n)$ runs, every element of $Q_{d_n}$ was introduced either by $\mathsf{Initialize}$ (adding $0<n$) or by line 11 of Algorithm~\ref{alg:tp-update} during some earlier $\mathsf{Update}(n',\cdot)$ with $n'<n$ (adding $q_{\mathrm{new}} = n' \bmod k^{m'+1}$ at some level $m'+1 \leq d_n$, possibly propagated unchanged through subsequent $\mathsf{Preprocess}$ calls). In either case the added value is at most $n'<n$. The algorithm changes depth only at powers of $k$, so for $n\in\mathcal{R}$ the usual current-depth relation is strict: $n<k^{d_n}$. Hence every $q \in Q_{d_n}$ satisfies $q<n=n \bmod k^{d_n}$, so $n \bmod k^{d_n}\notin Q_{d_n}$ and the while-loop cannot reach $m=d_n$.
\end{proof}

For $m \in \{0,\dots,M-1\}$, define the set of \emph{level-$m$ rounds}
\begin{equation*}
  R_m \mathrel{:=} \{ n \in \mathcal{R} : m(n)=m \}.
\end{equation*}
The sets $\{R_m\}_{m<M}$ partition $\mathcal{R}$. Observe that TP's prediction at index $n$ only depends on the algorithm's state through the variables at breakpoint depth $m(n)$: the exact-residue walk in Algorithm~\ref{alg:tp-predict} exits at $m(n)$, the candidate set $D$ is drawn from $Q_{m(n)+1}$, and the prediction is $\mathsf{Eval}(k^{m(n)+1}\ell_n+q^*)$ for some $q^* \in D$. Our strategy is to attribute each TP mistake to a single \emph{level}, namely $m(n)$, and bound the mistakes at each level separately.

\textbf{Semantic interpretation of TP's state.} The link between TP's algorithmic state and the partitions $\approx^{m+1}_\ell$ is furnished by Lemma~\ref{lem:model-correct}.

\begin{lemma}
\label{lem:tp-semantics}
  Suppose $\mathsf{Predict}(n)$ has breakpoint $m$, with active pair $(q_n,i_n)$ and old target $q' \mathrel{:=} \delta_m(q_n,i_n)$. Then:
  \begin{enumerate}
    \item \emph{(Old-target equivalence.)} $r_n \approx^{m+1}_{\ell_n} q'$.
    \item \emph{(Candidate bound.)} Every $\tilde q \in D$ satisfies $\tilde q < r_n$.
    \item \emph{(Candidate set.)} The set $D \subseteq Q_{m+1}$ constructed in line 8 of Algorithm~\ref{alg:tp-predict} satisfies
\[
  D = \{\tilde q \in Q_{m+1} \cap \Lambda^{m+1}_{\ell_n} : \tilde q \approx^{m+1}_{\ell_n} r_n\}.
\]
    \item \emph{(Distinct children.)} Distinct $\tilde q_1,\tilde q_2 \in D$ lie in distinct $\approx^{m+1}_{\ell_n+1}$-classes.
    \item \emph{(Descent classification.)} For every $(\hat q,\hat i) \in Q_m \times [k]$ with $\hat r \mathrel{:=} \hat i k^m+\hat q < r_n$ and $\tilde q \mathrel{:=} \delta_m(\hat q,\hat i) \in D$: $\hat r \approx^{m+1}_{\ell_n+1} \tilde q$ and $\tilde q \leq \hat r$.
  \end{enumerate}
\end{lemma}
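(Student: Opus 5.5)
The plan is to work entirely with the state that exists just before $\mathsf{Predict}(n)$, i.e. after $\mathsf{Update}(n-1,\cdot)$. Lemma~\ref{lem:model-correct} therefore applies with $n-1$ in place of $n$. Since $n\in\mathcal{R}$, $n$ is not a power of $k$, so the depth is unchanged and $n<k^{m^*}$. Every index $t<n$ is observed and satisfies $\mathsf{Eval}(t)=x[t]$, and Distinguishability gives separating rows at indices $\leq n-1$.

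Two routing facts come first. By minimality of the breakpoint walk, $q_n=n\bmod k^m\in Q_m$ and $\rho_m(q_n)=q_n$. We also have $q'=\rho_{m+1}(r_n)\neq r_n$, and $q'\leq r_n$ by the representative upper-bound property, hence $q'<r_n$. Then \eqref{eq:eval-peval-route} gives, for every $j$, $\mathsf{Eval}(k^{m+1}j+r_n)=\operatorname{PEval}_{m+1}(q',j)=\mathsf{Eval}(k^{m+1}j+q')$. For $j<\ell_n$ both indices are below $n$, so Data Consistency turns this into $x[k^{m+1}j+r_n]=x[k^{m+1}j+q']$. That is item 1; membership of $q'$ in $\Lambda^{m+1}_{\ell_n}$ follows from $q'<r_n$.

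Item 3 also uses this translation between $d_{m+1}$ and data. If $\tilde q\in Q_{m+1}$ has $d_{m+1}(q',\tilde q)\geq\ell_n$, then PEvals agree on rows $<\ell_n$, and hence so does the data. One still has to check that the rows $<\ell_n$ of $\tilde q$ are observed; this follows once item 2 gives $\tilde q<r_n$. Combining with item 1 shows $\tilde q\approx r_n$.

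Conversely, suppose $\tilde q\approx^{m+1}_{\ell_n} r_n$, so $\tilde q\approx q'$. Assume $d:=d_{m+1}(q',\tilde q)<\ell_n$ with $\tilde q\neq q'$. Distinguishability supplies observed indices at row $d$ with different symbols, which contradicts $\approx$. Hence $\tilde q\in D$.

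Item 2 I expect to be the main obstacle. Take $\tilde q\in D$ with $\tilde q\neq q'$. Distinguishability gives $k^{m+1}d+\tilde q\leq n-1$ with $d\geq\ell_n$, so $\tilde q\leq n-1-k^{m+1}\ell_n=r_n-1$.

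Item 4: take distinct $\tilde q_1,\tilde q_2\in D$. Their divergence row $d$ satisfies $d\geq\ell_n$, since both agree with $q'$ up to $\ell_n$. Distinguishability puts both row-$d$ indices at most $n-1$, which forces $d=\ell_n$. So they differ on row $\ell_n$ in the data, and hence lie in different $\approx_{\ell_n+1}$ classes. Both lie in $\Lambda_{\ell_n+1}$ by item 2.

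Item 5 follows the same pattern. From $\hat r<r_n$, the row-$\ell_n$ index of $\hat r$ is below $n$, so $\hat r\in\Lambda_{\ell_n+1}$. We need $\rho_{m+1}(\hat r)=\tilde q$. By Minimality $\rho_m(\hat r\bmod k^m)=p$ for some $p\in Q_m$ with $p\leq\hat r\bmod k^m$, but $p$ need not equal $\hat q$. I will handle this by restricting to $\hat q\in Q_m$ with $\hat q=\hat r\bmod k^m$, which is immediate from the definition of $\hat r$, so $\rho_{m+1}(\hat r)=\delta_m(\hat q,\hat i)=\tilde q$. Then PEval gives equal Evals on all rows, and Data Consistency on rows $\leq\ell_n$ gives $\hat r\approx_{\ell_n+1}\tilde q$, provided $\tilde q$'s row-$\ell_n$ index is also observed, which holds by item 2. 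Finally, $\tilde q\leq\hat r$ is the representative upper-bound property.
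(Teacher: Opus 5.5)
Your proposal is correct and follows essentially the same route as the paper: routing through $\rho_{m+1}$, then Minimality, Data Consistency and Distinguishability item by item, with only cosmetic differences, e.g.\ you use the sharper post-$\mathsf{Update}(n-1)$ bound $k^{m+1}d+\tilde q\leq n-1$ in item~2 where the paper makes the separate observation $r_n\notin Q_{m+1}$, and you derive the converse inclusion of item~3 from Distinguishability where the paper uses the first-disagreement characterization of $d_{m+1}$. One clean-up in item~5: there is nothing to ``restrict'', since $\hat q\in Q_m$ is a hypothesis and Minimality gives $\rho_m(\hat q)=\hat q$, so your $p$ is automatically $\hat q$.
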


\begin{proof}
  Put $h\mathrel{:=}m+1$ for readability.

  \emph{(1).} Since $q_n \in Q_m$ at the time of $\mathsf{Predict}(n)$, Minimality (Lemma~\ref{lem:model-correct}) gives $\rho_m(q_n)=q_n$; unfolding \eqref{eq:route-def} at depth $h$,
\[
  \rho_h(r_n)=\delta_m(\rho_m(q_n),i_n)=\delta_m(q_n,i_n)=q'.
\]
  Minimality applied to the transition $\delta_m(q_n,i_n)=q'$ gives $q'\leq r_n$. Hence, if $\ell_n>0$, both $r_n$ and $q'$ lie in $\Lambda^h_{\ell_n}$; for $\ell_n=0$ this is automatic. For every $j<\ell_n$,
\[
  \mathsf{Eval}(k^h j+r_n)=\operatorname{PEval}_h(q',j),
\]
  and, applying Minimality to $q'\in Q_h$, $\mathsf{Eval}(k^h j+q')=\operatorname{PEval}_h(q',j)$ as well. Hence
\begin{equation}
  \mathsf{Eval}(k^h j+r_n)=\mathsf{Eval}(k^h j+q') \quad \text{for every } j<\ell_n.
  \label{eq:sem-rn-qp}
\end{equation}
  Both indices in \eqref{eq:sem-rn-qp} are strictly smaller than $n$; Data Consistency converts it into $x[k^h j+r_n]=x[k^h j+q']$. As $j<\ell_n$ was arbitrary, $r_n \approx^h_{\ell_n} q'$.

  \emph{(2).} First, $r_n\notin Q_h$. Indeed, the proof of (1) established $\rho_h(r_n)=q'$. If $r_n\in Q_h$, Minimality would also give $\rho_h(r_n)=r_n$, whence $q'=r_n$, contradicting the fact that the while-loop of Algorithm~\ref{alg:tp-predict} exits at $m=m(n)$ only when $q'\not\equiv n \pmod{k^h}$.

  Let $\tilde q\in D$. If $\tilde q=q'$, then Minimality applied to $\delta_m(q_n,i_n)=q'$ gives $\tilde q=q'\leq r_n$, and $\tilde q\neq r_n$ by the preceding paragraph; hence $\tilde q<r_n$.

  If $\tilde q\neq q'$, set $r^*\mathrel{:=}d_h(\tilde q,q')$. Since $\tilde q\in D$, the definition of $D$ gives $r^*\geq \ell_n$. Distinguishability (Lemma~\ref{lem:model-correct}) applied to the distinct pair $\tilde q,q'\in Q_h$ yields $k^h r^*+\tilde q\leq n$. Writing $n=k^h\ell_n+r_n$, this rearranges to
\[
  k^h(r^*-\ell_n)+\tilde q\leq r_n.
\]
  Since $r^*\geq\ell_n$, the left-hand side is at least $\tilde q$, so $\tilde q\leq r_n$. Again $\tilde q\neq r_n$ because $\tilde q\in Q_h$ but $r_n\notin Q_h$, and therefore $\tilde q<r_n$.

  \emph{(3).} We first prove $D$ is contained in the displayed set. Let $\tilde q\in D$. By (2), $\tilde q<r_n$, so $\tilde q\in\Lambda^h_{\ell_n}$ and in fact $\tilde q\in\Lambda^h_{\ell_n+1}$. The condition $d_h(q',\tilde q)\geq\ell_n$ defining $D$ is equivalent to
\[
  \mathsf{Eval}(k^h j+q')=\mathsf{Eval}(k^h j+\tilde q) \quad\text{for all } j<\ell_n.
\]
  For all such $j$, both indices are smaller than $n$, so Data Consistency converts this to $q'\approx^h_{\ell_n}\tilde q$. Combining with (1) and transitivity of $\approx^h_{\ell_n}$ gives $\tilde q\approx^h_{\ell_n}r_n$.

  Conversely, suppose $\tilde q\in Q_h\cap\Lambda^h_{\ell_n}$ and $\tilde q\approx^h_{\ell_n}r_n$. By (1), $q'\approx^h_{\ell_n}r_n$, hence $q'\approx^h_{\ell_n}\tilde q$. For every $j<\ell_n$, the indices $k^h j+q'$ and $k^h j+\tilde q$ are smaller than $n$ (for $\tilde q$, use $j<\ell_n$ and $\tilde q\in\Lambda^h_{\ell_n}$). Data Consistency therefore gives
\[
  \mathsf{Eval}(k^h j+q')=\mathsf{Eval}(k^h j+\tilde q) \quad\text{for all } j<\ell_n,
\]
  i.e. $d_h(q',\tilde q)\geq\ell_n$. Thus $\tilde q\in D$.

    \emph{(4).} Let $\tilde q_1,\tilde q_2\in D$ be distinct and set
  $r^*\mathrel{:=}d_h(\tilde q_1,\tilde q_2)$. By (2), both
  $\tilde q_1$ and $\tilde q_2$ are strictly smaller than $r_n$, hence they lie in
  $\Lambda^h_{\ell_n+1}$. Distinguishability applied to the distinct states
  $\tilde q_1,\tilde q_2\in Q_h$ gives
\[
  k^h r^*+\tilde q_i\leq n \qquad (i=1,2)
\]
  and
\[
  x[k^h r^*+\tilde q_1]\neq x[k^h r^*+\tilde q_2].
\]
  Since $n=k^h\ell_n+r_n$ and $\tilde q_i<r_n<k^h$, the first displayed
  inequality forces $r^*\leq\ell_n$: otherwise
  $k^h r^*+\tilde q_i\geq k^h(\ell_n+1)>n$. Thus $r^*<\ell_n+1$, and the
  second displayed inequality gives a row below $\ell_n+1$ on which the two
  live residues disagree. Therefore
  $\tilde q_1\not\approx^h_{\ell_n+1}\tilde q_2$.

  \emph{(5).} Minimality applied to $\hat q\in Q_m$ gives $\rho_m(\hat q)=\hat q$, and unfolding \eqref{eq:route-def} yields $\rho_h(\hat r)=\tilde q$. Minimality applied to $\tilde q\in Q_h$ then gives
\[
  \tilde q = \min\{r\in[k^h] : \rho_h(r)=\tilde q\} \leq \hat r.
\]
  In particular, $\tilde q\leq\hat r<r_n$, so both $\hat r$ and $\tilde q$ lie in $\Lambda^h_{\ell_n+1}$. Using $\rho_h(\hat r)=\tilde q$ and $\rho_h(\tilde q)=\tilde q$, we get, for every $j\leq\ell_n$,
\begin{equation}
  \mathsf{Eval}(k^h j+\hat r)=\operatorname{PEval}_h(\tilde q,j)=\mathsf{Eval}(k^h j+\tilde q).
  \label{eq:sem-hr-tq}
\end{equation}
  For these $j$, both indices $k^h j+\hat r$ and $k^h j+\tilde q$ lie strictly below $n$ (using $\hat r<r_n$ and $\tilde q\leq\hat r<r_n$), so Data Consistency converts \eqref{eq:sem-hr-tq} into equality of $x$-values. Hence $\hat r\approx^h_{\ell_n+1}\tilde q$.
\end{proof}

\textbf{The level-$m$ tree.} Fix $m\in\{0,\dots,M-1\}$ and put $h\mathrel{:=}m+1$. We define a rooted tree $T^{(m)}$ on which the branch-prediction reduction will run.

\begin{definition}
\label{def:level-m-tree}
  The \emph{level-$m$ tree} $T^{(m)}$ has:
  \begin{itemize}
    \item \emph{Vertices:} pairs $(C,\ell)$ where $0\leq\ell\leq \left\lceil |x|/k^h \right\rceil$ and $C$ is a $\approx^h_\ell$-class in $\Lambda^h_\ell$.
    \item \emph{Root:} $(\Lambda^h_0,0)=([k^h],0)$.
    \item \emph{Edges:} $(C,\ell)\to(C',\ell+1)$ whenever $C'$ is a $\approx^h_{\ell+1}$-class with $C'\subseteq C$.
  \end{itemize}
\end{definition}

The children of $(C,\ell)$ therefore partition $C\cap\Lambda^h_{\ell+1}$. If this intersection is empty, $(C,\ell)$ has no children. Otherwise it has one child for each $\approx^h_{\ell+1}$-class contained in $C$. Each residue $r\in[k^h]$ induces a unique path in $T^{(m)}$ through the levels $0,1,\dots,L_r$, where
\[
  L_r \mathrel{:=} \left|\{j\geq0 : k^h j+r<|x|\}\right|,
\]
with vertex at depth $\ell\leq L_r$ equal to $([r]^h_\ell,\ell)$.

\textbf{Reduction to branch prediction.} We bound TP's mistakes at level $m$ by coupling the algorithm's behaviour to an instance of the branch prediction game (Definition~\ref{def:bp-game}) played on $T^{(m)}$. Set
\[
  \hat Q_m \mathrel{:=} \bigcup_{t\geq0} Q_m^{(t)} \subseteq [k^m]
\]
to be the set of all residues that ever populate $Q_m$ during the run of the algorithm on $x$ (equivalently, the final value of $Q_m$, since states are never removed from $Q_m$), and define the level-$m$ element set as the collection of $(m+1)$-residues lying over $\hat Q_m$ modulo $k^m$:
\[
  E_m \mathrel{:=} \{r\in[k^h] : r\bmod k^m\in\hat Q_m\}
  =
  \{\hat i k^m+\hat q : \hat q\in\hat Q_m,\ \hat i\in[k]\}.
\]

Each $\hat r\in E_m$ participates in the level-$m$ game as a single element, traversing its canonical path in $T^{(m)}$. The element enters the game at the moment its level-$m$ residue $\hat r\bmod k^m$ first appears in $Q_m$, placed at the depth it would occupy had it been participating from the start, and thereafter it descends one step at every real time $n$ with $n\bmod k^h=\hat r$. At descent steps corresponding to rounds in $R_m$ from a game-internal vertex (case (b.2) of Definition~\ref{def:bp-game}), TP's plurality vote on line 11 of Algorithm~\ref{alg:tp-predict} will coincide with the $\mathsf{PB}$ strategy of Definition~\ref{def:mb}. Descent steps from a game-leaf (case (b.1)) produce no game prediction, but TP may still err at the corresponding round; such residual mistakes are absorbed by a count of branching vertices in $T^{(m)}$. The lemmas and game-dynamics description that follow make this coupling precise.

\begin{lemma}
\label{lem:Em-bound}
\[
  |E_m|\leq k|\hat Q_m|\leq ks.
\]
\end{lemma}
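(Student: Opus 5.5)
The plan has two steps: count $E_m$ directly from its definition, then bound $|\hat Q_m|$ using Lemma~\ref{lem:state-bound}.

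\textbf{First inequality.} By definition, $E_m$ is the image of the map $\hat Q_m\times[k]\to[k^h]$, $(\hat q,\hat i)\mapsto \hat i k^m+\hat q$. This map is in fact injective, by uniqueness of base-$k$ digits, since $\hat q<k^m$. For the bound, surjectivity onto $E_m$ already suffices, so $|E_m|\le k|\hat Q_m|$.

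\textbf{Second inequality.} The main point is to show $|\hat Q_m|\le s$. I would first note that states are never removed from $Q_m$. Inspecting the algorithms, Update only adds elements (line~11), and Preprocess only creates a new layer $Q_{m^*}$ as a copy of $Q_{m^*-1}$ while leaving the existing layers untouched. Hence for each fixed $m$, the family $Q_m^{(t)}$ is nondecreasing in $t$ from the moment level $m$ first exists. Its union $\hat Q_m$ therefore equals $Q_m$ at the end of the run, i.e.\ after the final complete Update call (or after Initialize if $|x|=1$).

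Lemma~\ref{lem:state-bound} then applies to that final state and gives $|Q_m|\le s$. The condition $m\le m^*$ holds because we only consider $m<M$ and levels that actually arise during the run. If level $m$ never exists, then $\hat Q_m=\emptyset$ and the bound is trivial.

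The only step needing care is the monotonicity claim. The point to check is that the copy in Preprocess, $Q_{m^*}\leftarrow Q_{m^*-1}$, creates a new layer rather than overwriting an existing one, so no level ever loses states. With that verified, the lemma follows immediately.
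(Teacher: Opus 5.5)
Your proof is correct and matches the paper's argument. The paper also writes each $\hat r\in E_m$ uniquely as $\hat i k^m+\hat q$ with $\hat q\in\hat Q_m$ and $\hat i\in[k]$, and then applies Lemma~\ref{lem:state-bound}, using the remark made where $\hat Q_m$ is defined that states are never removed, so $\hat Q_m$ is the final $Q_m$; your check that Preprocess creates a new layer rather than overwriting one simply spells out that remark.
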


\begin{proof}
  Each $\hat r\in E_m$ is uniquely of the form $\hat r=\hat i k^m+\hat q$ with $\hat q\in\hat Q_m\subseteq[k^m]$ and $\hat i\in[k]$. Hence $|E_m|\leq k|\hat Q_m|$, and $|\hat Q_m|\leq s$ by Lemma~\ref{lem:state-bound}.
\end{proof}

For each $\hat r\in E_m$, write $\hat q\mathrel{:=}\hat r\bmod k^m$ and let $t(\hat r)$ denote the time at which $\hat q$ is first added to $Q_m$ (by either $\mathsf{Initialize}$, a $\mathsf{Preprocess}$ promotion, or line 11 of $\mathsf{Update}$). Nothing is ever removed from $Q_m$, so $\hat q\in Q_m^{(t)}$ for all $t\geq t(\hat r)$.

\textbf{Game dynamics.} Each round in the online sequence prediction setting corresponds to a single activation round followed by any number of introduction rounds of the level-$m$ game.

\emph{Introduction round.} For each $\hat r\in E_m$, the element $\hat r$ is introduced at time $t(\hat r)$ and placed at vertex $v_{\mathrm{intro}}(\hat r)\mathrel{:=}([\hat r]^h_{\ell_0(\hat r)},\ell_0(\hat r))$, where
\begin{align*}
\ell_0(\hat r) \mathrel{:=}
\begin{cases}
  \left\lfloor t(\hat r)/k^h \right\rfloor+1 & \text{if } t(\hat r)\bmod k^h \geq \hat r,\\
  \left\lfloor t(\hat r)/k^h \right\rfloor & \text{otherwise.}
\end{cases}
\end{align*}
The level $\ell_0(\hat r)$ is the depth at which an element with residue $\hat r$ would naturally sit at time $t(\hat r)$ had it been participating in the game from the start. The definition of $\Lambda^h_{\ell_0(\hat r)}$ ensures that this vertex exists. (Multiple introductions may share a single real time $t$; we order them arbitrarily.)

\emph{Activation round.} At every real time $n>0$ with both
\[
  n\bmod k^h\in E_m \quad\text{and}\quad n>t(n\bmod k^h),
\]
an activation round of element $r_n^{(m)}\mathrel{:=}n\bmod k^h$ is played. Writing $\ell_n^{(m)}\mathrel{:=}\lfloor n/k^h\rfloor$, the activation moves $r_n^{(m)}$ from its current vertex $([r_n^{(m)}]^h_{\ell_n^{(m)}},\ell_n^{(m)})$ to the child $([r_n^{(m)}]^h_{\ell_n^{(m)}+1},\ell_n^{(m)}+1)$, which exists because $k^h\ell_n^{(m)}+r_n^{(m)}=n<|x|$.

The following lemma characterizes the tree vertex where each element is located on each round of the game.

\begin{lemma}
\label{lem:depth-formula}
  \emph{(Element vertex.)} Fix $m\in\{0,\dots,M-1\}$, $\hat r\in E_m$, and $n\in[|x|]$ with $t(\hat r)<n$. Just before any round processed at time $n$, the element $\hat r$ is at vertex $([\hat r]^h_{D_n(\hat r)},D_n(\hat r))$, where
\[
  D_n(\hat r) \mathrel{:=} \left|\{j\geq0 : k^h j+\hat r<n\}\right|.
\]
  Equivalently, writing $n=k^h\ell+r$ with $\ell\mathrel{:=}\lfloor n/k^h\rfloor$ and $r\mathrel{:=}n\bmod k^h$,
\begin{align*}
D_n(\hat r)=
\begin{cases}
  \ell+1 & \text{if } \hat r<r,\\
  \ell & \text{if } \hat r\geq r.
\end{cases}
\end{align*}
\end{lemma}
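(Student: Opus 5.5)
The plan is to prove Lemma~\ref{lem:depth-formula} by induction on $n$, tracking the element $\hat r$ from its introduction onward. The key quantity is $D_n(\hat r)$: it is non-decreasing in $n$, and it increases by exactly one between $n$ and $n+1$ precisely when $n\equiv\hat r \pmod{k^h}$, i.e.\ when $n=k^h j+\hat r$ for some $j$. The element moves exactly at those times. Introduction places it at a depth equal to $D$, and each activation increments both the depth and $D$. It then remains to check that the class label at each step is $[\hat r]^h_{D}$.

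\textbf{Step 1 (equivalent form).} I will first verify the two-case formula. Write $n=k^h\ell+r$. The condition $k^h j+\hat r<n$ holds for all $j<\ell$, since $k^h j+\hat r<k^h(j+1)\le k^h\ell\le n$. It fails for all $j>\ell$. At $j=\ell$ it holds iff $\hat r<r$. This gives $D_n(\hat r)=\ell+[\hat r<r]$.

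\textbf{Step 2 (base case).} Let $n_0:=t(\hat r)+1$ be the first time with $t(\hat r)<n$. I will show that the element sits at the claimed vertex just before the rounds processed at time $n_0$. At time $t(\hat r)$ it was introduced at $([\hat r]^h_{\ell_0},\ell_0)$. By Step 1 applied at $n=t(\hat r)$ with non-strict comparison, $\ell_0(\hat r)$ equals $|\{j:k^h j+\hat r\le t(\hat r)\}|=D_{t(\hat r)+1}(\hat r)$, which matches the claim.

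One subtlety must be checked here: no activation of $\hat r$ can occur at time $t(\hat r)$ itself after introduction, because activation requires $n>t(\hat r)$. The ordering convention (activation before introductions within a real time) therefore makes this consistent. A second subtlety is the case $t(\hat r)\bmod k^h=\hat r$. There the real index $t(\hat r)$ lies on $\hat r$'s progression, and $\ell_0$ already counts it via the "$\ge$" branch, so no activation is needed. I expect this off-by-one bookkeeping to be the main obstacle, so I will treat it carefully.

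\textbf{Step 3 (inductive step).} Assume the claim holds just before the rounds at time $n>t(\hat r)$. At time $n$, the only round that can move $\hat r$ is an activation of $\hat r$, and this happens iff $n\bmod k^h=\hat r$ (the condition $n>t(\hat r)$ is given). Introductions of other elements do not move $\hat r$.

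If $n\bmod k^h\ne\hat r$, then $D_{n+1}(\hat r)=D_n(\hat r)$ and nothing changes.

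If $n\bmod k^h=\hat r$, then $D_n(\hat r)=\ell_n^{(m)}$ by Step 1 with $r=\hat r$, so the inductive hypothesis places $\hat r$ exactly at the vertex the activation rule expects. The activation then moves it to $([\hat r]^h_{\ell+1},\ell+1)$, and $D_{n+1}(\hat r)=\ell+1$.

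Existence of each vertex follows since $k^h(D-1)+\hat r<n\le|x|$ places $\hat r$ in $\Lambda^h_{D}$. This completes the induction.
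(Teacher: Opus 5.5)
Your proposal is correct and follows essentially the same route as the paper. Both arguments compute $D_n(\hat r)$ by the case split, take the base case at time $t(\hat r)+1$ by identifying $\ell_0(\hat r)$ with $|\{j : k^h j+\hat r\le t(\hat r)\}|$, and then induct forward using that $D_{n+1}(\hat r)-D_n(\hat r)=1$ exactly when $n\bmod k^h=\hat r$, i.e.\ exactly when $\hat r$ is activated. The only cosmetic difference is that the paper first notes separately that $\hat r$ stays on its canonical path, so only the depth needs tracking, whereas you fold that observation into the inductive step.
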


\begin{proof}
  The case-based form is immediate from the inequality $k^h j+\hat r<n=k^h\ell+r$: if $\hat r<r$, the valid rows are $j=0,\dots,\ell$, while if $\hat r\geq r$, the valid rows are $j=0,\dots,\ell-1$.

  The element $\hat r$ always stays on its canonical path: introduction places it at $([\hat r]^h_{\ell_0(\hat r)},\ell_0(\hat r))$, and every activation of $\hat r$ moves it from $([\hat r]^h_\ell,\ell)$ to $([\hat r]^h_{\ell+1},\ell+1)$. Hence its vertex is determined by its depth, and it remains to check the depth.

  We proceed by forward induction on $n$, for $n\geq t(\hat r)+1$. In the base case, just before time $t(\hat r)+1$, the element has just been introduced and has not yet been activated. Writing $t(\hat r)=k^h\ell_t+r_t$, the same case analysis as above gives
\[
  D_{t(\hat r)+1}(\hat r)
  =
  \left|\{j\geq0:k^h j+\hat r\leq t(\hat r)\}\right|
  =
  \ell_0(\hat r).
\]
  For the inductive step, between time $n$ and time $n+1$ the depth of $\hat r$ changes only if $n\bmod k^h=\hat r$, in which case the activation of $\hat r$ increases it by one. This is exactly the change in $D_n(\hat r)$, since
\[
  D_{n+1}(\hat r)-D_n(\hat r)
  =
  \left|\{j\geq0:k^h j+\hat r=n\}\right|.
\]
\end{proof}

We fix one harmless convention for the level-$m$ game. Whenever the
$\mathsf{PB}$ rule has several children with maximal current population, its
tie-breaking is chosen to agree with the argmax convention used in line 11 of
Algorithm~\ref{alg:tp-predict}, under the child--candidate correspondence
proved in Lemma~\ref{lem:b2-children-id} below. Proposition~\ref{prop:branch-pred}
is independent of this convention; it is used only to identify TP's chosen
candidate with the child predicted by $\mathsf{PB}$. The next lemma supplies the
needed identity.

\textbf{Population identity.} For a vertex $u$ of $T^{(m)}$, write $N_n(u)$ for the number of game elements located at $u$ or any of its descendants just before any round processed at time $n$.

\begin{lemma}
\label{lem:population}
  \emph{(Population identity.)} Fix $n\in R_m$. Let $D$ be the candidate set constructed by $\mathsf{Predict}(n)$. For every $\tilde q\in D$,
\[
  \nu(\tilde q)=N_n(c^{\tilde q}),
\]
  where $c^{\tilde q}\mathrel{:=}([\tilde q]^{m+1}_{\ell_n+1},\ell_n+1)$ and $\nu$ is the count from line 10 of Algorithm~\ref{alg:tp-predict}.
\end{lemma}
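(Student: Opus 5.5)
The plan is to prove the identity by exhibiting a single set of $(m+1)$-residues that is counted by both sides. Put $h\mathrel{:=}m+1$ and write $n=k^h\ell_n+r_n$. Since the map $(\hat q,\hat i)\mapsto\hat r\mathrel{:=}\hat ik^m+\hat q$ is injective on $Q_m\times[k]$, the count $\nu(\tilde q)$ equals the size of
\[
  A\mathrel{:=}\{\hat r\in[k^h] : \hat r\bmod k^m\in Q_m,\ \hat r<r_n,\ \delta_m(\hat r\bmod k^m,\lfloor \hat r/k^m\rfloor)=\tilde q\},
\]
where $Q_m$ and $\delta_m$ are taken at the moment $\mathsf{Predict}(n)$ runs. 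I will show that $N_n(c^{\tilde q})=|B|$ with
\[
  B\mathrel{:=}\{\hat r\in[k^h] : \hat r\bmod k^m\in Q_m,\ \hat r<r_n,\ \hat r\approx^h_{\ell_n+1}\tilde q\},
\]
and then prove $A=B$.

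\textbf{Step 1: $N_n(c^{\tilde q})=|B|$.} The game elements present just before time $n$ are exactly the $\hat r\in E_m$ with $t(\hat r)<n$. This holds precisely when $\hat r\bmod k^m$ was added to $Q_m$ during Initialize or during some Update at time at most $n-1$, i.e. when $\hat r\bmod k^m$ lies in the current $Q_m$; the introductions at time $n-1$ are processed before the rounds of time $n$. For such an element, Lemma~\ref{lem:depth-formula} places it at depth $D_n(\hat r)\in\{\ell_n,\ell_n+1\}$ on its canonical path. Since $c^{\tilde q}$ has depth $\ell_n+1$, the element lies in the subtree of $c^{\tilde q}$ exactly when two conditions hold. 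First, $D_n(\hat r)=\ell_n+1$, which by Lemma~\ref{lem:depth-formula} is equivalent to $\hat r<r_n$. Second, its vertex $([\hat r]^h_{\ell_n+1},\ell_n+1)$ equals $c^{\tilde q}$, which is equivalent to $\hat r\approx^h_{\ell_n+1}\tilde q$. Here we use that $\tilde q\in\Lambda^h_{\ell_n+1}$, which holds because $\tilde q<r_n$ by Lemma~\ref{lem:tp-semantics}(2).

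\textbf{Step 2: $A\subseteq B$.} Take $\hat r\in A$. Since $\tilde q\in D$, Lemma~\ref{lem:tp-semantics}(5) applies directly and gives $\hat r\approx^h_{\ell_n+1}\tilde q$.

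\textbf{Step 3: $B\subseteq A$.} This is the step I expect to be the main obstacle. Take $\hat r\in B$ and set $\tilde q'\mathrel{:=}\delta_m(\hat q,\hat i)$ with $\hat q\mathrel{:=}\hat r\bmod k^m$ and $\hat i\mathrel{:=}\lfloor\hat r/k^m\rfloor$. The crux is to show $\tilde q'\in D$, and I will argue this through the characterization in Lemma~\ref{lem:tp-semantics}(3).

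First, by the representative upper-bound property, $\tilde q'\le\hat r<r_n$, so $\tilde q'\in\Lambda^h_{\ell_n}$. Next, I repeat the computation behind \eqref{eq:sem-hr-tq} for $\tilde q'$. Minimality gives $\rho_h(\hat r)=\tilde q'=\rho_h(\tilde q')$. Hence the two indices $k^hj+\hat r$ and $k^hj+\tilde q'$ evaluate identically for all $j\le\ell_n$. Both indices lie below $n$, so Data Consistency turns this into $\hat r\approx^h_{\ell_n+1}\tilde q'$. In particular $\hat r\approx^h_{\ell_n}\tilde q'$.

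Now chain the equivalences. We have $\hat r\approx^h_{\ell_n}\tilde q$, since the relation at $\ell_n+1$ refines the one at $\ell_n$. We also have $\tilde q\approx^h_{\ell_n}r_n$, since $\tilde q\in D$ and by Lemma~\ref{lem:tp-semantics}(3). Transitivity gives $\tilde q'\approx^h_{\ell_n}r_n$, and so Lemma~\ref{lem:tp-semantics}(3) yields $\tilde q'\in D$.

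Finally, combining $\hat r\approx^h_{\ell_n+1}\tilde q'$ with $\hat r\approx^h_{\ell_n+1}\tilde q$ gives $\tilde q'\approx^h_{\ell_n+1}\tilde q$. Both lie in $D$, so Lemma~\ref{lem:tp-semantics}(4) forces $\tilde q'=\tilde q$. Thus $\hat r\in A$.

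The only delicate points are the time bookkeeping, namely identifying ``introduced before time $n$'' with membership in the current $Q_m$, and checking that every index invoked lies strictly below $n$ so that Data Consistency can be used.
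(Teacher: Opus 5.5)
Your proof is correct and takes essentially the same route as the paper. The paper likewise shows that both $\nu(\tilde q)$ and $N_n(c^{\tilde q})$ equal the size of your set $B$ (its $\mathcal S_{\tilde q}$), and it establishes $\rho_h(\hat r)=\tilde q \Leftrightarrow \hat r\approx^h_{\ell_n+1}\tilde q$ using parts (3), (4) and (5) of Lemma~\ref{lem:tp-semantics} as you do; the only cosmetic difference is that in the converse direction the paper first derives the level-$\ell_n$ equivalence and upgrades it via part (5) once membership in $D$ is known, whereas you compute the level-$(\ell_n+1)$ equivalence directly.
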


\begin{proof}
  Put $h\mathrel{:=}m+1$, $\ell\mathrel{:=}\ell_n$, and
  $r\mathrel{:=}r_n$. All algorithmic quantities in this proof are taken just
  before $\textsf{Predict}(n)$ runs, equivalently after
  $\textsf{Update}(n-1,\cdot)$.

  Fix $\tilde q\in D$. By Lemma~\ref{lem:tp-semantics} part~(2),
  $\tilde q<r$. Hence
\[
  k^h\ell+\tilde q < k^h\ell+r=n<|x|,
\]
  so $\tilde q\in\Lambda^h_{\ell+1}$ and
  $c^{\tilde q}=([\tilde q]^h_{\ell+1},\ell+1)$ is a vertex of $T^{(m)}$.
  Similarly, every residue $\hat r<r$ lies in $\Lambda^h_{\ell+1}$. Thus the
  following set is well-defined:
\[
  \mathcal S_{\tilde q}
  \mathrel{:=}
  \{\hat r\in E_m :
    \hat r\bmod k^m\in Q_m^{(n-1)},\
    \hat r<r,\
    \hat r\approx^h_{\ell+1}\tilde q
  \}.
\]
  We show that both $\nu(\tilde q)$ and $N_n(c^{\tilde q})$ are equal to
  $|\mathcal S_{\tilde q}|$.

  First consider $\nu(\tilde q)$. The map
  $(\hat q,\hat i)\mapsto \hat i k^m+\hat q$ identifies the pairs counted in
  line 10 of Algorithm~\ref{alg:tp-predict} with residues
  $\hat r\in E_m$ satisfying
  $\hat r\bmod k^m\in Q_m^{(n-1)}$. For such a residue, writing
  $\hat q=\hat r\bmod k^m$ and $\hat i=\lfloor \hat r/k^m\rfloor$,
  Minimality gives $\rho_m^{(n-1)}(\hat q)=\hat q$, and therefore
\[
  \delta_m^{(n-1)}(\hat q,\hat i)
  =
  \rho_h^{(n-1)}(\hat r).
\]
  Hence
\[
  \nu(\tilde q)
  =
  \left|
  \{\hat r\in E_m :
    \hat r\bmod k^m\in Q_m^{(n-1)},\
    \hat r<r,\
    \rho_h^{(n-1)}(\hat r)=\tilde q
  \}
  \right|.
\]
  It remains to compare the last condition with
  $\hat r\approx^h_{\ell+1}\tilde q$.

  Let $\hat r\in E_m$ satisfy
  $\hat r\bmod k^m\in Q_m^{(n-1)}$ and $\hat r<r$. If
  $\rho_h^{(n-1)}(\hat r)=\tilde q$, then, with
  $\hat q=\hat r\bmod k^m$ and $\hat i=\lfloor\hat r/k^m\rfloor$, we have
  $\delta_m^{(n-1)}(\hat q,\hat i)=\tilde q$. Since $\tilde q\in D$,
  Lemma~\ref{lem:tp-semantics} part~(5) gives
  $\hat r\approx^h_{\ell+1}\tilde q$.

  Conversely, assume $\hat r\approx^h_{\ell+1}\tilde q$, and set
  $q_0\mathrel{:=}\rho_h^{(n-1)}(\hat r)$. Minimality gives
  $q_0\leq\hat r<r$ and $\rho_h^{(n-1)}(q_0)=q_0$. For every $j<\ell$,
\[
  \textsf{Eval}^{(n-1)}(k^h j+\hat r)
  =
  \operatorname{PEval}^{(n-1)}_h(q_0,j)
  =
  \textsf{Eval}^{(n-1)}(k^h j+q_0).
\]
  Both indices are below $n$, so Data Consistency converts this equality into
  $q_0\approx^h_\ell \hat r$. Since
  $\hat r\approx^h_{\ell+1}\tilde q$, we also have
  $\hat r\approx^h_\ell\tilde q$; and since $\tilde q\in D$,
  Lemma~\ref{lem:tp-semantics} part~(3) gives
  $\tilde q\approx^h_\ell r$. Therefore $q_0\approx^h_\ell r$, so another
  application of part~(3) gives $q_0\in D$.

  Now write again $\hat q=\hat r\bmod k^m$ and
  $\hat i=\lfloor\hat r/k^m\rfloor$. We have
  $\delta_m^{(n-1)}(\hat q,\hat i)=q_0$, with $q_0\in D$ and $\hat r<r$.
  Lemma~\ref{lem:tp-semantics} part~(5) gives
  $\hat r\approx^h_{\ell+1}q_0$. Together with
  $\hat r\approx^h_{\ell+1}\tilde q$, this puts $q_0$ and $\tilde q$ in the
  same $\approx^h_{\ell+1}$-class. By part~(4), distinct elements of $D$ lie
  in distinct such classes, hence $q_0=\tilde q$. This proves
\[
  \rho_h^{(n-1)}(\hat r)=\tilde q
  \quad\Longleftrightarrow\quad
  \hat r\approx^h_{\ell+1}\tilde q
\]
  for all residues under consideration, and therefore
  $\nu(\tilde q)=|\mathcal S_{\tilde q}|$.

  It remains to identify $N_n(c^{\tilde q})$. An element $\hat r\in E_m$ has
  been introduced before time $n$ exactly when
  $\hat r\bmod k^m\in Q_m^{(n-1)}$. For such an element,
  Lemma~\ref{lem:depth-formula} says that just before any round processed at
  time $n$, its current vertex is
  $([\hat r]^h_{D_n(\hat r)},D_n(\hat r))$, where
\[
  D_n(\hat r)=
  \begin{cases}
    \ell+1 & \text{if } \hat r<r,\\
    \ell & \text{if } \hat r\geq r.
  \end{cases}
\]
  Since $c^{\tilde q}$ has depth $\ell+1$, a current vertex of depth $\ell$
  cannot lie in its subtree. If $\hat r<r$, then the current vertex has the
  same depth as $c^{\tilde q}$, so it lies in the subtree of $c^{\tilde q}$
  exactly when it is equal to $c^{\tilde q}$, i.e. exactly when
  $\hat r\approx^h_{\ell+1}\tilde q$. Therefore
\[
  N_n(c^{\tilde q})
  =
  \left|
  \{\hat r\in E_m :
    \hat r\bmod k^m\in Q_m^{(n-1)},\
    \hat r<r,\
    \hat r\approx^h_{\ell+1}\tilde q
  \}
  \right|
  =
  |\mathcal S_{\tilde q}|.
\]
  Combining the two identities gives
  $\nu(\tilde q)=N_n(c^{\tilde q})$.
\end{proof}

\textbf{Mistake accounting at a single level.} We now relate TP's prediction at round $n\in R_m$ to the level-$m$ branch-prediction game and classify the resulting mistakes. Throughout, $q^*$ and $D$ refer to the values computed by $\mathsf{Predict}(n)$, and $u_n\mathrel{:=}([r_n]^{m+1}_{\ell_n},\ell_n)$ denotes the vertex of $T^{(m)}$ at which $r_n^{(m)}$ sits just before the activation at time $n$ (cf. Lemma~\ref{lem:depth-formula}).

\begin{lemma}
\label{lem:tp-mistake-criterion}
  \emph{(TP mistake criterion.)} For every $n\in R_m$, TP's prediction at round $n$ equals $x[k^{m+1}\ell_n+q^*]$. Consequently, TP makes a mistake at round $n$ if and only if $q^*\not\approx^{m+1}_{\ell_n+1}r_n$.
\end{lemma}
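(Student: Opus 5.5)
The proof has two parts: identify TP's output with an observed symbol, then translate agreement of symbols into the class relation.

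\emph{First claim.} Let $h := m+1$ and $\ell := \ell_n$. Algorithm~\ref{alg:tp-predict} returns $\mathsf{Eval}(k^h\ell+q^*)$ with $q^*\in D$. By Lemma~\ref{lem:tp-semantics} part~(2), $q^*<r_n$, so
\[
  k^h\ell+q^* < k^h\ell+r_n = n .
\]
The state used by $\mathsf{Predict}(n)$ is the one after $\mathsf{Update}(n-1,x[n-1])$; for $n=1$ it is the one after $\mathsf{Initialize}$, though $n=1=k^0$ is excluded from $\mathcal R$ anyway. Data Consistency (Lemma~\ref{lem:model-correct}) for that state applies to every index at most $n-1$. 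Hence
\[
  \mathsf{Eval}(k^h\ell+q^*) = x[k^h\ell+q^*],
\]
which is the first claim. One should also note that $D$ is nonempty, because it contains $q'$ ($d_h(q',q')=\infty$), so $q^*$ is well defined.

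\emph{Second claim.} A mistake occurs if and only if $x[k^h\ell+q^*]\neq x[k^h\ell+r_n]=x[n]$. Both $q^*$ and $r_n$ lie in $\Lambda^h_{\ell+1}$, since $k^h\ell+q^*<n<|x|$. By definition of $\approx^h_{\ell+1}$, we have $q^*\approx^h_{\ell+1}r_n$ exactly when $q^*\approx^h_\ell r_n$ holds and additionally $x[k^h\ell+q^*]=x[k^h\ell+r_n]$. The first condition comes from Lemma~\ref{lem:tp-semantics} part~(3), since $q^*\in D$. So the class relation reduces to equality at row $\ell$, which is precisely correctness of the prediction. Both directions of the ``if and only if'' follow immediately.

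The only delicate point is bookkeeping about which algorithmic state is meant. Lemma~\ref{lem:tp-semantics} and Data Consistency must both be applied to the state before $\mathsf{Predict}(n)$, and the observed-index bound must be strict ($<n$), which is exactly what Lemma~\ref{lem:tp-semantics} part~(2) supplies.
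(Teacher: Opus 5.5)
Your proof is correct and follows the paper's argument essentially step for step. Part~(2) of Lemma~\ref{lem:tp-semantics} puts the queried index strictly below $n$, so Data Consistency identifies the prediction with $x[k^{m+1}\ell_n+q^*]$; part~(3) then reduces $q^*\approx^{m+1}_{\ell_n+1}r_n$ to the single row-$\ell_n$ equality. Your extra bookkeeping is also correct and is left implicit in the paper: $D\neq\emptyset$ because $d_{m+1}(q',q')=\infty$, and $n=1=k^0\in\mathcal{B}$.
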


\begin{proof}
  By Lemma~\ref{lem:tp-semantics} part (2), $q^*<r_n$, hence $k^{m+1}\ell_n+q^*<k^{m+1}\ell_n+r_n=n$. Data Consistency (Lemma~\ref{lem:model-correct}), applied at the index $k^{m+1}\ell_n+q^*\leq n-1$ after $\mathsf{Update}(n-1,\cdot)$, gives $\mathsf{Eval}(k^{m+1}\ell_n+q^*)=x[k^{m+1}\ell_n+q^*]$, which is TP's prediction by the last line of Algorithm~\ref{alg:tp-predict}.

  Thus TP makes a mistake iff $x[k^{m+1}\ell_n+q^*]\neq x[n]=x[k^{m+1}\ell_n+r_n]$. By Lemma~\ref{lem:tp-semantics} part (3), $q^*\approx^{m+1}_{\ell_n}r_n$. Since both $q^*$ and $r_n$ are live at level $\ell_n+1$, the condition $q^*\approx^{m+1}_{\ell_n+1}r_n$ is equivalent to the preceding level-$\ell_n$ equivalence plus the additional equality $x[k^{m+1}\ell_n+q^*]=x[k^{m+1}\ell_n+r_n]$. This additional equality is precisely the negation of the mistake criterion.
\end{proof}

We now partition the mistakes at level $m$ into two classes, according to whether $u_n$ is internal or a leaf in the game's tree just before the time-$n$ activation.

\emph{Internal-vertex mistakes at level $m$.} If $u_n$ is a game-internal vertex at the time of the time-$n$ activation, the activation is a round of type (b.2) in Definition~\ref{def:bp-game}: the learner issues a prediction.

\begin{lemma}
\label{lem:b2-children-id}
  \emph{(b.2 children identification.)} Fix $n\in R_m$. The map $\tilde q\mapsto c^{\tilde q}$ from Lemma~\ref{lem:population} restricts to a bijection between
\[
  \{\tilde q\in D : \nu(\tilde q)\geq1\}
\]
  and the set of children of $u_n$ in the game tree just before the time-$n$ activation.
\end{lemma}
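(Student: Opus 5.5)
The plan is to characterise the children of $u_n$ in the \emph{game} tree by their populations. Then both directions of the bijection reduce to Lemma~\ref{lem:population} and Lemma~\ref{lem:tp-semantics}. Put $h\mathrel{:=}m+1$, $\ell\mathrel{:=}\ell_n$, $r\mathrel{:=}r_n$.

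\textbf{Structural claim.} First I will record a structural fact about the game dynamics. Every vertex of the game tree is a vertex of $T^{(m)}$ lying on the canonical path of some element. Moreover, a vertex $c$ of $T^{(m)}$ is present in the game tree just before time $n$ if and only if $N_n(c)\geq1$. For the forward direction, a vertex is created only in two ways. In an introduction, the attached chain ends at the vertex where the new element is placed. In an activation, the new child is exactly the vertex the activated element moves to. In both cases the created vertex has an element in its subtree. Elements never leave a subtree, so $N(c)\geq1$ persists afterwards. The converse holds because an element located in the subtree of $c$ lies on a path of existing vertices through $c$. Consequently, the children of $u_n$ in the game tree are exactly the $T^{(m)}$-children $(C',\ell+1)$ of $u_n$ with $N_n((C',\ell+1))\geq1$. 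Here $C'\subseteq[r]^h_\ell$ ranges over $\approx^h_{\ell+1}$-classes.

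\textbf{Well-definedness and injectivity.} Let $\tilde q\in D$ with $\nu(\tilde q)\geq1$.
\begin{itemize}
  \item By Lemma~\ref{lem:tp-semantics}(2), $\tilde q<r$, so $\tilde q\in\Lambda^h_{\ell+1}$.
  \item By part~(3), $\tilde q\approx^h_\ell r$. Hence $[\tilde q]^h_{\ell+1}\subseteq[r]^h_\ell$, and $c^{\tilde q}$ is a $T^{(m)}$-child of $u_n$.
  \item By Lemma~\ref{lem:population}, $N_n(c^{\tilde q})=\nu(\tilde q)\geq1$, so by the structural claim $c^{\tilde q}$ is a child of $u_n$ in the game tree.
  \item Injectivity is Lemma~\ref{lem:tp-semantics}(4): distinct elements of $D$ lie in distinct $\approx^h_{\ell+1}$-classes, so they give distinct children.
\end{itemize}

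\textbf{Surjectivity.} Let $c=(C',\ell+1)$ be a game child of $u_n$. By the structural claim, some introduced element $\hat r$ lies in the subtree of $c$ just before time $n$. Introduced means $\hat r\bmod k^m\in Q_m^{(n-1)}$. By Lemma~\ref{lem:depth-formula}, the depth of $\hat r$ is $\ell$ or $\ell+1$. It must be $\ell+1$, so $\hat r<r$, and $\hat r$ sits exactly at $c$; thus $C'=[\hat r]^h_{\ell+1}$. Since $c$ is a child of $u_n$, we have $\hat r\approx^h_\ell r$.

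Next set $q_0\mathrel{:=}\rho^{(n-1)}_h(\hat r)$. As in the converse half of the proof of Lemma~\ref{lem:population}, Minimality gives $q_0\leq\hat r<r$, and Data Consistency gives $q_0\approx^h_\ell\hat r$. Hence $q_0\approx^h_\ell r$, and by Lemma~\ref{lem:tp-semantics}(3), $q_0\in D$.

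Finally, $\delta_m(\hat r\bmod k^m,\lfloor\hat r/k^m\rfloor)=q_0$ with $\hat r<r$. So part~(5) gives $\hat r\approx^h_{\ell+1}q_0$, which means $c=c^{q_0}$. Lemma~\ref{lem:population} then gives $\nu(q_0)=N_n(c)\geq1$, so $q_0$ lies in the domain and maps to $c$.

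\textbf{Main obstacle.} I expect the structural claim to be the main point needing care. One must check against Definition~\ref{def:bp-game} that game vertices are never created empty: introduction chains must terminate at the introduced element, and activation children must receive the activated element. One must also check that game vertices are identified with $T^{(m)}$ vertices consistently, i.e. that introductions attach along the element's canonical path. Everything else follows from the cited lemmas.
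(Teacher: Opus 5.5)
Your proposal is correct and follows essentially the same route as the paper. Well-definedness comes from Lemma~\ref{lem:population} together with the fact that nonempty subtrees correspond to attached game vertices, injectivity comes from part~(4), and surjectivity comes from picking an element $\hat r$ located at $c$, routing it to $\rho_{m+1}(\hat r)$, and applying parts~(3) and~(5). Your standalone structural claim, and your explicit check via parts~(2) and~(3) that $c^{\tilde q}$ is a $T^{(m)}$-child of $u_n$, only make explicit what the paper's proof uses inline.
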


\begin{proof}
  \emph{Map is well-defined.} If $\tilde q\in D$ with $\nu(\tilde q)\geq1$, then $N_n(c^{\tilde q})\geq1$ by Lemma~\ref{lem:population}, so the subtree rooted at $c^{\tilde q}$ contains at least one game element just before any round at time $n$. By the dynamics of Definition~\ref{def:bp-game}, an element only enters a vertex (or its subtree) once that vertex has been attached to the game tree, and game-tree edges are never removed. Hence $c^{\tilde q}$ is a child of $u_n$ in the game tree.

  \emph{Injectivity.} Immediate from part (4) of Lemma~\ref{lem:tp-semantics}.

  \emph{Surjectivity.} Let $c$ be a child of $u_n$ in the game tree. Since the game tree only grows, $c$ must have been attached at some time $t_0<n$. By the dynamics of Definition~\ref{def:bp-game}, this attachment occurred during either (i) the introduction of some element $\hat r'\in E_m$ along a chain containing $u_n$ and $c$ (so $\hat r'$ was placed at $v_{\mathrm{intro}}(\hat r')$, which is $c$ or a descendant of $c$), or (ii) the activation of some element $\hat r'\in E_m$ then located at $u_n$, in which $c$ was created as a new child and $\hat r'$ moved to $c$. In either case, $\hat r'$ entered the subtree of $c$ at time $t_0$. Since elements only descend, the subtree of $c$ retains that element at all subsequent times, hence $N_n(c)\geq1$. Pick any $\hat r\in E_m$ in the subtree of $c$ at time $n$. By Lemma~\ref{lem:depth-formula}, $D_n(\hat r)\in\{\ell_n,\ell_n+1\}$; since $\hat r$ lies in the subtree of $c$ (which sits at depth $\ell_n+1$), we must have $D_n(\hat r)=\ell_n+1$, equivalently $\hat r<r_n$, and moreover $\hat r$ is located exactly at $c$, so $c=([\hat r]^{m+1}_{\ell_n+1},\ell_n+1)$.

  Set $\hat q\mathrel{:=}\hat r\bmod k^m$ and $\hat i\mathrel{:=}\lfloor\hat r/k^m\rfloor$, so that $\hat r=\hat i k^m+\hat q$. From $\hat r\in E_m$ together with $t(\hat r)<n$ we obtain $\hat q\in Q_m^{(n-1)}$. Set $\tilde q\mathrel{:=}\rho_{m+1}^{(n-1)}(\hat r)\in Q_{m+1}^{(n-1)}$; unfolding \eqref{eq:route-def} with the Minimality consequence $\rho_m^{(n-1)}(\hat q)=\hat q$ yields
\[
  \delta_m^{(n-1)}(\hat q,\hat i)=\rho_{m+1}^{(n-1)}(\hat r)=\tilde q.
\]

  We first verify $\tilde q\in D$. Minimality applied to $\tilde q\in Q_{m+1}^{(n-1)}$ gives $\rho_{m+1}^{(n-1)}(\tilde q)=\tilde q$; combined with $\rho_{m+1}^{(n-1)}(\hat r)=\tilde q$, we get
\[
  \mathsf{Eval}^{(n-1)}(k^{m+1}j+\tilde q)
  =
  \operatorname{PEval}_{m+1}^{(n-1)}(\tilde q,j)
  =
  \mathsf{Eval}^{(n-1)}(k^{m+1}j+\hat r)
\]
  for every $j<\ell_n$. For these $j$, both indices lie strictly below $k^{m+1}\ell_n\leq n$, hence both are at most $n-1$. Data Consistency converts this $\mathsf{Eval}$-equality into $x[k^{m+1}j+\tilde q]=x[k^{m+1}j+\hat r]$, whence $\tilde q\approx^{m+1}_{\ell_n}\hat r$. Since $\hat r$ lies in the subtree of $u_n$, $\hat r\approx^{m+1}_{\ell_n}r_n$; transitivity gives $\tilde q\approx^{m+1}_{\ell_n}r_n$, i.e. $\tilde q\in D$ by part (3) of Lemma~\ref{lem:tp-semantics}.

  Now apply part (5) of Lemma~\ref{lem:tp-semantics} at $(\hat q,\hat i)$: the hypotheses $\hat q\in Q_m^{(n-1)}$, $\hat r<r_n$, and $\delta_m^{(n-1)}(\hat q,\hat i)=\tilde q\in D$ yield $\hat r\approx^{m+1}_{\ell_n+1}\tilde q$, hence $c^{\tilde q}=c$. Combined with $\nu(\tilde q)=N_n(c^{\tilde q})=N_n(c)\geq1$, this exhibits a preimage of $c$.
\end{proof}

Since $u_n$ is game-internal by hypothesis, it has at least one child in the game tree, so $\{\tilde q\in D:\nu(\tilde q)\geq1\}$ is non-empty by Lemma~\ref{lem:b2-children-id}. Line 11 of Algorithm~\ref{alg:tp-predict} selects $q^*\mathrel{:=}\operatorname*{arg\,max}_{\tilde q\in D}\nu(\tilde q)$, satisfying $\nu(q^*)\geq1$, and another application of Lemma~\ref{lem:b2-children-id} makes $c^{q^*}$ a child of $u_n$ in the game tree. By Lemma~\ref{lem:population}, $N_n(c^{q^*})=\nu(q^*)=\max_{\tilde q\in D}\nu(\tilde q)$, which by Lemma~\ref{lem:b2-children-id} equals $\max_c N_n(c)$ over game children $c$ of $u_n$. Hence, with ties broken in the manner agreed in this subsection (cf. the remark following Lemma~\ref{lem:depth-formula}), the PB strategy at $u_n$ predicts $c^{q^*}$.

When the time-$n$ activation is processed, nature moves $r_n^{(m)}$ from $u_n$ to its level-$(\ell_n+1)$ subclass child, namely $c^{r_n}=([r_n]^{m+1}_{\ell_n+1},\ell_n+1)$. The PB strategy commits a mistake at this round iff $c^{q^*}\neq c^{r_n}$, equivalently $q^*\not\approx^{m+1}_{\ell_n+1}r_n$. By Lemma~\ref{lem:tp-mistake-criterion} this is precisely the criterion for a TP mistake at round $n$.

Therefore, every TP mistake at a round $n\in R_m$ with $u_n$ game-internal is simultaneously a mistake of the PB strategy in the level-$m$ branch-prediction game.

\emph{Leaf-vertex mistakes at level $m$.} If $u_n$ is a game-leaf just before the time-$n$ activation, the activation is of type (b.1) and produces no mistake in the game. However, TP still issues a prediction, which we must bound separately.

\begin{lemma}
\label{lem:b1-branching}
  \emph{(b.1 mistakes require a branching vertex.)} If TP makes a mistake at a round $n\in R_m$ with $u_n$ a game-leaf, then $u_n$ has at least two children in $T^{(m)}$.
\end{lemma}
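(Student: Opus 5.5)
We will argue that a TP mistake at $n$ produces two distinct $\approx^{m+1}_{\ell_n+1}$-classes inside $u_n$, namely the class of $q^*$ and the class of $r_n$. By Definition~\ref{def:level-m-tree}, both then give children of $u_n$ in $T^{(m)}$.

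Throughout, put $h\mathrel{:=}m+1$.

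First, both residues are live at depth $\ell_n+1$. By Lemma~\ref{lem:tp-semantics} part~(2), $q^*<r_n$, so $k^h\ell_n+q^*<n<|x|$. Also $k^h\ell_n+r_n=n<|x|$. Hence $q^*,r_n\in\Lambda^h_{\ell_n+1}$.

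Second, both lie in the class $C$ of $u_n$. By part~(3) of Lemma~\ref{lem:tp-semantics}, $q^*\in D$ gives $q^*\approx^h_{\ell_n}r_n$. So $q^*$ and $r_n$ belong to $C=[r_n]^h_{\ell_n}$, and therefore to $C\cap\Lambda^h_{\ell_n+1}$.

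Third, since TP errs at $n$, Lemma~\ref{lem:tp-mistake-criterion} gives $q^*\not\approx^h_{\ell_n+1}r_n$. Thus $[q^*]^h_{\ell_n+1}$ and $[r_n]^h_{\ell_n+1}$ are distinct $\approx^h_{\ell_n+1}$-classes contained in $C$. By the edge rule of Definition~\ref{def:level-m-tree}, each of them is a child of $u_n$ in $T^{(m)}$, so $u_n$ has at least two children.

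One technical point needs checking: that $D$ is non-empty, so that $q^*$ is defined. It is, because $D$ contains the old target $q'$ (as noted in the description of Phase~2 of Algorithm~\ref{alg:tp-predict}). The argument never uses the hypothesis that $u_n$ is a game-leaf, and that is harmless. The hypothesis only matters later, when these branching vertices are charged in the aggregation; for this lemma a mistake alone forces branching in the full tree $T^{(m)}$. The main thing to be careful about is distinguishing the full tree $T^{(m)}$ from the partially revealed game tree, since the children here exist in $T^{(m)}$ but need not yet be attached in the game.
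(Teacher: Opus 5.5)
Your proposal is correct and follows the paper's proof: Lemma~\ref{lem:tp-mistake-criterion} separates $q^*$ and $r_n$ at depth $\ell_n+1$, part~(3) of Lemma~\ref{lem:tp-semantics} places both in the class $C_n$ of $u_n$ at depth $\ell_n$, and since both residues are live at depth $\ell_n+1$ their two classes are distinct children of $u_n$ in $T^{(m)}$. Your liveness check via part~(2) spells out what the paper states in one phrase. You are also right that the game-leaf hypothesis plays no role here; the paper's proof does not use it either.
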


\begin{proof}
  By Lemma~\ref{lem:tp-mistake-criterion}, a TP mistake at round $n$ gives $q^*\not\approx^{m+1}_{\ell_n+1}r_n$. By Lemma~\ref{lem:tp-semantics} part (3), $q^*\approx^{m+1}_{\ell_n}r_n$, so $q^*$ and $r_n$ lie in the same level-$\ell_n$ class $C_n$ with $u_n=(C_n,\ell_n)$, but in distinct $\approx^{m+1}_{\ell_n+1}$-classes. Both residues are live at level $\ell_n+1$, and therefore these two distinct classes are two distinct children of $u_n$ in $T^{(m)}$.
\end{proof}

A vertex $v$ of $T^{(m)}$ with at least two children is said to be \emph{branching}. A key consequence of Lemma~\ref{lem:psi-refines} is the following bound.

\begin{lemma}
\label{lem:branching-count}
  \emph{(Branching bound.)} The number of branching vertices of $T^{(m)}$ is at most $s-1$.
\end{lemma}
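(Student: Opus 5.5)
The plan is a counting argument on the number of classes per depth, using Lemma~\ref{lem:psi-refines} as the only input. For each depth $\ell$ of $T^{(m)}$, let $V_\ell$ be the set of vertices at depth $\ell$, so $|V_\ell|$ is the number of $\approx^{h}_\ell$-classes in $\Lambda^h_\ell$ (with $h=m+1$). Lemma~\ref{lem:psi-refines} gives $|V_\ell|\leq s$ for every $\ell$. The root is the unique vertex at depth $0$, so $|V_0|=1$ (assuming $|x|\geq1$).

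The key step is to relate $|V_{\ell+1}|$ to $|V_\ell|$ by counting children. Each vertex at depth $\ell+1$ has exactly one parent, so $|V_{\ell+1}|=\sum_{v\in V_\ell}c(v)$, where $c(v)$ is the number of children of $v$. Write $b_\ell$ for the number of branching vertices at depth $\ell$, and $z_\ell$ for the number of childless vertices there. Every branching vertex contributes at least $2$, every childless vertex contributes $0$, and every other vertex contributes exactly $1$. This gives
\[
  |V_{\ell+1}| \geq |V_\ell| + b_\ell - z_\ell .
\]
Dropping the $z_\ell$ term would be illegitimate, so the argument must handle dying classes, whose residues leave $\Lambda^h_{\ell+1}$. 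This is the main obstacle.

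To resolve it, I will use the structure of the live sets. By definition, $\Lambda^h_{\ell+1}=\{r: k^h\ell+r<|x|\}$ is an initial segment of $[k^h]$, and it equals all of $[k^h]$ until $\ell$ reaches $\lfloor |x|/k^h\rfloor$. So $z_\ell=0$ for all $\ell<L:=\lfloor|x|/k^h\rfloor$. Telescoping over these depths gives
\[
  \sum_{\ell<L} b_\ell \leq |V_L|-|V_0| \leq s-1 .
\]

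The remaining depths are $\ell\geq L$. At depth $L$ the live set is $\Lambda^h_{L}$. The next depth, $\Lambda^h_{L+1}=\{r<|x|-k^hL\}$, is a proper initial segment, and $\Lambda^h_{L+2}$ is empty. So only $\ell=L$ can add branching vertices beyond this range, and those branchings must be charged carefully. I would refine the count via potentials: set $\Phi_\ell:=|V_\ell|+(\text{number of classes at depth }\ell\text{ that eventually die})$. Alternatively, and more simply, I would count surviving classes. Let $W_\ell$ be the set of vertices at depth $\ell$ that have a descendant at depth $L+1$, or, if $L+1$ is not a valid depth, at the last nonempty depth. Every branching vertex at a depth $\ell\leq L$ either has two children each with live descendants, giving $|W_{\ell+1}|\geq|W_\ell|+1$, or else it lost a class. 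In the latter case, each class that stops at depth $L$ can be matched with a branching created only once. I would formalize this via the final-depth bound $|V_{L+1}|\leq s$ applied to the subtree restricted to residues $r<|x|-k^hL$; those residues are live at all depths, so restricting $T^{(m)}$ to them gives a tree with no childless vertices before its last level. Branchings of the full tree occurring along residues that are not in this set happen only at depths $\leq L$ where all residues are live, and hence are already counted by the telescoping. Combining the two pieces then yields at most $s-1$ branching vertices overall, with the careful bookkeeping at depth $L$ being the step I expect to require the most care.
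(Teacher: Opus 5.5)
\textbf{There is a genuine gap at depth $L$.} Your telescoping over depths $\ell<L$ is correct: every vertex there has a child, so $\sum_{\ell<L} b_\ell\le |V_L|-1$.

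The problem is the last step. Let $z_L$ be the number of childless vertices at depth $L$. The same child count gives $|V_{L+1}|\ge |V_L|+b_L-z_L$, so the total number of branching vertices is at most $|V_{L+1}|+z_L-1$. You therefore need $|V_{L+1}|+z_L\le s$. Your restricted-tree argument does not supply this. The restricted tree yields a separate bound $|V_{L+1}|-1$. The full-tree branchings below depth $L$ that are invisible in the restricted tree are controlled only by $|V_L|-1$. ``Combining the two pieces'' gives nothing better than their sum.

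In fact, no argument using only the per-depth bound $|V_\ell|\le s$ can succeed. Take $s=2$ and the tree in which:
\begin{itemize}
\item the root has two children $a,b$ at depth $1=L$;
\item $a$ is childless;
\item $b$ has two children at depth $2$.
\end{itemize}
Every depth has at most $2$ vertices, there are no childless vertices before depth $L$, yet there are $2>s-1$ branching vertices. Your potential $\Phi_\ell$ and the matching of dying classes with branchings are left unspecified, and as the example shows they cannot be made to work from counts alone.

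\textbf{The missing ingredient} is the first half of Lemma~\ref{lem:psi-refines}, used structurally rather than as a cardinality bound. For a vertex $v=(C,\ell)$, put $\Psi(v):=\{\psi^*_{m+1}(r):r\in C\}$. Two residues with equal $\psi^*_{m+1}$-value are $\approx^{m+1}_{\ell+1}$-equivalent. Hence the children of $v$ have pairwise disjoint, nonempty $\Psi$-sets contained in $\Psi(v)$.

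This closes your gap directly. The depth-$(L+1)$ classes and the dead depth-$L$ classes have pairwise disjoint $\Psi$-sets. Each depth-$(L+1)$ class lies inside a depth-$L$ class different from every dead one, and distinct depth-$L$ classes have disjoint $\Psi$-sets. This gives $|V_{L+1}|+z_L\le s$.

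More cleanly, the same observation gives the paper's proof: by induction from the leaves, the subtree rooted at $v$ contains at most $|\Psi(v)|-1$ branching vertices. This handles dying classes automatically and makes the depth-by-depth bookkeeping unnecessary.
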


\begin{proof}
  For a vertex $v=(C,\ell)$ of $T^{(m)}$, define
\[
  \Psi(v)\mathrel{:=}\{\psi^*_{m+1}(r):r\in C\}\subseteq Q^*.
\]
  This set is non-empty. If $u_1,\dots,u_d$ are the children of $v$, then the sets $\Psi(u_1),\dots,\Psi(u_d)$ are pairwise disjoint non-empty subsets of $\Psi(v)$: disjointness follows because two residues with the same $\psi^*_{m+1}$-value are $\approx^{m+1}_{\ell+1}$-equivalent by Lemma~\ref{lem:psi-refines}, and hence cannot lie in distinct children.

  We prove, by induction from the leaves upward, that the number of branching vertices in the subtree rooted at any vertex $v$ is at most $|\Psi(v)|-1$. If $v$ has no children, this is immediate. If $v$ has exactly one child $u$, the induction hypothesis gives at most $|\Psi(u)|-1\leq |\Psi(v)|-1$ branching vertices below $v$, and $v$ itself is not branching. If $v$ has $d\geq2$ children $u_1,\dots,u_d$, then
\[
  \#\{\text{branching vertices below }v\}
  \leq
  1+\sum_{i=1}^d (|\Psi(u_i)|-1)
  \leq
  1+|\Psi(v)|-d
  \leq
  |\Psi(v)|-1.
\]
  Applying this to the root, for which $|\Psi(\Lambda^{m+1}_0,0)|\leq |Q^*|=s$, gives the claim.
\end{proof}

\textbf{Mistake bound.} We are now ready to assemble the bound \eqref{eq:rtl-mistakes}. For each $m\in\{0,\dots,M-1\}$, let $A_m$ denote the number of rounds $n\in R_m$ at which TP makes a mistake and $u_n$ is game-internal, and let $B_m$ denote the corresponding count for $u_n$ a game-leaf. Since the sets $\{R_m\}_{m<M}$ partition $\mathcal{R}$ and each boundary round contributes at most one mistake,
\[
  M_{\mathrm{TP}}(x)\leq |\mathcal{B}|+\sum_{m=0}^{M-1}(A_m+B_m).
\]

\emph{Internal-vertex contribution.} The argument preceding Lemma~\ref{lem:b1-branching} established that every round counted by $A_m$ is simultaneously a mistake of the $\mathsf{PB}$ strategy in the level-$m$ branch-prediction game. Applying Proposition~\ref{prop:branch-pred} to that game and bounding the element set via Lemma~\ref{lem:Em-bound},
\[
  A_m\leq 3|E_m|\left\lfloor\log_2|E_m|\right\rfloor \leq 3ks\log_2(ks).
\]

\emph{Leaf-vertex contribution.} In Definition~\ref{def:bp-game}, an activation at a game-leaf $v$ is a b.1-round and attaches a new child to $v$; thereafter $v$ has at least one child in the game tree and game-tree edges are never removed, so $v$ remains game-internal for the rest of the play. Consequently, at most one b.1-round can occur at any single vertex over the course of the level-$m$ game. By Lemma~\ref{lem:b1-branching}, every round counted by $B_m$ takes place at a vertex of $T^{(m)}$ that is \emph{branching} --- has at least two children in $T^{(m)}$. Hence the map $n\mapsto u_n$ injects the rounds counted by $B_m$ into the set of branching vertices of $T^{(m)}$, and Lemma~\ref{lem:branching-count} yields
\[
  B_m\leq \left|\{\text{branching vertices of }T^{(m)}\}\right|\leq s-1.
\]

\emph{Aggregation.} Summing the two contributions over levels,
\[
  M_{\mathrm{TP}}(x)
  \leq
  |\mathcal{B}|+\sum_{m=0}^{M-1}(3ks\log_2(ks)+(s-1))
  \leq
  |\mathcal{B}|+M\cdot(3ks\log_2(ks)+s).
\]
Recall that $M=\left\lceil\log_k |x|\right\rceil$ and $|\mathcal{B}|=O(\log_k |x|)$. Therefore
\[
  M_{\mathrm{TP}}(x)=O(ks\log(ks)\log |x|),
\]
which is precisely \eqref{eq:rtl-mistakes} upon writing $m\mathrel{:=}\mathrm{AC}^{\mathrm{R}}_k(x)=s$ as in Theorem~\ref{thm:rtl-bounds}.

\subsubsection{Compression Bound}
\label{sec:tp-compression}

We now establish the compression bound \eqref{eq:rtl-compression} of Theorem~\ref{thm:rtl-bounds}. We exhibit a binary encoding of the predictor's persistent state and bound its length component-by-component, using Lemma~\ref{lem:state-bound} to control $\left|Q_m\right|$.

Since $\mathsf{Update}$ and $\mathsf{Predict}$ depend explicitly on the time index $t$ (which appears there as the ``$n$" argument), we need to include it explicitly in the state in order to match Definition~$\ref{def:predictor}$. The state is hence the tuple
\[
  s_t \mathrel{:=} \bigl(t,\; m^*,\; \{Q_m\}_{m \leq m^*},\; \{\delta_m\}_{m < m^*},\; \tau\bigr).
\]

\textbf{A bound on the stored residues.} After the complete execution of $\mathsf{Update}(t, x[t])$ (or after $\mathsf{Initialize}$, where $t = 0$), every $q \in Q_m$ at every level $m \leq m^*$ satisfies
\begin{equation}
  \label{eq:residue-leq-n}
  q \leq n.
\end{equation}
Indeed, $\mathsf{Initialize}$ adds only the value $0$, and a $\mathsf{Preprocess}$ promotion merely copies existing labels; the only mechanism that introduces a fresh value is line~11 of Algorithm~\ref{alg:tp-update} during some $\mathsf{Update}(t, \cdot)$ with $t \leq n$, which adds $q_{\mathrm{new}} = t \bmod k^{m+1} \leq t \leq n$. Since nothing is ever removed from a $Q_m$, \eqref{eq:residue-leq-n} holds. Consequently each residue fits in
\[
  W_t \mathrel{:=} \left\lceil \log_2 \max(t + 1, 2) \right\rceil = O(\log n)
\]
bits.

\textbf{Encoding scheme.} We concatenate five self-delimiting blocks.

\begin{enumerate}
  \item \emph{Index.} The integer $t$ in Elias gamma code, occupying $O(\log(n + 1))$ bits.
  \item \emph{Depth.} The integer $m^*$ in Elias gamma code, occupying $O(\log(m^* + 1))$ bits.
  \item \emph{State sets.} For each $m = 0, 1, \dots, m^*$ in turn: the size $\left|Q_m\right|$ in Elias gamma code, followed by the elements of $Q_m$ listed in increasing order, each as a fixed-width $W_t$-bit binary representation of its value.
  \item \emph{Transitions.} For each $m = 0, 1, \dots, m^* - 1$ in turn: the table $\delta_m$ as a $\left|Q_m\right| \times k$ matrix, where the $(q, i)$ entry stores the rank of $\delta_m (q, i)$ within $Q_{m+1}$ (with elements numbered $0, \dots, \left|Q_{m+1}\right| - 1$ in increasing order), as a fixed-width $\left\lceil \log_2 \max(\left|Q_{m+1}\right|, 2) \right\rceil$-bit field.
  \item \emph{Output function.} The values $\tau(q)$ for $q \in Q_{m^*}$ taken in increasing order, each as a fixed-width $\left\lceil \log_2 \max(\sigma, 2) \right\rceil$-bit symbol.
\end{enumerate}

\textbf{Bounding the length.} By Lemma~\ref{lem:state-bound}, $\left|Q_m\right| \leq s$ for every $m \leq m^*$, where $s \mathrel{:=} \mathrm{AC}^{\mathrm{R}}_k (x)$. The depth satisfies $m^* \leq \left\lceil \log_k n \right\rceil = O(\log n)$ throughout the processing of $x$.
We bound each block in turn.

\emph{Block 1.} Contributes $O(\log(n + 1)) = O(\log n)$ bits.

\emph{Block 2.} Contributes $O(\log(m^* + 1)) = O(\log \log n)$ bits.

\emph{Block 3.} Level $m$ contributes $O(\log(\left|Q_m\right| + 1)) + \left|Q_m\right| \cdot W_t = O(\log(s + 1) + s \log n) = O(s \log n)$ bits. Summing over the $m^* + 1$ levels,
\[
  \sum_{m = 0}^{m^*} O(s \log n) = O\bigl((m^* + 1) \, s \log n\bigr) = O\bigl(s (\log n)^2\bigr).
\]

\emph{Block 4.} Level $m$ contributes $\left|Q_m\right| \cdot k \cdot \left\lceil \log_2 \max(\left|Q_{m+1}\right|, 2) \right\rceil = O(s k \log s)$ bits. Summing over the $m^*$ levels,
\[
  \sum_{m = 0}^{m^* - 1} O(s k \log s) = O(s k \log s \cdot m^*) = O(s k \log s \log n).
\]

\emph{Block 5.} Contributes $\left|Q_{m^*}\right| \left\lceil \log_2 \max(\sigma, 2) \right\rceil = O(s \log \sigma)$ bits.

\textbf{Aggregation.} Combining the five block bounds, and absorbing $\log n$ and $\log \log n$ into the $s (\log n)^2$ term, the total state length satisfies
\[
  S_{\mathrm{TP}} (x) = O\bigl(s (\log n)^2 + s k \log s \log n + s \log \sigma\bigr) = O\bigl(s(k \log s \log n + (\log n)^2 + \log \sigma)\bigr),
\]
which is precisely \eqref{eq:rtl-compression} upon writing $m \mathrel{:=} \mathrm{AC}^{\mathrm{R}}_k (x) = s$ as in Theorem~\ref{thm:rtl-bounds}.

\section{Anytime Multivalued Online Learning}
\label{sec:apx-muwu}

In this appendix we develop the techniques underlying our predictor for ARC. We will work in a setting borrowed from \cite{Joshi2026}, which in their terminology is ``online mistake-unaware learning for contextual bandits", but we prefer to call ``multivalued online learning".

\subsection{Learning Protocol}

Consider the following online learning protocol. Fix an instance space $X$, a finite label space $Y$, and a hypothesis class $\mathcal{H} \subseteq \{X \to 2^Y\}$. Each hypothesis $h \in \mathcal{H}$ maps an instance $x \in X$ to a subset $h(x) \subseteq Y$ of ``admissible'' labels. A hypothesis $h$ with $|h(x)| = 1$ for all $x$ corresponds to a standard (single-valued) hypothesis; a hypothesis with $|h(x)|\leq1$ corresponds to a ``partial concept" in the sense of \cite{Alon2021}; the additional generality of allowing $|h(x)| > 1$ is essential for our applications to sequence prediction.

The learning protocol proceeds in rounds $t = 0,1,2,\dots$ There is an unknown \emph{target} hypothesis $h^* \in \mathcal{H}$.

\begin{enumerate}
  \item The learner receives an instance $x_t \in X$.
  \item The learner outputs a prediction $\hat{y}_t \in Y$.
  \item The true label $y_t \in Y$ is revealed, where $y_t \in h^*(x_t)$.
\end{enumerate}

A round $t$ is a \emph{mistake} if $\hat{y}_t \notin h^*(x_t)$. Note that this is strictly more permissive than requiring $\hat{y}_t = y_t$: the prediction $\hat{y}_t$ is considered correct whenever $\hat{y}_t \in h^*(x_t)$, even if $\hat{y}_t \neq y_t$. Crucially, however, the learner's only feedback is the observed label $y_t$---it does not see $h^*(x_t)$ and therefore does not know whether a given prediction was correct (except when $\hat{y}_t = y_t$, which is guaranteed correct since $y_t \in h^*(x_t)$).

\cite{Joshi2026} propose the Mistake Unaware Weight Update (MUWU) algorithm: see Algorithm~\ref{alg:muwu}. It guarantees the following bound (Theorem 4 there):

\begin{theorem}[Joshi et al.]
\label{thm:online}
Let $\mathcal{H}$ be a finite non-empty hypothesis class of multivalued hypotheses. MUWU guarantees that for any target $h^* \in \mathcal{H}$ and any sequence of instances and labels consistent with $h^*$, the total number of mistakes satisfies
\begin{equation}
  \left|\left\{t : \hat{y}_t \notin h^*(x_t)\right\}\right| \leq \log_2 |\mathcal{H}|.
  \label{eq:muwu-mistakes}
\end{equation}
Moreover, if membership $y \in h(x)$ can be decided in time $T$ for any $h \in \mathcal{H}$, $x \in X$, and $y \in Y$, then the per-round time complexity of MUWU is $\operatorname{poly}(|\mathcal{H}|, |Y|, T)$.
\end{theorem}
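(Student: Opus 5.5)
The plan is a multiplicative-weights potential argument. It shows that the target's weight doubles on every mistake round while the total weight never increases. I will assume Algorithm~\ref{alg:muwu} has the following form; if its details differ, the argument should adapt to whatever update it uses. Each $h\in\mathcal{H}$ starts with weight $w_h=1$. For a label $y$ write $W_t(y):=\sum_{h: y\in h(x_t)} w_h$. The learner predicts $\hat y_t\in\operatorname{arg\,max}_y W_t(y)$. After $y_t$ is revealed, the weights are updated by three rules:
\begin{itemize}
  \item hypotheses with $y_t\notin h(x_t)$ are set to $0$;
  \item hypotheses with $y_t\in h(x_t)$ but $\hat y_t\notin h(x_t)$ have their weight doubled;
  \item all other hypotheses keep their weight.
\end{itemize}
This update uses only observed quantities ($x_t$, $\hat y_t$, $y_t$), so it is mistake-unaware, as the protocol requires.

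\textbf{Step 1 (the target is never penalised and is rewarded on mistakes).} Since $y_t\in h^*(x_t)$ always, $w_{h^*}$ is never zeroed. On a mistake round we have $\hat y_t\notin h^*(x_t)$, so $h^*$ falls in the doubled class. Hence after $M$ mistakes, $w_{h^*}\geq 2^M$.

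\textbf{Step 2 (the total weight is non-increasing).} Write $W_t(y\wedge\neg y')$ for the weight of hypotheses $h$ with $y\in h(x_t)$ and $y'\notin h(x_t)$. The new total weight is
\[
  W_t(y_t\wedge \hat y_t)+2W_t(y_t\wedge\neg\hat y_t)=W_t(y_t)+W_t(y_t\wedge\neg\hat y_t).
\]
The old total is at least the weight of hypotheses admitting $\hat y_t$ or $y_t$, which equals $W_t(\hat y_t)+W_t(y_t\wedge\neg\hat y_t)$. By the choice of $\hat y_t$ we have $W_t(\hat y_t)\geq W_t(y_t)$, so the old total is at least the new total. Consequently the total weight stays at most its initial value $|\mathcal{H}|$.

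\textbf{Conclusion and the main obstacle.} Combining the two steps gives $2^M\leq w_{h^*}\leq|\mathcal{H}|$, hence $M\leq\log_2|\mathcal{H}|$, which is \eqref{eq:muwu-mistakes}. The delicate point is Step 2. The doubling must be exactly compensated by the mass lost from zeroing, and this works only because the prediction maximises $W_t(\cdot)$ over labels and because hypotheses admitting both $\hat y_t$ and $y_t$ are left unchanged. I expect this inequality to be the crux to verify against the actual update rule of Algorithm~\ref{alg:muwu}.

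\textbf{Running time.} Each round computes $W_t(y)$ for every $y\in Y$ using $|\mathcal{H}|\cdot|Y|$ membership queries, and then performs $O(|\mathcal{H}|)$ weight updates. Weights are integers of at most $\log_2|\mathcal{H}|+1$ bits, since the total weight stays at most $|\mathcal{H}|$. So the per-round time is $\operatorname{poly}(|\mathcal{H}|,|Y|,T)$.
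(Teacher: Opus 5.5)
Your proposal is correct and matches the paper's proof: your zero/double/keep update is exactly the paper's removal-plus-counter rule with weights $2^{m_h}$, and your Step 2 is the paper's four-bucket computation $\Phi_t-\Phi_{t+1}=A_{+-}+A_{--}-A_{-+}\geq 0$ (via $W(\hat y_t)\geq W(y_t)\iff A_{+-}\geq A_{-+}$), phrased through the union of the hypotheses admitting $\hat y_t$ or $y_t$. Your running-time remark (integer weights bounded by $|\mathcal{H}|$, and $|\mathcal{H}|\cdot|Y|$ membership queries per round) supplies a detail that the paper simply cites from Joshi et al.
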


We also consider a variant where the hypothesis class can ``grow'' over time. Formally, fix some family $\{X_n \subseteq X\}_{n\in\mathbb{N}}$ s.t. $\bigcup_n X_n=X$. For each $n \in \mathbb{N}$, define $\mathcal{H}_n \subseteq \mathcal{H}$ by
\begin{equation*}
  \mathcal{H}_n \mathrel{:=} \left\{h \in \mathcal{H} \mid \exists m < n,\ x \in X_m : h(x) \neq Y\right\}.
\end{equation*}

Thus $\mathcal{H}_n$ contains the hypotheses that can make a nontrivial restriction on the label of some instance available before time $n$.

We now restrict our protocol s.t. $x_t$ has to lie in $X_t$. We allow $\mathcal{H}$ to be infinite, but each $\mathcal{H}_n$ must be finite. For this case we propose the Anytime-MUWU algorithm (see Algorithm~\ref{alg:muwu-anytime}), yielding the following:

\begin{theorem}
\label{thm:online-anytime}
In the setting above, for any target $h^* \in \mathcal{H}$ and any sequence of instances and labels obeying the protocol and consistent with $h^*$, for any $n \in \mathbb{N}$, Anytime-MUWU guarantees
\begin{equation*}
  \left|\left\{t : t < n \wedge \hat{y}_t \notin h^*(x_t)\right\}\right| \leq \log_2 \max(|\mathcal{H}_n|, 1).
\end{equation*}
Moreover, if it's possible to iterate over $h \in \mathcal{H}_{n+1}$ and decide membership $y \in h(x)$ in time $\operatorname{poly}(n)$ per hypothesis for any $x \in X_n$ and $y \in Y$, then the time complexity of running Anytime-MUWU on round $n$ is $\operatorname{poly}(|\mathcal{H}_{n+1}|, |Y|, n)$.
\end{theorem}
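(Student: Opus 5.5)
The plan is to reduce Theorem~\ref{thm:online-anytime} to the weight-halving potential argument behind Theorem~\ref{thm:online}, applied to a hypothesis set that grows over time. I take Anytime-MUWU to run MUWU on the current finite class $\mathcal{H}_{t+1}$ (the class of hypotheses nontrivial on some instance in $X_0,\dots,X_t$), with one extra rule for each hypothesis $h$ that first enters at round $t$ (that is, $h\in\mathcal{H}_{t+1}\setminus\mathcal{H}_t$). Such an $h$ is assigned the weight it would have had if it had been present from round $0$.

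The key observation is that a hypothesis outside $\mathcal{H}_{t+1}$ satisfies $h(x)=Y$ for every $x\in X_m$ with $m\le t$. It is therefore consistent with every past label, and under MUWU's rule it would never have been penalised. Hence the entering weight is just the common initial weight, possibly adjusted by the normalisation MUWU applies to hypotheses that place no constraint on the label. The upshot is that the run of Anytime-MUWU up to round $n-1$ coincides step by step with a run of ordinary MUWU on the finite class $\mathcal{H}_n$.

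\textbf{Step 1.} Verify this coincidence. Hypotheses in $\mathcal{H}_n\setminus\mathcal{H}_{t+1}$ answer $Y$ on $x_t\in X_t$, so they are all-admissible at round $t$. I need to check that such hypotheses do not change MUWU's prediction or the relative updates of the other hypotheses at that round; if MUWU weights a vote by how much each hypothesis restricts the label, an all-$Y$ hypothesis should be neutral.

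\textbf{Step 2.} Apply Theorem~\ref{thm:online} to the class $\mathcal{H}_n$. The target satisfies $h^*\in\mathcal{H}_n$ whenever it made any nontrivial prediction before time $n$. Otherwise $h^*(x_t)=Y$ for all $t<n$, so no mistakes occur before $n$ and the bound $\log_2\max(|\mathcal{H}_n|,1)\ge 0$ holds trivially. In the first case the bound is exactly $\log_2|\mathcal{H}_n|$.

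\textbf{Step 3.} Bound the running time. On round $n$ the algorithm iterates over $\mathcal{H}_{n+1}$, initialises the newcomers, and performs MUWU's update and prediction. This costs $\operatorname{poly}(|\mathcal{H}_{n+1}|,|Y|,n)$ by the stated per-hypothesis cost assumption together with the cost bound in Theorem~\ref{thm:online}.

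The main obstacle is Step 1. MUWU's actual update rule (Algorithm~\ref{alg:muwu}) may be relative rather than absolute: it may renormalise, or it may penalise hypotheses that fail to support the chosen label in a way that depends on $|h(x)|$. If so, an all-$Y$ hypothesis might not be exactly neutral, and the late-entry weight would have to be the weight a never-restricting hypothesis accrues, which is computable in closed form from the history. My first approach would be to write that weight explicitly and check that inserting it reproduces the offline MUWU trajectory on $\mathcal{H}_n$. Failing that, I would redo the Joshi et al.\ potential argument directly, tracking total weight as hypotheses are added and showing that the additions are exactly compensated by the $\log_2|\mathcal{H}_n|$ budget.
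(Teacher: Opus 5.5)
Your route works, and it differs from the paper's. The check you flag as the main obstacle in Step 1 is immediate for MUWU as defined in Algorithm~\ref{alg:muwu}. There is no renormalisation and no dependence on $|h(x)|$. A hypothesis has weight $2^{m_h}$, it is removed only if $y_t\notin h(x_t)$, and $m_h$ grows only if $\hat{y}_t\notin h(x_t)$.

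Now take $h\in\mathcal{H}_n\setminus\mathcal{H}_{t+1}$. It satisfies $h(x_s)=Y$ for all $s\le t$. So in an offline MUWU run on $\mathcal{H}_n$, at round $t$ it is still unfalsified with $m_h=0$. It adds exactly $1$ to every $W(y)$, so the set of maximisers is unchanged, and it is neither removed nor penalised at that round. By induction on $t$, the offline unfalsified set intersected with $\mathcal{H}_{t+1}$, together with its counters, equals the Anytime-MUWU set $\mathcal{U}$ after activation. Newcomers enter at $m_h=0$ exactly as in Algorithm~\ref{alg:muwu-anytime}.

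The one point you should make explicit is tie-breaking. The coupling needs the offline run to pick the same maximiser as the anytime run. This is legitimate because the proof of Theorem~\ref{thm:online} only uses $W(\hat{y}_t)\ge W(y_t)$, so it holds for any choice of maximiser, even a history-dependent one. With that, Step 2 (including your treatment of $h^*\notin\mathcal{H}_n$) and Step 3 go through. For Step 3, note that $m_h\le n$, so the weights have $O(n)$ bits.

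The paper instead proves the bound directly, which is your stated fallback. It sets $\Phi_n:=\sum_{h\in\mathcal{U}_n}2^{m_h^{(n)}}$. Activation at round $t$ adds $|\mathcal{N}_t|$ to the potential. The round update cannot increase it, since Lemma~\ref{lem:muwu-potential} is local to a single round. Hence $\Phi_n\le\sum_{t<n}|\mathcal{N}_t|=|\mathcal{H}_n|$. Finally, $2^{m_{h^*}^{(n)}}\le\Phi_n$, where $m_{h^*}^{(n)}$ equals the mistake count because $h^*(x_t)=Y$ before activation.

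The direct argument avoids the simulation, the tie-breaking remark, and any use of the fact that unactivated hypotheses are neutral in the vote. Your reduction instead reuses Theorem~\ref{thm:online} as a black box. It also gives a slightly stronger structural statement: through round $n-1$, Anytime-MUWU is literally a run of MUWU on $\mathcal{H}_n$, so not knowing the horizon costs nothing.
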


Theorem~\ref{thm:online-anytime} is an anytime guarantee: Anytime-MUWU does not require knowing in advance the horizon $n$ at which it will be evaluated.


\subsection{The Algorithm}
\label{sec:muwu-alg}

The persistent state of MUWU consists of:
\begin{itemize}
  \item A subset $\mathcal{H}^{\mathrm{uf}} \subseteq \mathcal{H}$, the set of currently \emph{unfalsified} hypotheses.
  \item For each $h \in \mathcal{H}^{\mathrm{uf}}$, a non-negative integer counter $m_h \in \mathbb{N}$.
\end{itemize}

These are initialized as $\mathcal{H}^{\mathrm{uf}} \leftarrow \mathcal{H}$ and $m_h \leftarrow 0$ for every $h \in \mathcal{H}$. Each round $t$ proceeds according to Algorithm~\ref{alg:muwu}.

\begin{algorithm}[htbp]
\caption{\textbf{(Joshi et al.)} Mistake-Unaware Weight Update}
\label{alg:muwu}
  \begin{enumerate*}
    \item receive instance $x_t \in X$
    \item \textbf{for} $y \in Y$
      \begin{enumerate*}
        \item $W(y) \leftarrow \sum_{h \in \mathcal{H}^{\mathrm{uf}}\, :\, y \in h(x_t)} 2^{m_h}$
      \end{enumerate*}
    \item $\hat{y}_t \leftarrow \operatorname*{arg\,max}_{y \in Y} W(y)$ \hfill \textnormal{// ties broken arbitrarily}
    \item predict $\hat{y}_t$
    \item receive label $y_t \in Y$
    \item \textbf{for} $h \in \mathcal{H}^{\mathrm{uf}}$
      \begin{enumerate*}
        \item \textbf{if} $y_t \notin h(x_t)$ \textbf{then} remove $h$ from $\mathcal{H}^{\mathrm{uf}}$
        \item \textbf{if} $\hat{y}_t \notin h(x_t)$ \textbf{then} $m_h \leftarrow m_h + 1$
      \end{enumerate*}
  \end{enumerate*}
\end{algorithm}

We use $\mathcal{H}^{\mathrm{uf}}_t$ and $m_h^{(t)}$ to denote the values of the corresponding variables \emph{at the start} of round $t$, i.e., after rounds $0, \dots, t - 1$ have been fully processed but before round $t$ has begun. In particular, $\mathcal{H}^{\mathrm{uf}}_0 = \mathcal{H}$ and $m_h^{(0)} = 0$ for every $h$.

\subsection{Proof of Theorem~\ref{thm:online}}
\label{sec:muwu-proof}

We reproduce the proof for the reader's convenience and to more easily explain the modified analysis of Anytime-MUWU in the following subsections.

\textbf{Mistake bound.} Define the potential
\begin{equation*}
  \Phi_t \mathrel{:=} \sum_{h \in \mathcal{H}^{\mathrm{uf}}_t} 2^{m_h^{(t)}}.
\end{equation*}
Initially, $\Phi_0 = |\mathcal{H}|$.

\begin{lemma}
\label{lem:muwu-potential}
  For every round $t$, $\Phi_{t+1} \leq \Phi_t$.
\end{lemma}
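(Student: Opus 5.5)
I will compare $\Phi_{t+1}$ with $\Phi_t$ term by term. In round $t$ a hypothesis either leaves $\mathcal{H}^{\mathrm{uf}}$, or stays with $m_h$ unchanged, or stays and has $m_h$ incremented. So the only source of increase is a surviving hypothesis whose weight doubles.

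Partition $\mathcal{H}^{\mathrm{uf}}_t$ according to whether $\hat y_t\in h(x_t)$ and whether $y_t\in h(x_t)$. Let
\[
  A\mathrel{:=}\sum_{h\in\mathcal{H}^{\mathrm{uf}}_t:\ \hat y_t\in h(x_t)} 2^{m_h^{(t)}} = W(\hat y_t),
  \qquad
  B\mathrel{:=}\sum_{h\in\mathcal{H}^{\mathrm{uf}}_t:\ \hat y_t\notin h(x_t),\ y_t\in h(x_t)} 2^{m_h^{(t)}}.
\]
The hypotheses contributing to $B$ are exactly those that survive and get doubled. The hypotheses with $\hat y_t\in h(x_t)$ keep at most their weight, since they are either removed or kept with $m_h$ unchanged. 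The hypotheses with $y_t\notin h(x_t)$ are removed. Therefore
\[
  \Phi_{t+1}\le A' + 2B,
\]
where $A'$ is the surviving part of $A$. Moreover
\[
  \Phi_t \ge A + B + C,
\]
where $C$ collects the weight of the removed hypotheses that violate both labels.

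The key inequality is $B\le A$, which follows from the argmax choice of $\hat y_t$. Every hypothesis counted in $B$ contains $y_t$, so $B\le W(y_t)\le W(\hat y_t)=A$. Combining,
\[
  \Phi_{t+1}\le A'+2B \le A+B+B \le A+B+A\ \text{?}
\]
That bound is too weak, so the split must be done more carefully. The precise statement needed is
\[
  W(y_t) = A_{y} + B,
\]
where $A_y$ is the weight of hypotheses containing both $\hat y_t$ and $y_t$. From $W(y_t)\le W(\hat y_t)=A$ we get $B\le A-A_y$, which is the weight of hypotheses containing $\hat y_t$ but not $y_t$, and all of these are removed. Hence
\[
  \Phi_{t+1}=A_y+2B\le A_y+B+(A-A_y)\le A+B\le\Phi_t .
\]
This is the main delicate step: the extra weight gained by doubling the $B$-hypotheses is paid for by removing the $\hat y_t$-but-not-$y_t$ hypotheses.

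The case $\hat y_t=y_t$ is trivial: then $B=0$ and nothing doubles, so $\Phi_{t+1}\le\Phi_t$ immediately. With this, the lemma holds for every round, and the mistake bound of Theorem~\ref{thm:online} follows in the usual way. The target $h^*$ is never removed, and on each mistake round $\hat y_t\notin h^*(x_t)$, so $m_{h^*}$ increases by one. Thus after $M$ mistakes $2^{M}\le\Phi_t\le|\mathcal{H}|$, which gives $M\le\log_2|\mathcal{H}|$.
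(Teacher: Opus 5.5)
Your proof is correct and is essentially the paper's argument: with $A_y=A_{++}$, $B=A_{-+}$ and $A-A_y=A_{+-}$, the argmax choice gives $A_{-+}\le A_{+-}$, so the doubled weight of the surviving $A_{-+}$ hypotheses is paid for by removing the $A_{+-}$ hypotheses, giving $\Phi_{t+1}=A_{++}+2A_{-+}\le\Phi_t$. You should delete the abandoned first chain of inequalities ending in ``?'', since the final computation stands on its own.
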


\begin{proof}
  Fix a round $t$, drop the round subscript for brevity, and partition $\mathcal{H}^{\mathrm{uf}} = \mathcal{H}^{\mathrm{uf}}_t$ according to the membership of $\hat{y}_t$ and $y_t$ in $h(x_t)$:
  \[
  \begin{aligned}
  A_{++} &\mathrel{:=} \sum_{h \in \mathcal{H}^{\mathrm{uf}}\, :\, \hat{y}_t \in h(x_t),\; y_t \in h(x_t)} 2^{m_h},
  &\quad
  A_{+-} &\mathrel{:=} \sum_{h \in \mathcal{H}^{\mathrm{uf}}\, :\, \hat{y}_t \in h(x_t),\; y_t \notin h(x_t)} 2^{m_h},\\
  A_{-+} &\mathrel{:=} \sum_{h \in \mathcal{H}^{\mathrm{uf}}\, :\, \hat{y}_t \notin h(x_t),\; y_t \in h(x_t)} 2^{m_h},
  &\quad
  A_{--} &\mathrel{:=} \sum_{h \in \mathcal{H}^{\mathrm{uf}}\, :\, \hat{y}_t \notin h(x_t),\; y_t \notin h(x_t)} 2^{m_h}.
  \end{aligned}
  \]
  These four sums add up to $\Phi_t$. The two weighted votes used by the algorithm decompose as
  \begin{equation}
    W(\hat{y}_t) = A_{++} + A_{+-}, \quad W(y_t) = A_{++} + A_{-+}.
    \label{eq:muwu-W-decomp}
  \end{equation}

  We compute $\Phi_{t + 1}$. The falsification step removes from $\mathcal{H}^{\mathrm{uf}}$ exactly the hypotheses with $y_t \notin h(x_t)$, i.e., the $A_{+-}$ and $A_{--}$ buckets. The counter-update step then increments $m_h$ on every surviving hypothesis with $\hat{y}_t \notin h(x_t)$, i.e., the $A_{-+}$ bucket. The $A_{++}$ bucket survives without modification. Hence
  \[
    \Phi_{t + 1} = A_{++} + 2 A_{-+}.
  \]

  Subtracting,
  \begin{equation}
    \Phi_t - \Phi_{t + 1} = A_{+-} + A_{--} - A_{-+}.
    \label{eq:muwu-diff}
  \end{equation}
  By the choice of $\hat{y}_t$ as an argmax of $W$, we have $W(\hat{y}_t) \geq W(y_t)$, which by \eqref{eq:muwu-W-decomp} is equivalent to $A_{+-} \geq A_{-+}$. Substituting into \eqref{eq:muwu-diff} yields $\Phi_t - \Phi_{t + 1} \geq A_{--} \geq 0$.
\end{proof}

We can now bound the number of true mistakes. Fix any horizon $T \in \mathbb{N}$.

Since the sequence is consistent with $h^*$, $y_t \in h^*(x_t)$ for every $t < T$, so $h^*$ is never removed from $\mathcal{H}^{\mathrm{uf}}$ during rounds $0, \dots, T - 1$. In particular, $h^* \in \mathcal{H}^{\mathrm{uf}}_T$, and therefore
\[
  \Phi_T \geq 2^{m_{h^*}^{(T)}}.
\]
Iterating Lemma~\ref{lem:muwu-potential} and using $\Phi_0 = |\mathcal{H}|$,
\[
  2^{m_{h^*}^{(T)}} \leq \Phi_T \leq \Phi_0 = |\mathcal{H}|,
\]
so $m_{h^*}^{(T)} \leq \log_2 |\mathcal{H}|$.

It remains to identify $m_{h^*}^{(T)}$ with the count of true mistakes. The counter $m_{h^*}$ is incremented at round $t$ if and only if (i) $h^*$ remains in $\mathcal{H}^{\mathrm{uf}}$ after the falsification step of round $t$, and (ii) $\hat{y}_t \notin h^*(x_t)$. Condition (i) holds at every round (as observed above, $h^*$ is never falsified), so $m_{h^*}$ is incremented at round $t$ if and only if $\hat{y}_t \notin h^*(x_t)$, i.e., exactly when round $t$ is a true mistake. Summing,
\[
  \left|\left\{t < T : \hat{y}_t \notin h^*(x_t)\right\}\right| = m_{h^*}^{(T)} \leq \log_2 |\mathcal{H}|.
\]

\subsection{Anytime MUWU}
\label{sec:muwu-anytime-alg}

We now adapt Algorithm~\ref{alg:muwu} to the setting of Theorem~\ref{thm:online-anytime}, where the ambient class $\mathcal{H}$ may be infinite but the finite classes $\mathcal{H}_n$ become available over time. Recall that
\[
  \mathcal{H}_n \mathrel{:=} \left\{h \in \mathcal{H} \mid \exists m < n,\ x \in X_m : h(x) \neq Y\right\}.
\]
The classes $\mathcal{H}_n$ are monotone in $n$, and $\mathcal{H}_0 = \emptyset$.

Anytime-MUWU (Algorithm~\ref{alg:muwu-anytime}) keeps two finite sets. The first, $\mathcal{A}$, is the set of \emph{activated} hypotheses, and the second, $\mathcal{U} \subseteq \mathcal{A}$, is the set of activated hypotheses that are still unfalsified. For every $h \in \mathcal{U}$ it also keeps a counter $m_h$. Initially $\mathcal{A} = \mathcal{U} = \emptyset$. On round $t$, before the prediction is made, the learner activates all hypotheses that have just entered $\mathcal{H}_{t + 1}$, i.e. all hypotheses in
\begin{equation*}
  \mathcal{N}_t \mathrel{:=} \left\{h \in \mathcal{H}_{t + 1} \mid h \notin \mathcal{H}_t\right\}.
\end{equation*}
Each $h \in \mathcal{N}_t$ is inserted into both $\mathcal{A}$ and $\mathcal{U}$ with $m_h \mathrel{:=} 0$. The weighted plurality prediction and the subsequent falsification/counter update are then exactly as in the finite-class algorithm, but restricted to $\mathcal{U}$.

\begin{algorithm}[htbp]
\caption{Anytime Mistake-Unaware Weight Update}
\label{alg:muwu-anytime}
  \begin{enumerate*}
    \item receive instance $x_t \in X_t$
    \item \textbf{for} $h \in \mathcal{H}_{t + 1}$ with $h \notin \mathcal{A}$
      \begin{enumerate*}
        \item insert $h$ into $\mathcal{A}$ and $\mathcal{U}$
        \item $m_h \leftarrow 0$
      \end{enumerate*}
    \item \textbf{for} $y \in Y$
      \begin{enumerate*}
        \item $W(y) \leftarrow \sum_{h \in \mathcal{U}\, :\, y \in h(x_t)} 2^{m_h}$
      \end{enumerate*}
    \item $\hat{y}_t \leftarrow \operatorname*{arg\,max}_{y \in Y} W(y)$ \hfill \textnormal{// ties broken arbitrarily}
    \item predict $\hat{y}_t$
    \item receive label $y_t \in Y$
    \item \textbf{for} $h \in \mathcal{U}$
      \begin{enumerate*}
        \item \textbf{if} $y_t \notin h(x_t)$ \textbf{then} remove $h$ from $\mathcal{U}$
        \item \textbf{else if} $\hat{y}_t \notin h(x_t)$ \textbf{then} $m_h \leftarrow m_h + 1$
      \end{enumerate*}
  \end{enumerate*}
\end{algorithm}

The activation step is harmless with respect to the past. Indeed, if $h \in \mathcal{N}_t$ and $s < t$, then $h(x_s) = Y$: otherwise $x_s \in X_s$ and $h(x_s) \neq Y$ would imply $h \in \mathcal{H}_{s + 1} \subseteq \mathcal{H}_t$, contradicting $h \notin \mathcal{H}_t$. Hence a newly activated hypothesis was neither falsified by any previous observation nor responsible for any previous true mistake. Initializing its counter to zero is therefore consistent with the same interpretation of $m_h$ as in Section~\ref{sec:muwu-alg}.

This algorithm is anytime in the usual sense: on round $t$ it uses only $t$, the current instance, the previous state, and an enumeration of $\mathcal{H}_{t + 1}$. It does not require knowing in advance the horizon $n$ at which it will be evaluated.

\subsection{Proof of Theorem~\ref{thm:online-anytime}}
\label{sec:muwu-anytime-proof}

Let $\mathcal{U}_n$ denote the set $\mathcal{U}$ after rounds $0, \dots, n - 1$ have been fully processed, equivalently before the activation step of round $n$. Let $m_h^{(n)}$ be the corresponding counter of $h \in \mathcal{U}_n$. By construction, $\mathcal{U}_n \subseteq \mathcal{H}_n$. Define the potential
\begin{equation*}
  \Phi_n \mathrel{:=} \sum_{h \in \mathcal{U}_n} 2^{m_h^{(n)}}.
\end{equation*}

\begin{lemma}
\label{lem:muwu-anytime-potential}
  For every $n \in \mathbb{N}$,
  \[
    \Phi_n \leq |\mathcal{H}_n|.
  \]
\end{lemma}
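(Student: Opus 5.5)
The plan is to prove $\Phi_n \leq |\mathcal{H}_n|$ by induction on $n$, using a stronger invariant. Define the auxiliary quantity
\[
  \Psi_n \mathrel{:=} \Phi_n + |\mathcal{H}_n \setminus \mathcal{A}_n|,
\]
where $\mathcal{A}_n$ is the activated set after rounds $0,\dots,n-1$. Since every hypothesis activated so far lies in $\mathcal{H}_n$, we have $\mathcal{A}_n \subseteq \mathcal{H}_n$. We will in fact show $\mathcal{A}_n = \mathcal{H}_n$, so the second term vanishes. The invariant to carry is $\Phi_n \leq |\mathcal{A}_n|$, which gives the claim once we know $\mathcal{A}_n \subseteq \mathcal{H}_n$.

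\textbf{Base case.} At $n=0$ we have $\mathcal{U}_0=\mathcal{A}_0=\emptyset$. Hence $\Phi_0 = 0 = |\mathcal{H}_0|$, using $\mathcal{H}_0=\emptyset$.

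\textbf{Inductive step.} Fix round $n$ and split it into two phases.

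\emph{Activation phase.} Line 2 inserts exactly the hypotheses of $\mathcal{H}_{n+1}\setminus\mathcal{A}_n$, each with counter $0$. This raises $\Phi$ by $|\mathcal{H}_{n+1}\setminus \mathcal{A}_n|$. Call the intermediate value $\Phi'_n$. Since $\mathcal{A}_n \subseteq \mathcal{H}_n \subseteq \mathcal{H}_{n+1}$ by monotonicity, after this phase $\mathcal{A}_{n+1} = \mathcal{H}_{n+1}$. By the inductive hypothesis,
\[
  \Phi'_n \leq |\mathcal{A}_n| + |\mathcal{H}_{n+1}\setminus\mathcal{A}_n| = |\mathcal{H}_{n+1}|.
\]

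\emph{Prediction/update phase.} Here we repeat the four-bucket argument of Lemma~\ref{lem:muwu-potential} verbatim on the set $\mathcal{U}$ right after activation. Partition $\mathcal{U}$ by membership of $\hat y_n$ and $y_n$ in $h(x_n)$ into $A_{++},A_{+-},A_{-+},A_{--}$. Lines 7.1–7.2 remove the $A_{+-},A_{--}$ buckets and double the $A_{-+}$ bucket. (The ``else if'' ordering only avoids incrementing counters of removed hypotheses, which does not affect the sum.) This gives
\[
  \Phi_{n+1} = A_{++} + 2A_{-+}.
\]
The argmax choice of $\hat y_n$ yields $W(\hat y_n) \geq W(y_n)$, i.e.\ $A_{+-}\geq A_{-+}$. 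Therefore $\Phi_{n+1} \leq \Phi'_n \leq |\mathcal{H}_{n+1}|$, closing the induction.

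\textbf{Main obstacle.} The only delicate point is bookkeeping: confirming that the hypotheses activated on round $n$ are exactly $\mathcal{H}_{n+1}\setminus\mathcal{H}_n$. This requires $\mathcal{A}_n=\mathcal{H}_n$, which itself is part of the induction. It also requires that adding new weight-one terms before the update does not break the bucket inequality, which holds because that inequality concerns whatever set $\mathcal{U}$ is present at prediction time.

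Once the lemma is in hand, the mistake bound of Theorem~\ref{thm:online-anytime} follows as in Section~\ref{sec:muwu-proof}. If $h^*\in\mathcal{H}_n$, then $h^*$ is never falsified, and its counter counts exactly the true mistakes made after its activation. Before activation there are no true mistakes, since $h^*(x_s)=Y$ there. Hence
\[
  2^{\#\text{mistakes}} \leq \Phi_n \leq |\mathcal{H}_n|.
\]
If instead $h^*\notin\mathcal{H}_n$, then there are no mistakes before round $n$ at all.
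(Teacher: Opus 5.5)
Your proof is correct and follows the paper's argument: start from $\Phi_0=0$, observe that activation adds exactly one unit of weight per newly activated hypothesis, and reuse the single-round four-bucket inequality from Lemma~\ref{lem:muwu-potential} to show that the prediction/update phase cannot increase the potential. The only difference is bookkeeping. The paper telescopes $\Phi_n\le\sum_{t<n}|\mathcal{N}_t|=|\mathcal{H}_n|$, whereas you carry the equivalent invariant $\Phi_n\le|\mathcal{A}_n|$ with $\mathcal{A}_n=\mathcal{H}_n$; this has the small merit of making explicit that line~2 inserts exactly $\mathcal{N}_t$, which the paper uses implicitly (your auxiliary $\Psi_n$ is never used and can be dropped).
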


\begin{proof}
  Since $\mathcal{H}_0 = \emptyset$, we have $\Phi_0 = 0$. Fix a round $t$. Immediately after the activation step, the potential has become
  \begin{equation*}
    \tilde{\Phi}_t = \Phi_t + |\mathcal{N}_t|,
  \end{equation*}
  because every newly activated hypothesis is inserted with counter zero and hence contributes weight $1$.

  From this point until the end of the round, Algorithm~\ref{alg:muwu-anytime} performs exactly one MUWU update on the finite active unfalsified class $\mathcal{U}_t \cup \mathcal{N}_t$. The proof of Lemma~\ref{lem:muwu-potential} is local to a single round and does not use any property of the class beyond finiteness. Therefore the update after prediction and feedback cannot increase the potential, giving
  \begin{equation}
    \Phi_{t + 1} \leq \tilde{\Phi}_t = \Phi_t + |\mathcal{N}_t|.
    \label{eq:muwu-anytime-step}
  \end{equation}

  Iterating \eqref{eq:muwu-anytime-step} over $t = 0, \dots, n - 1$ yields
  \[
    \Phi_n \leq \sum_{t = 0}^{n - 1} |\mathcal{N}_t|.
  \]
  The sets $\mathcal{N}_t$ are disjoint and their union is $\mathcal{H}_n$, since $\mathcal{H}_0 = \emptyset$ and the sequence $\mathcal{H}_0 \subseteq \mathcal{H}_1 \subseteq \dots$ is monotone. Hence
  \[
    \Phi_n \leq \sum_{t = 0}^{n - 1} |\mathcal{N}_t| = |\mathcal{H}_n|,
  \]
  as claimed.
\end{proof}

We now prove the mistake bound. Fix a target $h^* \in \mathcal{H}$, a sequence consistent with $h^*$, and a horizon $n$. First suppose $h^* \notin \mathcal{H}_n$. Then for every $t < n$ we must have $h^*(x_t) = Y$: otherwise $x_t \in X_t$ and $h^*(x_t) \neq Y$ would imply $h^* \in \mathcal{H}_{t + 1} \subseteq \mathcal{H}_n$. Thus every prediction belongs to $h^*(x_t)$ for every $t < n$, and there are no true mistakes before the horizon.

It remains to consider the case $h^* \in \mathcal{H}_n$. Since the data are consistent with $h^*$, once $h^*$ is activated it is never removed from $\mathcal{U}$. Therefore $h^* \in \mathcal{U}_n$. Moreover, the counter $m_{h^*}$ is incremented exactly on the true-mistake rounds before $n$. Indeed, after activation this is precisely the counter-update rule in Algorithm~\ref{alg:muwu-anytime}; before activation, say on a round $t$ with $h^* \notin \mathcal{H}_{t + 1}$, we have $h^*(x_t) = Y$ by the same argument as above, so a true mistake is impossible. Consequently
\begin{equation}
  \left|\left\{t : t < n \text{ and } \hat{y}_t \notin h^*(x_t)\right\}\right| = m_{h^*}^{(n)}.
  \label{eq:muwu-anytime-counter}
\end{equation}

Using Lemma~\ref{lem:muwu-anytime-potential} and $h^* \in \mathcal{U}_n$,
\[
  2^{m_{h^*}^{(n)}} \leq \Phi_n \leq |\mathcal{H}_n|.
\]
Taking logarithms and substituting \eqref{eq:muwu-anytime-counter} gives
\[
  \left|\left\{t : t < n \text{ and } \hat{y}_t \notin h^*(x_t)\right\}\right| \leq \log_2 |\mathcal{H}_n|.
\]

\section{MUWU for ARC}
\label{sec:apx-muwu-arc}

In this appendix we prove Theorem~\ref{thm:arc-muwu}. The proof is based on the anytime multivalued online-learning protocol of Theorem~\ref{thm:online-anytime}, instantiated with hypotheses corresponding to equations between arithmetic progressions. We take the instance at round $t$ to be the length-$t$ prefix already observed, so $X_t = \Sigma^t$. The hypotheses are equations between \emph{infinite} arithmetic progressions, but they only make predictions from progression-index $1$ onward. This keeps the active class $\mathcal{H}_n$ polynomial in $n$. The omitted index-$0$ predictions, together with the literal positions, are charged directly to the arithmetic repetition system.

\subsection{Conjunctions of coarsened hypotheses}
\label{sec:arc-conj}

We first record a closure property for conjunctions. The statement is deliberately independent of the particular learning rule: it uses only a given mistake bound for the underlying learner. Throughout this subsection, $X$ is equipped with round-dependent sets $X_t$, the label space is $Y$, and $\mathcal{H}$ is an ambient multivalued hypothesis class.

Let $\mathcal{A}$ be an online learner for this protocol. We say that $\mathcal{A}$ satisfies the single-hypothesis mistake bound $\left(B_N\right)_{N \in \mathbb{N}}$ if, for every $N \in \mathbb{N}$, every $h \in \mathcal{H}$, and every sequence $\left(x_t,y_t\right)_{t<N}$ obeying $x_t \in X_t$ and $y_t \in h(x_t)$ for all $t<N$, the predictions of $\mathcal{A}$ on that sequence satisfy
\begin{equation}
  \left|\left\{t<N : \hat{y}_t \notin h(x_t)\right\}\right| \leq B_N.
  \label{eq:single-hyp-bound}
\end{equation}

\begin{definition}
\label{def:prefix-coarsening}
  Let $h \in \mathcal{H}$. A multivalued hypothesis $g:X \to 2^Y$ is a \emph{prefix coarsening} of $h$ with cutoff $\tau \in \mathbb{N}$ if
  \begin{itemize}
    \item for every $t < \tau$ and every $x \in X_t$, $g(x)=h(x)$;
    \item for every $t \geq \tau$ and every $x \in X_t$, $g(x)=Y$.
  \end{itemize}
  In particular, $h(x) \subseteq g(x)$ for every $x \in X$.
\end{definition}

If $C$ is a finite set of multivalued hypotheses, define its conjunction $h_C$ by
\[
  h_C (x) \mathrel{:=} \left\{y \in Y : \forall g \in C,\ y \in g(x)\right\}.
\]
Thus $h_C$ is the pointwise intersection of the hypotheses in $C$.

\begin{proposition}
\label{prop:conj-lift}
  Let $\mathcal{A}$ be any online learner satisfying the single-hypothesis mistake bound $\left(B_N\right)_{N \in \mathbb{N}}$. Let $C$ be a finite set of multivalued hypotheses such that every $g \in C$ is a prefix coarsening of some $h_g \in \mathcal{H}$ with cutoff $\tau_g$. Then, for every sequence $\left(x_t,y_t\right)_{t<n}$ obeying $x_t \in X_t$ and satisfying $y_t \in h_C (x_t)$ for all $t<n$, the predictions of $\mathcal{A}$ satisfy
  \[
    \left|\left\{t<n : \hat{y}_t \notin h_C (x_t)\right\}\right| \leq \sum_{g \in C} B_{\min(n,\tau_g)}.
  \]
  In particular, if $\left(B_N\right)$ is nondecreasing, then
  \[
    \left|\left\{t<n : \hat{y}_t \notin h_C (x_t)\right\}\right| \leq |C| \cdot B_n.
  \]
\end{proposition}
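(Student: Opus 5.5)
The plan is a union bound over the members of $C$: every mistake with respect to the conjunction is a mistake with respect to some single coarsening $g \in C$. Each such per-$g$ mistake can in turn be charged to a mistake of $\mathcal{A}$ against the underlying hypothesis $h_g$ on a truncated data sequence.

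\textbf{Step 1 (decomposition).} Fix the run of $\mathcal{A}$ on $(x_t,y_t)_{t<n}$. If $\hat{y}_t \notin h_C(x_t)$, then by definition of the conjunction there is some $g \in C$ with $\hat{y}_t \notin g(x_t)$. Hence
\[
  \left|\{t<n : \hat{y}_t \notin h_C(x_t)\}\right| \leq \sum_{g \in C} \left|\{t<n : \hat{y}_t \notin g(x_t)\}\right|.
\]

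\textbf{Step 2 (only the first $\min(n,\tau_g)$ rounds matter).} Fix $g$ and set $N := \min(n,\tau_g)$. For rounds $t \geq \tau_g$ we have $x_t \in X_t$, so $g(x_t) = Y$ and no $g$-mistake can occur. Thus the $g$-mistakes lie among rounds $t<N$. On these rounds $g(x_t)=h_g(x_t)$, so they are exactly the rounds $t<N$ with $\hat{y}_t \notin h_g(x_t)$.

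\textbf{Step 3 (invoke the single-hypothesis bound).} Consider the truncated sequence $(x_t,y_t)_{t<N}$. For $t<N$ we have $y_t \in h_C(x_t) \subseteq g(x_t) = h_g(x_t)$, so this sequence obeys the protocol and is consistent with $h_g \in \mathcal{H}$. The learner is online, so its predictions on rounds $t<N$ depend only on $(x_s,y_s)_{s<t}$ and $x_t$. They are therefore the same whether it is run on the full sequence or on the truncated one. Applying \eqref{eq:single-hyp-bound} to $h_g$ with horizon $N$ bounds the $g$-mistakes by $B_N = B_{\min(n,\tau_g)}$. Summing over $g\in C$ gives the main inequality. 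The "in particular" follows because, when $(B_N)$ is nondecreasing, $B_{\min(n,\tau_g)} \leq B_n$ for every $g$.

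The only delicate point is Step 3. There we need the prefix property that predictions on a prefix do not depend on future data, and the fact that $h_C$-consistency of the labels implies $h_g$-consistency only on the rounds before the cutoff, not after it. This is why we truncate rather than feed the whole sequence to the bound for $h_g$. If $\mathcal{A}$ were randomized or otherwise lookahead-dependent, this step would need restating, but the paper's protocol treats learners as deterministic online maps, so the step goes through directly.
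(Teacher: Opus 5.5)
Your proposal is correct and follows essentially the same route as the paper's proof: a union bound over $g \in C$, the observation that $g$-mistakes can only occur before the cutoff where $g = h_g$, and truncation of the run to its first $\min(n,\tau_g)$ rounds so that the single-hypothesis bound for $h_g$ applies by the online (prefix-determined) nature of $\mathcal{A}$. The ``in particular'' clause is handled identically via monotonicity of $(B_N)$.
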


\begin{proof}
  Fix $g \in C$ and set $N_g\mathrel{:=}\min(n,\tau_g)$. On every round $t \geq \tau_g$ we have $g(x_t)=Y$, so no prediction can lie outside $g(x_t)$. On the remaining rounds $t < N_g$, consistency with $h_C$ implies $y_t \in g(x_t)=h_g (x_t)$. Hence the length-$N_g$ prefix of the sequence is consistent with the hypothesis $h_g \in \mathcal{H}$.

  Since $\mathcal{A}$ is an online learner, its first $N_g$ predictions in the length-$n$ run are the same as its predictions when run only on the truncated sequence of length $N_g$. Applying \eqref{eq:single-hyp-bound} to this truncated sequence gives
  \[
    \left|\left\{t<N_g : \hat{y}_t \notin h_g (x_t)\right\}\right| \leq B_{N_g}.
  \]
  Because $g=h_g$ before the cutoff and $g=Y$ after the cutoff, this also bounds $\left|\left\{t<n : \hat{y}_t \notin g(x_t)\right\}\right|$.

  Finally, if $\hat{y}_t \notin h_C (x_t)$, then $\hat{y}_t \notin g(x_t)$ for at least one $g \in C$. Taking the union of the sets of rounds charged to the hypotheses $g \in C$, and bounding its cardinality by the sum of their cardinalities, proves the first inequality. The second follows immediately when $\left(B_N\right)$ is nondecreasing.
\end{proof}

\subsection{Infinite arithmetic-progression specialists}
\label{sec:arc-specialists}

We now instantiate the hypothesis class. Let $X\mathrel{:=}\Sigma^*$ and $X_t\mathrel{:=}\Sigma^t$, and let $Y\mathrel{:=}\Sigma$. An \emph{infinite arithmetic progression} is a pair $\alpha=(r,p)$ with $r \in \mathbb{N}^+$ and $p \in \mathbb{N}$; it denotes the map
\[
  \alpha(s) \mathrel{:=} p + r s, \quad s \in \mathbb{N}.
\]
Let $\mathcal{P}$ be the set of all infinite arithmetic progressions.

For every ordered pair $(\alpha,\beta) \in \mathcal{P} \times \mathcal{P}$, define the equation specialist $h^{\mathrm{eq}}_{\alpha,\beta}:X \to 2^\Sigma$ as follows. Given a prefix $u \in \Sigma^*$, write $j\mathrel{:=}|u|$. The set $h^{\mathrm{eq}}_{\alpha,\beta}(u)$ consists of all $y \in \Sigma$ satisfying all requirements of the following form, for indices $s \geq 1$:
\begin{itemize}
  \item if $\alpha(s)=j$ and $\beta(s)<j$, then $y=u[\beta(s)]$;
  \item if $\beta(s)=j$ and $\alpha(s)<j$, then $y=u[\alpha(s)]$.
\end{itemize}
If there are no such requirements, then $h^{\mathrm{eq}}_{\alpha,\beta}(u)=\Sigma$. The restriction $s \geq 1$ is essential: the specialist does not predict at index $0$ of the progression pair. Let
\[
  \mathcal{H} \mathrel{:=} \left\{h^{\mathrm{eq}}_{\alpha,\beta} : \alpha,\beta \in \mathcal{P}\right\}.
\]
The predictor ARC-MUWU is the Anytime-MUWU algorithm of Theorem~\ref{thm:online-anytime} run on this class.

\begin{lemma}
\label{lem:arc-hyp-count}
  For the class $\mathcal{H}$ above,
  \[
    \left|\mathcal{H}_n\right| \leq n^4.
  \]
  Moreover, the hypotheses in $\mathcal{H}_n$ can be enumerated in polynomial time in $n$, and membership $y \in h(u)$ is decidable in polynomial time in $n$ for $h \in \mathcal{H}_n$, $u \in X_j$ with $j \leq n$, and $y \in \Sigma$.
\end{lemma}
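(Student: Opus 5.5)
The plan is to bound $\mathcal{H}_n$ by counting parameter pairs $(\alpha,\beta)$. Several pairs may define the same function $h^{\mathrm{eq}}_{\alpha,\beta}$, so this gives an upper bound on the number of distinct hypotheses. The key observation is that a specialist can only be nontrivial on a short prefix if both of its progressions hit a small position at some index $s \geq 1$. Since progression steps are positive, that already forces the step and offset to be small.

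Concretely, suppose $h^{\mathrm{eq}}_{\alpha,\beta} \in \mathcal{H}_n$. Then there are $m<n$ and $u \in \Sigma^m$ with $h^{\mathrm{eq}}_{\alpha,\beta}(u) \neq \Sigma$. By definition, $h^{\mathrm{eq}}_{\alpha,\beta}(u)=\Sigma$ whenever there are no requirements. Hence some $s \geq 1$ satisfies $\alpha(s)=m$ and $\beta(s)<m$, or the symmetric condition with $\alpha$ and $\beta$ swapped. In either case $\alpha(s),\beta(s) \leq m \leq n-1$.

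Write $\alpha=(r,p)$. Since $r \geq 1$ and $s \geq 1$, we get $p + r \leq p + rs = \alpha(s) \leq n-1$. Thus $p \in [n]$ and $r \in \{1,\dots,n-1\}$, so there are at most $n^2$ possibilities for $\alpha$, and likewise for $\beta$. Since every element of $\mathcal{H}_n$ is the image of at least one such pair, $|\mathcal{H}_n| \leq n^4$.

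For enumeration, I would iterate over all quadruples $(r,p,r',p')$ with $p+r \leq n-1$ and $p'+r' \leq n-1$, which is $O(n^4)$ of them. For each, test whether $h^{\mathrm{eq}}_{\alpha,\beta}$ is nontrivial on some admissible prefix. This test reduces to a purely arithmetic question: does some $s\geq 1$ exist with $\alpha(s)\neq\beta(s)$ and $\max(\alpha(s),\beta(s)) \leq n-1$? The reason is that on any prefix of length $j=\max(\alpha(s),\beta(s))$ the requirement forces $y$ to equal a single symbol, and $|\Sigma|\geq 2$ can be assumed (otherwise everything is trivial). This test takes $O(n)$ time per quadruple. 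Alternatively, one can skip the test and enumerate the superset: the activation step of Anytime-MUWU only needs a polynomial-size list containing $\mathcal{H}_{n}$, and hypotheses that are identically $\Sigma$ on the relevant prefixes carry no weight effect. Finally, duplicate functions can be merged by comparing their requirement sets on all prefixes of length at most $n$, which is also polynomial.

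For membership of $y$ in $h(u)$ with $|u|=j\leq n$, any $s$ with $\alpha(s)=j$ or $\beta(s)=j$ satisfies $s \leq j$. So it suffices to loop over $s=1,\dots,j$, compute $\alpha(s)$ and $\beta(s)$, and check each applicable requirement against $u$. This costs $O(n)$ arithmetic operations on $O(\log n)$-bit numbers.

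The only subtle point, and the one I expect to need care, is the role of the restriction $s\geq1$. With $s=0$ allowed, a progression could have an arbitrarily large step $r$ while still satisfying $\alpha(0)=p<n$, and $\mathcal{H}_n$ would be infinite. The bound $p+r\le n-1$ is exactly what the restriction $s\geq1$ buys.
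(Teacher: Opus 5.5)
Your argument is correct and is essentially the paper's proof: the restriction $s\ge 1$ together with positive steps bounds the offset and step of both progressions below $n$, giving at most $n^2$ choices each. Enumerating over the four parameters and checking the requirements directly then gives the algorithmic claims; your exact activity test and $O(n)$ loop over $s$ are harmless refinements of the paper's ``at most two indices'' check. The one slip is the optional aside on merging duplicates. Agreement of requirement sets on prefixes of length at most $n$ does not imply equality as functions on $\Sigma^*$: for example, $((1,0),(n-1,0))$ and $((1,0),(n-2,1))$ agree there for $n\ge 4$ but differ at length $2n-3$. No deduplication is needed, though, since, as in the paper, you can simply enumerate parameter tuples.
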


\begin{proof}
  Suppose $h^{\mathrm{eq}}_{\alpha,\beta} \in \mathcal{H}_n$ with $\alpha(s)=p_\alpha+r_\alpha s$ and $\beta(s)=p_\beta+r_\beta s$. Then for some $j<n$ and some prefix $u \in X_j$, the set $h^{\mathrm{eq}}_{\alpha,\beta}(u)$ is not all of $\Sigma$. Hence at least one of the two progressions reaches the current index $j$ at some index $s \geq 1$, while the paired position at the same $s$ is earlier than $j$.

  If $\alpha(s)=j$, then $p_\alpha<n$ and $r_\alpha<n$, since $j=p_\alpha+r_\alpha s$ with $s \geq 1$. The paired inequality $\beta(s)<j<n$ similarly implies $p_\beta<n$ and $r_\beta<n$. The case $\beta(s)=j$ is symmetric. Thus every active specialist has a description with both starting points $<n$ and both steps $<n$. There are $\leq n^2$ such progressions and hence $\leq n^4$ ordered pairs.

  Enumeration is by these four parameters. To test membership on a prefix of length $j \leq n$, it suffices to check whether $j$ lies on either progression at an index $s \geq 1$ and, when it does, whether the paired position at the same index is earlier than $j$. There are at most two such indices to inspect, and the required comparisons and array lookups are polynomial time.
\end{proof}

In particular,
\begin{equation*}
  \log_2 \max\left(\left|\mathcal{H}_n\right|,1\right) = O(\log n).
\end{equation*}

\subsection{From repetition systems to coarsened specialists}
\label{sec:arc-ars-conj}

Let $\alpha=(r,i,a,b)$ be a finite arithmetic interval. We write $\alpha^\omega$ for the infinite arithmetic progression with the same step and initial value:
\[
  \alpha^\omega (s) \mathrel{:=} r a + i + r s.
\]
Thus $\alpha^\omega (s)=\alpha(s)$ for every $s \in \operatorname{dom}(\alpha)$.

Given a pair $e=(\alpha,\beta)$ of finite arithmetic intervals with $|\alpha|=|\beta|=\ell$, define $g_e:X \to 2^\Sigma$ exactly like the infinite specialist $h^{\mathrm{eq}}_{\alpha^\omega,\beta^\omega}$, except that only indices $s$ with $1 \leq s < \ell$ are allowed to generate requirements. In other words, $g_e$ is the finite equation associated with $e$, with the index-$0$ requirement deliberately omitted.

\begin{lemma}
\label{lem:finite-is-coarsening}
  For every such pair $e=(\alpha,\beta)$, the hypothesis $g_e$ is a prefix coarsening of $h^{\mathrm{eq}}_{\alpha^\omega,\beta^\omega}$. One valid cutoff is
  \[
    \tau_e \mathrel{:=} 1 + \max(\alpha(\ell - 1), \beta(\ell - 1)).
  \]
\end{lemma}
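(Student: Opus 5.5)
We verify the two clauses of Definition~\ref{def:prefix-coarsening} directly, for an input $u \in X_j$ (so $j = |u|$). Write $h := h^{\mathrm{eq}}_{\alpha^\omega,\beta^\omega}$. Both $h(u)$ and $g_e(u)$ are intersections of requirements indexed by $s \geq 1$. Each requirement is triggered only when $\alpha^\omega(s) = j$ or $\beta^\omega(s) = j$, with the paired position strictly below $j$. The only difference is that $g_e$ keeps just the indices $s < \ell$, whereas $h$ keeps all $s \geq 1$.

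\emph{Second clause ($j \geq \tau_e$).} A requirement of $g_e$ at an index $1 \leq s < \ell$ needs $\alpha^\omega(s) = j$ or $\beta^\omega(s) = j$. Since $\alpha^\omega(s) = \alpha(s)$ on $\operatorname{dom}(\alpha)$ and $\alpha$ is strictly increasing, we have $\alpha^\omega(s) \leq \alpha(\ell-1) < \tau_e \leq j$; the same holds for $\beta$. So no requirement fires, and $g_e(u) = \Sigma$.

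\emph{First clause ($j < \tau_e$).} We must show that no index $s \geq \ell$ generates a requirement of $h$. Suppose $s \geq \ell$ and $\alpha^\omega(s) = j$. Because $\alpha^\omega$ is strictly increasing, $j = \alpha^\omega(s) \geq \alpha^\omega(\ell) > \alpha(\ell-1)$. The paired condition $\beta^\omega(s) < j$ then gives $\beta^\omega(s) \geq \beta^\omega(\ell) > \beta(\ell-1)$. It remains to show this is incompatible with $j < \tau_e = 1 + \max(\alpha(\ell-1), \beta(\ell-1))$. From $j > \alpha(\ell-1)$ and $j \leq \max(\alpha(\ell-1), \beta(\ell-1))$ we get $j \leq \beta(\ell-1)$. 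But $\beta(\ell-1) < \beta^\omega(s) < j$, which is a contradiction. The case $\beta^\omega(s) = j$ is symmetric. Hence for $j < \tau_e$ the requirement sets of $h$ and $g_e$ coincide, so $g_e(u) = h(u)$.

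The only delicate point is this last step, namely making sure that the index-$s$ requirement with $s \geq \ell$ is ruled out by combining both inequalities; monotonicity of $\alpha$ alone does not suffice. The remaining steps are direct unfoldings of the definitions.
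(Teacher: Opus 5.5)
Your proof is correct and takes essentially the same route as the paper. For $s \geq \ell$, both progressions exceed their values at $\ell-1$, so the later of the two positions at index $s$ (the one that must equal the current round) is at least $\tau_e$; indices $1 \leq s < \ell$ can only fire at rounds up to $\max(\alpha(\ell-1),\beta(\ell-1))$. Your contradiction argument in the first clause is an unfolded version of the paper's one-line observation that ``the later of the two positions at index $s$ is at least $\tau_e$.''
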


\begin{proof}
  Both $g_e$ and $h^{\mathrm{eq}}_{\alpha^\omega,\beta^\omega}$ ignore the index $s=0$. They also agree on every index $1 \leq s < \ell$. For $s \geq \ell$, both $\alpha^\omega (s)$ and $\beta^\omega (s)$ are strictly larger than the corresponding values at $s=\ell-1$. Hence the later of the two positions at index $s$ is at least $\tau_e$. Therefore no requirement with $s \geq \ell$ can be active on a round $t < \tau_e$. It follows that $g_e$ agrees with $h^{\mathrm{eq}}_{\alpha^\omega,\beta^\omega}$ on all $X_t$ for $t < \tau_e$.

  Conversely, all requirements generated by indices $1 \leq s < \ell$ occur no later than $\max(\alpha(\ell-1),\beta(\ell-1))$. Thus $g_e (x)=\Sigma$ for every $x \in X_t$ with $t \geq \tau_e$. This is precisely the definition of a prefix coarsening.
\end{proof}

\begin{lemma}
\label{lem:ars-conjunction}
  Let $u \in \Sigma^n$, and let $R=(L,\Phi)$ be an ARS for $u$. For each $e \in \Phi$, let $g_e$ be the coarsened equation above, and put $C_R \mathrel{:=} \{g_e : e \in \Phi\}$. Also define
  \[
    Z_R \mathrel{:=} \left\{t<n : \exists\, (\alpha,\beta) \in \Phi,\ \alpha(0) < \beta(0) = t\right\}.
  \]
  Then:
  \begin{enumerate}
    \item for every $t<n$, $u[t] \in h_{C_R}(u[:t])$;
    \item for every $t<n$ such that $(u[t],t) \notin L$ and $t \notin Z_R$, we have $h_{C_R}(u[:t])=\{u[t]\}$.
  \end{enumerate}
  Moreover, $|C_R| \leq |\Phi|$ and $|Z_R| \leq |\Phi|$.
\end{lemma}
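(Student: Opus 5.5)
The plan is to check the claims straight from the definitions of $g_e$, of $h_{C_R}$ as a pointwise intersection, and of an ARS (Definition~\ref{def:ars}). The cardinality bounds come first because they are immediate. The map $e\mapsto g_e$ sends $\Phi$ onto $C_R$, so $|C_R|\le|\Phi|$. Each pair $(\alpha,\beta)\in\Phi$ determines the single value $\beta(0)$, so at most one $t$ per pair enters $Z_R$, giving $|Z_R|\le|\Phi|$.

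For part (1), fix $t<n$ and $e=(\alpha,\beta)\in\Phi$ with $|\alpha|=|\beta|=\ell$. It suffices to show $u[t]\in g_e(u[:t])$ for every $e$. The requirements of $g_e$ at the prefix $u[:t]$ (so $j=t$) come from indices $1\le s<\ell$. They take one of two forms:
\begin{itemize}
\item $\alpha(s)=t$ and $\beta(s)<t$, forcing $y=u[\beta(s)]$;
\item the symmetric case with $\alpha$ and $\beta$ swapped.
\end{itemize}
In the first case both positions are $<n$, since $\beta(s)<t<n$. The ARS condition $u[\alpha]=u[\beta]$, read coordinatewise, therefore gives $u[\alpha(s)]=u[\beta(s)]$, that is, $u[t]=u[\beta(s)]$, so the requirement holds for $y=u[t]$. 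The symmetric case is identical. One subtlety has to be handled here. The ARS definition reads $u[\alpha]$ only at positions below $|u|$, while $|\alpha|=|\beta|$ is demanded of the intervals themselves. I will interpret $u[\alpha]=u[\beta]$ as the coordinatewise equality $u[\alpha(s)]=u[\beta(s)]$ whenever both positions are $<n$. Because $\alpha$ and $\beta$ are strictly increasing and the truncation is at a common length, equality of the truncated words should give exactly this. I expect this to be the main point needing care: if one progression is truncated earlier, the two words have different lengths, and I will argue that equality of the words then forces both to be truncated at the same index $s$.

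For part (2), fix $t$ with $(u[t],t)\notin L$ and $t\notin Z_R$. The third ARS condition supplies a pair $(\alpha,\beta)\in\Phi$ and some $s\in\operatorname{dom}(\alpha)$ with $\alpha(s)<\beta(s)=t$. Since $t\notin Z_R$, we cannot have $s=0$, so $1\le s<\ell$. The pair $e=(\alpha,\beta)$ therefore generates, at $j=t$, the requirement $y=u[\alpha(s)]$ via the symmetric bullet ($\beta(s)=t$, $\alpha(s)<t$). Hence $h_{C_R}(u[:t])\subseteq g_e(u[:t])\subseteq\{u[\alpha(s)]\}$. Part (1) shows $u[t]\in h_{C_R}(u[:t])$, so the intersection is nonempty and equals $\{u[t]\}$.
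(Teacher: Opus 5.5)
Your proof is correct and follows the paper's argument essentially step for step. The paper likewise first extracts the coordinatewise identity $u[\alpha(s)]=u[\beta(s)]$ whenever both positions are below $n$; by strict monotonicity these are exactly the $s$-th letters of the equal words $u[\alpha]$ and $u[\beta]$, so your truncation worry resolves at once. It then checks each requirement of $g_e$ for part 1, and uses $t\notin Z_R$ to force $s\geq 1$ in part 2. The only cosmetic difference is that in part 2 you obtain $u[\alpha(s)]=u[t]$ from $u[t]\in h_{C_R}(u[:t])\subseteq\{u[\alpha(s)]\}$, rather than re-invoking the coordinatewise identity; this is equally valid.
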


\begin{proof}
  We first record a simple consequence of the equation condition in the definition of an ARS. If $e=(\alpha,\beta) \in \Phi$ and $\ell\mathrel{:=}|\alpha|=|\beta|$, then for every $s<\ell$ such that $\alpha(s)<n$ and $\beta(s)<n$,
  \[
    u[\alpha(s)] = u[\beta(s)].
  \]
  Indeed, the arithmetic intervals are strictly increasing on their common domain $[\ell]$, so when both positions are $<n$, the two symbols above occur in the same $s$-th position of the two words $u[\alpha]$ and $u[\beta]$. These words are equal because $R$ is an ARS for $u$.

  We prove the first claim. Fix $j<n$ and put $w\mathrel{:=}u[:j]$. It is enough to show that $u[j] \in g_e (w)$ for every $e=(\alpha,\beta) \in \Phi$. Let $\ell\mathrel{:=}|\alpha|=|\beta|$, and consider any requirement generated by $g_e$ on the prefix $w$. If this requirement is generated by an index $s$ with $1 \leq s < \ell$, $\alpha(s)=j$, and $\beta(s)<j$, then $w[\beta(s)]=u[\beta(s)]$ and the preceding observation gives
  \[
    u[j] = u[\alpha(s)] = u[\beta(s)] = w[\beta(s)],
  \]
  so $u[j]$ satisfies this requirement. The case $\beta(s)=j$ and $\alpha(s)<j$ is symmetric, giving
  \[
    u[j] = u[\beta(s)] = u[\alpha(s)] = w[\alpha(s)].
  \]
  Hence $u[j]$ satisfies every requirement imposed by $g_e$ on the prefix $w$. Since $e$ was arbitrary, $u[j]$ belongs to every $g_e (u[:j])$, and therefore
  \[
    u[j] \in h_{C_R}(u[:j]).
  \]

  We prove the second claim. Fix $j<n$ such that $(u[j],j) \notin L$ and $j \notin Z_R$, and put $w\mathrel{:=}u[:j]$. Since $R$ is an ARS for $u$ and the position $j$ is not literal, there are some $e=(\alpha,\beta) \in \Phi$ and some $s \in \operatorname{dom}(\alpha)$ such that
  \[
    \alpha(s) < \beta(s) = j.
  \]
  Writing $\ell\mathrel{:=}|\alpha|=|\beta|$, this means $s<\ell$. Moreover $s$ cannot be $0$: if $s=0$, then $\alpha(0)<\beta(0)=j$, which would put $j$ in $Z_R$. Hence $1 \leq s < \ell$. For this particular $e$, the finite equation $g_e$ imposes the requirement
  \[
    y = w[\alpha(s)].
  \]
  Since $\alpha(s)<j$, the right-hand side is $u[\alpha(s)]$. By the observation at the beginning of the proof,
  \[
    u[\alpha(s)] = u[\beta(s)] = u[j].
  \]
  Therefore $g_e (u[:j]) \subseteq \{u[j]\}$, and so also $h_{C_R}(u[:j]) \subseteq \{u[j]\}$. The first claim gives the reverse inclusion in the form $u[j] \in h_{C_R}(u[:j])$, hence
  \[
    h_{C_R}(u[:j]) = \{u[j]\}.
  \]

  Finally, $C_R$ is the image of the function from $\Phi$ that sends $e$ to $g_e$, so $|C_R| \leq |\Phi|$. Likewise, each $e=(\alpha,\beta) \in \Phi$ contributes at most one element $\beta(0)$ to $Z_R$, and duplicates can only decrease cardinality. Therefore $|Z_R| \leq |\Phi|$.
\end{proof}

\begingroup
\renewcommand{\proofname}{Proof of Theorem~\ref{thm:arc-muwu}.}
\begin{proof}
  Let $x \in \Sigma^n$, and let $R=(L,\Phi)$ be a minimum-size ARS for $x$, so $|L|+|\Phi|=\operatorname{ARC}(x)$. Write $m\mathrel{:=}\operatorname{ARC}(x)$. The predictor ARC-MUWU is the Anytime-MUWU learner on the infinite specialist class $\mathcal{H}$ defined above, with instances $X_t=\Sigma^t$.

  By Theorem~\ref{thm:online-anytime}, ARC-MUWU satisfies the single-hypothesis mistake bound
  \[
    B_N \mathrel{:=} \log_2 \max\left(\left|\mathcal{H}_N\right|,1\right).
  \]
  Since the classes $\mathcal{H}_N$ are monotone, $\left(B_N\right)$ is nondecreasing. By Lemma~\ref{lem:arc-hyp-count},
  \begin{equation}
    B_N = O(\log(N+1)).
    \label{eq:arc-muwu-single-bound}
  \end{equation}

  Let $C_R$ and $Z_R$ be as in Lemma~\ref{lem:ars-conjunction}. By Lemma~\ref{lem:finite-is-coarsening}, every member of $C_R$ is a prefix coarsening of a hypothesis in $\mathcal{H}$. By the first part of Lemma~\ref{lem:ars-conjunction}, the observed sequence is consistent with the conjunction $h_{C_R}$. Applying Proposition~\ref{prop:conj-lift} and \eqref{eq:arc-muwu-single-bound} gives
  \begin{equation}
    \left|\left\{t<n : \hat{x}_t \notin h_{C_R}(x[:t])\right\}\right| \leq |C_R| \cdot O(\log n) \leq O(|\Phi| \log n).
    \label{eq:arc-conj-mistakes}
  \end{equation}

  We now pass from mistakes against the conjunction to ordinary prediction mistakes. Rounds in
  \[
    P_R \mathrel{:=} \left\{t<n : (x[t],t) \in L\right\} \cup Z_R
  \]
  contribute at most $|P_R| \leq |L|+|\Phi|$ ordinary mistakes. On every round $t \notin P_R$, the second part of Lemma~\ref{lem:ars-conjunction} gives $h_{C_R}(x[:t])=\{x[t]\}$. Hence an ordinary mistake $\hat{x}_t \neq x[t]$ on such a round implies $\hat{x}_t \notin h_{C_R}(x[:t])$, and is counted in \eqref{eq:arc-conj-mistakes}. Therefore
  \[
    M_{\text{ARC-MUWU}} (x) \leq |L|+|\Phi|+O(|\Phi| \log n) = O(m \log n).
  \]
  This is the claimed $O(m \log n)$ bound.

\end{proof}
\endgroup

\section{Automaticity Separations}
\label{sec:apx-auto}

In Example~\ref{ex:thue_morse} we've seen the Thue-Morse sequence, in which both the LTR $2$-automaticity and the RTL $2$-automaticity of all prefixes are uniformly bounded. More generally, for any $u \in \Sigma^\omega$, $\mathrm{AC}^{\mathrm{R}}_k(u[:n]) = O(1)$ iff $\mathrm{AC}^{\mathrm{L}}_k(u[:n]) = O(1)$. Any such $u$ is called a \emph{$k$-automatic sequence}, and their study is an interesting subfield of combinatorics on words \citep{Allouche2003}. However, the magnitude of $\mathrm{AC}^{\mathrm{R}}_k$ and $\mathrm{AC}^{\mathrm{L}}_k$ can be quite different. Indeed, while properties like the sum of digits are direction-agnostic, other properties strongly favor one reading direction over the other. This leads to exponential separations in complexity.

\begin{example}
  Let $\Sigma=\{0,1\}$ and consider any $n\geq1$. Let $u_n \in \Sigma^{4^n}$ be defined s.t. $u_n[0]=0$ and for any $0<t<4^n$,
  \[
    u_n[t]=\langle t \rangle_2^{2 n}\bigl[\min(f(t) + n,2n-1)\bigr]
  \]
  where $f(t)<2n$ is defined to satisfy $\langle t \rangle_2^{2 n}[f(t)]=1$ and $\langle t \rangle_2^{2 n}[i]=0$ for any $i<f(t)$ (i.e. $f(t)$ is the position of the left-most 1 in the binary expansion of $t$).
  \begin{itemize}
    \item \textbf{LTR Automaticity:} $\mathrm{AC}^{\mathrm{L}}_2(u_n) = O(n)$. The automaton reads from the left (MSDF). It stays in an initial state until it sees the first ``1''. It then transitions to a chain of $n$ states to count down the positions. The output is determined by the bit seen at the $n$-th step after the first ``1''.
    \item \textbf{RTL Automaticity:} $\mathrm{AC}^{\mathrm{R}}_2(u_n) = \Omega(2^n)$. Reading from the right (LSDF), the automaton does not know when the string will end (identifying the most significant 1) until the process terminates. To correctly output the bit located $n$ positions away from the eventual most significant 1, it must maintain a sliding window of the last $n$ bits seen. This requires a state space exponential in $n$.
  \end{itemize}
  Let $v_n \in \Sigma^{4^n}$ be defined s.t. $v_n[0]=0$ and for any $0<t<4^n$,
  \[
    v_n[t]=\langle t \rangle_2^{2 n}\bigl[\max(g(t) - n,0)\bigr]
  \]
  where $g(t)<2n$ is defined to satisfy $\langle t \rangle_2^{2 n}[g(t)]=1$ and $\langle t \rangle_2^{2 n}[i]=0$ for any $i>g(t)$ (i.e. $g(t)$ is the position of the right-most 1 in the binary expansion of $t$).
  \begin{itemize}
    \item \textbf{RTL Automaticity:} $\mathrm{AC}^{\mathrm{R}}_2(v_n) = O(n)$, by a symmetric argument.
    \item \textbf{LTR Automaticity:} $\mathrm{AC}^{\mathrm{L}}_2(v_n) = \Omega(2^n)$, by a symmetric argument.
  \end{itemize}
\end{example}

The automaticity of a sequence is also highly sensitive to the choice of base $k$. A sequence that is simple in one base may be complex in another.

\begin{example}
\label{ex:valuation}
  Let $u$ be the infinite binary sequence defined by the parity of the 2-adic valuation of $t+1$. That is, $u[t] \equiv \nu_2(t+1) \pmod{2}$, where $\nu_2(n)$ is the maximal natural number s.t. $2^{\nu_2(n)}$ divides $n$.

  While it's easy to upper bound $\mathrm{AC}^{\mathrm{R}}_2$, it is possible to also show that automaticity in odd bases is high \citep[Theorem~15.4.3]{Allouche2003}. Specifically, for the prefix of length $n$:
  \begin{itemize}
    \item \textbf{Base 2:} $\mathrm{AC}^{\mathrm{R}}_2(u[:n]) = O(1)$.
    \item \textbf{Base $k$ (odd, $k \geq 3$):} $\mathrm{AC}^{\mathrm{R}}_k(u[:n]) = \Omega(\sqrt{n})$.
  \end{itemize}
\end{example}

This is a serious issue with automaticity-based predictors as candidate algorithms for practical applications, because in most cases there is no way to single out a preferred value of $k$. To overcome this limitation we need the arithmetic repetition complexity of Section~\ref{sec:arc}.

\section{Upper Bounds on ARC and ZLC}
\label{sec:apx-arc-bounds}

In this appendix we prove Proposition~\ref{prp:arc-vs-lzc}, Proposition~\ref{prp:arc-vs-zlc}, Proposition~\ref{prp:arc-vs-szlc} and Proposition~\ref{prp:zlc-vs-acr}.

\subsection{LZ77 and proof of Proposition~\ref{prp:arc-vs-lzc}}
\label{sec:apx-arc-vs-lzc}

We recall the definition of LZ77 compression, originally proposed in \citet{Ziv1977}. The empty word has the empty LZ77 factorization. For $n > 0$, an \emph{LZ77-type factorization} of a word $x \in \Sigma^n$ is a decomposition
\[
  x = f_0 f_1 \dots f_{z-1}
\]
into non-empty factors satisfying the following condition. For every $j < z$, denote the starting position of $f_j$ by
\[
  l_j \mathrel{:=} |f_0 f_1 \dots f_{j-1}|,
\]
with $l_0 = 0$. Then either:
\begin{itemize}
  \item $|f_j| = 1$. In this case, $f_j$ is a \emph{literal factor}.
  \item There exists a source position $i_j < l_j$ such that
  \[
    f_j = x[i_j : i_j + |f_j|].
  \]
  In this case, $f_j$ is a \emph{copy factor} with source $i_j$.
\end{itemize}

A copy factor is allowed to overlap with its source: it may happen that $i_j + |f_j| > l_j$.

The \emph{greedy LZ77 factorization} is the unique LZ77-type factorization in which every factor is chosen as long as possible given its starting position. That is, the following two conditions hold:
\begin{enumerate}
  \item For every literal factor $f_i$, the symbol $f_i$ doesn't appear in $x[:l_i]$.
  \item For every $j < z - 1$ there is no source position $i < l_j$ such that
  \[
    x[l_j : l_j + |f_j| + 1] = x[i : i + |f_j| + 1].
  \]
\end{enumerate}
The \emph{LZ77 complexity} $\operatorname{LZC}(x)$ is the number of factors in this greedy factorization. This is also the minimum number of factors in any LZ77-type factorization of $x$.\footnote{See Theorem~10 in \citet{Storer1982}.}

\begingroup
\renewcommand{\proofname}{Proof of Proposition~\ref{prp:arc-vs-lzc}.}
\begin{proof}
  If $x$ is empty, the empty ARS witnesses $\operatorname{ARC}(x)=0=\operatorname{LZC}(x)$. Assume from now on that $n\mathrel{:=}|x| > 0$. Let $z \mathrel{:=} \operatorname{LZC}(x)$, and choose an LZ77-type factorization
  \[
    x = f_0 f_1 \dots f_{z-1}
  \]
  with exactly $z$ factors. For every $j < z$, let $l_j \mathrel{:=} |f_0 f_1 \dots f_{j-1}|$ and $m_j \mathrel{:=} |f_j|$.

  We construct an arithmetic repetition system $R=(L,\Phi)$ for $x$. For every literal factor $f_j$, put the literal
  \[
    (x[l_j], l_j)
  \]
  into $L$. For every copy factor $f_j$ with source position $i_j < l_j$, define two arithmetic intervals of step $1$:
  \[
    \alpha_j \mathrel{:=} (1, 0, i_j, i_j + m_j), \quad \beta_j \mathrel{:=} (1, 0, l_j, l_j + m_j),
  \]
  and put $(\alpha_j,\beta_j)$ into $\Phi$.

  We verify that $R$ is an ARS in the sense of Definition~\ref{def:ars}. The literal constraints are correct by construction. If $f_j$ is a copy factor, then $|\alpha_j|=|\beta_j|=m_j$, and for every $t < m_j$,
  \[
    \alpha_j(t) = i_j + t, \quad \beta_j(t) = l_j + t.
  \]
  Since $f_j = x[i_j : i_j + m_j]$ and $f_j = x[l_j : l_j + m_j]$, we have
  \[
    x[\alpha_j] = x[i_j : i_j + m_j] = x[l_j : l_j + m_j] = x[\beta_j].
  \]
  Thus all equality constraints in $\Phi$ are valid.

  It remains to check coverage. Let $p < n$, and let $j$ be the unique index such that $l_j \leq p < l_j + m_j$. Write $p = l_j + t$ with $t < m_j$. If $f_j$ is a literal factor, then $m_j=1$ and $p=l_j$, so $(x[p],p) \in L$. If $f_j$ is a copy factor, then $(\alpha_j,\beta_j) \in \Phi$ satisfies
  \[
    \beta_j(t) = l_j + t = p, \quad \alpha_j(t) = i_j + t < l_j + t = p,
  \]
  because $i_j < l_j$. Hence $p$ is covered by an earlier position as required.

  Finally, each factor contributes exactly one object: a literal contributes one element to $L$, and a copy factor contributes one element to $\Phi$. Therefore
  \[
    \operatorname{ARC}(x) \leq |R| = |L| + |\Phi| = z = \operatorname{LZC}(x).
  \]
\end{proof}
\endgroup

\subsection{Proof of Proposition~\ref{prp:arc-vs-szlc}}
\label{sec:apx-arc-vs-szlc}

Let $x \in \Sigma^n$, and let $P=(Q,q_0,\delta)$ be a synchronous ZLP with $x \sqsubseteq \operatorname{val}(P)$. Write $\overline{Q} \mathrel{:=} Q \sqcup \Sigma$, $d_q \mathrel{:=} |\delta(q)|$ for $q \in Q$, and $\ell(z) \mathrel{:=} |\operatorname{val}_P(z)|$ for $z \in \overline{Q}$. The case $n=0$ is witnessed by the empty ARS, so assume $n>0$.

\textbf{Strategy.} Each internal vertex $q \in Q$ corresponds to a family of arithmetic-progression copies of $\operatorname{val}_P(q)$ sitting inside $x$, all of common step $\phi(q)$. We single out one of them as the \emph{representative} copy of $q$, and record the others through equality-with-the-representative equations in $\Phi$. Edges to terminal children produce literals in $L$. The bookkeeping is arranged so that each edge of $P$ contributes at most one constraint to $R$, giving $|R| \leq |P|$.

\textbf{Synchrony.} Fix a synchronizing map $\phi:Q\to\mathbb{N}$ for $P$:
\begin{equation}
  \phi(q_0)=1, \quad \text{and} \quad \phi(q') = d_q \cdot \phi(q) \ \text{whenever } q' \in Q \text{ appears in } \delta(q).
  \label{eq:szlc-sync}
\end{equation}

\textbf{Occurrence residues.} For each $q \in Q$, define recursively the set $C_q$ of \emph{occurrence residues}:
\begin{itemize}
  \item $C_{q_0} \mathrel{:=} \{0\}$;
  \item if $c \in C_q$ and $\delta(q)[j] = q' \in Q$ for some $j < d_q$, then $c + \phi(q)j \in C_{q'}$.
\end{itemize}
By \eqref{eq:szlc-sync}, $C_q \subseteq [\phi(q)]$. The ZLP axioms (acyclicity together with $q_0$ being the unique source) imply that every internal vertex is reachable from $q_0$, so every $C_q$ is non-empty. Set
\[
  c_q \mathrel{:=} \min C_q.
\]

\textbf{Key identity.} The geometric meaning of $C_q$ is captured by the following claim, which we prove first.

\begin{lemma}
\label{lem:szlp-occ}
For every $q \in Q$, every $c \in C_q$, and every $t \in \mathbb{N}$ with $c + \phi(q)t < n$:
\[
  t < \ell(q) \quad \text{and} \quad x[c+\phi(q)t] = \operatorname{val}_P(q)[t].
\]
\end{lemma}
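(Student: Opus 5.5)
Let $v \mathrel{:=} \operatorname{val}(P)$. Since $x \sqsubseteq v$, it suffices to prove the stronger statement with $v$ in place of $x$ and $|v|$ in place of $n$. That is, for $c \in C_q$ and $c+\phi(q)t < |v|$ we have $t<\ell(q)$ and $v[c+\phi(q)t] = \operatorname{val}_P(q)[t]$. The claim for $x$ then follows immediately, because $n \leq |v|$ and $x[i]=v[i]$ for $i<n$.

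I would argue by induction along the recursive definition of $C_q$. The natural order is a topological order of the DAG starting from $q_0$; equivalently, the induction runs on the length of the derivation placing $c$ into $C_q$. In the base case $q=q_0$, $c=0$ and $\phi(q_0)=1$, so the claim reads $v[t]=\operatorname{val}_P(q_0)[t]$ for $t<|v|$, which is trivial since $v=\operatorname{val}_P(q_0)$.

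For the inductive step, suppose $c'=c+\phi(q)j\in C_{q'}$ arises from $c\in C_q$, with $\delta(q)[j]=q'$, $j<d_q$, and the claim holding for $(q,c)$. Take $t$ with $c'+\phi(q')t<|v|$. By synchrony \eqref{eq:szlc-sync}, $\phi(q')=d_q\phi(q)$, so
\[
c'+\phi(q')t = c+\phi(q)(d_q t + j).
\]
Set $t' \mathrel{:=} d_q t+j$. The induction hypothesis gives $t'<\ell(q)$ and $v[c'+\phi(q')t]=\operatorname{val}_P(q)[t']$. Now $\operatorname{val}_P(q)=\operatorname{zip}(\operatorname{val}_P(\delta(q)[0]),\dots)$, and by \eqref{eq:zip}, for $t'<\ell(q)$ we have $\operatorname{val}_P(q)[t']=\operatorname{val}_P(\delta(q)[t' \bmod d_q])[\lfloor t'/d_q\rfloor]=\operatorname{val}_P(q')[t]$.

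The step needing the most care is confirming $t<\ell(q')$, i.e.\ that this symbol really exists. I would derive it from \eqref{eq:zip-length}. Write $\ell(q)=d_q m+j_0$ with $m=\min_i \ell(\delta(q)[i])$ and $j_0$ the first index attaining that minimum. Then $t'=d_qt+j<d_qm+j_0$ forces either $t<m\le \ell(q')$, or $t=m$ with $j<j_0$. In the second case, minimality of $j_0$ gives $\ell(\delta(q)[j])>m=t$. Either way $t<\ell(q')$, so the zip formula is correctly applied and both conclusions hold for $(q',c')$.
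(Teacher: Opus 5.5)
Your proof is correct and follows essentially the same argument as the paper: induct along how $c$ enters $C_q$, use synchrony to rewrite $c+\phi(q)t$ as position $d_{q_p}t+j$ of the parent's occurrence, and use \eqref{eq:zip-length} to get $t<\ell(q)$ before applying the zip identity. The only cosmetic difference is that you first prove the statement for $\operatorname{val}(P)$ and then restrict to $x$, whereas the paper uses $x \sqsubseteq \operatorname{val}_P(q_0)$ directly in the base case.
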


\begin{proof}
We argue by induction on the depth of $q$ in the graph of $P$.

\emph{Base ($q=q_0$).} Here $c=0$ and $\phi(q_0)=1$, so the claim reduces to: $t<n$ implies $t<\ell(q_0)$ and $x[t]=\operatorname{val}_P(q_0)[t]$. Both follow from $x \sqsubseteq \operatorname{val}_P(q_0)$, which already requires $n \leq \ell(q_0)$.

\emph{Step ($q\neq q_0$).} Fix $c \in C_q$. By the recursive definition, $c$ enters $C_q$ via some triple $(q_p,c_p,j)$ with $q_p \in Q$, $c_p \in C_{q_p}$, $j < d_{q_p}$, $\delta(q_p)[j]=q$, and
\[
  c = c_p + \phi(q_p)j.
\]
Using $\phi(q)=d_{q_p}\phi(q_p)$ from \eqref{eq:szlc-sync}, set $s \mathrel{:=} d_{q_p}t+j$, so that
\[
  c + \phi(q)t = c_p + \phi(q_p)(d_{q_p}t+j) = c_p + \phi(q_p)s.
\]
The inductive hypothesis at $(q_p,c_p,s)$ gives $s < \ell(q_p)$ and
\begin{equation}
  x[c+\phi(q)t] = x[c_p+\phi(q_p)s] = \operatorname{val}_P(q_p)[s].
  \label{eq:szlc-occ-step}
\end{equation}

We must now relate $\operatorname{val}_P(q_p)[s]$ to $\operatorname{val}_P(q)[t]$. Unfolding
\[
  \operatorname{val}_P(q_p) = \operatorname{zip}(\operatorname{val}_P(\delta(q_p)[0]), \dots, \operatorname{val}_P(\delta(q_p)[d_{q_p}-1]))
\]
and writing $m \mathrel{:=} \min_{i<d_{q_p}} \ell(\delta(q_p)[i])$ for the shortest child length and $i_*$ for the smallest index achieving it, the definition of $\operatorname{zip}$ gives $\ell(q_p)=d_{q_p}m+i_*$. From $s=d_{q_p}t+j < d_{q_p}m+i_*$ we deduce $t \leq m$, with two sub-cases:
\begin{itemize}
  \item $t<m$: every child is at least this long, in particular $t<\ell(q)$;
  \item $t=m$ and $j<i_*$: by minimality of $i_*$, the index $j$ does not achieve the min, so $\ell(q)>m=t$.
\end{itemize}

Either way, $t<\ell(q)$. Furthermore, the $\operatorname{zip}$ identity \eqref{eq:zip} yields
\[
  \operatorname{val}_P(q_p)[s]
  = \operatorname{val}_P(\delta(q_p)[s \bmod d_{q_p}])\left[\left\lfloor\frac{s}{d_{q_p}}\right\rfloor\right]
  = \operatorname{val}_P(q)[t],
\]
which combined with \eqref{eq:szlc-occ-step} finishes the step.
\end{proof}

\textbf{Visible part of an occurrence.} For $c \in C_q$, define
\[
  B_q(c) \mathrel{:=} \left|\{t \in \mathbb{N} : c+\phi(q)t < n\}\right|.
\]
The set being counted is an initial segment of $\mathbb{N}$, so $t < B_q(c)$ exactly when $c+\phi(q)t<n$. The map $c \mapsto B_q(c)$ is non-increasing in $c$, so $c_q$---being the minimum of $C_q$---gives the longest visible occurrence:
\[
  B_q(c) \leq B_q(c_q) \mathrel{=:} b_q \quad \text{for every } c \in C_q.
\]
Note that $b_{q_0}=n$.

\textbf{Construction of the ARS.} For every edge $(q,j)$ of $P$ with $q \in Q$ and $j < d_q$, set $z \mathrel{:=} \delta(q)[j]$ and
\[
  c_{q,j} \mathrel{:=} c_q + \phi(q)j.
\]
The positions of the $j$-th child within the representative occurrence of $q$ are
\begin{equation}
  c_q + \phi(q)(d_q t+j) = c_{q,j} + d_q\phi(q)t, \quad t=0,1,2,\dots
  \label{eq:szlc-child-positions}
\end{equation}
If $c_{q,j} \geq n$---equivalently $j \geq b_q$---the edge contributes no constraint: its first position is already past $x$. Assume henceforth $c_{q,j}<n$.

\begin{itemize}
  \item \emph{Terminal child.} If $z=a \in \Sigma$, add the literal
  \[
    (a,c_{q,j})
  \]
  to $L$.

  \item \emph{Internal child.} If $z=q' \in Q$, then by \eqref{eq:szlc-sync}, $\phi(q')=d_q\phi(q)$, and the recursive definition of $C_{q'}$ places $c_{q,j}$ in it. Set $b_{q,j} \mathrel{:=} B_{q'}(c_{q,j})$, which is positive since $c_{q,j}<n$. Since $c_{q'}$ is the minimum of $C_{q'}$, we have $c_{q'} \leq c_{q,j}$, with two sub-cases:
  \begin{itemize}
    \item $c_{q'} < c_{q,j}$: add the equation
    \begin{equation}
      ((\phi(q'), c_{q'}, 0, b_{q,j}), (\phi(q'), c_{q,j}, 0, b_{q,j}))
      \label{eq:szlc-added-equation}
    \end{equation}
    to $\Phi$.
    \item $c_{q'} = c_{q,j}$: this child occurrence already \emph{is} the representative occurrence of $q'$; add nothing.
  \end{itemize}
\end{itemize}

\textbf{Validity.} For a literal $(a,c_{q,j})$ added by the construction, applying Lemma~\ref{lem:szlp-occ} to the representative occurrence of $q$ at index $j < b_q$ gives
\[
  x[c_{q,j}] = \operatorname{val}_P(q)[j] = a.
\]

For an equation \eqref{eq:szlc-added-equation}, both tuples are arithmetic intervals: the residues $c_{q'}, c_{q,j} \in [\phi(q')]$ satisfy the residue bound, the common step is $\phi(q')$, and the common length $b_{q,j}$ is positive. For every $t < b_{q,j}$, the definition of $b_{q,j}$ gives $c_{q,j}+\phi(q')t<n$, and from $c_{q'}<c_{q,j}$ also $c_{q'}+\phi(q')t<n$. Two applications of Lemma~\ref{lem:szlp-occ} to the occurrences of $q'$ at residues $c_{q,j}$ and $c_{q'}$ yield
\[
  x[c_{q'}+\phi(q')t] = \operatorname{val}_P(q')[t] = x[c_{q,j}+\phi(q')t],
\]
and the strict inequality $c_{q'}<c_{q,j}$ ensures that the source position is strictly less than the target throughout.

\textbf{Coverage.} We prove the following claim by reverse induction in the graph order of $P$: \emph{for every $q \in Q$ and every $s < b_q$, the position $c_q+\phi(q)s$ is covered by $R$---meaning either it appears as a literal in $L$, or it is the target of some equation in $\Phi$.}

Fix $q$ and $s < b_q$, and write $j \mathrel{:=} s \bmod d_q$, $t \mathrel{:=} \lfloor s/d_q\rfloor$, $z \mathrel{:=} \delta(q)[j]$. The position to cover is
\[
  p \mathrel{:=} c_q+\phi(q)s = c_{q,j}+d_q\phi(q)t,
\]
by \eqref{eq:szlc-child-positions}. Since $j \leq s < b_q$, we have $c_{q,j}<n$, so the edge $(q,j)$ contributed a constraint.

\begin{itemize}
  \item If $z=a \in \Sigma$: a terminal has length one, so $t=0$ and $s=j$. The literal $(a,c_{q,j})$ covers $p=c_{q,j}$.

  \item If $z=q' \in Q$: from \eqref{eq:szlc-sync}, $d_q\phi(q)=\phi(q')$, so $p=c_{q,j}+\phi(q')t$, and $p<n$ gives $t<b_{q,j}$.
  \begin{itemize}
    \item If $c_{q'} < c_{q,j}$: at index $t < b_{q,j}$, the equation \eqref{eq:szlc-added-equation} has source position $c_{q'}+\phi(q')t$ strictly less than target position $c_{q,j}+\phi(q')t=p$. Hence $p$ is covered.
    \item If $c_{q'} = c_{q,j}$: then $p=c_{q'}+\phi(q')t$ is a representative position of $q'$, and
    \[
      t < b_{q,j} = B_{q'}(c_{q'}) = b_{q'}.
    \]
    The induction hypothesis at $q'$ covers $p$.
  \end{itemize}
\end{itemize}

This proves the claim. Specializing at the root, $c_{q_0}=0$, $\phi(q_0)=1$, $b_{q_0}=n$, so every position of $x$ is covered. Together with validity, this shows that $R$ is an ARS for $x$.

\textbf{Size bound.} Each edge of $P$ contributes at most one element of $L \sqcup \Phi$, so
\[
  \operatorname{ARC}(x) \leq |R| = |L| + |\Phi| \leq \sum_{q\in Q}|\delta(q)| = |P|.
\]
Taking the minimum over synchronous ZLPs $P$ with $x \sqsubseteq \operatorname{val}(P)$ gives
\[
  \operatorname{ARC}(x) \leq \operatorname{sZLC}(x).
\]

\subsection{Proof of Proposition~\ref{prp:arc-vs-zlc}}
\label{sec:apx-arc-vs-zlc}

We first observe an elementary ``covering'' fact about arithmetic progressions which will be used to replace the single representative occurrence of the synchronous case.

\begin{lemma}
\label{lem:rep-progressions}
Let $n \geq 1$, and let $S$ be a finite set of pairwise disjoint infinite arithmetic progressions
\[
  \rho(k) = mk+i, \quad m \geq 2, \quad 0 \leq i < m.
\]
Then there is a subset $A \subseteq S$ with $|A| < \sqrt[3]{2n}+1$ such that for every $\rho \in S$ and every $k \in \mathbb{N}$ with $\rho(k)<n$, either $\rho \in A$, or there is $\eta \in A$ such that
\begin{equation}
  \eta(k) < \rho(k).
  \label{eq:rep-progressions-cover}
\end{equation}
Moreover, if $\rho \in S$, $\rho \notin A$, and $B_\rho \mathrel{:=} |\{k \in \mathbb{N} \mid \rho(k)<n\}| > 0$, then $[B_\rho]$ can be partitioned into at most two integer intervals $I_0,I_1$ such that for each non-empty $I_l$ there is some $\eta_l \in A$ with
\[
  \eta_l(k) < \rho(k) \quad \text{for all } k \in I_l.
\]
\end{lemma}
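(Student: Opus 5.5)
The proof rests on one elementary observation. For two progressions $\eta(k)=m_\eta k+i_\eta$ and $\rho(k)=m_\rho k+i_\rho$, the set of $k$ with $\eta(k)<\rho(k)$ is $\{k:(m_\eta-m_\rho)k<i_\rho-i_\eta\}$. This set is always an initial segment of $\mathbb{N}$, a final segment, all of $\mathbb{N}$, or empty. In particular, if $m_\eta\le m_\rho$ and $i_\eta<i_\rho$, then $\eta$ dominates $\rho$ for every $k$. The second (``moreover'') claim will follow from the first once $A$ is built so that every $\rho\notin A$ is dominated on an initial segment of $[B_\rho]$ by one $\eta_0\in A$ and on the complementary final segment by one $\eta_1\in A$. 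Then $I_0,I_1$ are simply these two segments. I will also discard every $\rho$ with $i_\rho\ge n$, since these have $B_\rho=0$ and impose no requirement. Disjointness gives distinct starting points $\rho(0)=i_\rho$, and it gives $\sum_\rho B_\rho\le n$.

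\emph{Step 1 (greedy chain).} I build $\eta_0,\eta_1,\dots$ as follows. Let $\eta_0$ be the element of $S$ with the smallest residue. Let $\eta_{j+1}$ be the element with the smallest residue among those whose step is strictly smaller than $m_{\eta_j}$. Steps then strictly decrease and residues strictly increase along the chain. Every $\rho\in S$ that is not in the chain has a unique $j$ with $m_{\eta_{j}}\le m_\rho<m_{\eta_{j-1}}$, where $m_{\eta_{-1}}:=\infty$. By construction $i_{\eta_j}<i_\rho$, so $\eta_j$ dominates $\rho$ for all $k$. Hence the full chain would already satisfy \eqref{eq:rep-progressions-cover}, with a single interval. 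The difficulty is that the chain can be too long.

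\emph{Step 2 (truncation and size bound).} I would keep only the first $T\approx\sqrt[3]{2n}$ chain elements, together with possibly a last one. I then need to show that each discarded chain element, and each $\rho$ previously charged to one, is handled by two kept elements. One kept element $\eta_a$ has a larger step and a smaller residue, so it dominates $\rho$ on an initial segment. The other, $\eta_b$, has a smaller step but a larger residue, so it dominates $\rho$ on a final segment. The task is to choose the kept indices so that these two segments together cover $[B_\rho]$. Equivalently, the crossover point $k^*=(i_{\eta_b}-i_\rho)/(m_\rho-m_{\eta_b})$ must lie beyond the point where $\eta_a$ stops dominating, or beyond $B_\rho$ altogether.

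For the counting I plan to use three constraints on the chain: the residues are distinct and less than $n$, the steps are distinct and at least $2$, and $\sum B_\rho\le n$. Let $t$ be the number of chain elements whose visible length $B$ is at least about $t$. These elements must have steps at most about $n/t$, and the steps are distinct. Balancing visible length, number of distinct steps and residue spread gives a cube-root threshold: a chain of length $L$ forces roughly $L^3/2$ visible positions, which yields $L<\sqrt[3]{2n}+1$ for the part that genuinely needs representatives. Beyond that point, the visible parts are short enough that a preceding kept element covers the initial segment and a later one covers the rest.

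\emph{Main obstacle.} I expect the main obstacle to be Step 2: proving the $\sqrt[3]{2n}$ bound exactly, with constant $2$, while simultaneously guaranteeing that the two dominating segments exhaust $[B_\rho]$. My first attempt would be an explicit inequality of the form
\[
\sum_{j<L} (j+1)\,(L-j)\le 2n
\]
for the ``hard'' chain elements. This would be derived from the facts that $m_{\eta_j}\ge L-j+1$ and that consecutive residues differ by at least $1$. If that fails, I would fall back on choosing $A$ as the chain elements whose step lies below a threshold, and verify the two-interval cover directly.
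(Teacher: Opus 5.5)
\textbf{Verdict: genuine gap.} Step 2 is not just an obstacle: neither of your selection rules yields a valid cover of the required size, and the cube-root count is not attached to any well-defined set.

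Your Step 1 is correct: the greedy chain is a cover in which every non-chain element is dominated at all $k$. Your opening observation, that $\{k:\eta(k)<\rho(k)\}$ is an initial segment, a final segment, everything or nothing, already gives the ``moreover'' part for \emph{any} covering set $A$. Take the initial piece with the largest right end and the final piece with the smallest left end.

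\emph{Counterexample to both rules.} Fix an integer $F\ge\max(4,\sqrt[3]{2n})$ with $5F+3<n$, an integer $N\ge F$, and $Q>n$ divisible by $\operatorname{lcm}(2F,2F+2)$. Let $S$ consist of $\eta_a(k)=Q(N-a)k+a$ for $a<F$, together with $\rho(k)=(2F+2)k+F$ and $\lambda(k)=2Fk+F+3$.
\begin{itemize}
\item These progressions are pairwise disjoint. Every relevant gcd exceeds the nonzero residue gap, except $\gcd(2F+2,2F)=2$, and there the residues differ by $3$.
\item Your greedy chain is $\eta_0,\dots,\eta_{F-1},\rho,\lambda$.
\item No element of $S$ is below $\eta_0$ at $k=0$. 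None is below $\rho$ at $k=1$, since $\lambda(1)=3F+3$ and $\eta_a(k)>n$ for $k\ge1$. None is below $\lambda$ at $k=2$, since $\rho(2)=5F+4$. All three values are $<n$.
\item Hence every cover contains $\eta_0,\rho,\lambda$, and $\{\eta_0,\rho,\lambda\}$ is itself a cover.
\item A chain prefix plus one further element contains both $\rho$ and $\lambda$ only if it has at least $F+2>\sqrt[3]{2n}+1$ elements.
\item Your step-threshold fallback must contain $\eta_0$, the element of largest step, and hence the whole chain.
\end{itemize}

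Your counting heuristic also fails for the chain itself. The family $\eta_a$, $a<n$, is on its own a chain of length $n$ with total visible length only $n$. Even granting $\sum_{j<L}(j+1)(L-j)\le 2n$, that sum equals $\binom{L+2}{3}$, which gives only $L<\sqrt[3]{12n}$, not the constant you need.

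\emph{Missing idea.} Take $A$ to be an inclusion-minimal covering subset, of $S$ or of your chain. Minimality gives, for each $\rho\in A$, a witness $k_\rho$ with $\rho(k_\rho)<n$ at which $\rho$ is the \emph{unique} minimizer of $\eta(k_\rho)$ over $\eta\in A$; ties are excluded by disjointness. Uniqueness makes the witnesses distinct. So after sorting $A=\{\rho_1,\dots,\rho_r\}$ by witness, $k_j\ge j-1$.

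Comparing $\rho_j$ and $\rho_{j+1}$ at $k_j$ and at $k_{j+1}$ gives $m_j>m_{j+1}$ and $i_{j+1}-i_j>(m_j-m_{j+1})k_j\ge j-1$. By integrality $i_{j+1}-i_j\ge j$, hence $i_r\ge r(r-1)/2$. Then $n>m_rk_r+i_r\ge(i_r+1)(r-1)+i_r\ge r^2(r-1)/2$, i.e.\ $r<\sqrt[3]{2n}+1$. The cube root comes from the distinctness of these witnesses, not from $\sum_\rho B_\rho\le n$.
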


\begin{proof}
Call a subset $A \subseteq S$ \emph{covering below} $n$ if it satisfies \eqref{eq:rep-progressions-cover}. The full set $S$ is covering, hence there is an inclusion-minimal covering subset $A$. We bound its size.

If $A$ is empty there is nothing to prove. Write $r \mathrel{:=} |A|$, and for each $\rho \in A$ choose a witness $k_\rho$ to the minimality of $A$: after removing $\rho$, the covering property fails at some progression and some parameter $k_\rho$. We claim that $\rho$ is the unique minimizer of the values $\eta(k_\rho)$, $\eta \in A$, and that $\rho(k_\rho)<n$.

Indeed, let $\xi \in S$ be a progression for which the cover fails after removing $\rho$. If $\xi \in A$ and $\xi \neq \rho$, then $\xi$ would cover itself even after $\rho$ is removed, so either $\xi=\rho$ or $\xi \notin A$. Since $A$ itself covers $\xi$, the failure after removing $\rho$ implies that $\rho(k_\rho)<\xi(k_\rho)$ in the second case, and that no $\eta \in A$ with $\eta \neq \rho$ satisfies $\eta(k_\rho)<\xi(k_\rho)$. Hence no such $\eta$ is below $\rho$ at $k_\rho$. In the first case the same conclusion is immediate. Ties are also impossible: if $\eta(k_\rho)=\rho(k_\rho)$ for $\eta \neq \rho$, then the two progressions intersect, contrary to the disjointness assumption. Thus $\rho$ is the unique minimizer, and in both cases $\rho(k_\rho)<n$.

The witnesses $k_\rho$ are therefore distinct. Enumerate $A$ as $\rho_1,\dots,\rho_r$ in increasing order of their witnesses, and write
\[
  \rho_j(k)=m_jk+i_j, \quad k_j \mathrel{:=} k_{\rho_j}.
\]
Then $k_j \geq j-1$. For $j<r$, uniqueness of the minima at $k_j$ and $k_{j+1}$ gives
\[
  m_j k_j+i_j < m_{j+1}k_j+i_{j+1},
\]
and
\[
  m_{j+1}k_{j+1}+i_{j+1} < m_j k_{j+1}+i_j.
\]
Since $k_j < k_{j+1}$, these inequalities imply $m_j>m_{j+1}$ and $i_j<i_{j+1}$. In particular,
\[
  i_{j+1}-i_j > (m_j-m_{j+1})k_j \geq k_j \geq j-1.
\]
The left hand side is an integer, so $i_{j+1}-i_j \geq j$. Therefore
\[
  i_r \geq \sum_{j=1}^{r-1} j = \frac{r(r-1)}{2}.
\]

Since $\rho_r(k_r)<n$, $k_r \geq r-1$, and $i_r < m_r$, we have, for $r\geq 2$,
\[
  n > m_r k_r+i_r \geq (i_r+1)(r-1)+i_r = r i_r+r-1 \geq \frac{r^2(r-1)}{2}.
\]
Hence $r-1<\sqrt[3]{2n}$ and $r<\sqrt[3]{2n}+1$. The case $r=1$ is trivial, so the first assertion follows.

It remains to prove the two-representative refinement. Fix $\rho \in S$ with $\rho \notin A$ and $B_\rho>0$. For each $\eta \in A$, the set
\[
  I_\eta \mathrel{:=} \{k \in [B_\rho] \mid \eta(k)<\rho(k)\}
\]
is, inside the interval $[B_\rho]$, either an initial interval, a final interval, the whole interval, or the empty interval, since it is defined by a single strict linear inequality in $k$. The covering property says that the union of the $I_\eta$ over $\eta \in A$ is all of $[B_\rho]$.

If some $I_\eta$ is all of $[B_\rho]$, then one interval suffices. Otherwise, take among the initial intervals one with maximal right endpoint, and among the final intervals one with minimal left endpoint. The union of these two intervals must cover $[B_\rho]$: if it did not, the uncovered point would be covered neither by any initial interval nor by any final interval, contradicting the previous paragraph. Refining the overlap, if any, gives the desired partition into at most two intervals, assigning to each interval the corresponding representative.
\end{proof}

\begingroup
\renewcommand{\proofname}{Proof of Proposition~\ref{prp:arc-vs-zlc}.}
\begin{proof}
  Let $x \in \Sigma^n$, and let $P=(Q,q_0,\delta)$ be a ZLP with $x \sqsubseteq \operatorname{val}(P)$. Write
  \[
    \overline{Q} \mathrel{:=} Q \sqcup \Sigma, \quad d_q \mathrel{:=} |\delta(q)| \text{ for } q \in Q, \quad \ell(z) \mathrel{:=} |\operatorname{val}_P(z)| \text{ for } z \in \overline{Q}.
  \]
  The case $n=0$ is witnessed by the empty ARS, so assume $n>0$.

  \textbf{Occurrences.} A \emph{path occurrence} of an internal vertex is the arithmetic progression obtained by following a path from the root. Formally, define finite sets $\mathcal{O}_q$ of progressions recursively as follows:
  \begin{itemize}
    \item $\mathcal{O}_{q_0}$ contains the root occurrence $\rho_0(k) \mathrel{:=} k$;
    \item if $\rho(k)=rk+i$ lies in $\mathcal{O}_q$, $j<d_q$, and $\delta(q)[j]=q'\in Q$, then $\mathcal{O}_{q'}$ contains the child occurrence
    \begin{equation}
      \rho^{(j)}(k) \mathrel{:=} r d_q k+i+rj.
      \label{eq:zlc-child-occurrence}
    \end{equation}
  \end{itemize}
  Equivalently, each occurrence is a pair $(r,i)$ with $0 \leq i < r$, interpreted as the map $k \mapsto rk+i$. For every $q \neq q_0$, all occurrences in $\mathcal{O}_q$ have step at least $2$.

  For a fixed $q$, the occurrences in $\mathcal{O}_q$ are pairwise disjoint as subsets of $\mathbb{N}$. Indeed, take two distinct path occurrences $\rho_1,\rho_2 \in \mathcal{O}_q$, and choose paths from $q_0$ to $q$ producing them. If $q=q_0$, there is only the root occurrence, so assume $q \neq q_0$. The two paths cannot be such that one is a proper prefix of the other: otherwise, after the shorter path has reached $q$, the longer path would give a non-empty directed path from $q$ back to $q$, contradicting acyclicity of $P$.

  Hence the paths have a longest common prefix and then diverge. Let this common prefix end at an occurrence $\rho_p(k)=rk+i$ of some internal vertex $p$, and let the next edge indices be $j_1 \neq j_2$. The two corresponding child occurrences are
  \[
    \chi_b(k) \mathrel{:=} r d_p k+i+rj_b, \quad b \in \{1,2\}.
  \]
  These two progressions are disjoint: a common value would imply, for some $k_1,k_2 \in \mathbb{N}$,
  \[
    r d_p k_1+i+rj_1 = r d_p k_2+i+rj_2,
  \]
  and hence
  \[
    d_p(k_1-k_2)=j_2-j_1.
  \]
  But $0 \leq j_1,j_2 < d_p$ and $j_1 \neq j_2$, so the right-hand side is a non-zero integer of absolute value strictly smaller than $d_p$, impossible for a multiple of $d_p$.

  Finally, every occurrence obtained further down a path is contained in the occurrence from which it descends, since \eqref{eq:zlc-child-occurrence} gives $\rho^{(j)}(k)=\rho(d_qk+j)$. Therefore $\rho_1$ and $\rho_2$ lie in disjoint child occurrences after the first divergence, and are themselves disjoint.

  \textbf{Occurrence identity.} We will use the following value identity for occurrences: for every $q \in Q$, every $\rho(k)=rk+i$ in $\mathcal{O}_q$, and every $t \in \mathbb{N}$ with $\rho(t)<n$,
  \begin{equation}
    t<\ell(q) \quad \text{and} \quad x[\rho(t)] = \operatorname{val}_P(q)[t].
    \label{eq:zlc-occ-value}
  \end{equation}

  To prove this, choose a path that produces $\rho$, and argue by induction on its length. For the root occurrence of $q_0$, the claim is exactly $x \sqsubseteq \operatorname{val}_P(q_0)$. For the induction step, suppose $\rho$ is obtained from an occurrence $\rho_p(k)=r_pk+i_p$ of a parent $p$ through the edge $\delta(p)[j]=q$. Thus $\rho(t)=\rho_p(d_pt+j)$. If $\rho(t)<n$, the induction hypothesis applied to $\rho_p$ at $s \mathrel{:=} d_pt+j$ gives
  \[
    s < \ell(p) \quad \text{and} \quad x[\rho(t)] = \operatorname{val}_P(p)[s].
  \]
  Since $\operatorname{val}_P(p)$ is the zip of its $d_p$ children and $s=d_pt+j$ is a valid position of this zip, the definition of $\operatorname{zip}$ implies $t<\ell(q)$ and
  \[
    \operatorname{val}_P(p)[s]=\operatorname{val}_P(q)[t].
  \]
  Combining these identities proves \eqref{eq:zlc-occ-value}.

  \textbf{Representatives.} For $q=q_0$, set $A_q \mathrel{:=} \{\rho_0\}$. For every $q \neq q_0$, apply Lemma~\ref{lem:rep-progressions} to the finite pairwise disjoint family $\mathcal{O}_q$, and let $A_q \subseteq \mathcal{O}_q$ be the resulting representative family. Thus
  \begin{equation}
    |A_q| < \sqrt[3]{2n}+1 \quad \text{for every } q \in Q,
    \label{eq:zlc-rep-size}
  \end{equation}
  where the root case is included because $n>0$.

  \textbf{Construction of the ARS.} We construct $R=(L,\Phi)$. Consider an internal vertex $q \in Q$, a representative occurrence $\rho(k)=rk+i \in A_q$, and an edge index $j<d_q$. Put $z \mathrel{:=} \delta(q)[j]$ and define the child progression
  \begin{equation}
    \chi(k) \mathrel{:=} r d_q k+i+rj.
    \label{eq:zlc-child-chi}
  \end{equation}
  Let
  \[
    B_\chi \mathrel{:=} |\{k \in \mathbb{N} \mid \chi(k)<n\}|.
  \]
  If $B_\chi=0$, this edge occurrence contributes no constraint.

  Suppose first that $z=a \in \Sigma$ and $B_\chi>0$. Then, we add the literal $(a,\chi(0))$ to $L$.

  Now suppose that $z=q' \in Q$ and $B_\chi>0$. Then $\chi \in \mathcal{O}_{q'}$. If $\chi \in A_{q'}$, we add no constraint: this child occurrence is itself representative and will be handled independently. If $\chi \notin A_{q'}$, the second part of Lemma~\ref{lem:rep-progressions} gives a partition of $[B_\chi]$ into at most two non-empty integer intervals $I=[a_I,b_I)$, each assigned a representative $\eta_I(k)=r_Ik+i_I \in A_{q'}$ satisfying
  \begin{equation}
    \eta_I(k) < \chi(k) \quad \text{for all } k \in I.
    \label{eq:zlc-source-before-target}
  \end{equation}
  For each such interval, add to $\Phi$ the equation
  \begin{equation}
    ((r_I,i_I,a_I,b_I), (r d_q,i+rj,a_I,b_I)).
    \label{eq:zlc-added-equation}
  \end{equation}

  \textbf{Validity.} We verify that all constraints are true in $x$. For a literal added from a terminal edge $\delta(q)[j]=a$, we have $\chi(0)=\rho(j)<n$. By \eqref{eq:zlc-occ-value} applied to $\rho$ at $j$,
  \[
    x[\chi(0)] = \operatorname{val}_P(q)[j].
  \]
  Since the $j$-th child is the one-letter word $a$, the zip identity gives $\operatorname{val}_P(q)[j]=a$.

  For an equation \eqref{eq:zlc-added-equation}, fix $k \in [a_I,b_I)$. The target position is $\chi(k)<n$ by the definition of $B_\chi$, and the source position is earlier by \eqref{eq:zlc-source-before-target}, hence it is also $<n$. Applying \eqref{eq:zlc-occ-value} to the two occurrences $\eta_I$ and $\chi$ of the same internal vertex $q'$ gives
  \[
    x[\eta_I(k)] = \operatorname{val}_P(q')[k] = x[\chi(k)].
  \]
  Thus the two arithmetic intervals read the same word in $x$, and the source position is strictly smaller than the target position at every parameter in the interval.

  \textbf{Coverage.} We prove, by reverse induction in any topological ordering of the acyclic dependency graph of $P$, the following claim, where ``covered'' means either appearing as a literal in $L$ or appearing as the target of an equation in $\Phi$ with a strictly earlier source:
  \begin{equation}
    \text{for every } q \in Q,\ \rho \in A_q,\ s \in \mathbb{N}, \quad \text{if } \rho(s)<n, \text{ then } \rho(s) \text{ is covered by } R.
    \label{eq:zlc-coverage-claim}
  \end{equation}

  Fix $q$, assume the claim already holds for all children of $q$ in $Q$, and let $\rho \in A_q$ with $\rho(k)=rk+i$ and $s \in \mathbb{N}$ with $\rho(s)<n$. Write
  \[
    j \mathrel{:=} s \bmod d_q, \quad t \mathrel{:=} \left\lfloor \frac{s}{d_q}\right\rfloor, \quad z \mathrel{:=} \delta(q)[j],
  \]
  and let $\chi$ be the child progression \eqref{eq:zlc-child-chi}. Then
  \[
    \rho(s)=\chi(t)
  \]
  and, in particular, $t<B_\chi$.

  If $z=a \in \Sigma$, then \eqref{eq:zlc-occ-value} applied to $\rho$ at $s$ gives $s<\ell(q)$. Since the $j$-th child has length $1$, the zip position $s=d_qt+j$ can be valid only when $t=0$. Therefore $\rho(s)=\chi(0)$, and the literal $(a,\chi(0))$ added above covers this position.

  If $z=q' \in Q$, then $\chi \in \mathcal{O}_{q'}$. There are two cases. If $\chi \in A_{q'}$, then $\chi(t)=\rho(s)$ is covered by the induction hypothesis for $q'$. If $\chi \notin A_{q'}$, the construction added equations for a partition of $[B_\chi]$; since $t<B_\chi$, it lies in one of the intervals. The corresponding equation has target $\chi(t)=\rho(s)$ and source strictly earlier, so it covers $\rho(s)$.

  This proves \eqref{eq:zlc-coverage-claim}. Applying the claim to the root occurrence $\rho_0 \in A_{q_0}$, whose visible positions are exactly $0,\dots,n-1$, shows that every position of $x$ is covered. Together with validity, $R$ is an ARS for $x$.

  \textbf{Size bound.} For every representative occurrence of every internal vertex, each outgoing edge contributes at most two constraints: one literal in the terminal case, or at most two equations in the internal case. Therefore, using \eqref{eq:zlc-rep-size},
  \[
    |R| \leq 2 \sum_{q\in Q} |A_q| d_q \leq 2\left(\sqrt[3]{2n}+1\right)\sum_{q\in Q} d_q = 2\left(\sqrt[3]{2n}+1\right)\cdot |P|.
  \]
  Hence $\operatorname{ARC}(x) \leq 2(\sqrt[3]{2n}+1)\cdot |P|$. Taking the minimum over all ZLPs $P$ with $x \sqsubseteq \operatorname{val}(P)$ gives
  \[
    \operatorname{ARC}(x) \leq 2\left(\sqrt[3]{2n}+1\right)\cdot \operatorname{ZLC}(x).
  \]
\end{proof}
\endgroup

\subsection{Preliminaries for Proposition~\ref{prp:zlc-vs-acr}}
\label{sec:apx-szlc-vs-acr-prelim}

Throughout this subsection, given a DFAO $N=(Q,q_0,\delta,\tau)$ over input alphabet $[k]$, write $\hat{\delta}$ for the standard extension of $\delta$ to a map $Q \times [k]^* \to Q$. For an integer $j \geq 0$, define the \emph{level-$j$ reading map} $\psi^N_j:[k^j]\to Q$ by
\[
  \psi^N_j(r) \mathrel{:=} \hat{\delta}(q_0,d_0d_1\dots d_{j-1}), \quad d_i \mathrel{:=} \left\lfloor \frac{r}{k^i}\right\rfloor \bmod k.
\]
Thus $\psi^N_j(r)$ is the state of $N$ after reading the $j$ least-significant base-$k$ digits of $r$ in LSDF order. The chain rule
\[
  \psi^N_{j'}(r) = \hat{\delta}(\psi^N_j(r \bmod k^j), d_jd_{j+1}\dots d_{j'-1})
\]
holds for $0 \leq j \leq j'$ and $r \in [k^{j'}]$. When the underlying DFAO is clear from context, we drop the superscript and write $\psi_j$.

\begin{lemma}
\label{lem:can-depth-rtl}
Let $k\geq 2$, let $x \in \Sigma^n$, and assume $n\geq 1$. There is a DFAO $M=(Q,q_0,\delta,\tau)$ with $\mathrm{AC}^{\mathrm{R}}_k(x)$ states such that, for $m\mathrel{:=}\lceil\log_k n\rceil$,
\[
  x[t] = \tau(\psi^M_m(t)) \quad \text{for every } t < n.
\]
\end{lemma}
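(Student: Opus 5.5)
The lemma's content is that the padding length in the definition of $\mathrm{AC}^{\mathrm{R}}_k$ can be normalized to the canonical value $m=\lceil\log_k n\rceil$ without increasing the number of states. Start from a DFAO $N=(Q,q_0,\delta,\tau')$ with $s\mathrel{:=}\mathrm{AC}^{\mathrm{R}}_k(x)$ states and a padding length $m'$ with $n\leq k^{m'}$ and $x[t]=N((\langle t\rangle_k^{m'})^{\mathrm{R}})$ for all $t<n$. The reversed padded representation $(\langle t\rangle_k^{m'})^{\mathrm{R}}$ is exactly the LSDF digit string $d_0d_1\dots d_{m'-1}$ of $t$. Hence the hypothesis reads $x[t]=\tau'(\psi^N_{m'}(t))$ for $t<n$.

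Since $n\leq k^{m'}$, we have $m\leq m'$. For every $t<n\leq k^m$, the digits of $t$ at positions $m,\dots,m'-1$ are all zero. By the chain rule,
\[
  \psi^N_{m'}(t)=\hat{\delta}\bigl(\psi^N_m(t),0^{m'-m}\bigr).
\]
Keep $Q$, $q_0$ and $\delta$, and replace the output map by
\[
  \tau(q)\mathrel{:=}\tau'\bigl(\hat{\delta}(q,0^{m'-m})\bigr).
\]
Call the resulting DFAO $M$. Then $x[t]=\tau(\psi^M_m(t))$ for every $t<n$, and $M$ has exactly $s$ states.

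The only point needing care is the edge case $n=1$. There $m=0$, $\psi_0(0)=q_0$, and the argument goes through unchanged, since the zero-digit string has length $m'$. Otherwise the proof is a direct unfolding of definitions, and I expect no real obstacle.
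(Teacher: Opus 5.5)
Your proposal is correct and follows essentially the same route as the paper: take a minimal witness at some padding depth $m'\geq m$, observe that every $t<n$ has zero digits at positions $m,\dots,m'-1$, and absorb the trailing $0^{m'-m}$ into a modified output map $q\mapsto\tau'(\hat{\delta}(q,0^{m'-m}))$ on the same state set. Your explicit translation of $(\langle t\rangle_k^{m'})^{\mathrm{R}}$ into the LSDF reading map and your check of the $n=1$ case only add detail that the paper leaves implicit.
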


\begin{proof}
Let $M'=(Q,q_0,\delta,\tau)$ be a DFAO that witnesses $\mathrm{AC}^{\mathrm{R}}_k(x)$ at some depth $m_0$, so $n \leq k^{m_0}$. Necessarily $m_0 \geq m$. If $m_0=m$, there is nothing to prove. If $m_0>m$, then for every $t<n$,
\[
  (\langle t\rangle^{m_0}_k)^{\mathrm{R}} = (\langle t\rangle^{m}_k)^{\mathrm{R}} 0^{m_0-m}.
\]
Define a new output map
\[
  \tau'(q) \mathrel{:=} \tau(\hat{\delta}(q,0^{m_0-m})).
\]
Replacing $\tau$ by $\tau'$ gives a DFAO with the same state set that witnesses $x$ at depth $m$.
\end{proof}

\subsection{Proof of Proposition~\ref{prp:zlc-vs-acr}}
\label{sec:apx-szlc-vs-acr}

Let $s \mathrel{:=} \mathrm{AC}^{\mathrm{R}}_k(x)$ and set $m \mathrel{:=} \lceil\log_k n\rceil$. By Lemma~\ref{lem:can-depth-rtl} there is a DFAO
\[
  M^* = (Q^*,q_0^*,\delta^*,\tau^*)
\]
with $|Q^*|=s$ such that
\[
  x[t] = \tau^*(\psi_m(t)) \quad \text{for every } t<n,
\]
where $\psi \mathrel{:=} \psi^{M^*}$.

For $j \in \{0,\dots,m\}$, let
\[
  \mathcal{S}_j \mathrel{:=} \{\psi_j(r) : r \in [k^j]\}.
\]
Thus $|\mathcal{S}_j| \leq s$ for every $j$.

\textbf{Construction.} We construct a ZLP $P=(Q_P,q_P,\delta_P)$. For each $j<m$ and $q \in \mathcal{S}_j$, introduce a fresh internal vertex $v_{j,q}$, and set
\[
  Q_P \mathrel{:=} \{v_{j,q} : 0 \leq j < m \text{ and } q \in \mathcal{S}_j\}, \quad q_P \mathrel{:=} v_{0,q_0^*}.
\]
The outgoing list of every internal vertex has arity $k$. If $j=m-1$, set
\[
  \delta_P(v_{m-1,q}) \mathrel{:=} (\tau^*(\delta^*(q,0)), \tau^*(\delta^*(q,1)), \dots, \tau^*(\delta^*(q,k-1))).
\]
If $j<m-1$, set
\[
  \delta_P(v_{j,q}) \mathrel{:=} (v_{j+1,\delta^*(q,0)}, v_{j+1,\delta^*(q,1)}, \dots, v_{j+1,\delta^*(q,k-1)}).
\]
This is well-defined because whenever $q=\psi_j(r)$ and $d \in [k]$, we have
\[
  \delta^*(q,d)=\psi_{j+1}(r+d k^j) \in \mathcal{S}_{j+1}.
\]
Every internal edge goes from level $j$ to level $j+1$, so the dependency graph is acyclic. Moreover, every vertex at level $j>0$ appears in the outgoing list of some vertex at level $j-1$, by decomposing a witness $r \in [k^j]$ as $r=r' + d k^{j-1}$. Therefore $q_P$ is the unique internal vertex that does not appear in any outgoing list, and $P$ is a ZLP.

\textbf{Validity.} We now verify the value of $P$. For $\ell \geq 0$ and $h \in [k^\ell]$, write $\operatorname{Dig}_\ell(h)$ for the length-$\ell$ LSDF base-$k$ digit word of $h$:
\[
  \operatorname{Dig}_\ell(h) \mathrel{:=} e_0 e_1 \dots e_{\ell-1}, \quad e_i \mathrel{:=} \left\lfloor \frac{h}{k^i}\right\rfloor \bmod k.
\]
For $\ell=0$ this is the empty word. For $0 \leq j \leq m$ and $q \in \mathcal{S}_j$, define the word $w_{j,q} \in \Sigma^{k^{m-j}}$ by
\[
  w_{j,q}[h] \mathrel{:=} \tau^*(\widehat{\delta^*}(q,\operatorname{Dig}_{m-j}(h)))
\]
for $h \in [k^{m-j}]$. Thus $w_{m,q}$ is the one-symbol word $\tau^*(q)$.

We claim that, for every $j<m$ and $q \in \mathcal{S}_j$,
\begin{equation}
  \operatorname{val}_P(v_{j,q}) = w_{j,q}.
  \label{eq:apx-szlc-value-claim}
\end{equation}
This is proved by backward induction on $j$.

For the base case $j=m-1$, fix $q \in \mathcal{S}_{m-1}$ and put
\[
  a_d \mathrel{:=} \tau^*(\delta^*(q,d))
\]
for $d \in [k]$. Then
\[
  \operatorname{val}_P(v_{m-1,q}) = \operatorname{zip}(a_0,a_1,\dots,a_{k-1}).
\]
Each $a_d$ is a one-symbol word, so this zip has length $k$ and its $h$-th symbol is $a_h$ for $h \in [k]$. Since $\operatorname{Dig}_1(h)$ is the one-letter word $h$,
\[
  \begin{aligned}
  w_{m-1,q}[h]
    &= \tau^*(\widehat{\delta^*}(q,\operatorname{Dig}_1(h))) \\
    &= \tau^*(\delta^*(q,h)) \\
    &= a_h.
  \end{aligned}
\]
Thus $\operatorname{val}_P(v_{m-1,q})=w_{m-1,q}$.

For the induction step, assume the claim has been proved at level $j+1$, where $j<m-1$, and fix $q \in \mathcal{S}_j$. Set
\[
  u_d \mathrel{:=} \operatorname{val}_P(v_{j+1,\delta^*(q,d)})
\]
for $d \in [k]$. By the induction hypothesis,
\[
  u_d = w_{j+1,\delta^*(q,d)},
\]
so all words $u_d$ have the common length $k^{m-j-1}$. Hence
\[
  \operatorname{val}_P(v_{j,q}) = \operatorname{zip}(u_0,u_1,\dots,u_{k-1})
\]
has length $k^{m-j}$. Let $h \in [k^{m-j}]$, and write
\[
  d \mathrel{:=} h \bmod k, \quad h' \mathrel{:=} \left\lfloor\frac{h}{k}\right\rfloor.
\]
Then $h' \in [k^{m-j-1}]$, and the definition of $\operatorname{zip}$ gives
\[
  \begin{aligned}
  \operatorname{val}_P(v_{j,q})[h]
    &= u_d[h'] \\
    &= w_{j+1,\delta^*(q,d)}[h'] \\
    &= \tau^*(\widehat{\delta^*}(\delta^*(q,d),\operatorname{Dig}_{m-j-1}(h'))) \\
    &= \tau^*(\widehat{\delta^*}(q,d\operatorname{Dig}_{m-j-1}(h'))).
  \end{aligned}
\]
Since $h=d+kh'$, the length-$(m-j)$ LSDF digit word of $h$ is obtained by prefixing the digit $d$ to the length-$(m-j-1)$ LSDF digit word of $h'$, i.e.
\[
  \operatorname{Dig}_{m-j}(h) = d\operatorname{Dig}_{m-j-1}(h').
\]
Therefore
\[
  \operatorname{val}_P(v_{j,q})[h] = \tau^*(\widehat{\delta^*}(q,\operatorname{Dig}_{m-j}(h))) = w_{j,q}[h].
\]
Since this holds for every $h \in [k^{m-j}]$, we get $\operatorname{val}_P(v_{j,q})=w_{j,q}$, completing the induction.

Taking $j=0$ and $q=q_0^*$ in \eqref{eq:apx-szlc-value-claim}, we get
\[
  \operatorname{val}(P) = \operatorname{val}_P(q_P) = w_{0,q_0^*}.
\]
For every $t<n$, the word $\operatorname{Dig}_m(t)$ is precisely the length-$m$ LSDF base-$k$ representation of $t$. Hence
\[
  \begin{aligned}
  \operatorname{val}(P)[t]
    &= w_{0,q_0^*}[t] \\
    &= \tau^*(\widehat{\delta^*}(q_0^*,\operatorname{Dig}_m(t))) \\
    &= \tau^*(\psi_m(t)) \\
    &= x[t].
  \end{aligned}
\]
Since $n \leq k^m = |\operatorname{val}(P)|$, this shows that $x \sqsubseteq \operatorname{val}(P)$.

\textbf{Synchronicity.} The ZLP $P$ is synchronous: define $\phi_P(v_{j,q}) \mathrel{:=} k^j$. If $v_{j+1,q'}$ appears in $\delta_P(v_{j,q})$, then
\[
  \phi_P(v_{j+1,q'}) = k^{j+1} = k \cdot k^j = |\delta_P(v_{j,q})| \cdot \phi_P(v_{j,q}).
\]
Also $\phi_P(q_P)=1$.

\textbf{Size bound.} Every internal vertex has exactly $k$ outgoing edges, so
\[
  \begin{aligned}
  |P|
    &= \sum_{j=0}^{m-1}\sum_{q\in\mathcal{S}_j}|\delta_P(v_{j,q})| \\
    &= k\sum_{j=0}^{m-1}|\mathcal{S}_j| \\
    &\leq ksm.
  \end{aligned}
\]
Recalling that $s=\mathrm{AC}^{\mathrm{R}}_k(x)$ and $m=\lceil\log_k n\rceil$, we obtain
\[
  \operatorname{sZLC}(x) \leq |P| \leq ksm = k\lceil\log_k n\rceil \cdot \mathrm{AC}^{\mathrm{R}}_k(x).
\]

\section{Mix-Automatic Sequences}
\label{sec:apx-mix-automatic}

In this appendix, we will explain the definition of mix-automatic sequences from \citet{Endrullis2013}, and show that they exhibit polylogarithmic $\operatorname{sZLC}$ growth. We start by recalling the definition of $k$-automatic sequences (see \citet{Allouche2003}) which were already mentioned in Appendix~\ref{sec:apx-auto}. We will then see how mix-automatic sequences are a natural generalization of $k$-automatic sequences that doesn't depend on $k$.

Fix $k \geq 2$. For $m \in \mathbb{N}$ and $t < k^m$, write $\operatorname{Dig}_k^m(t) \in [k]^m$ for the length-$m$ LSDF base-$k$ representation of $t$:
\[
  \operatorname{Dig}_k^m(t) = d_0 d_1 \dots d_{m-1}, \quad d_j \mathrel{:=} \left\lfloor \frac{t}{k^j} \right\rfloor \bmod k.
\]
Thus $\operatorname{Dig}_k^m(t)$ is $(\langle t \rangle_k^m)^{\mathrm{R}}$ in the notation of Section~\ref{sec:auto}. The canonical LSDF representation is denoted $\operatorname{Dig}_k(t)$: we set $\operatorname{Dig}_k(t)\mathrel{:=}\operatorname{Dig}_k^\ell(t)$, where $\ell\in \mathbb{N}$ is minimal with $t<k^\ell$.

A \emph{$k$-DFAO} over $\Sigma$ is a finite automaton with output $N=(Q,q_0,\delta,\tau)$, where $Q$ is finite, $q_0 \in Q$, $\delta:Q \times [k] \to Q$, and $\tau:Q \to \Sigma$. We extend $\delta$ to $\widehat{\delta}:Q \times [k]^* \to Q$ by
\[
  \widehat{\delta}(q,\epsilon) \mathrel{:=} q, \quad
  \widehat{\delta}(q,ui) \mathrel{:=} \delta(\widehat{\delta}(q,u),i).
\]

\begin{definition}
\label{def:k-auto-lsdf}
  A sequence $x \in \Sigma^\omega$ is \emph{$k$-automatic} if there exists a $k$-DFAO $N=(Q,q_0,\delta,\tau)$ over $\Sigma$ such that, for every $t \in \mathbb{N}$,
  \[
    x[t] = \tau(\widehat{\delta}(q_0,\operatorname{Dig}_k(t))).
  \]
\end{definition}

This is equivalent to the way we described $k$-automatic sequences in Appendix~\ref{sec:apx-auto}, due to the following.

\begin{proposition}
\label{prp:acr-bdd-iff-k-auto}
  For fixed $k\geq 2$ and $x \in \Sigma^\omega$, the following are equivalent:
  \begin{itemize}
    \item $x$ is $k$-automatic in the sense of Definition~\ref{def:k-auto-lsdf};
    \item $\mathrm{AC}^{\mathrm{R}}_k(x[:n])=O(1)$ as $n \to \infty$.
  \end{itemize}
\end{proposition}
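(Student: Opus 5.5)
The plan is to prove the two implications separately, using Lemma~\ref{lem:can-depth-rtl} to normalize the padding depth. The one point that needs care is that Definition~\ref{def:k-auto-lsdf} reads the \emph{canonical}, unpadded LSDF representation $\operatorname{Dig}_k(t)$, while $\mathrm{AC}^{\mathrm{R}}_k$ reads a zero-padded representation of some depth $m$.

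\emph{($k$-automatic $\Rightarrow$ bounded).} Let $N=(Q,q_0,\delta,\tau)$ witness $k$-automaticity, and for each $q$ let $[q]$ denote the set of states reachable from $q$ by reading only zeros. I will build a DFAO $N'$ on states $Q\times\{0,\dots,|Q|\}\times Q$ that tracks two things along the padded input: the current state of $N$, and the state of $N$ at the last position where a nonzero digit was read. After reading $\operatorname{Dig}_k^m(t)$, the recorded state equals $\widehat\delta(q_0,\operatorname{Dig}_k(t))$, because trailing zeros (the padding) are exactly what follows the last nonzero digit.

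A simpler construction also works and I will prefer it. Use $Q\times Q$, keeping the pair (current state, state at the last nonzero digit), with initial pair $(q_0,q_0)$. Reading digit $d\neq0$ sends $(p,r)\mapsto(\delta(p,d),\delta(p,d))$, and reading $0$ sends $(p,r)\mapsto(\delta(p,0),r)$. The output is $\tau(r)$. For $t=0$ the canonical word is empty and the pair stays $(\cdot,q_0)$, which is correct. Hence $\mathrm{AC}^{\mathrm{R}}_k(x[:n])\le|Q|^2$ for every $n$. The counter coordinate is unnecessary.

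\emph{(Bounded $\Rightarrow$ $k$-automatic).} Suppose $\mathrm{AC}^{\mathrm{R}}_k(x[:n])\le c$ for all $n$. There are finitely many DFAOs with at most $c$ states over $[k]$ and $\Sigma$. For each $m$, apply Lemma~\ref{lem:can-depth-rtl} with $n=k^m$ to get a DFAO $M_m$ with at most $c$ states satisfying $x[t]=\tau_m(\psi^{M_m}_m(t))$ for all $t<k^m$. By pigeonhole, some single automaton $M$ equals $M_m$ for infinitely many $m$. For such $m<m'$, the depth-$m'$ correctness for $t<k^m$ reads $\operatorname{Dig}_k^m(t)0^{m'-m}$. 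This shows that $M$ evaluated on $t$ padded with arbitrarily many zeros gives $x[t]$ for all $t$.

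The main obstacle is converting ``correct on infinitely many padded depths'' into ``correct on the canonical unpadded word''. To do this, let $z(q)$ be the eventual behaviour of $\tau$ along the zero-orbit of $q$. Define $\tau'(q):=\tau(\widehat\delta(q,0^{P}))$, where $P$ is a multiple of the period of every zero-orbit and exceeds every preperiod. The infinitely many good depths $m'$ have some residue modulo the period that recurs infinitely often; choosing $P$ to match that residue gives $x[t]=\tau'(\widehat\delta(q_0,\operatorname{Dig}_k(t)))$. The residue matching is delicate because the number of padding zeros, $m'-|\operatorname{Dig}_k(t)|$, depends on $t$. I will handle it by restricting to good depths in a single residue class $\rho$ modulo the lcm period $L$. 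I then add a mod-$L$ length counter to the states (multiplying the state count by $L$) and let $\tau'$ pad by the number of zeros congruent to $\rho$ minus the length read, plus a multiple of $L$ large enough to pass the preperiods. The resulting finite automaton witnesses Definition~\ref{def:k-auto-lsdf}.
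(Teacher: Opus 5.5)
Your proposal is correct and follows essentially the same route as the paper. For the forward direction you use a ``last nonzero digit'' memory that stores the $N$-state, giving $|Q|^2$ states, where the paper stores the output symbol and gets $|\Sigma||Q|$. For the converse you use the same argument: pigeonhole over bounded-size DFAOs, restriction of the good depths to one residue class modulo the zero-orbit period, and a mod-period length counter with output $\tau(\widehat{\delta}(q,0^{e_a}))$, where $e_a\geq B$ and $e_a\equiv c-a \pmod{p}$. Your intermediate claim that a single fixed padding $P$ suffices is false, as you yourself observe. Also, the good depths give correctness only for infinitely many padding lengths, not for all of them, which is exactly why the counter is needed.
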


\begin{proof}
  First suppose that $x$ is generated by $N=(Q,q_0,\delta,\tau)$ as in Definition~\ref{def:k-auto-lsdf}. We transform $N$ into a DFAO $N^+$ that is insensitive to the extra zeros introduced by the padding in Definition~\ref{def:auto}. Let
  \[
    Q^+ \mathrel{:=} \Sigma \times Q, \quad q^+_0 \mathrel{:=} (\tau(q_0),q_0),
  \]
  let $\tau^+(a,q)\mathrel{:=}a$, and define transitions by
  \[
    \delta^+((a,q),0) \mathrel{:=} (a,\delta(q,0)), \quad
    \delta^+((a,q),i) \mathrel{:=} (\tau(\delta(q,i)),\delta(q,i)) \quad (0<i<k).
  \]
  The second coordinate simulates $N$. The first coordinate stores the output that $N$ would have after the last nonzero digit read so far; if no nonzero digit has been read, it stores $\tau(q_0)$. Therefore, for every $t \in \mathbb{N}$ and every $r \in \mathbb{N}$,
  \begin{equation}
    \tau^+(\widehat{\delta^+}(q^+_0,\operatorname{Dig}_k(t)0^r))
    = \tau(\widehat{\delta}(q_0,\operatorname{Dig}_k(t)))
    = x[t].
    \label{eq:k-pad-insensitive}
  \end{equation}
  Indeed, if $t=0$ then no nonzero digit is read and both sides are $\tau(q_0)$; if $t>0$, the final digit of $\operatorname{Dig}_k(t)$ is nonzero, so at that digit the first coordinate is refreshed after all lower-significance digits have already been processed, and all subsequent zeros leave it unchanged.

  Now fix $n\geq 1$ and choose $m$ with $n \leq k^m$. For every $t<n$, the padded LSDF word $\operatorname{Dig}_k^m(t)$ is $\operatorname{Dig}_k(t)0^{m-|\operatorname{Dig}_k(t)|}$. Hence \eqref{eq:k-pad-insensitive} shows that $N^+$ witnesses
  \[
    \mathrm{AC}^{\mathrm{R}}_k(x[:n]) \leq |Q^+| = |\Sigma| |Q|.
  \]
  In particular, $\mathrm{AC}^{\mathrm{R}}_k(x[:n])=O(1)$.

  Conversely, assume that $\mathrm{AC}^{\mathrm{R}}_k(x[:n]) \leq C$ for all $n\geq 1$. For each $n$, choose a DFAO $N_n$ with at most $C$ states witnessing $\mathrm{AC}^{\mathrm{R}}_k(x[:n])$ at depth $m_n$ (i.e. via padding to length $m_n$). Up to isomorphism, there are only finitely many such DFAOs. Hence, there is an unbounded sequence $n_j \to \infty$, a single DFAO $N=(Q,q_0,\delta,\tau)$, and depths $m_j \to \infty$ such that $N$ witnesses $x[:n_j]$ at depth $m_j$ for every $j$:
  \begin{equation}
    x[t] = \tau(\widehat{\delta}(q_0,\operatorname{Dig}_k^{m_j}(t))) \quad \text{for all } t < n_j.
    \label{eq:bounded-acr-subsequence}
  \end{equation}

  Let $z:Q \to Q$ be the zero-transition map $z(q)\mathrel{:=}\delta(q,0)$. Since $Q$ is finite, there exist $B \in \mathbb{N}$ and $p \in \mathbb{N}^+$ such that
  \[
    z^{r+p}(q) = z^r(q) \quad \text{for all } q \in Q \text{ and all } r\geq B.
  \]
  Passing to a further subsequence, we may assume w.l.o.g. that all $m_j$ have the same residue $c \in [p]$ modulo $p$.

  We now build a single canonical-input DFAO $N^*$. Its states are $Q \times [p]$, its initial state is $(q_0,0)$, and its transition function is
  \[
    \delta^*((q,a),i) \mathrel{:=} (\delta(q,i), a+1 \bmod p).
  \]
  For each $a \in [p]$, choose $e_a\geq B$ such that $e_a \equiv c-a \pmod p$, and define the output map by
  \[
    \tau^*(q,a) \mathrel{:=} \tau(z^{e_a}(q)).
  \]
  Let $t \in \mathbb{N}$, put $\ell\mathrel{:=}|\operatorname{Dig}_k(t)|$, and let $q\mathrel{:=}\widehat{\delta}(q_0,\operatorname{Dig}_k(t))$. Choose $j$ so large that $t<n_j$ and $m_j-\ell\geq B$. From \eqref{eq:bounded-acr-subsequence},
  \[
  \begin{aligned}
    x[t]
      &= \tau(\widehat{\delta}(q_0,\operatorname{Dig}_k(t)0^{m_j-\ell})) \\
      &= \tau(z^{m_j-\ell}(q)).
  \end{aligned}
  \]
  Since $m_j \equiv c \pmod p$, we have $m_j-\ell \equiv c-\ell \pmod p$. By the eventual periodicity of $z$,
  \[
    \tau(z^{m_j-\ell}(q)) = \tau(z^{e_{\ell \bmod p}}(q)).
  \]
  On the other hand, after reading the canonical word $\operatorname{Dig}_k(t)$, the automaton $N^*$ is in state $(q,\ell \bmod p)$, whose output is exactly $\tau(z^{e_{\ell \bmod p}}(q))$. Thus $N^*$ computes $x[t]$. Since $t$ was arbitrary, $x$ is $k$-automatic.
\end{proof}

In order to define a generalization that doesn't depend on $k$, we make the digit alphabet state-dependent. A $k$-DFAO has exactly $k$ successors at every state; in a mix-automaton, the number of successors may depend on the current state, and therefore the base of the next digit is determined by the lower-significance digits that have already been read.

\begin{definition}
\label{def:mix-automaton}
  A \emph{mix-automaton} over $\Sigma$ is a tuple $M=(Q,q_0,\delta,\tau)$, where $Q$ is finite, $q_0 \in Q$, $\delta:Q \to Q^*$, and $\tau:Q \to \Sigma$. For $q \in Q$, write
  \[
    d_q \mathrel{:=} |\delta(q)|.
  \]
  We require $d_q \geq 2$ for every $q \in Q$. The \emph{directed graph of $M$} is defined to have vertex set $Q$ and an edge $q \to q'$ whenever $q'$ appears in $\delta(q)$.
\end{definition}

At state $q$, the available digits are $[d_q]$, and reading digit $i<d_q$ moves the automaton to $\delta(q)[i]$. A $k$-DFAO is the special case in which $d_q=k$ for all $q$.

For a mix-automaton $M=(Q,q_0,\delta,\tau)$, define
\[
  \operatorname{val}_M: Q \to \Sigma^\omega
\]
by recursion on the position inside the sequence. For any $q \in Q$, set
\[
  \operatorname{val}_M(q)[0] \mathrel{:=} \tau(q),
\]
and, whenever $d_q t+i>0$ with $i<d_q$, set
\begin{equation}
  \operatorname{val}_M(q)[d_q t+i] \mathrel{:=} \operatorname{val}_M(\delta(q)[i])[t].
  \label{eq:mix-ev-recursion}
\end{equation}
The recursion is well-founded because the right-hand side is evaluated at the smaller index $t<d_q t+i$. We define the \emph{value} of $M$ as $\operatorname{val}(M)\mathrel{:=}\operatorname{val}_M(q_0)$. A sequence $x \in \Sigma^\omega$ is \emph{mix-automatic} if $x=\operatorname{val}(M)$ for some mix-automaton $M$.

It will be convenient to use a zero-normalized presentation.

\begin{definition}
\label{def:mix-zero-normalized}
  A mix-automaton $M=(Q,q_0,\delta,\tau)$ is \emph{zero-normalized} if
  \[
    \tau(\delta(q)[0]) = \tau(q) \quad \text{for every } q \in Q.
  \]
\end{definition}

\begin{lemma}
\label{lem:mix-zero-normalization}
  Every mix-automatic sequence is generated by a zero-normalized mix-automaton.
\end{lemma}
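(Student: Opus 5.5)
Given a mix-automaton $M=(Q,q_0,\delta,\tau)$, I will build a new mix-automaton $M'$ with the same value that is zero-normalized. The obstruction is that $\operatorname{val}_M(q)[0]=\tau(q)$, while position $0$ of the sequence read through child $\delta(q)[0]$ is \emph{not} computed via \eqref{eq:mix-ev-recursion} (the recursion only applies for $d_qt+i>0$). So the child reached by digit $0$ may have an output $\tau(\delta(q)[0])$ that differs from $\tau(q)$ without affecting the value. The fix is to relabel states by what they ``would output at index $0$'' in the context from which they are entered. I will take the state set $Q':=Q\times\Sigma$, where the second coordinate is the symbol to be output at index $0$ of this sub-sequence, with $\tau'(q,a):=a$ and initial state $(q_0,\tau(q_0))$.

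Transitions will be set so that digit $0$ carries the label along and nonzero digits refresh it:
\[
  \delta'(q,a):=\bigl((\delta(q)[0],a),\,(\delta(q)[1],\tau(\delta(q)[1])),\,\dots,\,(\delta(q)[d_q-1],\tau(\delta(q)[d_q-1]))\bigr).
\]
Then $d'_{(q,a)}=d_q\ge2$ and $\tau'(\delta'(q,a)[0])=a=\tau'(q,a)$, so $M'$ is zero-normalized by construction. Restricting to states reachable from the initial state keeps $Q'$ finite (it already is), so this costs nothing.

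The key step is to check $\operatorname{val}_{M'}(q,a)[t]=\operatorname{val}_M(q)[t]$ for all $t\ge1$, and $\operatorname{val}_{M'}(q,a)[0]=a$. I will prove this by strong induction on $t$, simultaneously for all $(q,a)$. Write $t=d_qs+i>0$, so by \eqref{eq:mix-ev-recursion} for $M'$ we have $\operatorname{val}_{M'}(q,a)[t]=\operatorname{val}_{M'}(\delta'(q,a)[i])[s]$. If $s\ge1$, the induction hypothesis at $s<t$ gives $\operatorname{val}_M(\delta(q)[i])[s]=\operatorname{val}_M(q)[t]$. If $s=0$, then $i\ne0$ because $t>0$, so the label is $\tau(\delta(q)[i])=\operatorname{val}_M(\delta(q)[i])[0]=\operatorname{val}_M(q)[t]$. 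At $t=0$ the initial state outputs $\tau(q_0)=\operatorname{val}(M)[0]$, so $\operatorname{val}(M')=\operatorname{val}(M)$.

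The main care point is the $s=0$ case: it is the only place where the refreshed label matters, and it works exactly because digit $0$ never occurs there. Everything else is routine unfolding of the recursion.
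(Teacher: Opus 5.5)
Your proposal is correct and is essentially the paper's proof: the same product construction (the paper writes $\Sigma \times Q$, you write $Q\times\Sigma$) with the digit-$0$ child inheriting the label and nonzero-digit children refreshed to $\tau(\delta(q)[i])$. It also uses the same strong induction on the index, splitting into the cases $s=0$ (forcing $i\neq0$) and $s\geq1$.
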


\begin{proof}
  Let $M=(Q,q_0,\delta,\tau)$ be any mix-automaton. Define
  \[
    M^+ \mathrel{:=} (Q^+,q^+_0,\delta^+,\tau^+)
  \]
  by
  \[
    Q^+ \mathrel{:=} \Sigma \times Q, \quad q^+_0 \mathrel{:=} (\tau(q_0),q_0), \quad \tau^+(a,q)\mathrel{:=}a.
  \]
  The arity of $(a,q)$ is $d_q$, and its successor list is defined by
  \[
    \delta^+(a,q)[i] \mathrel{:=}
    \begin{cases}
      (a,\delta(q)[0]) & \text{if } i=0,\\
      (\tau(\delta(q)[i]),\delta(q)[i]) & \text{if } 0<i<d_q.
    \end{cases}
  \]
  Thus $M^+$ is again a mix-automaton. It is zero-normalized because
  \[
    \tau^+(\delta^+(a,q)[0]) = \tau^+(a,\delta(q)[0]) = a = \tau^+(a,q).
  \]

  We claim that, for every $(a,q) \in Q^+$,
  \begin{equation}
    \operatorname{val}_{M^+}(a,q)[0] = a, \quad \text{and} \quad \operatorname{val}_{M^+}(a,q)[n] = \operatorname{val}_M(q)[n] \text{ for all } n>0.
    \label{eq:normalization-claim}
  \end{equation}
  The first equality follows directly from the definition of $\operatorname{val}_{M^+}$. We prove the second by induction on $n>0$.

  Write $n=d_q t+i$ with $i<d_q$, and set
  \[
    q_i \mathrel{:=} \delta(q)[i], \quad
    a_i \mathrel{:=}
    \begin{cases}
      a & \text{if } i=0,\\
      \tau(q_i) & \text{if } i>0,
    \end{cases}
  \]
  so that $\delta^+(a,q)[i]=(a_i,q_i)$.

  If $t=0$, then $i>0$, and hence
  \[
  \begin{aligned}
    \operatorname{val}_{M^+}(a,q)[n]
      &= \operatorname{val}_{M^+}(a_i,q_i)[0]\\
      &= a_i\\
      &= \tau(q_i)\\
      &= \operatorname{val}_M(q_i)[0]\\
      &= \operatorname{val}_M(q)[n].
  \end{aligned}
  \]
  If $t>0$, then the induction hypothesis applied to the child state $(a_i,q_i)$ gives
  \[
  \begin{aligned}
    \operatorname{val}_{M^+}(a,q)[n]
      &= \operatorname{val}_{M^+}(a_i,q_i)[t]\\
      &= \operatorname{val}_M(q_i)[t]\\
      &= \operatorname{val}_M(q)[n].
  \end{aligned}
  \]
  This proves \eqref{eq:normalization-claim}. Taking $(a,q)=(\tau(q_0),q_0)$ gives $\operatorname{val}(M^+)=\operatorname{val}(M)$.
\end{proof}

For the rest of the appendix, $M$ denotes a zero-normalized mix-automaton. For infinite sequences $x_0,\dots,x_{r-1} \in \Sigma^\omega$, we use the same zip notation as for finite words:
\[
  \operatorname{zip}(x_0,\dots,x_{r-1})[rt+i] \mathrel{:=} x_i[t] \quad (i<r).
\]

By \eqref{eq:mix-ev-recursion} and Definition~\ref{def:mix-zero-normalized}, every state $q$ satisfies the exact zip identity
\begin{equation}
  \operatorname{val}_M(q) = \operatorname{zip}(\operatorname{val}_M(\delta(q)[0]),\dots,\operatorname{val}_M(\delta(q)[d_q-1])).
  \label{eq:mix-zip-identity}
\end{equation}
Indeed, for an index $d_q t+i$ with $(t,i)\neq(0,0)$ this is precisely \eqref{eq:mix-ev-recursion}, while at index $0$ it is the zero-normalization condition.

We also need notation for walks in $M$. Let $\operatorname{dom}_M \subseteq \mathbb{N}^*$ be the set of words that label walks from $q_0$ in $M$, and let $\operatorname{run}_M:\operatorname{dom}_M \to Q$ be the corresponding run map. Thus $\epsilon \in \operatorname{dom}_M$, $\operatorname{run}_M(\epsilon)=q_0$, and whenever $u \in \operatorname{dom}_M$ and $i<d_{\operatorname{run}_M(u)}$, we have $ui \in \operatorname{dom}_M$ and
\[
  \operatorname{run}_M(ui) = \delta(\operatorname{run}_M(u))[i].
\]
For $u \in \operatorname{dom}_M$, define
\begin{equation}
  D_M(u) \mathrel{:=} \prod_{h<|u|} d_{\operatorname{run}_M(u[:h])}.
  \label{eq:mix-path-scale}
\end{equation}
This is the product of the bases encountered along the path $u$.

\begin{definition}
\label{def:syrk}
  Let $M$ be a mix-automaton. For a directed cycle $C \in Q^*$ in the graph of $M$, set
  \[
    w(C) \mathrel{:=} \sum_{j<|C|} \ln d_{C[j]}.
  \]
  The \emph{synchronization rank} of $M$ is
  \[
    \operatorname{syrk}(M) \mathrel{:=} \dim_{\mathbb{Q}} \operatorname{span}_{\mathbb{Q}} \{ w(C) : C \text{ is a directed cycle in } M \}.
  \]
\end{definition}

The synchronization rank is relevant to us because of the following property.

\begin{lemma}
\label{lem:mix-product-count}
  Let $M$ be a mix-automaton, and put $r\mathrel{:=}\operatorname{syrk}(M)$. Then
  \[
    \left|\{D_M(u) : u \in \operatorname{dom}_M \text{ and } D_M(u)<n\}\right| = O((\log n)^r).
  \]
  The constant hidden in the $O$-notation depends on $M$ but not on $n$.
\end{lemma}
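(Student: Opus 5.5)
The plan is to reduce the count to a lattice-point count. Every path $u\in\operatorname{dom}_M$ with $D_M(u)\ge 1$ satisfies $\ln D_M(u)=\sum_{h<|u|}\ln d_{\operatorname{run}_M(u[:h])}$, which is the log-weight of a walk from $q_0$ in the directed graph of $M$. We will decompose each walk into a simple path plus a multiset of simple cycles; this is the standard cycle-peeling decomposition. Repeatedly remove a closed subwalk that is a simple cycle until the remainder is a simple path.

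Let $C_1,\dots,C_p$ enumerate the simple cycles of $M$; there are finitely many. Then
\[
  \ln D_M(u)=w(\pi)+\sum_{c=1}^{p} a_c\, w(C_c),
\]
where $\pi$ ranges over the finitely many simple paths from $q_0$ and $a_c\in\mathbb{N}$. Since every $d_q\ge 2$, each $w(C_c)\ge \ln 2>0$. Hence $D_M(u)<n$ forces $a_c\le \log_2 n$ for every $c$, and also bounds $|\pi|$, though $\pi$ is finite anyway.

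Next we use the rank hypothesis. The values $w(C)$ of all directed cycles span a $\mathbb{Q}$-space of dimension $r$, and every cycle weight is a nonnegative integer combination of simple-cycle weights. So the simple-cycle weights $w(C_1),\dots,w(C_p)$ already span this space. Choose a $\mathbb{Q}$-basis $b_1,\dots,b_r$ of this span and write $w(C_c)=\sum_{i} \lambda_{c,i} b_i$ with $\lambda_{c,i}\in\mathbb{Q}$. Let $N$ be a common denominator of all $\lambda_{c,i}$. Then
\[
  \sum_c a_c w(C_c)=\frac1N\sum_{i=1}^r z_i b_i,
\]
with integers $z_i$ satisfying $|z_i|\le N\sum_c a_c|\lambda_{c,i}|\le K\log_2 n$ for a constant $K$ depending only on $M$. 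The value $\sum_c a_c w(C_c)$ is determined by the vector $(z_1,\dots,z_r)$. By this bound the vector takes at most $(2K\log_2 n+1)^r=O((\log n)^r)$ values. Multiplying by the number of simple paths $\pi$, a constant, bounds the number of distinct reals $\ln D_M(u)$, and hence of distinct integers $D_M(u)<n$, by $O((\log n)^r)$.

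The main point needing care is the decomposition step. Cycle peeling must yield an honest simple path from $q_0$ together with simple cycles, not merely closed walks. The argument is the usual one: take the first repeated vertex, excise the segment between its two occurrences (a simple cycle), and iterate; the multiset of log-weights is preserved. A second subtlety is that the $a_c$ must be bounded by $O(\log n)$, not merely finite, which is exactly what $d_q\ge 2$ provides. The case $r=0$ is consistent: there are no cycles, so only finitely many values occur, and the bound is $O(1)$.
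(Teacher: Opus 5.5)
Your proof is correct and takes essentially the same route as the paper. Both arguments peel each walk into a simple path from $q_0$ plus simple cycles, and use $d_q\ge 2$ to bound the cycle multiplicities by $O(\log n)$. Both then write the cycle weights over a $\mathbb{Q}$-basis with a common denominator, so the cycle contribution is fixed by an integer vector in a box of side $O(\log n)$, and multiply by the finitely many path contributions. The differences are only cosmetic: the paper picks its basis among the cycle weights and bounds $\sum_C m_C$ instead of each multiplicity, and your $r=0$ case can never arise, since every vertex has $d_q\ge 2$ successors, so the graph has no sinks and always contains a cycle.
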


\begin{proof}
  Write $a_q \mathrel{:=} \ln d_q$. For $u \in \operatorname{dom}_M$, set $q_h \mathrel{:=} \operatorname{run}_M(u[:h])$ for $0 \leq h \leq |u|$. Then
  \[
    \ln D_M(u) = \sum_{h<|u|} a_{q_h}.
  \]

  Let $\mathcal{C}$ be the finite set of directed cycles in the directed graph of $M$. We first use the decomposition of a walk into a path and some number of cycles. Namely, for every $u \in \operatorname{dom}_M$, there exist a directed path $\pi$ starting at $q_0$ and integers $m_C \in \mathbb{N}$, indexed by $C \in \mathcal{C}$, such that
  \begin{equation}
    \ln D_M(u) = b_\pi + \sum_{C \in \mathcal{C}} m_C w(C),
    \label{eq:mix-walk-cycle-decomp}
  \end{equation}
  where, if $\pi=p_0 \dots p_\ell$, then $b_\pi \mathrel{:=} \sum_{j<\ell} a_{p_j}$.

  Indeed, consider the state walk $q_0 q_1 \dots q_{|u|}$. If it has a repeated vertex, let $q_j$ be the first vertex, in the order of the walk, that has appeared before, and let $i<j$ be its previous occurrence. Then the vertices $q_i,q_{i+1},\dots,q_{j-1}$ are distinct, and together with the edge from $q_{j-1}$ to $q_j=q_i$ they form a directed cycle $C$. Removing the segment $q_i \dots q_{j-1}$ preserves the property of being a walk, decreases the length, and subtracts exactly $w(C)=\sum_{h=i}^{j-1} a_{q_h}$ from the weight sum. Iterating this operation terminates with a directed path, proving \eqref{eq:mix-walk-cycle-decomp}.

  It is enough to prove the bound for $n\geq 2$. Put $L\mathrel{:=}\ln n$ and $\lambda\mathrel{:=}\min_{C \in \mathcal{C}} w(C)$. Since every $d_q \geq 2$ and $\mathcal{C}$ is non-empty (our graph has no sinks), we have $\lambda>0$. If $D_M(u)<n$, then \eqref{eq:mix-walk-cycle-decomp} gives
  \[
    \sum_{C \in \mathcal{C}} m_C w(C) \leq \ln D_M(u) < L,
  \]
  and therefore
  \begin{equation}
    \sum_{C \in \mathcal{C}} m_C \leq \frac{L}{\lambda}.
    \label{eq:mix-cycle-mult-bound}
  \end{equation}

  Choose directed cycles $C_0,\dots,C_{r-1} \in \mathcal{C}$ such that $\eta_i\mathrel{:=}w(C_i)$ is a $\mathbb{Q}$-basis of the $\mathbb{Q}$-span of the cycle weights. Since $\mathcal{C}$ is finite, there is an integer $A\geq 1$ and integers $z_{C,i}$ such that, for every $C \in \mathcal{C}$,
  \begin{equation*}
    A \cdot w(C) = \sum_{i<r} z_{C,i} \eta_i.
  \end{equation*}
  Let $K\mathrel{:=}\max_{C \in \mathcal{C},\ i<r}|z_{C,i}|$. For any tuple $(m_C)_{C \in \mathcal{C}}$ satisfying \eqref{eq:mix-cycle-mult-bound}, define
  \[
    z_i \mathrel{:=} \sum_{C \in \mathcal{C}} m_C z_{C,i} \quad (i<r).
  \]
  Then $|z_i| \leq K L / \lambda$ for every $i<r$. Hence the number of possible integer vectors $(z_0,\dots,z_{r-1})$ is $O(L^r)$, with a constant depending only on $M$.

  The vector $(z_i)_{i<r}$ determines the real number
  \[
  \begin{aligned}
    \sum_{C \in \mathcal{C}} m_C w(C)
      &= \sum_{C \in \mathcal{C}} m_C \cdot \frac{1}{A} \sum_{i<r} z_{C,i}\eta_i \\
      &= \frac{1}{A}\sum_{i<r} z_i\eta_i
  \end{aligned}
  \]
  uniquely. Therefore the number of possible cycle contributions in \eqref{eq:mix-walk-cycle-decomp} is $O(L^r)$. Multiplying by the finite number of possible path contributions $b_\pi$, the number of possible values of $\ln D_M(u)$ with $D_M(u)<n$ is $O(L^r)$. Since the logarithm is injective, the same bound holds for the products $D_M(u)$ themselves.
\end{proof}

\begin{proposition}
\label{prp:mix-auto-szlc}
  Let $M$ be a zero-normalized mix-automaton and let $x\mathrel{:=}\operatorname{val}(M)$ and $r\mathrel{:=}\operatorname{syrk}(M)$. Then, for every $n\geq 2$,
  \[
    \operatorname{sZLC}(x[:n]) = O((\log n)^r).
  \]
  The constant hidden in the $O$-notation depends on $M$ but not on $n$.
\end{proposition}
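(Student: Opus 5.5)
We build a synchronous ZLP directly from $M$ by unfolding it, identifying vertices that share the same automaton state and the same path scale. For each walk $u\in\operatorname{dom}_M$ with $D_M(u)<n$, consider the pair $(\operatorname{run}_M(u),D_M(u))\in Q\times\mathbb{N}$. Let $V$ be the set of such pairs. By Lemma~\ref{lem:mix-product-count}, the number of distinct values $D_M(u)<n$ is $O((\log n)^r)$. Hence $|V|\leq |Q|\cdot O((\log n)^r)$.

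\textbf{Construction.} Introduce one internal vertex $v_{q,D}$ for each $(q,D)\in V$, with root $v_{q_0,1}$, and set $\phi(v_{q,D})\mathrel{:=}D$. For the outgoing list of $v_{q,D}$, take the children $\delta(q)[0],\dots,\delta(q)[d_q-1]$ in order. The $i$-th entry is $v_{\delta(q)[i],\,d_qD}$ if $d_qD<n$. Otherwise, when $d_qD\geq n$, the subsequence of $\operatorname{val}(P)$ that this child controls has at most one position below $n$, namely its index-$0$ position. We then replace the child by the terminal letter $\tau(\delta(q)[i])$; this is exactly the first symbol of $\operatorname{val}_M(\delta(q)[i])$.

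\textbf{Checks.} The synchronicity condition $\phi(\text{child})=d_q\phi(\text{parent})$ holds by construction. Acyclicity holds because $D$ strictly increases along every edge, since $d_q\geq 2$. Unreachable vertices should be discarded, so that the root is the unique source; since $V$ is defined via walks from $q_0$, every vertex is in fact reachable. The size is $|P|\leq \max_q d_q\cdot |V|=O((\log n)^r)$.

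\textbf{Correctness.} We need $x[:n]\sqsubseteq\operatorname{val}(P)$. The natural route is the occurrence identity used in the proof of Proposition~\ref{prp:arc-vs-szlc}. Consider any position $t<n$ whose path from the root passes through $v_{q,D}$, landing at a position $c+Dt'$ with $c<D$. I would show by induction (on $t$, or top-down) that
\[
x[c+Dt']=\operatorname{val}_M(q)[t'].
\]
This follows from the zip identity \eqref{eq:mix-zip-identity}, which holds exactly thanks to zero-normalization. At the truncated terminals, $\operatorname{val}_M(q')[0]=\tau(q')$.

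\textbf{Main obstacle.} The main difficulty is the interaction between the finite zip length rule \eqref{eq:zip-length} and truncation: terminal children have length $1$, so a zip may cut off early. For a vertex $v_{q,D}$, its positions within $[0,n)$ are $c+Dt'$. I would prove, by reverse induction on $D$, that $|\operatorname{val}_P(v_{q,D})|\geq \lceil (n-c)/D\rceil$ for every $c<D$ occurring (or simply $\geq\lceil n/D\rceil$ suffices when $c=0$ is the worst case; in general require $\ge \lceil (n - c)/D\rceil$ for the minimal relevant $c$, and $c\ge0$ makes $\lceil n/D\rceil$ sufficient). Concretely, a terminal child at scale $d_qD\geq n$ has length $1\geq\lceil n/(d_qD)\rceil$. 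An internal child at scale $D'=d_qD$ has length at least $\lceil n/D'\rceil$ by induction. Then zip yields length at least $d_q\lceil n/(d_qD)\rceil\geq\lceil n/D\rceil$. Consequently $|\operatorname{val}(P)|\geq n$, and combined with the value identity this gives $\operatorname{sZLC}(x[:n])\leq|P|=O((\log n)^r)$.
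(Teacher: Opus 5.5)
Your proposal is correct and follows essentially the same proof as the paper. It uses the same vertex set of attainable pairs $(q,D)$ with $D<n$, the same truncation to the terminal $\tau(\delta(q)[i])$ once $d_qD\geq n$, the synchronizing map $\phi(v_{q,D})=D$, the size bound via Lemma~\ref{lem:mix-product-count}, and the reverse induction on $D$ giving $|\operatorname{val}_P(v_{q,D})|\geq\lceil n/D\rceil$. The only cosmetic difference is in checking symbol values: you go top-down via an occurrence identity in $x$ (in the style of Lemma~\ref{lem:szlp-occ}), while the paper proves $\operatorname{val}_P(v_{q,D})\sqsubseteq\operatorname{val}_M(q)$ bottom-up together with the length bound, and both rest on the zip identity \eqref{eq:mix-zip-identity} and hence on zero-normalization.
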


\begin{proof}
  Fix $n \geq 2$. Write $d_q \mathrel{:=} |\delta(q)|$ as in Definition~\ref{def:mix-automaton}, so that $d_q \geq 2$ for every $q$. By \eqref{eq:mix-path-scale}, the quantity $D_M(u)$ is the product of the arities $d_{\operatorname{run}_M(u[:h])}$ read along the walk $u$; we call it the \emph{scale} of $u$. Scales multiply along edges: for $u \in \operatorname{dom}_M$ and $i < d_{\operatorname{run}_M(u)}$,
  \[
    D_M(ui) = D_M(u) \cdot d_{\operatorname{run}_M(u)}.
  \]
  Since every $d_q \geq 2$, we have $D_M(u) \geq 2^{|u|}$, so the only walk of scale $1$ is $u = \epsilon$, with $\operatorname{run}_M(\epsilon) = q_0$.

  \textbf{The program.} Call a pair $(q, D) \in Q \times \mathbb{N}^+$ \emph{attainable} if $q = \operatorname{run}_M(u)$ and $D = D_M(u)$ for some $u \in \operatorname{dom}_M$. For every attainable pair with $D < n$ introduce a fresh vertex $v_{q,D}$, whose \emph{scale} is by definition its second coordinate $D$, and let
  \[
    V \mathrel{:=} \{ v_{q,D} : (q,D) \text{ attainable}, D < n \}.
  \]
  We make $V$ into a zipline program $P = (V, v_{q_0,1}, \delta_P)$ over $\Sigma$. Each $v_{q,D}$ has arity $d_q$, and for $i < d_q$ we set
  \[
    \delta_P(v_{q,D})[i] \mathrel{:=}
    \begin{cases}
      v_{\delta(q)[i], D d_q} & \text{if } D d_q < n,\\
      \tau(\delta(q)[i]) & \text{if } D d_q \geq n.
    \end{cases}
  \]
  Notice that each vertex is either \emph{internal}, with all children vertices in $V$, or a \emph{leaf}, with all children symbols in $\Sigma$. In the internal case the children are indeed valid: if $(q,D)$ is attainable via $u$, then by \eqref{eq:mix-path-scale} the walk $ui$ attains $(\delta(q)[i], D d_q)$, which has scale $D d_q < n$.

  \textbf{$P$ is a synchronous ZLP.} We verify the requirements of Definition~\ref{def:zlp} together with synchronicity (Definition~\ref{def:sync-zlp}).
  \begin{itemize}
    \item \emph{Arity.} Every vertex has arity $d_q \geq 2$.
    \item \emph{Acyclicity.} Each edge between vertices runs from $v_{q,D}$ to some $v_{q',D d_q}$, increasing the scale from $D$ to $D d_q \geq 2D > D$; since the scale strictly increases along edges, there is no directed cycle.
    \item \emph{Unique source.} No child is ever $v_{q_0,1}$, since every child in $V$ has scale $D d_q \geq 2 > 1$ whereas $v_{q_0,1}$ has scale $1$; thus $v_{q_0,1}$ occurs in no list $\delta_P(\cdot)$. Conversely, every element of $V$ occurs as a child: let $v_{q,D} \in V$ with $(q,D) \neq (q_0,1)$. Then $D > 1$, because $(q_0,1)$ is the only attainable pair of scale $1$, so any witnessing walk is non-empty; write it as $u'i$ and set $q' \mathrel{:=} \operatorname{run}_M(u')$. By \eqref{eq:mix-path-scale}, $(q', D_M(u'))$ is attainable with $D_M(u') = D / d_{q'} < n$, so $v_{q',D_M(u')} \in V$, and its $i$-th child is $v_{\delta(q')[i],D_M(u')d_{q'}} = v_{q,D}$.
    \item \emph{Synchronicity.} Set $\phi(v_{q,D}) \mathrel{:=} D$, the scale of the vertex. Then $\phi(v_{q_0,1}) = 1$, and along every edge $v_{q,D} \to v_{q',D d_q}$ we have $\phi(v_{q',D d_q}) = D d_q = |\delta_P(v_{q,D})| \cdot \phi(v_{q,D})$.
  \end{itemize}

  \textbf{Value.} We show, by downward induction on the (finitely many) attainable scales $D < n$, that every $v_{q,D} \in V$ satisfies
  \begin{equation}
    \operatorname{val}_P(v_{q,D}) \sqsubseteq \operatorname{val}_M(q) \quad \text{and} \quad |\operatorname{val}_P(v_{q,D})| \geq \left\lceil \frac{n}{D} \right\rceil.
    \label{eq:mix-szlc-induction}
  \end{equation}

  Fix $v_{q,D} \in V$, put $D' \mathrel{:=} Dd_q$, and let $c_0, \dots, c_{d_q - 1}$ be its children. We first claim that, for every $i < d_q$,
  \begin{equation}
    \operatorname{val}_P(c_i) \sqsubseteq \operatorname{val}_M(\delta(q)[i]) \quad \text{and} \quad |\operatorname{val}_P(c_i)| \geq \left\lceil \frac{n}{D'} \right\rceil.
    \label{eq:mix-child-bound}
  \end{equation}
  If $D' \geq n$, then $c_i = \tau(\delta(q)[i])$, so $\operatorname{val}_P(c_i)$ is the one-symbol word $\tau(\delta(q)[i])$. Since
  \[
    \operatorname{val}_M(\delta(q)[i])[0] = \tau(\delta(q)[i])
  \]
  by definition of $\operatorname{val}_M$, we get
  \[
    \operatorname{val}_P(c_i) \sqsubseteq \operatorname{val}_M(\delta(q)[i]),
  \]
  while $|\operatorname{val}_P(c_i)| = 1 = \lceil n / D'\rceil$ because $D' \geq n$. If $D' < n$, then $c_i = v_{\delta(q)[i],D'}$ and \eqref{eq:mix-child-bound} is the inductive hypothesis \eqref{eq:mix-szlc-induction} at the larger scale $D'$.

  By definition, $\operatorname{val}_P(v_{q,D}) = \operatorname{zip}(\operatorname{val}_P(c_0), \dots, \operatorname{val}_P(c_{d_q - 1}))$. For the prefix claim, take $\iota < |\operatorname{val}_P(v_{q,D})|$ and write $\iota = d_q t + i$ with $i < d_q$. By \eqref{eq:zip}, $\operatorname{val}_P(v_{q,D})[\iota] = \operatorname{val}_P(c_i)[t]$, which is in particular a genuine symbol of $\operatorname{val}_P(c_i)$, so $t < |\operatorname{val}_P(c_i)|$. Using \eqref{eq:mix-child-bound} and then \eqref{eq:mix-zip-identity},
  \[
  \begin{aligned}
    \operatorname{val}_P(v_{q,D})[\iota]
      &= \operatorname{val}_P(c_i)[t] \\
      &= \operatorname{val}_M(\delta(q)[i])[t] \\
      &= \operatorname{val}_M(q)[d_q t + i] \\
      &= \operatorname{val}_M(q)[\iota],
  \end{aligned}
  \]
  so $\operatorname{val}_P(v_{q,D}) \sqsubseteq \operatorname{val}_M(q)$. For the length claim, the definition of $\operatorname{zip}$ gives $|\operatorname{zip}(u_0, \dots, u_{d_q - 1})| \geq d_q \min_{i < d_q} |u_i|$, whence by \eqref{eq:mix-child-bound}
  \[
    |\operatorname{val}_P(v_{q,D})| \geq d_q \min_{i < d_q} |\operatorname{val}_P(c_i)| \geq d_q \left\lceil \frac{n}{D'} \right\rceil = d_q \left\lceil \frac{n}{D d_q} \right\rceil \geq \left\lceil \frac{n}{D} \right\rceil,
  \]
  the last inequality because $d_q \lceil n / (D d_q)\rceil$ is an integer that is at least $n/D$. This proves \eqref{eq:mix-szlc-induction}.

  Applying \eqref{eq:mix-szlc-induction} at the root $v_{q_0,1}$ (scale $1$) yields $\operatorname{val}(P) = \operatorname{val}_P(v_{q_0,1}) \sqsubseteq \operatorname{val}_M(q_0) = x$ and $|\operatorname{val}(P)| \geq \lceil n / 1\rceil = n$. Therefore $x[:n] \sqsubseteq \operatorname{val}(P)$.

  \textbf{Size.} Each vertex contributes $d_q \leq \max_{q \in Q} d_q$ outgoing edges, so $|P| = \sum_{v_{q,D} \in V} d_q \leq (\max_{q \in Q} d_q) \cdot |V|$. Every attainable scale $D < n$ is paired with at most $|Q|$ states, hence
  \[
    |V| \leq |Q| \cdot \left|\{D_M(u) : u \in \operatorname{dom}_M, D_M(u) < n\}\right| = O((\log n)^r)
  \]
  by Lemma~\ref{lem:mix-product-count}. As $|Q|$ and $\max_{q \in Q} d_q$ depend only on $M$,
  \[
    \operatorname{sZLC}(x[:n]) \leq |P| = O((\log n)^r).
  \]
\end{proof}

By Lemma~\ref{lem:mix-zero-normalization} and Proposition~\ref{prp:mix-auto-szlc}, every mix-automatic sequence has polylogarithmic $\operatorname{sZLC}$ growth. By Proposition~\ref{prp:arc-vs-szlc}, the same is true for $\operatorname{ARC}$. Consequently, combining Proposition~\ref{prp:mix-auto-szlc} with Theorem~\ref{thm:arc-muwu}, every fixed mix-automatic sequence is predicted by ARC-MUWU with a polylogarithmic mistake bound; the exponent may depend on the generating mix-automaton.

\section{Properties of ARS and ZLP}
\label{sec:apx-props}

Before establishing the basic properties of ARS and ZLP, we illustrate Definition~\ref{def:ars} with a concrete example. We use the Thue-Morse sequence (Example~\ref{ex:thue_morse}). Note how, in contrast to the step-1 copies underlying LZ77 factorizations (Proposition~\ref{prp:arc-vs-lzc}), the example makes essential use of progressions of step greater than 1.

\begin{example}
\label{ex:ars-tm}
  Let $\Sigma = \{0,1\}$ and let $x \in \Sigma^\omega$ be the Thue-Morse sequence $u_\infty$ of Example~\ref{ex:thue_morse}. Identifying the symbols with bits, the digit-sum characterization of $u_\infty$ yields the recurrences
  \begin{equation}
    x[2t] = x[t], \quad x[2t+1] = 1 - x[t].
    \label{eq:tm-rec}
  \end{equation}
  The second recurrence is a complementation rather than an equality, so it cannot be used in an ARS directly. However, iterating \eqref{eq:tm-rec} once gives two genuine equalities between arithmetic-progression subsequences:
  \[
    x[4t+1] = 1 - x[2t] = 1 - x[t] = x[2t+1],
  \]
  \[
    x[4t+3] = 1 - x[2t+1] = x[t].
  \]

  Fix $n \geq 4$ and consider the prefix $u \mathrel{:=} x[:n]$. We define $R=(L,\Phi)$ as follows. The literals anchor the first two positions:
  \[
    L \mathrel{:=} \{(0,0),(1,1)\}.
  \]
  The set $\Phi$ consists of three equations, obtained from the three identities above by truncating each pair of infinite progressions to its visible part. Writing $\ell_1 \mathrel{:=} \lceil n/2\rceil$, $\ell_2 \mathrel{:=} \lceil (n-1)/4\rceil$ and $\ell_3 \mathrel{:=} \lceil (n-3)/4\rceil$ for the number of indices $t$ with $2t<n$, $4t+1<n$ and $4t+3<n$ respectively, we set
  \[
  \begin{aligned}
    \Phi \mathrel{:=} \{&((1,0,0,\ell_1),(2,0,0,\ell_1)),\\
                       &((2,1,0,\ell_2),(4,1,0,\ell_2)),\\
                       &((1,0,0,\ell_3),(4,3,0,\ell_3))\}.
  \end{aligned}
  \]

  We verify the three conditions of Definition~\ref{def:ars}. The literal condition holds since $x[0]=0$ and $x[1]=1$. For the equation condition, consider e.g. the second pair $(\alpha,\beta)$ with $\alpha(t)=2t+1$ and $\beta(t)=4t+1$: by the choice of $\ell_2$, every target position $\beta(t)=4t+1$ with $t<\ell_2$ satisfies $\beta(t)<n$, and the corresponding source position $\alpha(t)=2t+1\leq 4t+1$ is then also smaller than $n$; hence $u[\alpha]$ and $u[\beta]$ are both words of length $\ell_2$, equal symbol-by-symbol by the identity $x[4t+1]=x[2t+1]$. The other two pairs are handled in the same way using $x[2t]=x[t]$ and $x[4t+3]=x[t]$.

  For coverage, fix $2\leq i<n$. If $i$ is even, write $i=2t$ with $t\geq 1$; the first equation covers $i$, since $t<2t=i$. If $i\equiv 3 \pmod{4}$, write $i=4t+3$; the third equation covers $i$, since $t<4t+3=i$. Finally, if $i\equiv 1 \pmod{4}$ and $i\geq 5$, write $i=4t+1$ with $t\geq 1$; the second equation covers $i$, since $2t+1<4t+1=i$. (Note that at $t=0$ the second equation degenerates to the trivial identity $x[1]=x[1]$ and covers nothing, which is why position 1 must be a literal.)

  Therefore $R$ is an ARS for $x[:n]$, and
  \[
    \operatorname{ARC}(x[:n]) \leq |L|+|\Phi| = 5
  \]
  uniformly in $n$.
\end{example}

It is instructive to compare this direct construction with the generic route through automaticity: since $\mathrm{AC}^{\mathrm{R}}_2(x[:n])=2$, combining Proposition~\ref{prp:zlc-vs-acr} with Proposition~\ref{prp:arc-vs-szlc} only yields $\operatorname{ARC}(x[:n])=O(\log n)$, whereas the system above witnesses $\operatorname{ARC}(x[:n])=O(1)$.

\begin{proposition}
\label{prp:arc-measure}
  $\operatorname{ARC}$ is a word complexity measure.
\end{proposition}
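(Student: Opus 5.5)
We have to verify the two conditions of Definition~\ref{def:word-complexity}: the polynomial bound \eqref{eq:word-com-bound} and approximate monotonicity \eqref{eq:mono}.

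\textbf{Polynomial bound.} For $u\in\Sigma^n$ we take the trivial system $R=(L,\emptyset)$ with $L\mathrel{:=}\{(u[i],i):i<n\}$. The literal condition holds by construction, there are no equations to check, and every position is covered by a literal, so $R$ is an ARS. Hence $\operatorname{ARC}(u)\leq n=|u|$, and we may take $p(x)=x$. Alternatively, one can invoke Proposition~\ref{prp:arc-vs-lzc} together with $\operatorname{LZC}(u)\leq|u|$.

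\textbf{Monotonicity.} We will in fact show exact monotonicity: $\operatorname{ARC}(u[:t])\leq\operatorname{ARC}(u)$ for $t\leq|u|$. This gives \eqref{eq:mono} with $q\equiv1$. Let $R=(L,\Phi)$ be a minimum ARS for $u$, and set $v\mathrel{:=}u[:t]$.

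\emph{Literals.} Define $L'\mathrel{:=}\{(a,i)\in L: i<t\}$, which is literally correct for $v$.

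\emph{Equations.} For each $(\alpha,\beta)\in\Phi$ with $\alpha=(r,i,a,b)$, $\beta=(r',i',a',b)$ (common length $b-a=b'-a'$), let $\ell'$ be the largest $\ell'\leq|\alpha|$ such that $\alpha(s)<t$ and $\beta(s)<t$ for all $s<\ell'$. Since both maps are increasing, the set of such $s$ is an initial segment. If $\ell'=0$, drop the pair. Otherwise replace it by the truncated intervals $(r,i,a,a+\ell')$ and $(r',i',a',a'+\ell')$.

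\emph{Checking the equality condition.} The equality $v[\alpha']=v[\beta']$ follows symbol-by-symbol from $u[\alpha]=u[\beta]$. This is the subtle point, because $u[\alpha]$ only reads positions below $|u|$, so the two truncated words could a priori have different lengths. We argue via the same observation as in the proof of Lemma~\ref{lem:ars-conjunction}: whenever $\alpha(s),\beta(s)<|u|$ we have $u[\alpha(s)]=u[\beta(s)]$, since both are the $s$-th symbol of equal words. In the truncated system both progressions stay below $t$ on all of $[\ell']$, so $v[\alpha']$ and $v[\beta']$ are both of length $\ell'$ and agree at every index.

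\emph{Coverage.} Take $j<t$. If $(u[j],j)\in L$, then it lies in $L'$. Otherwise there are $(\alpha,\beta)\in\Phi$ and $s$ with $\alpha(s)<\beta(s)=j<t$. Then for every $s''\leq s$, monotonicity gives $\alpha(s''),\beta(s'')\leq\max(\alpha(s),\beta(s))<t$. Hence $s<\ell'$, the pair survives truncation, and it still covers $j$.

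\textbf{Conclusion.} The new system has size at most $|R|$, so $\operatorname{ARC}(v)\leq\operatorname{ARC}(u)$. The only step requiring care is the equality condition under truncation, specifically the length-mismatch issue between $u[\alpha]$ and $u[\beta]$, which is handled by the pointwise observation above.
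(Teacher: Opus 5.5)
Your proof is correct and follows the paper's argument essentially step for step. The all-literal system gives $\operatorname{ARC}(u)\leq|u|$; for monotonicity, you truncate each pair $(\alpha,\beta)\in\Phi$ to the longest initial segment of its domain on which both progressions stay below $t$, discard pairs whose segment is empty, and run the same literal, equality and coverage checks, which yields $\operatorname{ARC}(u[:t])\leq\operatorname{ARC}(u)$. The only blemish is the typo $\beta=(r',i',a',b)$, which should read $(r',i',a',b')$.
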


\begin{proof}
  We verify the two conditions of Definition~\ref{def:word-complexity}.

  \emph{Polynomial bound.}
  For any $u \in \Sigma^*$, let $n\mathrel{:=}|u|$, $L\mathrel{:=}\{(u[i],i): i<n\}$, and $\Phi\mathrel{:=}\emptyset$.
  Then $R\mathrel{:=}(L,\Phi)$ is an ARS for $u$: the first condition of Definition~\ref{def:ars} holds by construction, the second is vacuous, and the third holds because $(u[i],i)\in L$ for every $i<n$. Hence
  \[
    \operatorname{ARC}(u) \leq |R| = n,
  \]
  yielding the polynomial bound with $p(x)=x$.

  \emph{Approximate monotonicity.}
  We prove the stronger statement
  \begin{equation}
    \operatorname{ARC}(u[:k]) \leq \operatorname{ARC}(u)
    \label{eq:arc-monotone}
  \end{equation}
  for every $u \in \Sigma^*$ and every $k\leq |u|$, which yields the required bound with $q\equiv 1$.

  Let $R=(L,\Phi)$ be an ARS for $u$. We construct an ARS $R'=(L',\Phi')$ for $u[:k]$ with $|R'|\leq |R|$, which establishes \eqref{eq:arc-monotone} upon taking $R$ to be of minimum size.

  Set
  \[
    L' \mathrel{:=} \{(a,i)\in L : i<k\}.
  \]
  For each $(\alpha,\beta)\in\Phi$ with $\alpha=(r_\alpha,i_\alpha,a_\alpha,b_\alpha)$ and $\beta=(r_\beta,i_\beta,a_\beta,b_\beta)$, define
  \[
    T(\alpha,\beta) \mathrel{:=} \max \{t \in \{0,1,\dots,|\alpha|\} : \alpha(s)<k \text{ and } \beta(s)<k \text{ for all } s<t\}.
  \]

  If $T(\alpha,\beta)\geq 1$, set
  \[
    \alpha' \mathrel{:=} (r_\alpha,i_\alpha,a_\alpha,a_\alpha+T(\alpha,\beta)), \quad
    \beta' \mathrel{:=} (r_\beta,i_\beta,a_\beta,a_\beta+T(\alpha,\beta)),
  \]
  and include $(\alpha',\beta')$ in $\Phi'$; otherwise, discard the pair.

  We verify that $R'$ satisfies the three conditions of Definition~\ref{def:ars} with respect to $u[:k]$.

  \emph{Literal positions.}
  For any $(a,i)\in L'$, we have $(a,i)\in L$, so $u[i]=a$. Since $i<k$, $u[:k][i]=u[i]=a$, as required.

  \emph{Equation pairs.}
  Fix $(\alpha',\beta')\in\Phi'$ derived from $(\alpha,\beta)\in\Phi$, and write $T\mathrel{:=}T(\alpha,\beta)\geq 1$. By construction, $|\alpha'|=|\beta'|=T$, and for every $t<T$,
  \[
    \alpha'(t)=\alpha(t)<k, \quad \beta'(t)=\beta(t)<k.
  \]

  Therefore, $u[:k][\alpha']=u[\alpha']=u[\alpha][:T]$ and $u[:k][\beta']=u[\beta']=u[\beta][:T]$. Since $u[\alpha]=u[\beta]$, we conclude that $u[:k][\alpha']=u[:k][\beta']$.

  \emph{Coverage.}
  Fix $i<k$. The third condition of Definition~\ref{def:ars} applied to $R$ at position $i$ has two cases. If $(u[i],i)\in L$, then $i<k$ gives $(u[i],i)\in L'$. Otherwise, there exist $(\alpha,\beta)\in\Phi$ and $t\in\operatorname{dom}(\alpha)$ with $\alpha(t)<\beta(t)=i$. Then $\beta(t)=i<k$ and $\alpha(t)<i<k$, so $t+1$ lies in the set defining $T(\alpha,\beta)$, whence $T(\alpha,\beta)\geq t+1\geq 1$. The truncated pair $(\alpha',\beta')$ therefore belongs to $\Phi'$, and
  \[
    \alpha'(t)=\alpha(t)<\beta(t)=\beta'(t)=i,
  \]
  with $t\in\operatorname{dom}(\alpha')$ since $t<T(\alpha,\beta)$.

  \emph{Size.}
  By construction, $L'\subseteq L$, and $\Phi'$ is the image of a subset of $\Phi$ under the truncation map $(\alpha,\beta)\mapsto(\alpha',\beta')$. Hence $|L'|\leq |L|$ and $|\Phi'|\leq |\Phi|$, so $|R'|\leq |R|$.
\end{proof}

We now turn to zipline programs, and illustrate Definition~\ref{def:zlp} and Definition~\ref{def:sync-zlp} with two examples: a synchronous ZLP generating the prefixes of the Thue-Morse sequence, and a small ZLP which is not synchronous.

\begin{example}
\label{ex:zlp-tm}
  Let $\Sigma=\{0,1\}$ and let $u_\infty$ be the Thue-Morse sequence of Example~\ref{ex:thue_morse}, with $u_m=u_\infty[:2^m]$. The recurrences \eqref{eq:tm-rec}, restricted to $t<2^m$, can be restated as the zip identities
  \begin{equation}
    u_{m+1} = \operatorname{zip}(u_m,\overline{u}_m), \quad \overline{u}_{m+1} = \operatorname{zip}(\overline{u}_m,u_m).
    \label{eq:tm-zip}
  \end{equation}

  Fix $m\geq 2$ and define $P_m=(Q,v_0,\delta)$ with $Q\mathrel{:=}\{v_0,\dots,v_{m-1}\}\cup\{w_1,
  \dots,w_{m-1}\}$ and
  \begin{itemize}
    \item $\delta(v_j)\mathrel{:=}(v_{j+1},w_{j+1})$ and $\delta(w_j)\mathrel{:=}(w_{j+1},v_{j+1})$ for $j<m-1$;
    \item $\delta(v_{m-1})\mathrel{:=}(0,1)$ and $\delta(w_{m-1})\mathrel{:=}(1,0)$.
  \end{itemize}

  This is a valid ZLP: every arity is $2$, every edge goes from index $j$ to index $j+1$ (so $\delta$ is acyclic), and every vertex other than $v_0$ appears in some outgoing list. A reverse induction on $j$ shows
  \[
    \operatorname{val}_{P_m}(v_j)=u_{m-j}, \quad \operatorname{val}_{P_m}(w_j)=\overline{u}_{m-j}.
  \]
  Indeed, at the bottom level $\operatorname{val}_{P_m}(v_{m-1})=\operatorname{zip}(0,1)=01=u_1$ and $\operatorname{val}_{P_m}(w_{m-1})=\operatorname{zip}(1,0)=10=\overline{u}_1$, and the induction step is precisely \eqref{eq:tm-zip}. In particular, $\operatorname{val}(P_m)=u_m$.

  Moreover, $P_m$ is synchronous: take $\phi(v_j)=\phi(w_j)\mathrel{:=}2^j$. Then $\phi(v_0)=1$, and along every edge $\phi$ is multiplied by $2$, the arity of the parent. Counting $2$ outgoing edges per internal vertex,
  \[
    \operatorname{sZLC}(u_\infty[:2^m]) \leq |P_m| = 2(2m-1)=4m-2,
  \]
  and hence $\operatorname{sZLC}(u_\infty[:n])=O(\log n)$. This matches, up to a constant factor, the bound obtained by combining $\mathrm{AC}^{\mathrm{R}}_2(u_\infty[:n])=2$ with Proposition~\ref{prp:zlc-vs-acr}.
\end{example}

\begin{example}
\label{ex:zlp-nonsync}
  Let $\Sigma=\{0,1\}$ and consider $P=(Q,q_0,\delta)$ with $Q=\{q_0,q_1,q_2\}$ and
  \[
    \delta(q_0)\mathrel{:=}(q_1,q_2), \quad \delta(q_1)\mathrel{:=}(q_2,q_2), \quad \delta(q_2)\mathrel{:=}(0,1).
  \]
  Again, all conditions of Definition~\ref{def:zlp} hold: every arity is $2$, edges only descend along the order $q_0,q_1,q_2$, and $q_0$ is the unique source. Computing values bottom-up,
  \[
    \operatorname{val}_P(q_2)=\operatorname{zip}(0,1)=01,
  \]
  \[
    \operatorname{val}_P(q_1)=\operatorname{zip}(01,01)=0011,
  \]
  \[
    \operatorname{val}_P(q_0)=\operatorname{zip}(0011,01)=00011.
  \]
  In the last step, the shorter child $\operatorname{val}_P(q_2)$ has length $m=2$ and index $j=1$, so by \eqref{eq:zip-length} the zip has length $2\cdot 2+1=5$, stopping just before the round-robin process requests the undefined symbol $\operatorname{val}_P(q_2)[2]$.

  However, $P$ is not synchronous. Suppose $\phi$ were a synchronizing map. Then $\phi(q_0)=1$, and the edge $q_0\to q_2$ forces $\phi(q_2)=|\delta(q_0)|\cdot\phi(q_0)=2$; on the other hand, the path through $q_1$ forces $\phi(q_1)=2$ and then $\phi(q_2)=|\delta(q_1)|\cdot\phi(q_1)=4$---a contradiction.

  The failure of synchronicity is visible in the occurrence structure of $q_2$, in the sense of the discussion following Definition~\ref{def:sync-zlp}. Via the direct edge $q_0\to q_2$, prefixes of $\operatorname{val}_P(q_2)=01$ occupy the step-2 progression $2t+1$ inside $\operatorname{val}(P)$ (positions $1,3$, reading $01$). Via the path through $q_1$, they occupy the two step-4 progressions $4t$ and $4t+2$ (positions $0,4$, reading $01$, and position $2$, reading $0$). The same vertex thus corresponds to arithmetic progressions of two different steps, $2$ and $4$---exactly what the synchronicity condition rules out.
\end{example}

Because zipping is not associative, we cannot reduce an arbitrary ZLP to a binary one. However, we can reduce it to a ZLP where all the arities (out-degrees) are prime, thanks to the following observation.

\begin{proposition}
\label{prp:group-zip}
  Consider any $n,m>0$ and $u_0,u_1,\dots,u_{nm-1}\in\Sigma^*$. For any $i<n$, denote
  \[
    w_i\mathrel{:=}\operatorname{zip}(u_i,u_{i+n},u_{i+2n},\dots,u_{i+(m-1)n}).
  \]
  Then,
  \[
    \operatorname{zip}(u_0,u_1,\dots,u_{nm-1}) = \operatorname{zip}(w_0,w_1,\dots,w_{n-1}).
  \]
\end{proposition}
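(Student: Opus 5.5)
The plan is to compare both sides position by position, using the length formula \eqref{eq:zip-length} and the symbol formula \eqref{eq:zip}. Write $N\mathrel{:=}nm$ and $Z\mathrel{:=}\operatorname{zip}(u_0,\dots,u_{N-1})$, $Z'\mathrel{:=}\operatorname{zip}(w_0,\dots,w_{n-1})$. The natural index correspondence is as follows. Decompose a position $p$ as $p=nt+i$ with $i<n$, and further $t=ms+c$ with $c<m$. Then $p=nms+(i+cn)$ with $i+cn<N$. By \eqref{eq:zip}, $Z'[p]=w_i[t]=u_{i+cn}[s]$. On the other side, $Z[p]=u_{p\bmod N}[\lfloor p/N\rfloor]=u_{i+cn}[s]$. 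So wherever both sides are defined, their symbols agree, and the whole burden is to show the two lengths coincide.

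\textbf{Step 1: $Z$ is a prefix-truncation of the infinite round robin.} By \eqref{eq:zip-length}, $Z$ is exactly the longest prefix of the round-robin reading $p\mapsto u_{p\bmod N}[\lfloor p/N\rfloor]$ on which every requested symbol exists. That is, $|Z|$ is the least $p$ such that $\lfloor p/N\rfloor\geq|u_{p\bmod N}|$. I would first check this characterization. With $\mu=\min_j|u_j|$ and $j_0$ the least index attaining it, the first failure is at $p=N\mu+j_0$, because every $j$ with $j<j_0$ has $|u_j|>\mu$ and every earlier row is fine. The same characterization applies to each $w_i$ (with $m$ words) and to $Z'$ (with $n$ words).

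\textbf{Step 2: the first failure positions match.} Call $p$ \emph{bad} for the big zip if $s\geq|u_{i+cn}|$, in the notation above. Call $p$ bad for $Z'$ if $t\geq|w_i|$. By Step 1 applied to $w_i$, the condition $t\geq|w_i|$ holds iff some $t'\leq t$ with $t'=ms'+c'$ satisfies $s'\geq|u_{i+c'n}|$. Equivalently, it holds iff some position $p'=nt'+i\leq p$ in the same residue class mod $n$ is bad for the big zip.

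Hence the least $p$ that is bad for $Z'$ is the least $p$ for which some $p'\leq p$ with $p'\equiv p\pmod n$ is bad for the big zip. This equals the least bad $p$ for the big zip. For $\leq$: a bad $p$ for the big zip witnesses itself. For $\geq$: the witness $p'\leq p$ is itself bad for the big zip.

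Therefore $|Z'|=|Z|$, and combined with the symbolwise identity this gives $Z=Z'$.

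The main obstacle is Step 2, specifically the ``stopping just before the first missing symbol'' subtlety. Lengths of zips are not simply $r\cdot\min$, so one has to argue through first-failure positions rather than minima. I would present Step 1 as a small lemma so that Step 2 becomes the short monotonicity argument above.
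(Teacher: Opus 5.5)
Your proof is correct and follows essentially the same route as the paper: both characterize the length of a zip as the first round-robin position that requests an undefined symbol. Both then show the two first-failure positions coincide by grouping positions $p=nmt+nj+i$ according to the residue $i$ modulo $n$, and match symbols via $p=n(mt+j)+i$. Your ``bad position'' monotonicity phrasing restates the paper's computation that $n|w_i|+i$ is the least failure position in residue class $i$, minimized over $i$.
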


\begin{proof}
  We use the following immediate reformulation of the definition of $\operatorname{zip}$ in \eqref{eq:zip}. For words $v_0,\dots,v_{r-1}$ with $r>0$, the length of $\operatorname{zip}(v_0,\dots,v_{r-1})$ is the first position at which the round-robin process requests an undefined symbol. Equivalently, it is the least integer of the form $rt+a$, where $a<r$ and $v_a[t]$ is undefined; every smaller position $rt+a$ contributes the symbol $v_a[t]$.

  Let $L$ be the least integer of the form $nmt+nj+i$ such that $i<n$, $j<m$, and $u_{i+nj}[t]$ is undefined. Applying the preceding characterization to the left-hand side gives
  \[
    |\operatorname{zip}(u_0,\dots,u_{nm-1})|=L.
  \]

  We next compute the length of the right-hand side. Fix $i<n$. By the same characterization applied to the definition of $w_i$, the length $|w_i|$ is the least integer of the form $mt+j$ such that $j<m$ and $u_{i+nj}[t]$ is undefined. Therefore $n|w_i|+i$ is the least integer of the form $nmt+nj+i$ satisfying the same condition for this fixed value of $i$. Taking the least value over $i<n$, and applying the characterization once more to the outer zip, gives
  \[
    |\operatorname{zip}(w_0,\dots,w_{n-1})|=L.
  \]

  It remains to compare the symbols. Fix $s<L$. There are unique $t\in\mathbb{N}$, $j<m$, and $i<n$ such that
  \[
    s=nmt+nj+i.
  \]
  Equivalently,
  \[
    s=n(mt+j)+i.
  \]

  Since $s<L$, the symbol $u_{i+nj}[t]$ is defined. By \eqref{eq:zip} applied to the left-hand side,
  \[
    \operatorname{zip}(u_0,\dots,u_{nm-1})[s] = u_{i+nj}[t].
  \]
  Since the right-hand side also has length $L$, \eqref{eq:zip} applied to the outer zip gives
  \[
    \operatorname{zip}(w_0,\dots,w_{n-1})[s]=w_i[mt+j].
  \]
  Finally, applying \eqref{eq:zip} to the definition of $w_i$ gives
  \[
    w_i[mt+j]=u_{i+nj}[t].
  \]

  Hence,
  \[
    \operatorname{zip}(u_0,\dots,u_{nm-1})[s]=\operatorname{zip}(w_0,\dots,w_{n-1})[s].
  \]

  This holds for every $s<L$, and the two words have the same length $L$. Therefore they are equal.
\end{proof}

From this, it is straightforward to show the following.

\begin{proposition}
  For any ZLP $P$, there exists a ZLP $P'=(Q',q_0,\delta')$ s.t.
  \begin{itemize}
    \item $\operatorname{val}(P')=\operatorname{val}(P)$;
    \item $|P'|\leq 2|P|$;
    \item For all $q\in Q'$, $|\delta'(q)|$ is a prime number.
  \end{itemize}
  Moreover, if $P$ is synchronous then $P'$ can be chosen to be synchronous as well.
\end{proposition}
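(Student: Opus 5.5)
The plan is to factor each arity into primes and replace every vertex $q$ with $|\delta(q)|=d$ by a small tree of zip-vertices built with Proposition~\ref{prp:group-zip}. If $d$ is prime, $q$ is left unchanged. Otherwise write $d=p\cdot m$ with $p$ prime and $m\geq 2$. Proposition~\ref{prp:group-zip} with $n=p$ gives
\[
  \operatorname{zip}(u_0,\dots,u_{d-1})=\operatorname{zip}(w_0,\dots,w_{p-1}),
\]
where $w_i=\operatorname{zip}(u_i,u_{i+p},\dots,u_{i+(m-1)p})$. So we set $\delta'(q):=(q_0',\dots,q_{p-1}')$ for fresh vertices $q_i'$ with $\delta'(q_i'):=(\delta(q)[i],\delta(q)[i+p],\dots,\delta(q)[i+(m-1)p])$, each of arity $m$. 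We then recurse on the fresh vertices, whose arity $m$ has one fewer prime factor; this terminates after $\Omega(d)-1$ splitting steps.

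Well-formedness is straightforward. Every new arity is at least $2$, and the fresh vertices occur exactly once, inside $\delta'(q)$, so acyclicity and the unique-source condition are inherited. By induction along the refinement, Proposition~\ref{prp:group-zip} shows that $\operatorname{val}_{P'}(q)=\operatorname{val}_P(q)$ for every original $q$, and in particular $\operatorname{val}(P')=\operatorname{val}(P)$.

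The main point to check is the size bound. Unfolding the recursion, the edges introduced for $q$ form a rooted tree whose leaves are the $d$ original children and all of whose internal nodes have out-degree at least $2$. Such a tree has at most $d-1$ internal nodes, hence at most $2d-2$ edges. The tree for $q$ therefore contributes at most $2d-2\le 2|\delta(q)|$ edges, and summing over $q$ gives $|P'|\le 2|P|$.

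For synchronicity, take $\phi'(q):=\phi(q)$ on original vertices and $\phi'(q_i'):=p\,\phi(q)$ on the fresh ones. The constraint along the edge $q\to q_i'$ is then $p\,\phi(q)=|\delta'(q)|\cdot\phi'(q)$, which holds. The constraint along $q_i'\to\delta(q)[i+jp]$ is $\phi(\delta(q)[i+jp])=m\cdot p\,\phi(q)=d\,\phi(q)$, which is exactly the original constraint. Iterating the recursion preserves this, since each fresh vertex is again just a zip node with a consistent $\phi'$-value. The one place needing care is that fresh vertices must be distinct per original vertex and per split, so that no fresh vertex receives two conflicting $\phi'$-values. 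With distinct fresh vertices each has a single parent, so $\phi'$ is well defined.
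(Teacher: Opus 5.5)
Your proof is correct and uses the paper's construction. Splitting off one prime at a time and recursing unfolds to the paper's tuple-indexed gadget: the vertex $q_\eta$ at depth $s$ has arity $p_{q,s}$ and $\phi'(q_\eta)=D_{q,s}\phi(q)$, and values are preserved via Proposition~\ref{prp:group-zip} just as in the paper. The only difference is the edge count. You bound each gadget by a tree with $d_q$ leaves and all branching at least $2$, giving at most $2d_q-2$ edges, while the paper sums the geometric series $\sum_{s<\ell_q}D_{q,s+1}<2d_q$; either suffices.
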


\begin{proof}
  Write $P=(Q,q_0,\delta)$ and $\overline{Q}\mathrel{:=}Q\sqcup\Sigma$. For $q\in Q$, put $d_q\mathrel{:=}|\delta(q)|$. Choose a prime factorization
  \[
    d_q = p_{q,0}p_{q,1}\dots p_{q,\ell_q-1}.
  \]
  Since $d_q\geq 2$, we have $\ell_q\geq 1$. For $s\leq \ell_q$, set
  \[
    D_{q,s}\mathrel{:=}\prod_{r<s} p_{q,r}.
  \]
  Thus $D_{q,0}=1$ and $D_{q,\ell_q}=d_q$.

  \emph{Construction.} We replace each vertex $q$ by a prime-arity gadget. For a tuple $\eta=(\eta_0,\dots,\eta_{s-1})$, where $\eta_r<p_{q,r}$ for every $r<s$, write
  \[
    I_q(\eta) \mathrel{:=} \sum_{r<s} D_{q,r}\eta_r.
  \]
  We identify the empty tuple vertex with $q$ itself, and for every nonempty such $\eta$ with $s<\ell_q$ we introduce a fresh vertex $q_\eta$. These fresh vertices, together with the old vertices in $Q$, form $Q'$.

  It remains to define $\delta'$. Let $\eta$ be a tuple of length $s<\ell_q$. The vertex $q_\eta$ has arity $p_{q,s}$. If $s<\ell_q-1$, define
  \[
    \delta'(q_\eta)[i] \mathrel{:=} q_{\eta i}
  \]
  for $i<p_{q,s}$, where $\eta i$ denotes $\eta$ with $i$ appended. If $s=\ell_q-1$, define
  \[
    \delta'(q_\eta)[i] \mathrel{:=} \delta(q)[I_q(\eta)+D_{q,s}i]
  \]
  for $i<p_{q,s}$. This completes the definition of $P'=(Q',q_0,\delta')$.

  The arity of every vertex of $P'$ is one of the primes $p_{q,s}$. Also, $P'$ is a valid ZLP. Indeed, every fresh vertex has a unique parent inside its gadget, and every old vertex other than $q_0$ still has positive in-degree because it appeared in some list $\delta(q)$ in $P$. The vertex $q_0$ still does not appear in any outgoing list. Acyclicity follows from the acyclicity of $P$: a directed path in $P'$ either moves strictly deeper inside one gadget, or exits the gadget to a child that already lay below the original vertex in $P$.

  \emph{Value.} We prove that the values are unchanged. Fix $q\in Q$, and let $U_i\mathrel{:=}\operatorname{val}_{P'}(\delta(q)[i])$ for $i<d_q$. We claim that the value produced at a gadget vertex $q_\eta$, where $|\eta|=s$, is
  \[
    \operatorname{zip}(U_{I_q(\eta)}, U_{I_q(\eta)+D_{q,s}}, \dots, U_{I_q(\eta)+(d_q/D_{q,s}-1)D_{q,s}}).
  \]

  This is proved by reverse induction on $s$. If $s=\ell_q-1$, the assertion is immediate from the definition of the outgoing list of $q_\eta$. For the induction step, assume the assertion has been proved for tuples of length $s+1$, where $s<\ell_q-1$, and fix a tuple $\eta$ of length $s$. Put
  \[
    n_s\mathrel{:=}p_{q,s}
  \]
  and
  \[
    m_s\mathrel{:=}\frac{d_q}{D_{q,s+1}}.
  \]

  For $a<n_s m_s$, define
  \[
    V_a\mathrel{:=}U_{I_q(\eta)+D_{q,s}a}.
  \]
  For each $i<n_s$, the tuple $\eta i$ has length $s+1$, and
  \[
    I_q(\eta i)=I_q(\eta)+D_{q,s}i.
  \]
  By the induction hypothesis applied to $q_{\eta i}$,
  \[
    \operatorname{val}_{P'}(q_{\eta i}) = \operatorname{zip}(V_i,V_{i+n_s},\dots,V_{i+(m_s-1)n_s}).
  \]
  The outgoing list of $q_\eta$ in $P'$ is $(q_{\eta 0},\dots,q_{\eta(n_s-1)})$. Hence, applying Proposition~\ref{prp:group-zip} to the words $V_0,\dots,V_{n_s m_s-1}$ gives
  \[
    \operatorname{val}_{P'}(q_\eta)=\operatorname{zip}(V_0,V_1,\dots,V_{n_s m_s-1}).
  \]
  Since $n_s m_s=d_q/D_{q,s}$, this is precisely the claimed formula for $q_\eta$. This completes the reverse induction on $s$.

  For the empty tuple $\eta$, $I_q(\eta)=0$ and $D_{q,0}=1$, so the claim gives
  \[
    \operatorname{val}_{P'}(q)=\operatorname{zip}(U_0,U_1,\dots,U_{d_q-1}).
  \]
  We now prove that $\operatorname{val}_{P'}(z)=\operatorname{val}_{P}(z)$ for every $z\in\overline{Q}$ by reverse induction in a topological ordering of the dependency graph of the original program $P$. The assertion is immediate for $z\in\Sigma$. Fix $q\in Q$, and assume it is already known for every internal child of $q$ in $P$. For each $i<d_q$, if $\delta(q)[i]$ is terminal then $U_i=\operatorname{val}_{P}(\delta(q)[i])$ by definition, while if $\delta(q)[i]\in Q$ then the same equality follows from the induction hypothesis. Therefore,
  \[
    \operatorname{val}_{P'}(q)=\operatorname{zip}(\operatorname{val}_{P}(\delta(q)[0]),\dots,\operatorname{val}_{P}(\delta(q)[d_q-1])).
  \]
  The right-hand side is exactly $\operatorname{val}_{P}(q)$ by the definition of the value of a ZLP. Applying this to $q_0$ gives $\operatorname{val}(P')=\operatorname{val}(P)$.

  \emph{Size.} For fixed $q\in Q$, let $E_q$ be the number of edges in the gadget replacing $q$. At tuple-depth $s$, there are $D_{q,s}$ gadget vertices, each of arity $p_{q,s}$. Thus
  \[
    E_q \mathrel{:=} \sum_{s<\ell_q} D_{q,s}p_{q,s}.
  \]
  Using the definition of $D_{q,s}$, we get
  \[
    E_q = \sum_{s<\ell_q} D_{q,s+1}.
  \]
  For every $h\in\{0,\dots,\ell_q-1\}$, the term $D_{q,\ell_q-h}$ is at most $d_q/2^h$, since all prime factors are at least $2$. Hence
  \[
    E_q \leq d_q \sum_{h=0}^{\ell_q-1} 2^{-h}.
  \]
  The geometric sum is less than $2$, so $E_q<2d_q$. The gadgets partition the edge set of $P'$, and therefore
  \[
    |P'|=\sum_{q\in Q} E_q.
  \]
  It follows that
  \[
    |P'|<2\sum_{q\in Q}d_q.
  \]
  Since $|P|=\sum_{q\in Q}d_q$, we conclude that $|P'|\leq 2|P|$.

  \emph{Synchrony.} Finally, suppose that $P$ is synchronous, and let $\phi:Q\to\mathbb{N}$ be a synchronizing map for $P$. Define $\phi':Q'\to\mathbb{N}$ by
  \[
    \phi'(q_\eta)\mathrel{:=}D_{q,s}\phi(q)
  \]
  whenever $q_\eta$ is a gadget vertex corresponding to a tuple $\eta$ of length $s$, including the empty tuple identified with $q$. In particular, $\phi'(q_0)=\phi(q_0)$, so $\phi'(q_0)=1$.

  Consider first an internal gadget edge $q_\eta\to q_{\eta i}$, where $\eta$ has length $s<\ell_q-1$. We have
  \[
    \phi'(q_{\eta i})=D_{q,s+1}\phi(q).
  \]
  Since $D_{q,s+1}=p_{q,s}D_{q,s}$, this implies
  \[
    \phi'(q_{\eta i})=p_{q,s}\phi'(q_\eta).
  \]
  This is the synchrony condition for that edge, because $|\delta'(q_\eta)|=p_{q,s}$.

  It remains to consider a last-level gadget edge from $q_\eta$, where $|\eta|=\ell_q-1$, to an internal vertex $q'\in Q$. Such an edge can occur only when $q'=\delta(q)[I_q(\eta)+D_{q,s}i]$ for some $i<p_{q,s}$, where $s=\ell_q-1$. Since $P$ is synchronous,
  \[
    \phi(q')=d_q\phi(q).
  \]
  Also,
  \[
    d_q=p_{q,s}D_{q,s}.
  \]
  Therefore
  \[
    \phi'(q')=p_{q,s}\phi'(q_\eta).
  \]
  Again $|\delta'(q_\eta)|=p_{q,s}$, so the synchrony condition holds. Terminal children impose no condition. Hence $P'$ is synchronous whenever $P$ is synchronous.
\end{proof}

\end{document}